\documentclass[11pt,onecolumn]{IEEEtran}

\IEEEoverridecommandlockouts

\makeatletter

\def\ps@headings{%
\def\@oddhead{\mbox{}\scriptsize\rightmark \hfil \thepage}%
\def\@evenhead{\scriptsize\thepage \hfil \leftmark\mbox{}}%
\def\@oddfoot{}%
\def\@evenfoot{}}

\makeatother

\pagestyle{headings}

\usepackage{amsfonts,amsmath,amssymb}
\usepackage{color,subfigure,url, dsfont,epsf,
               graphics,graphicx,latexsym}

\newcommand{\PNN}{{\mathbb N}}  
\newcommand{\PRR}{{\mathbb R}} 

\newcommand{\DQ}[2]{D_{\ell,#1}(#2)}

\newcommand{\dd}{{\rm d}}
\newcommand{\oo}{{\rm o}}

\newcommand{\calC}{{\mathcal C}}
\newcommand{\EE}{{\mathbb E}}
\newcommand{\FF}{{\mathcal F}}
\newcommand{\calM}{{\mathcal M}}
\newcommand{\PP}{{\mathbb P}}
\newcommand{\WW}{{\mathcal W}}

\newcommand{\Load}{L}
\newcommand{\Qorig}{X}
\newcommand{\Lonenorm}[1]{\lvert \lvert #1 \rvert \rvert}  

\newcommand{\sumi}{\sum_{i = 1}^{N}}
\newcommand{\sumk}{\sum_{k = 1}^{K}}

\newcommand{\limR}{\lim_{R \to \infty}}

\newcommand{\be}{\begin{eqnarray}}
\newcommand{\ee}{\end{eqnarray}}
\newcommand{\ben}{\begin{eqnarray*}}
\newcommand{\een}{\end{eqnarray*}}

\newcommand{\expect}[1]{\EE\left[{\displaystyle #1}\right]}
\newcommand{\expectx}[1]{\EE_{\hat{x}}\left[{\displaystyle #1}\right]}
\newcommand{\indi}[1]{{\rm I}_{\left\{{\displaystyle #1}\right\}}}
\newcommand{\pr}[1]{\PP\left\{{\displaystyle #1}\right\}}

\newtheorem{corollary}{Corollary}
\newtheorem{definition}{Definition}
\newtheorem{lemma}{Lemma}
\newtheorem{proposition}{Proposition}
\newtheorem{remark}{Remark}
\newtheorem{theorem}{Theorem}

\title{Queue-Based Random-Access Algorithms: \\
Fluid Limits and Stability Issues}

\author{\IEEEauthorblockN{Javad Ghaderi$^\dagger$, Sem Borst$^\ddagger$,
Phil Whiting$^\ddagger$\\}
\IEEEauthorblockA{
$^\dagger$Department of ECE and Coordinated Science Lab, \\
University of Illinois at Urbana-Champaign,\thanks{The work of the
first author was carried out in the course of a summer internship
project at Alcatel-Lucent Bell Labs.} \\
{\normalsize Email: jghaderi@illinois.edu} \\ \ \\
$^\ddagger$Department of Mathematics of Networks and Communications, \\
Alcatel-Lucent Bell Labs, Murray Hill, NJ 07974 \\
{\normalsize Email: \{sem,pwhiting\}@research.bell-labs.com}}}

\begin{document}

\maketitle

\begin{abstract}
We use fluid limits to explore the (in)stability properties of
wireless networks with queue-based random-access algorithms.
Queue-based random-access schemes are simple and inherently distributed in nature,
yet provide the capability to match the optimal throughput performance
of centralized scheduling mechanisms in a wide range of scenarios.
Unfortunately, the type of activation rules for which throughput
optimality has been established, may result in excessive queue
lengths and delays.
The use of more aggressive/persistent access schemes can improve the
delay performance, but does not offer any universal maximum-stability
guarantees.

In order to gain qualitative insight and investigate the (in)stability
properties of more aggressive/persistent activation rules,
we examine fluid limits where the dynamics are scaled in space and time.
In some situations, the fluid limits have smooth deterministic
features and maximum stability is maintained, while in other scenarios
they exhibit random oscillatory characteristics, giving rise to major
technical challenges.
In the latter regime, more aggressive access schemes continue to provide
maximum stability in some networks, but may cause instability in others.
Simulation experiments are conducted to illustrate and validate
the analytical results.
\end{abstract}

\section{Introduction}

Emerging wireless mesh networks typically lack any centralized
access control entity, and instead vitally rely on the individual
nodes to operate autonomously and to efficiently share the medium
in a distributed fashion.
This requires the nodes to schedule their individual transmissions
and decide on the use of a shared medium based on knowledge that is
locally available or only involves limited exchange of information.
A popular mechanism for distributed medium access control is provided
by the so-called Carrier-Sense Multiple-Access (CSMA) protocol.
In the CSMA protocol each node attempts to access the medium after
a certain back-off time, but nodes that sense activity of interfering
nodes freeze their back-off timer until the medium is sensed idle.
While the CSMA protocol is fairly easy to understand at a local level,
the interaction among interfering nodes gives rise to quite intricate
behavior and complex throughput characteristics on a macroscopic scale.
In recent years relatively parsimonious models have emerged that
provide a useful tool in evaluating the throughput characteristics
of CSMA-like networks, see for instance \cite{BKMS87,DDT07,DT06,WK05}.
Experimental results in Liew {\em et al.}~\cite{LKLW08} demonstrate
that these models, while idealized, provide throughput estimates that
match remarkably well with measurements in actual systems. 

Despite their asynchronous and distributed nature, CSMA-like
algorithms have been shown to offer the remarkable capability
of achieving the full capacity region and thus match the optimal
throughput performance of centralized scheduling mechanisms operating
in slotted time \cite{JW08,JW10,LYPCP10}.
More specifically, any throughput vector in the interior of the
convex hull associated with the independent sets in the underlying
interference graph can be achieved through suitable back-off rates
and/or transmission lengths.
Based on this observation, various ingenious algorithms have been
developed for finding the back-off rates that yield a particular target
throughput vector or that optimize a certain concave throughput utility
function in scenarios with saturated buffers \cite{JW08,JW10,ME08}.
In the same spirit, several effective approaches have been devised
for adapting the transmission lengths based on queue length information,
and been shown to guarantee maximum stability
\cite{JSSW10,RSS09,SS12,SST11}.

Roughly speaking, the maximum-stability guarantees were established
under the condition that the activity factors of the various nodes
behave as logarithmic functions of the queue lengths.
Unfortunately, such activity factors can induce excessive queue lengths
and delays, which has triggered a strong interest in developing approaches
for improving the delay performance \cite{GS12,LEYY12,LM10,NTS10,SS10}.
Motivated by this issue, Ghaderi \& Srikant~\cite{GS10} recently
showed that it is in fact sufficient for the \emph{logarithms} of the
activity factors to behave as logarithmic functions of the queue lengths,
divided by an arbitrarily slowly increasing, unbounded function.
These results indicate that the maximum-stability guarantees are
preserved for activity functions that are essentially linear for all
practical values of the queue lengths, although asymptotically the
activity rate must grow slower than any positive power of the queue length.
A careful inspection reveals that the proof arguments leave little
room to weaken the stated growth condition.
Since the growth condition is only a sufficient one, however, it is
not clear to what extent it is actually a strict requirement for
maximum stability to be maintained.

In the present paper we explore the scope for using more aggressive
activity functions in order to improve the delay performance while
preserving the maximum-stability guarantees.
Since the proof methods of \cite{GS10,JSSW10,RSS09,SS12,SST11} do not
easily extend to more aggressive activity functions, we will instead
adopt fluid limits where the dynamics of the system are scaled in both
space and time.
Fluid limits may be interpreted as first-order approximations of
the original stochastic process, and provide valuable qualitative
insight and a powerful approach for establishing (in)stability
properties \cite{Dai95,Dai96,DM95,Meyn95}.

As observed in~\cite{BBLP11}, qualitatively different types of fluid
limits can arise, depending on the structure of the interference graph,
in conjunction with the functional shape of the activity factors.
For sufficiently \emph{tame} activity functions as in
\cite{GS10,RSS09,SS12,SST11}, `fast mixing' is guaranteed,
where the activity process evolves on a much faster time scale than
the scaled queue lengths.
Qualitatively similar fluid limits can arise for more \emph{aggressive}
activity functions as well, provided the topology is benign in
a certain sense, which implies that the maximum-stability guarantees
are preserved in those cases.
In different regimes, however, aggressive activity functions can cause
`sluggish mixing', where the activity process evolves on a much slower
time scale than the scaled queue lengths, yielding oscillatory fluid
limits that follow random trajectories.
It is highly unusual for such random dynamics to occur,
as in queueing networks typically the random characteristics vanish
and deterministic limits emerge on the fluid scale.
A few exceptions are known for various polling-type models as
considered in \cite{FK99,KTF05,FFZ13}.

The random nature of the fluid limits gives rise to several complications
in the convergence proofs that are not commonly encountered.
Since the random-access networks that we consider are fundamentally
different from the polling type-models in the above-mentioned references,
the fluid limits are qualitatively different as well,
and require a substantially different approach to establish convergence.
Specifically, we develop an approach based on stopping time sequences
to deal with the switching probabilities governing the sample paths
of the fluid limit process.
While these proof arguments are developed in the context of
random-access networks, several key components extend far beyond the
scope of the present problem.
Hence, we believe that the proof constructs are of broader methodological
value in handling random fluid limits and of potential use in establishing
both stability and instability results for a wider range of models.
For example, the methodology that we develop could be easily applied
to prove the stability results for the random capture scheme
as conjectured in work of Feuillet {\em et al.}~\cite{FPR10}.

The possible oscillatory behavior of the fluid limit itself does not
necessarily imply that the system is unstable,
and in some situations maximum stability is in fact maintained.
In other scenarios, however, the fluid limit reflects that more
aggressive activity functions may force the system into inefficient
states for extended periods of time and produce instability.
We will demonstrate instability for super-linear activity functions,
but our proof arguments suggest that it can potentially occur for any
activity factor that grows as a positive power of the queue lengths
in networks with sufficiently many nodes.
In other words, the growth conditions for maximum stability depend on
the number of nodes, which seems loosely related to results in
\cite{JLNSW11,STT11,SA11} characterizing how (upper bounds for) the mean
queue length and delay scale as a function of the size of the network.

The remainder of the paper is organized as follows.
In Section~\ref{mode}, we present a detailed model description.
We introduce fluid limits and discuss the various qualitative
regimes in Section~\ref{qualdiamond}.
We then use the fluid limits to demonstrate the potential instability
of aggressive activity functions in Sections~\ref{brokendiamond}
and~\ref{instab}.
Simulation experiments are conducted in Section~\ref{simu} to
support the analytical results.
In Section~\ref{conc}, we make some concluding remarks and identify
topics for further research. Appendices at the end of the paper contain proofs of our results.
\section{Model description}
\label{mode}

\textit{Network, interference graph, and traffic model}

We consider a network of several nodes sharing a wireless medium
according to a random-access mechanism.
The network is represented by an undirected graph $G = (V , E)$
where the set of vertices $V = \{1,\ldots, N\}$ correspond to the
various nodes and the set of edges $E \subseteq V \times V$ indicate
which pairs of nodes interfere.
Nodes that are neighbors in the interference graph are prevented
from simultaneous activity, and thus the independent sets
correspond to the feasible joint activity states of the network.
A node is said to be blocked whenever the node itself or any of
its neighbors is active, and unblocked otherwise.
Define $S \subseteq \{0, 1\}^N$ as the set of incidence vectors of
all the independent sets of the interference graph,
and denote by $\calC = \mbox{conv}(S)$ the capacity region,
with $\mbox{conv}(\cdot)$ indicating the convex hull operator.

Packets arrive at node~$i$ as a Poisson process of rate~$\lambda_i$.
The packet transmission times at node~$i$ are independent
and exponentially distributed with mean $1 / \mu_i$.
Denote by $\rho_i = \lambda_i/\mu_i$ the traffic intensity of node~$i$.

Let $U(t) \in S$ represent the joint activity state of the network
at time~$t$, with $U_i(t)$ indicating whether node~$i$ is active at
time~$t$ or not.
Denote by $\Qorig_i(t)$ the queue length at node~$i$ at time~$t$,
i.e., the number of packets waiting for transmission or in the process
of being transmitted. \\

\textit{Queue-based random-access mechanism}

As mentioned above, the various nodes share the medium in
accordance with a random-access mechanism.
When a node ends an activity period (consisting of possibly several
back-to-back packet transmissions), it starts a back-off period.
The back-off times of node~$i$ are independent and exponentially
distributed with mean $1 / \nu_i$.
The back-off period of a node is suspended whenever it becomes
blocked by activity of any of its neighbors, and only resumed once
the node becomes unblocked again.
Thus the back-off period of a node can only end when none of its
neighbors are active.
Now suppose a back-off period of node~$i$ ends at time~$t$.
Then the node starts a transmission with probability $\phi_i(\Qorig_i(t))$,
with $\phi_i(0) = 0$, and begins a next back-off period otherwise.
When a transmission of node~$i$ ends at time~$t$, it releases the medium
and begins a back-off period with probability $\psi_i(\Qorig_i(t^-))$,
or starts the next transmission otherwise, with $\psi_i(1) = 1$.
Equivalently, node~$i$ may be thought of as activating at an exponential
rate $f_i(\Qorig_i(t))$, with $f_i(\cdot) = \nu_i \phi_i(\cdot)$,
whenever it is unblocked at time~$t$, and de-activating at rate
$g_i(\Qorig_i(t))$, with $g_i(\cdot) = \mu_i \psi_i(\cdot)$,
whenever it is active at time~$t$.
For conciseness, the functions $f_i(\cdot)$ and $g_i(\cdot)$ will be
referred to as \textit{activation} and \textit{de-activation functions}, respectively.

There are two special cases worth mentioning that (loosely) correspond
to random-access schemes considered in the literature before.
First of all, in case $\phi_i(\Qorig_i) = 1$ and $\psi_i(\Qorig_i) = 0$
for all $\Qorig_i \geq 1$, node~$i$ starts a transmission each time
a back-off period ends, and does not release the medium,
i.e., continues transmitting until its entire queue has been cleared.
This corresponds to the random-capture scheme considered in~\cite{FPR10}.
In case $\mu_i = 1$, $\nu_i = 1$, $\phi_i(\Qorig_i) = 1-\psi_i(\Qorig_i)$,
and $\psi_i(\Qorig_i) = 1 / (1 + r_i(\Qorig_i))$, node~$i$ may be
thought of as becoming (or continuing to be) active with probability
$r_i(\Qorig_i(t)) / (1 + r_i(\Qorig_i(t)))$ each time a unit-rate
Poisson clock ticks.
This roughly corresponds to the scheme considered in
\cite{GS10,JSSW10,RSS09,SS12,SST11} based on Glauber dynamics with
a `weight' function $w_i(\Qorig_i) = \log(r_i(\Qorig_i))$,
except that the latter scheme operates with a random round-robin clock,
and uses $\tilde{w}_i(\Qorig_i) =
\max\{w_i(\Qorig_i), \frac{\epsilon}{2 N} w_i(\Qorig_{\max})\}$,
with $\Qorig_{\max} = \max_{j = 1, \dots, N} \Qorig_j$. \\

\textit{Network dynamics}

Under the above-described queue-based schemes, the process
$\{(U(t), \Qorig(t))\}_{t \geq 0}$ evolves as a continuous-time Markov
process with state space $S \times {\mathbb N}_0^N$.
Transitions (due to arrivals) from a state $(U, \Qorig)$ to
$(U, \Qorig + e_i)$ occur at rate~$\lambda_i$, transitions (due to
activations) from a state $(U, \Qorig)$ with $\Qorig_i \geq 1$, $U_i = 0$,
and $U_j = 0$ for all neighbors of node~$i$, to $(U + e_i, \Qorig)$
occur at rate $\nu_i f_i(\Qorig_i)$, transitions (due to transmission
completions followed back-to-back by a subsequent transmission) from
a state $(U, \Qorig)$ with $U_i = 1$ (and thus $\Qorig_i \geq 1$) to
$(U, \Qorig - e_i)$ occur at rate $\mu_i (1 - g_i(\Qorig_i))$,
transitions (due to transmission completions followed by a back-off period)
from a state $(U, \Qorig)$ with $U_i = 1$ (and thus $\Qorig_i \geq 1$)
to $(U - e_i, \Qorig - e_i)$ occur at rate $\mu_i g_i(\Qorig_i)$.

We are interested to determine under what conditions the system is stable,
i.e., the process $\{(U(t), \Qorig(t))\}_{t \geq 0}$ is positive-recurrent.
It is easily seen that $(\rho_1, \dots, \rho_N) < \sigma \in \calC$
is a necessary condition for that to be the case.
In~\cite{GS10}, it is shown that this condition is in fact also
sufficient for weight functions of the form
$w_i(\Qorig_i) = \log(1+\Qorig_i)/y_i(\Qorig_i)$, where $y_i(\Qorig_i)$
is allowed to increase to infinity at an arbitrarily slow rate.
For practical purposes, this means that the function $r_i(\Qorig_i)$ is
essentially allowed to be linear, except that it must eventually
grow to infinity slower than any positive power of~$\Qorig_i$.
Results in~\cite{BBLP11} suggest that more aggressive choices of
the functions $f_i(\cdot)$ and $g_i(\cdot)$, which translate into
functions $r_i(\cdot)$ that grow faster to infinity, can improve the
delay performance.
In view of these results, we will be particularly interested in such
functions $r_i(\cdot)$, where the stability results of~\cite{GS10}
do not apply.
In order to examine under what conditions the system will remain
stable then, we will examine fluid limits for the process
$\{(U(t), \Qorig(t))\}_{t \geq 0}$ as introduced in the next section.

\section{Qualitative discussion of fluid limits}
\label{qualdiamond}

Fluid limits may be interpreted as first-order approximations of
the original stochastic process, and provide valuable qualitative
insight and a powerful approach for establishing (in)stability
properties \cite{Dai95,Dai96,DM95,Meyn95}.
In this section we discuss fluid limits for the process
$\{(U(t), \Qorig(t))\}_{t \geq 0}$ from a broad perspective, with the
aim to informally exhibit their qualitative features in various regimes,
and we deliberately eschew rigorous claims or proofs.

\subsection{Fluid-scaled process}

In order to obtain fluid limits, the original stochastic process
is scaled in both space and time.
More specifically, we consider a sequence of processes
$\{(U^{(R)}(t), \Qorig^{(R)}(t))\}_{t \geq 0}$ indexed by a sequence
of positive integers~$R$, each governed by similar statistical laws
as the original process, where the initial states satisfy
$\sumi \Qorig_i^{(R)}(0) = R$ and $\Qorig_i^{(R)}(0) / R \to Q_i$
as $R \to \infty$.
The process $\{(U^{(R)}(R t), \frac{1}{R} \Qorig^{(R)}(R t))\}_{t \geq 0}$
is referred to as the fluid-scaled version of the process
$\{(U^{(R)}(t), \Qorig^{(R)}(t)\}_{t \geq 0}$.
Note that the activity process is scaled in time as well but not in space.
For compactness, denote $Q^R(t) = \frac{1}{R} \Qorig^{(R)}(R t)$.
Any (possibly random) weak limit $\{Q(t)\}_{t \geq 0}$ of the sequence
$\{Q^R(t)\}_{t \geq 0}$, as $R \to \infty$, is called a fluid limit.

It is worth mentioning that the above notion of fluid limit based on
the continuous-time Markov process is only introduced for the
convenience of the qualitative discussion below.
For all the proofs of fluid limit properties and instability results
we will rely on a rescaled linear interpolation of the uniformized
jump chain (as will be defined in Appendix~A.I), with a time-integral
version of the $U(\cdot)$ component.
This construction yields convenient properties of the fluid limit paths
and allows us to extend the framework of Meyn~\cite{Meyn95} for
establishing instability results for discrete-time Markov chains.
(The original continuous-time Markov process has in fact the same fluid
limit properties, but this is not directly relevant in any of the proofs.)

The process $\{(U^{(R)}(R t), \frac{1}{R} \Qorig^{(R)}(R t))\}_{t \geq 0}$
comprises two interacting components.
On the one hand, the evolution of the (scaled) queue length process
$\frac{1}{R} \Qorig^{(R)}(R t)$ depends on the activity process
$U^{(R)}(R t)$.
On the other hand, the evolution of the activity process $U^{(R)}(R t)$
depends on the queue length process $\Qorig^{(R)}(R t)$ through the
activation and de-activation functions $f_i(\cdot)$ and $g_i(\cdot)$.
In many cases, a separation of time scales arises as $R \to \infty$,
where the transitions in $U^{(R)}(R t)$ occur on a much faster time
scale than the variations in $Q^R(t) = \frac{1}{R} \Qorig^{(R)}(t)$.
Loosely phrased, the evolution of $Q^R(t)$ is then governed by the
time-average characteristics of $U^{(R)}(\cdot)$ in a scenario
where $Q^R(t)$ is fixed at its instantaneous value.

In other cases, however, the transitions in $U^{(R)}(R t)$ may in fact
occur on a much slower time scale than the variations in $Q^R(t)$,
or there may not be a separation of time scales at all.
As a result, qualitatively different types of fluid limits can arise,
as observed in~\cite{BBLP11}, depending on the mixing properties
of the activity process.
These mixing properties, in turn, depend on the functional shape of the
activation and de-activation functions $f_i(\cdot)$ and $g_i(\cdot)$,
in conjunction with the structure of the interference graph~$G$.

\subsection{Fast mixing: smooth deterministic fluid limits}

We first consider the case of fast mixing.
In this case, the transitions in $U^{(R)}(R t)$ occur on a much faster
time scale than the variations in $Q^R(t)$, and completely average out
on the fluid scale as $R \to \infty$.
Informally speaking, this entails that the mixing time of the
activity process in a scenario with fixed activation rates
$f_i(R q_i)$ and de-activation rates $g_i(R q_i)$ grows slower
than~$R$ as $R \to \infty$.
In order to obtain a rough bound for the mixing time, assume that
$f_i(\cdot) \equiv f(\cdot)$, $g_i(\cdot) \equiv g(\cdot)$,
and denote $h(x) = f(x) / g(x)$.
Further suppose that $h(R) \to \infty$ as $R \to \infty$,
and $h(a R) / h(R) \to \hat{h}(a)$ as $R \to \infty$,
with $\hat{h}(a) > 0$ for any $a > 0$.
The latter assumptions are satisfied, for example, when
$h(x) = x^\gamma$, $\gamma > 0$, with $\hat{h}(a) = a^\gamma$,
or when $h(x) = \log(x)$ with $\hat{h}(a) \equiv 1$.
Without proof, we claim that the mixing time then grows at most at
rate $f(R)^{m^* - 1} g(R)^{- m^*}$ as $R \to \infty$, with $m^*$
the cardinality of a maximum-size independent set.
Thus, fast mixing behavior is guaranteed when $f(\cdot)$ does not
grow too fast, $g(\cdot)$ does not decay too fast, or $m^*$ is
sufficiently small, e.g.,
\begin{itemize}
\item[(i)] $g(x) = g$ and $m^* = 1$;
\item[(ii)] $f(x) = x^{1 / (m^* - 1) - \delta}$, $g(x) = g$,
and $m^* \geq 2$;
\item[(iii)] $f(x) = f$ and $g(x) \geq x^{- 1 / m^* + \delta}$;
\item[(iv)] $f(x) = f$, $g(x) = 1 / \log(1+x)$;
\item[(v)] $f(x) = \log(1+x)$ and $g(x) = g$.
\end{itemize}

As mentioned above, the fluid limit then follows an entirely
deterministic trajectory, which is described by a differential
equation of the form
\[
\frac{\dd}{\dd t}Q_i(t) = \lambda_i - \mu_i u_i(Q(t)),
\]
as long as $Q(t) > 0$ (component-wise), with the function $u_i(\cdot)$
representing the fraction of time that node~$i$ is active.
We may write
\[
u_i(q) = \sum_{s \in S} s_i \pi(s; q),
\]
with $\pi(s; q)$ denoting the fraction of time that the activity process
resides in state $s \in S$ in a scenario with fixed activation rates
$f_j(R q_j)$ and de-activation rates $g_j(R q_j)$ as $R \to \infty$.
Let $S^* = \{s \in S: \sumi s_i = m^*\}$ correspond to the
collection of all maximum-size independent sets.
Under the above-mentioned assumptions,
\be
&\pi(s; q) =
\limR \frac{\prod\limits_{i = 1}^{N} h(R q_i)^{s_i}}
{\sum_{u \in S^*} \prod\limits_{i = 1}^{N} h(R q_i)^{u_i}} \nonumber \\
&= \frac{\prod\limits_{i = 1}^{N} \hat{h}(q_i)^{s_i}}
{\sum_{u \in S^*} \prod\limits_{i = 1}^{N} \hat{h}(q_i)^{u_i}} =
\frac{\exp(\sumi s_i \log(\hat{h}(q_i)))}
{\sum_{u \in S^*} \exp(\sumi u_i \log(\hat{h}(q_i)))} \nonumber ,
\ee
for $s \in S^*$, while $\pi(s; q) = 0$ for $s \not\in S^*$.
In particular, if $h(x) = x^\gamma$, $\gamma > 0$, then
$$
\pi(s; q) = \frac{\prod\limits_{i = 1}^{N} q_i^{\gamma s_i}}
{\sum_{u \in S^*} \prod\limits_{i = 1}^{N} q_i^{\gamma u_i}} =
\frac{\exp(\gamma \sumi s_i \log(q_i))}
{\sum_{u \in S^*} \exp(\gamma \sumi u_i \log(q_i))},
$$
for $s \in S^*$.
Also, if $h(x) = \log(1+x)$, then $\pi(s; q) = 1/|S^*|$ for $s \in S^*$.

When some of the components of~$q$ are zero, i.e., some of the queue
lengths are zero at the fluid scale, it is considerably harder to
characterize $u_i(q)$, since the competition for medium access from
the queues that are zero at the fluid scale still has an impact.
It may be shown though that
$$\sumi \rho_i \indi{q_i > 0} \leq
(1 - \epsilon) \sumi u_i(q) \indi{q_i > 0}$$ for some $\epsilon > 0$,
assuming that $(\rho_1, \dots, \rho_N) < \sigma \in \calC$.
The latter inequality also holds when $q > 0$, noting that then
$\sumi u_i(q) = m^*$, while $\sumi \rho_i \leq (1 - \epsilon) m^*$
for some $\epsilon > 0$.

We conclude that almost everywhere
\be
\sumi \frac{1}{\mu_i}\frac{\dd Q_i(t)}{\dd t}
&\leq& \sumi (\rho_i - u_i(Q(t))) \indi{Q_i(t) > 0} \nonumber \\
&\leq& - \epsilon \sumi \rho_i \indi{Q_i(t) > 0} \nonumber,
\ee
as long as $Q(t) \neq 0$.
This means that $Q(t) = 0$ for all $t \geq T$ for some finite
$T < \infty$, which implies that the original Markov process is
positive-recurrent \cite{Dai95,DM95}.
This agrees with the stability results in
\cite{GS10,JSSW10,RSS09,SST11,SS12} for the case
$f(\Qorig_i) = 1 - g(\Qorig_i)$
and $g(\Qorig_i) = 1 / (1 + \exp(\tilde{w}(\Qorig_i)))$,
$\tilde{w}(\Qorig_i) =
\max\{w(\Qorig_i), \frac{\epsilon}{2 N} w(\Qorig_{\max})\}$
(with the minor differences noted in the previous section),
and suggests that these results in fact hold without the need to
know the maximum queue size $\Qorig_{\max}$.

Of course, in order to convert the above arguments into an actual
stability proof, the informal characterization of the fluid limit
needs to be rigorously justified.
This is a major challenge, and not the real goal of the present
paper, since we aim to demonstrate the opposite, namely that more
aggressive activity or de-activation functions can cause instability.
Strong evidence of the technical complications in establishing the fluid
limits is provided by recent work of Robert \& V\'eber~\cite{RV12}.
Their work focuses on the simpler case of a single work-conserving resource
(which corresponds to a full interference graph in the present setting)
without any back-off mechanism, where the service rates of the various
nodes are determined by a logarithmic function of their queue lengths.

\subsection{Sluggish mixing: erratic random fluid limits}

With the above aim in mind, we now turn to the case of sluggish mixing.
In this case, the transitions in $U^{(R)}(R t)$ occur on a much slower
time scale than the variations in $Q^R(t)$, and vanish on the fluid scale
as $R \to \infty$, except at time points where some of the queues hit zero.
The detailed behavior of the fluid limit in this case depends
delicately on the specific structure of the interference graph~$G$
and the shape of the functions $f_i(\cdot)$ and $g_i(\cdot)$.
This prevents a characterization in any degree of generality,
and hence we focus attention on some particular scenarios.

\begin{figure}
\centering
\includegraphics[width=2.5 in]{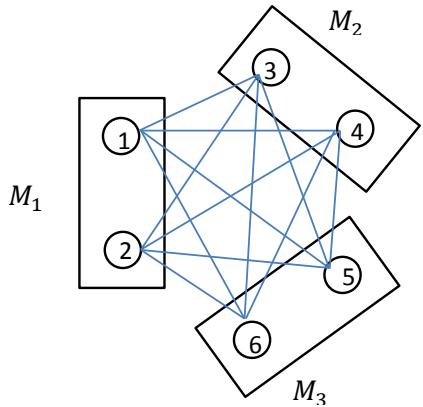}
\caption{The \emph{diamond network}: A complete partite graph with
$K = 3$ components, each containing 2~nodes.}
\label{network1}
\end{figure}

In order to show that sluggish mixing behavior itself need not imply
instability, we first examine a complete $K$-partite graph as
considered in~\cite{FPR10}, where the nodes can be partitioned into
$K \geq 2$ components.
All nodes are connected except those belonging to the same component.
Figure~\ref{network1} depicts an example of a complete partite graph
with $K = 3$ components, each containing 2~nodes.
We will refer to this network as the \emph{diamond network}, since
the edges correspond to those of an eight-faced diamond structure,
with the node pairs constituting the three components positioned at
the opposite ends of three orthogonal axes.

Denote by $M_k \subseteq \{1, \dots, N\}$ the subset of nodes
belonging the $k$-th component.
Once one of the nodes in component $M_k$ is active, other nodes
within $M_k$ can become active as well, but none of the nodes in
the other components $M_l$, $l \neq k$, can be active.
The necessary stability condition then takes the form
$\rho = \sumk \hat{\rho_k} < 1$,
with $\hat{\rho}_k = \max_{i \in M_k} \rho_i$ denoting the maximum
traffic intensity of any of the nodes in the $k$-th component.

Now consider the case that each node operates with an activation
function $f(x)$ with $\lim_{x \to \infty} f(x) > 0$ and a de-activation
function $g(x) = \oo(x^{-\gamma})$, with $\gamma > 1$,
which subsumes the random-capture scheme with $g(x) \equiv 0$ for
all $x \geq 1$ in~\cite{FPR10}.
Since the de-activation rate decays so sharply, the probability of
a node releasing the medium once it has started transmitting
with an initial queue length of order~$R$, is vanishingly small,
until the queue length falls below order~$R$ or the total number of
transmissions exceeds order~$R$ (but the latter implies the former).
Hence, in the fluid limit, a node must completely empty almost
surely before it releases the medium.
Because of the interference constraints, it further follows that once
the activity process enters one of the components, it remains there
until all the queues in that component have entirely drained (on the
fluid scale), and then randomly switches to one of the other components.
For conciseness, the fluid limit process is said to be in
an $M_k$-period during time intervals when at least one of the nodes
in component~$M_k$ is served at full rate (on the fluid scale).

Based on the above informal observations, we now proceed with a more
detailed description of the dynamics of the fluid limit process.
We do not aim to provide a proof of the stated properties,
since the main goal of the present paper is to demonstrate the
potential for instability rather than establish stability.
However, the proof arguments that we will develop for a similar but
more complicated interference graph in the remainder of the paper,
could easily be applied to provide a rigorous justification of the
fluid limit and establish the claimed stability results.

Assume that the system enters an $M_k$-period at time~$t$, then
\begin{enumerate}
\item[(a)] It spends a time period
$T_k(t) = \max_{i \in M_k} \frac{Q_i(t)}{\mu_i - \lambda_i}$ in $M_k$.
\item[(b)] During this period, the queues of the nodes in $M_k$
drain at a linear rate (or remain zero)
\[
Q_i(t + u) = \max\{Q_i(t) + (\lambda_i - \mu_i) u, 0\},~ \forall i \in M_k,
\]
while the queues of the other nodes fill at a linear rate
\[
Q_i(t + u) = Q_i(t) + \lambda_i u,~ \forall i \not\in M_k,
\]
for all $u \in [0, T_k(t)]$.
\item[(c)] At time $t + T_k(t)$, the system switches to an $M_l$-period,
$l \neq k$, with probability
\[
p_{kl}(t + T_k(t)) = \limR \frac{\sum_{i \in M_l} f(R Q_i(t + T_k(t)))}
{\sum_{l' \neq k, l} \sum_{i \in M_{l'}} f(R Q_i(t + T_k(t)))}.
\]
\end{enumerate}

Thus the fluid limit follows a piece-wise linear sample path,
with switches between different periods governed by the transition
probabilities specified above.
Figure~\ref{path1} depicts an example of the fluid limit sample path
for the network of Figure~\ref{network1} with $f(x) = 1$, $x \geq 1$.

Now define the Lyapunov function $\Load(t) := \sumk \hat{Q}_k(t)$,
with $\hat{Q}_k(t) = \max_{i \in M_k} Q_i(t) / \mu_i$.
Then, $\frac{\dd}{\dd t} \Load(t) \leq \sumk \hat\rho_k - 1 = \rho - 1 < 0$
almost everywhere when $\rho <1$, as long as $\Load(t) > 0$.
Therefore, $\Load(t) = 0$, and hence $Q(t) = 0$, for all $t \geq T$,
with $T = \frac{\Load(0)}{1 - \rho} < \infty$,
implying stability~\cite{Dai95,DM95}, even though the fluid limit
behavior is not smooth at all.

\begin{figure}
\centering
\includegraphics[width=4in]{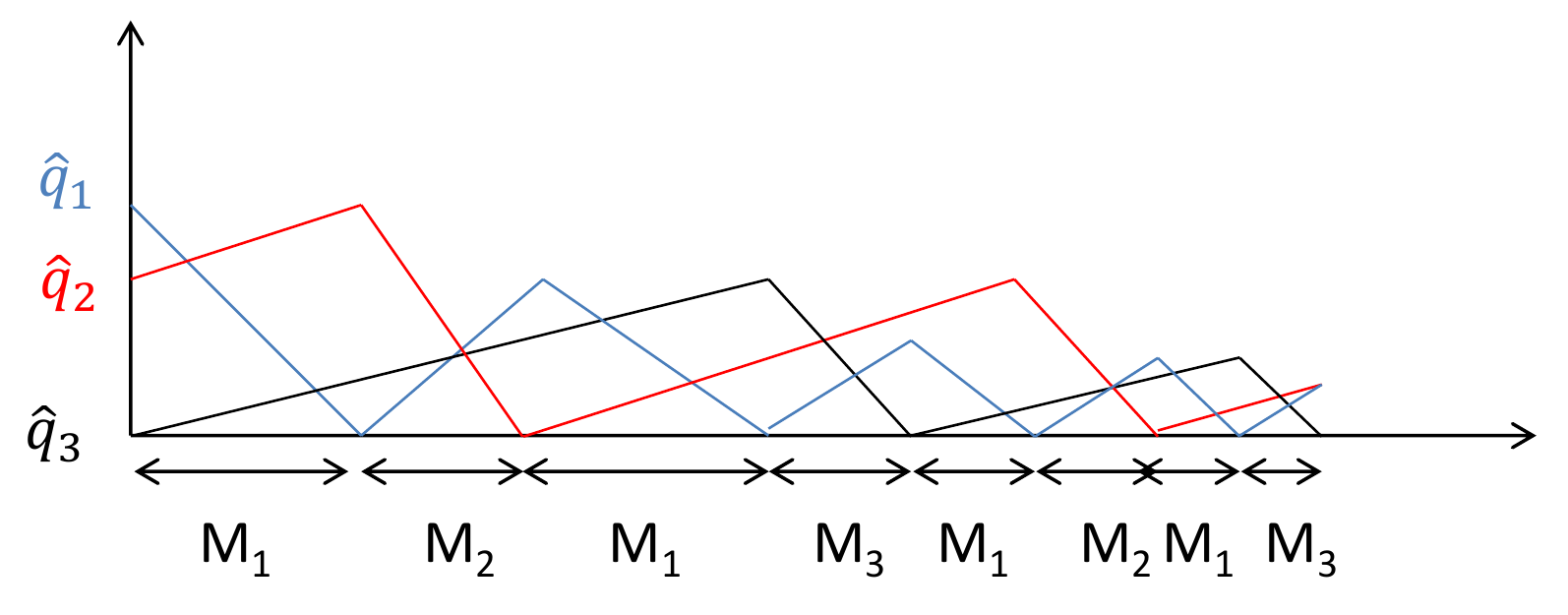}
\caption{A fluid limit sample path for the \emph{diamond network}
of Figure~\ref{network1}.}
\label{path1}
\end{figure}

\section{Fluid limits for broken-diamond network}
\label{brokendiamond}

In the previous section we discussed qualitative features of fluid
limits in various scenarios, and in particular for so-called complete
partite graphs.
We now proceed to consider a `nearly' complete partite graph,
and will demonstrate that if some of the edges between two
components~$M_k$ and~$M_l$ are removed (thus reducing interference),
the network might become unstable for `aggressive' activation
and/or deactivation functions!
Specifically, we will consider the \emph{diamond network}
of Figure~\ref{network1}, and remove the edge between nodes~4 and~5
to obtain a \emph{broken-diamond network} with an additional
component/maximal schedule $M_4$, as depicted in Figure~\ref{network2}.

\begin{figure}
\centering
\includegraphics[width=2.5in]{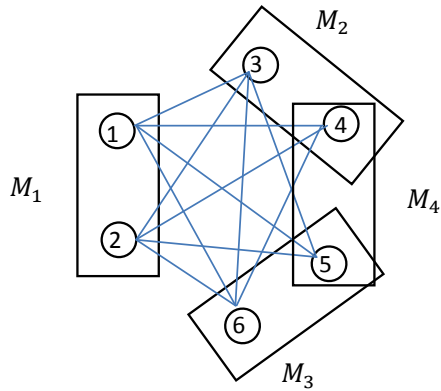}
\caption{The \emph{broken-diamond network}, obtained by removing
1~edge from the \emph{diamond network} of Figure~\ref{network1},
yielding an additional schedule $M_4$.}
\label{network2}
\end{figure}

The intuitive explanation for the potential instability may be
described as follows.
Denote $\rho_0 = \max\{\rho_1, \rho_2\}$, and assume
$\rho_3 \geq \rho_4$ and $\rho_6 \geq \rho_5$.
It is easily seen that the fraction of time that at least one of
the nodes 1, 2, 3 and 6 is served, must be no less than
$\rho = \rho_0 + \rho_3 + \rho_6$ in order for these nodes to be stable.
During some of these periods nodes~4 or~5 may also be served,
but not simultaneously, i.e., schedule $M_4$ cannot be used.
In other words, the system cannot be stable if schedule $M_4$ is
used for a fraction of the time larger than $1 - \rho$.
As it turns out, however, when the de-activation function is
sufficiently aggressive, e.g., $g(x) = \oo(x^{- \gamma})$,
with $\gamma > 1$, schedule $M_4$ is in fact persistently used for
a fraction of the time that does not tend to~0 as $\rho$ approaches~1,
which forces the system to be unstable.

Although the above arguments indicate that invoking schedule $M_4$ is
a recipe for trouble, the reason may not be directly evident from the
system dynamics, since no obvious inefficiency occurs as long as the
queues of nodes~4 and~5 are non-empty.
However, the fact that the Lyapunov function
$\Load(t) = \sum_{k = 1}^{3} \max_{i \in M_k} Q_i(t)$ may increase while
serving nodes~4 and~5, when $Q_3(t) \geq Q_4(t)$ and $Q_6(t) \geq Q_5(t)$,
is already highly suggestive. (Such an increase is depicted in Figure \ref{path2}
during the $M_4$ period of the switching sequence
$M_1 \rightarrow M_2 \rightarrow M_1 \rightarrow M_4 \rightarrow M_3 \rightarrow M_1$.)
Indeed, serving nodes~4 and~5 may make their queues smaller than those
of nodes~3 and~6, leaving these queues to be served by themselves at
a later stage, at which point inefficiency inevitably occurs.

In the sequel, the fluid limit process is said to be in a natural state
when $Q_3(t) \geq Q_4(t)$ and $Q_6(t) \geq Q_5(t)$,
with equality only when both sides are zero.
We will assume $\lambda_3 > \lambda_4$ and $\lambda_6 > \lambda_5$,
and will show that the process must always reside in a natural state
after some finite amount of time.
As described above, instability is bound to occur when schedule $M_4$
is used repeatedly for substantial periods of time while the fluid
limit process is in a natural state.
Since the process is always in a natural state after some finite
amount of time, it is intuitively plausible that such events occur
repeatedly with positive probability, but a rigorous proof that this
leads to instability is far from simple.
Such a proof requires detailed analysis of the underlying stochastic
process (in our case via fluid limits), and its conclusion crucially
depends on the de-activation function.
Indeed, the stability results in \cite{GS10,JSSW10,RSS09,SS12,SST11}
indirectly indicate that the \emph{broken-diamond network} is \emph{not}
rendered unstable for sufficiently cautious de-activation functions.

Just like for the complete partite graphs, the fluid limit process is
said to be in an $M_1$-period when node~1 or node~2 (or both) is
served at full rate.
The process is in an $M_2$- or $M_3$-period when node~3 or~6 is served
at full rate, respectively.
The process is in an $M_4$-period when nodes~4 and~5 are both served
at full rate simultaneously.

In Subsection~\ref{desc} we will provide a detailed description of the
dynamics of the fluid limit process once it has reached a natural
state and entered an $M_1$-, $M_2$-, $M_3$ or $M_4$-period.
The justification for the description follows from a collection of
lemmas and propositions which are stated and proved in Appendices A--D,
with a high-level outline provided in Subsection~\ref{over}.
In Section~\ref{instab} we will exploit the properties of the fluid
limit process in order to prove that the harmful behavior described
above indeed occurs for sufficiently aggressive de-activation functions,
implying instability of the fluid limit process as well as the original
stochastic process.

\subsection{Description of the fluid limit process}
\label{desc}

We now provide a detailed description of the dynamics of the fluid
limit process once it has reached a natural state and entered
an $M_1$-, $M_2$-, $M_3$ or $M_4$-period.
For sufficiently high load, i.e., $\rho$ sufficiently close to~1,
a natural state and such a period occur in uniformly bounded time
almost surely for any initial state.
As will be seen, for de-activation functions $g_i(x) = \oo(x^{- \gamma})$,
with $\gamma>1$, the fluid limit process then follows similar
piece-wise linear trajectories, with random switches,
as described in the previous section for complete partite graphs
and further illustrated in Figure~\ref{path2}.
For notational convenience, we henceforth assume $\mu_i \equiv 1$,
so that $\rho_i \equiv \lambda_i$, for all $i = 1, \dots, N$,
and additionally assume activation functions $f_i(x) \equiv 1$,
$x \geq 1$, for all $i = 1, \dots, N$.

\begin{figure}
\centering
\includegraphics[width=4in]{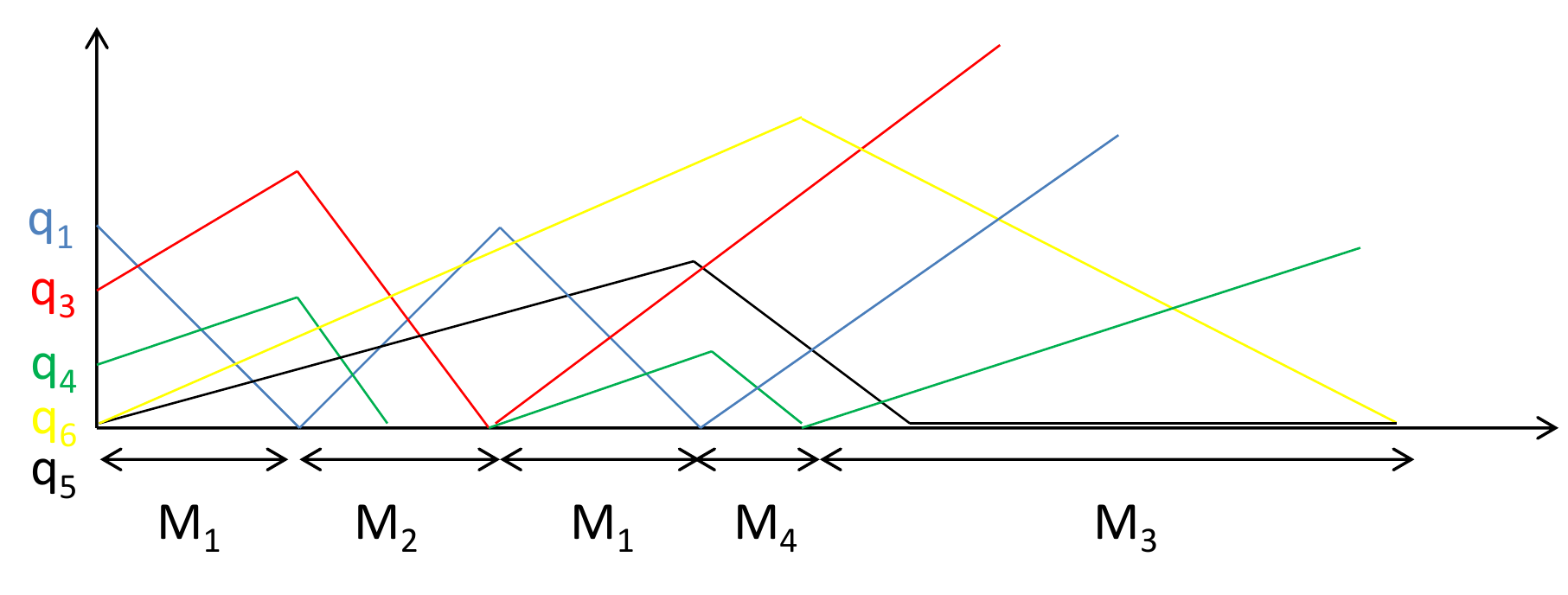}
\caption{A fluid limit sample path for the \emph{broken-diamond network}
of Figure~\ref{network2}, corresponding to the switching sequence
$M_1 \to M_2 \to M_1 \to M_4 \to M_3$.}
\label{path2}
\end{figure}


\subsubsection{$M_1$-period}

Assume the system enters an $M_1$-period at time~$t$, then
\begin{enumerate}
\item[(a)] It spends a time period
$T_1(t) = \max\left\{\frac{Q_1(t)}{1 - \rho_1},
\frac{Q_2(t)}{1 - \rho_2}\right\}$ in $M_1$.
\item[(b)] During this period, the queues of nodes~1 and~2 drain at
a linear rate (or remain zero)
\[
Q_i(t + u) = \max\{Q_i(t) - (1 - \rho_i) u, 0\},
\mbox{ for } i = 1, 2,
\]
while the queues of nodes 3, 4, 5, 6 fill at a linear rate
\[
Q_i(t + u) = Q_i(t) + \rho_i u, \mbox{ for } i = 3, 4, 5, 6,
\]
for all $u \in [0, T_1(t)]$.
In particular, $Q_1(t + T_1(t)) = Q_2(t + T_1(t)) = 0$.
\item [(c)] At time $t + T_1(t)$, the system switches to an $M_2$-, $M_3$-
or $M_4$-period with transition probabilities $p_{12} = \frac{3}{8}$,
$p_{13} = \frac{3}{8}$, and $p_{14} = \frac{1}{4}$, respectively.
\end{enumerate}

\subsubsection{$M_2$-period}

Assume that the system enters an $M_2$-period at time~$t$, then
\begin{enumerate}
\item[(a)] The system spends a time period
$T_2(t) = \frac{Q_3(t)}{1 - \rho_3}$ in $M_2$.
\item[(b)] During this period, the queues of nodes~3 and~4 drain
(or remain zero)
\[
Q_i(t + u) = \max\{Q_i(t) - (1 - \rho_i) u, 0\},
\mbox{ for } i = 3, 4,
\]
while the queues of nodes 1, 2, 5, 6 fill at a linear rate
\[
Q_i(t + u) = Q_i(t) + \rho_i u, \mbox{ for } i = 1, 2, 5, 6,
\]
for all $u \in [0, T_2(t)]$.
In particular, $Q_3(t + T_2(t)) = 0$.
\item [(c)] At time $t + T_2(t)$, the system switches to an $M_1$-
or $M_3$-period.
Note that $\frac{Q_3(t)}{1 - \rho_3} > \frac{Q_4(t)}{1 - \rho_4}$
by the assumption that $\lambda_3 > \lambda_4$ and that the process
has reached a natural state, so that $Q_3(t) > Q_4(t)$ (since
$Q_3(t) = Q_4(t) = 0$ cannot occur at the start of an $M_2$-period).
Thus node~4 has emptied before time $t + T_2(t)$, and remained empty
(on the fluid scale) since then, precluding a switch to an $M_4$-period
except for a negligible duration on the fluid scale), only allowing
the system to switch to either an $M_1$- or $M_3$-period.
The corresponding transition probabilities can be formally expressed
in terms of certain stationary distributions, but are difficult to
obtain in explicit form.
Note that in order for any of the nodes 1, 2, 5 or 6 to activate,
node~3 must be inactive.
In order for nodes 1, 2 or 6 to activate, node~4 must be inactive as
well, but the latter is not necessary in order for node~5 to activate.
Since node~4 may be active even when it is empty on the fluid scale,
it follows that node~5 enjoys an advantage in competing for access
to the medium over nodes 1, 2 and 6.
While it may be argued that node~4 is active with probability~$\rho_4$
by the time node~3 becomes inactive for the first time,
the resulting probabilities for the various nodes to gain access
to the medium first do not seem to allow a simple expression.

\begin{remark}
If the process had not yet reached a natural state, the case
$\frac{Q_3(t)}{1 - \rho_3} \leq \frac{Q_4(t)}{1 - \rho_4}$ could also
arise.
In case that inequality is strict,
i.e., $\frac{Q_3(t)}{1 - \rho_3} < \frac{Q_4(t)}{1 - \rho_4}$,
the queue of node~4 is still non-empty by time $t + T_2(t)$,
simply forcing a switch to an $M_4$-period with probability~1.

In case of equality, i.e.,
$\frac{Q_3(t)}{1 - \rho_3} = \frac{Q_4(t)}{1 - \rho_4}$, however,
the situation would be much more complicated, which serves as the
illustration for the significance of the notion of a natural state.
In order to describe these difficulties, note that the queues of nodes~3
and~4 both empty at time~$t+T_2(t)$, barring a switch to an $M_4$-period,
and permitting only a switch to either an $M_1$- or $M_3$-period.
Just like before, node~5 is the only one able to activate during
periods where node~3 is inactive while node~4 is active,
and hence enjoys an advantage in competing for access to the medium.
In fact, node~5 will gain access to the medium first almost surely
if node~3 is the first one to become inactive (in the pre-limit).
The probability of that event, and hence the transition probabilities
to an $M_1$- or $M_3$-period, depends on queue length differences
between nodes~3 and~4 at time~$t$ that can be affected by the history
of the process and are not visible on the fluid scale.
\end{remark}
\end{enumerate}

\subsubsection{$M_3$-period}

The dynamics for an $M_3$-period are entirely symmetric to those for
an $M_2$-period, but will be replicated below for completeness.

Assume that the system enters an $M_3$-period at time~$t$, then
\begin{enumerate}
\item[(a)] The system spends a time period
$T_3(t) = \frac{Q_6(t)}{1 - \rho_6}$ in $M_3$.
\item[(b)] During this period, the queues of nodes~5 and~6 drain
(or remain zero)
\[
Q_i(t + u) = \max\{Q_i(t) - (1 - \rho_i) u, 0\},
\mbox{ for } i = 5, 6,
\]
while the queues of nodes 1, 2, 3, 4 fill at a linear rate
\[
Q_i(t + u) = Q_i(t) + \rho_i u, \mbox{ for } i = 1, 2, 3, 4,
\]
for all $u \in [0, T_3(t)]$.
In particular, $Q_6(t + T_3(t)) = 0$.
\item [(c)] At time $t + T_3(t)$, the system switches to an $M_1$-
or $M_2$-period.
Note that $\frac{Q_5(t)}{1 - \rho_5} < \frac{Q_6(t)}{1 - \rho_6}$
by the assumption that $\lambda_5 > \lambda_6$ and that the process
has reached a natural state, so that $Q_5(t) < Q_6(t)$ (since
$Q_5(t) = Q_6(t) = 0$ cannot occur at the start of an $M_3$-period).

Thus node~5 has emptied before time $t + T_3(t)$, and remained empty
(on the fluid scale) since then, precluding a switch to an $M_4$-period
(except for a negligible period on the fluid scale), only allowing
the system to switch to either an $M_1$- or $M_2$-period.
The corresponding transition probabilities are difficult to obtain
in explicit form for similar reasons as mentioned in case~2(c).

\begin{remark}
If the process had not yet reached a natural state, the case
$\frac{Q_5(t)}{1 - \rho_5} \geq \frac{Q_6(t)}{1 - \rho_6}$ could also
arise.
In case that inequality is strict,
i.e., $\frac{Q_5(t)}{1 - \rho_5} < \frac{Q_6(t)}{1 - \rho_6}$,
the queue of node~5 is still non-empty by time $t + T_3(t)$,
forcing a switch to an $M_4$-period with probability~1.

In case of equality, i.e.,
$\frac{Q_5(t)}{1 - \rho_5} = \frac{Q_6(t)}{1 - \rho_6}$, the queues
of nodes~5 and~6 both empty at time~$T_3(t)$, barring a switch
to an $M_4$-period, and permitting only a switch to either an $M_1$-
or $M_2$-period.
For similar reasons as mentioned in case~2(c), the corresponding transition
probabilities depend on queue length differences that are affected by
the history of the process and are not visible on the fluid scale.
\end{remark}
\end{enumerate}

\subsubsection{$M_4$-period}

Assume that the system enters an $M_4$-period at time~$t$, then
\begin{enumerate}
\item[(a)] It spends a time period
$T_4(t) = \min\left\{\frac{Q_4(t)}{1 - \rho_4},
\frac{Q_5(t)}{1 - \rho_5}\right\}$ in $M_4$.
\item[(b)] During this period, the queues of nodes~4 and~5 drain at
a linear rate
\[
Q_i(t + u) = Q_i(t) - (1 - \rho_i) u, \mbox{ for } i = 4, 5,
\]
while the queues of nodes 1, 2, 3, 6 fill at a linear rate
\[
Q_i(t + u) = Q_i(t) + \rho_i u, \mbox{ for } i = 1, 2, 3, 6,
\]
$u \in [0, T_4(t)]$.
In particular, $\min\{Q_4(t + T_4(t)), Q_5(t + T_4(t))\} = 0$.

\item [(c)] At time $t + T_4(t)$, the system switches to either
an $M_2$- or $M_3$-period.
In order to determine which of these events can occur, we need to
distinguish between three cases, depending on whether
$\frac{Q_4(t)}{1 - \rho_4}$ is (i) larger than, (ii) equal to,
or (iii) smaller than $\frac{Q_5(t)}{1 - \rho_5}$.

In case~(i), i.e.,
$\frac{Q_4(t)}{1 - \rho_4} > \frac{Q_5(t)}{1 - \rho_5}$, we have
$Q_4(t + T_4(t)) > 0$, i.e., the queue of node~4 is still non-empty
by time $t + T_4(t)$, causing a switch to an $M_2$-period with
probability~1.

In case~(ii), i.e.,
$\frac{Q_4(t)}{1 - \rho_4} = \frac{Q_5(t)}{1 - \rho_5}$, we have
$Q_4(t + T_4(t)) = Q_5(t + T_4(t)) = 0$, i.e., the queues of nodes~4
and~5 both empty at time $t + T_4(t)$.
Even though both queues empty at the same time on the fluid scale,
there will with overwhelming probability be a long period in the
pre-limit where one of the nodes has become inactive for the first
time while the other one has yet to do so.
Since both nodes~4 and~5 must be inactive in order for nodes~1 and~2
to activate, these nodes have no chance to activate during that period,
but either node~3 or node~6 does, depending on whether node~5
or node~4 is the first one to become inactive.
As a result, the system cannot switch to an $M_1$-period, but only
to an $M_2$- or $M_3$-period.
In fact, a switch to $M_2$ will occur almost surely if node~5 is the
first one to become inactive, while a switch to $M_3$ will occur
almost surely if node~4 is the first one to become inactive.
The probabilities of these two scenarios, and hence the transition
probabilities to $M_2$ and $M_3$, depend on queue length differences
between nodes~4 and~5 at time~$t$ that are affected by the history
of the process and are not visible on the fluid scale.

In case~(iii), i.e.,
$\frac{Q_4(t)}{1 - \rho_4} < \frac{Q_5(t)}{1 - \rho_5}$, we have
$Q_5(t + T_4(t)) > 0$, i.e., the queue of node~5 is still non-empty
by time $t + T_4(t)$, forcing a switch to an $M_3$-period with
probability~1.
\end{enumerate}

\begin{remark}
As noted in the above description of the fluid limit process,
in cases 2(c), 3(c), and 4(c)(ii) the transition probabilities from
an $M_2$-period to an $M_1$- or $M_3$-period, from an $M_3$-period to
an $M_1$- or $M_2$-period, and from an $M_4$- to an $M_2$- or $M_3$-period,
depend on queue length differences that are affected by the history
of the process and are not visible on the fluid scale.
Depending on whether or not the initial state and parameter values
allow for these cases to arise, it may thus be impossible to provide
a probabilistic description the evolution of the resulting fluid limit
process, even it terms of its entire own history.
\end{remark}

\subsection{Overview of fluid limit proofs}
\label{over}

In the previous subsection we provided a description of the dynamics
of the fluid limit process once it has reached a natural state
and entered an $M_1$, $M_2$-, $M_3$ or $M_4$-period.
As was further stated, for $\rho$ sufficiently close to~1,
a natural state and such a period occur in uniformly bounded time
almost surely for any initial state.
The justification for all these properties follows from a series of
lemmas and propositions stated and proved in Appendices A--D.
In this subsection we present a high-level outline of the fluid limit
statements and proofs.

First of all, recall that the description of the fluid limit process
referred to the continuous-time Markov process representing the
system dynamics as introduced in Section~\ref{mode}.
For all the proofs of fluid limit properties and instability results
however we consider a rescaled linear interpolation of the uniformized
jump chain (as defined in Appendix A.I).
This construction yields convenient properties of the fluid limit paths
and allows us to extend the framework of Meyn~\cite{Meyn95} for
establishing instability results for discrete-time Markov chains.
(The original continuous-time Markov process has in fact the same fluid
limit properties, but this is not directly relevant in any of the proofs.)

The proofs of the fluid limit properties consist of four main parts.
Part~A identifies several basic properties of the fluid limit paths,
and in particular establishes that the queue length trajectory of each
of the individual nodes exhibits `sawtooth' behavior.
This fundamental property in fact holds in arbitrary interference graphs,
and only requires an exponent $\gamma > 1$ in the backoff probability.
Part~B of the proof shows a certain dominance property, saying that
if all the interferers of a particular node also interfere with some
other node that is currently being served at full rate, then the former
node must be empty or served at full rate (on the fluid scale) as well.
Under the assumption $\lambda_3 > \lambda_4$, $\lambda_5 < \lambda_6$,
the dominance property implies that after a finite amount of time the
fluid limit process for the broken-diamond network must always reside
in a natural state as defined in the previous subsection.
Part~C of the proof centers on the $M_1$-, $M_2$-, $M_3$-
and $M_4$-periods, and establishes that at the end of any such period,
the process immediately switches to one of the other types of periods
with the probabilities indicated in the previous subsection.
In particular, it is deduced that an $M_4$-period cannot be entered
from an $M_2$- or $M_3$-period, and must always be preceded
by an $M_1$-period once the process has reached a natural state.
The combination of the sawtooth queue length trajectories and the
switching probabilities provides a probabilistic description of the
dynamics of the fluid limit once the process has reached a natural
state and entered an $M_1$-, $M_2$-, $M_3$- or $M_4$-period.
Part~B already established that the process must always reside in
a natural state after a finite amount of time, but it remains to be
shown that the process will inevitably enter an $M_1$-, $M_2$-, $M_3$-
or $M_4$- period, which constitutes the final Part~D of the proof.
The core argument is that interfering empty and nonempty queues can
not coexist, since the empty nodes will frequently enter back-off periods,
offering the nonempty nodes abundant opportunities to gain access,
drain their queues, and cause the empty nodes to build queues in turn.

Part~A of the proof starts with the simple observation that, by the ``
skip-free'' property of the original pre-limit process, the sample
paths of the interpolated version of the uniformized jump chain are
Lipschitz continuous, and hence so are the sample paths of the
fluid-scaled process.
The fluid limit paths inherit the Lipschitz continuity,
and are thus differentiable almost everywhere with probability one.

Then fluid limit paths are determined by a countable set of `entrance'
times and `exit' times of $(0, \infty)$ with probability one.
The proof then proceeds to show that if a nonempty node (on the
fluid scale) receives any amount of service during some time interval,
then it must in fact be served at the full rate until it has completely
emptied (on the fluid scale), assuming $\gamma > 1$.
This implies that when node~$i$ is nonempty (on the fluid scale),
its queue must either increase at rate $\lambda_i$ or decrease at rate
$1 - \lambda_i$ until it has entirely drained.
In other words, the queue length trajectory of each of the individual
nodes exhibits sawtooth behavior (Theorem~\ref{thm_linearpaths}).

Part~B of the proof pertains to the joint behavior of the fluid limit
trajectories of the various queue lengths.
First of all, the natural property is proved that whenever
a particular node is served, none of its interferers can receive any
service (Lemma~\ref{lemma_conflict}).
Second, it is established that whenever a particular node is served,
any node whose interferers are a subset of those of the node served,
must either be empty or be served at full rate as well (on the fluid scale)
(Corollary~\ref{cor_dominate}).
For example, in the broken-diamond network, whenever node~3 is served,
node~4 must either be empty or be served at full rate as well,
and similarly for nodes~5 and~6.
These two properties combined yield a dominance property, saying that
if all the interferers of a particular node also interfere with some
other node that is currently being served at full rate, then the former
node must be empty or served at full rate (on the fluid scale) as well.
In the case of the broken-diamond network, under the assumption
$\lambda_3 > \lambda_4$, the queue of node~3 will therefore never be
smaller than that of node~4 after some finite amount of time,
and similarly for nodes~4 and~5.
Thus the fluid limit process will always reside in a natural state
after some finite amount of time.

Part~C of the proof focuses on the $M_1$-, $M_2$-, $M_3$-
and $M_4$-periods as described above.
Because of the sawtooth behavior, an $M_1$-period can only end when
both nodes~1 and~2 are empty (on the fluid scale).
Likewise, an $M_2$- or $M_3$-period can only end when node~3 or node~6
is empty, respectively.
An $M_4$-period can only end when node~4 or node~5 (or both) is empty.
It is then proven that at the end of an $M_1$-period, the fluid limit
process immediately switches to an $M_2$-, $M_3$- or $M_4$-period with
the probabilities specified in the previous subsection
(Theorem~\ref{thm_switchM0}).
When the process resides in a natural state, an $M_2$-period is always
instantaneously followed by an $M_1$- or $M_3$-period,
while an $M_3$-period is always instantaneously followed by an $M_1$-
or $M_2$-period.
In particular, it is concluded that an $M_4$-period cannot be entered
from an $M_2$- or $M_3$-period, and must always be preceded
by an $M_1$-period once the process has reached a natural state.
After an $M_4$-period, the process always immediately switches to
an $M_2$- or $M_3$-period.

There is no reason a priori however that the process is guaranteed to
actually ever enter an $M_1$-, $M_2$-, $M_3$- or $M_4$- period.
In fact, the process may very well spend time in different kinds of
states, but the final Part~D of the proof establishes that these
kinds of states are transient, and cannot occur once a natural state
has been reached, which is forced to happen in a finite amount of time
for particular arrival rates as was already shown in Part~B.
Note that an $M_1$-, $M_2$-, $M_3$- or $M_4$- period occurs as soon as
node~1, node~2, node~3, node~6 or nodes~4 and~5 simultaneously are
served at full rate.
In other words, the only ways for the process to avoid an $M_1$-,
$M_2$-, $M_3$- or $M_4$-period, are:
(i) for node~4 to be served at full rate, but not nodes~3 and~5;
(ii) for node~5 to be served at full rate, but not nodes~4 and~6;
(iii) for none of the nodes to be served at full rate.
Scenario~(i) requires node~3 to be empty (on the fluid scale)
and node~4 to be nonempty, which can not occur in a natural state.
Likewise, scenario~(ii) cannot arise in a natural state either.
Scenario~(iii) requires that every empty node~$i$ is served at rate~$\rho_i$
(on the fluid scale), while all nonempty nodes are served at rate~0.
Such a scenario is not particularly plausible,
but a rigorous proof turns out to be quite involved.
The insights rely strongly on the specific properties of the broken-diamond
network, and an extension to arbitrary graphs does not seem straightforward.
The core argument is that interfering empty and nonempty queues can
not coexist, since the empty nodes will frequently enter back-off periods,
offering the nonempty nodes abundant opportunities to gain access,
drain their queues, and cause the empty nodes to build queues in turn.

\section{Instability results for broken-diamond network}
\label{instab}

In the previous section we provided a detailed description of the
dynamics of the fluid limit process once it has reached a natural
state and entered an $M_1$-, $M_2$-, $M_3$ or $M_4$-period.
In this section we exploit the properties of the fluid limit process
in order to prove that it is unstable for $\rho$ sufficiently close to~1,
and then show how the instability of the original stochastic process
can be deduced from the instability of the fluid limit process.

\subsection{Instability of the fluid limit process}

In order to prove instability of the fluid limit process,
we first revisit the intuitive explanation discussed earlier,
see Figure~\ref{path2} for an illustration.
Denote $\rho_0 = \max\{\rho_1, \rho_2\}$, and recall that
$\rho_3 \geq \rho_4$ and $\rho_5 \leq \rho_6$ by assumption.
Since nodes 1, 2, 3 and 6 are only served during $M_1$-, $M_2$-
and $M_3$-periods, and not during $M_4$-periods, it is easily seen
that the fraction of time that the system spends in $M_1$-, $M_2$-
and $M_3$-periods must be no less than $\rho = \rho_0 + \rho_3 + \rho_6$
in order for these nodes to be stable.
Thus, the system cannot be stable if it spends a fraction of the
time larger than $1 - \rho$ in $M_4$-periods.
As it turns out, however, when the de-activation function is
sufficiently aggressive, e.g., $g(x) = \oo(x^{- \gamma})$,
with $\gamma > 1$, $M_4$-periods in fact persistently occur for
a fraction of time that does not tend to~0 as $\rho$ approaches~1,
which forces the system to be unstable.


Figure~\ref{path2} shows a fluid-limit sample path corresponding to
the switching sequence $M_1 \to M_2 \to M_1 \to M_4 \to M_3 \to M_1$.
The aggregate queue size starts building up in the $M_3$-period
that follows the $M_4$-period.

In order to prove instability of the fluid limit process, we adopt the
Lyapunov function $\Load(t) = \sum_{k = 1}^{3} \max_{i \in M_k} Q_i(t)$,
and will show that the load $\Load(t)$ grows without bound almost surely.
Note that the load $\Load(t)$ increases during $M_4$-periods while
the process is in a natural state.

In preparation for the instability proof, we first state two auxiliary
lemmas.
It will be convenient to view the evolution of the fluid limit process,
and in particular the Lyapunov function $\Load(t)$, over the course of
\emph{cycles}.
The $i$-th \emph{cycle} is the period from the start of the $(i - 1)$-th
$M_1$-period to the start of the $i$-th $M_1$-period
\emph{once the fluid limit process has reached a natural state}.
Denote by $t_i$ the start time of the $i$-th cycle, $i = 1, 2, \dots$.
Each $t_i$ is finite almost surely for $\rho$ sufficiently close to~1,
and in particular an infinite number of cycles must occur almost surely.
In order to see that, recall that the fluid limit process will reach
a natural state and enter an $M_1$, $M_2$, $M_3$- or $M_4$-period in
finite time almost surely for any initial state as stated in
Subsection~\ref{desc}.
The description of the dynamics of the fluid limit process provided in
that subsection then implies that $M_1$-periods and hence cycles must
occur infinitely often
(and if only finitely many $M_1$-periods occurred, then at least one
of the nodes would in fact never be served again after some finite time,
implying that the fluid limit process is unstable regardless).

The next lemma shows that the duration of a cycle and the possible
increase in the load over the course of a cycle are linearly
bounded in the load at the start of the cycle.

\begin{lemma}

\label{auxi1}

The duration of the $i$-th cycle, $\Delta t_i = t_{i + 1} - t_i$,
and the increase in the load over the course of the $i$-th cycle,
$ \Load(t_{i + 1}) - \Load(t_i) =
\Load(t_i + \Delta t_i) - \Load(t_i)$, are bounded from above by
\[
\Delta t_i \leq C_T \Load(t_i) \mbox{ and } \Load(t_{i + 1}) - \Load(t_i) \leq
C_{\Load} \Load(t_i),
\]
for all $\rho \leq 1$, where $C_T = \frac{1}{1 - \rho_3 - \rho_6}
\left(\frac{1}{1 - \rho_0} + \frac{1}{1 - \max\{\rho_4, \rho_5\}}\right)$
and $C_{\Load} = \frac{\rho}{1 - \max\{\rho_4, \rho_5\}}$.

\end{lemma}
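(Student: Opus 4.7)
The plan is to partition each cycle into three sub-phases — an initial $M_1$-period of length $T_1$, at most one ensuing $M_4$-period of length $T_4$ (possibly zero), and a closing sequence of alternating $M_2$- and $M_3$-periods of total length $T_{\text{rest}}$ — and to bound each piece in terms of $L_0 := \Load(t_i)$. That the $M_4$-period appears at most once per cycle follows from Part~C of the fluid limit proofs: from $M_2$ or $M_3$ the process can only switch to $M_1$ or to the other of $M_2, M_3$, so $M_4$ cannot be re-entered without first passing through $M_1$. Writing $M_0 := \max\{Q_1(t_i), Q_2(t_i)\}$, the natural-state identity $L_0 = M_0 + Q_3(t_i) + Q_6(t_i)$ will be the key accounting tool.

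The first two sub-phase bounds are immediate from the dynamics in Subsection~\ref{desc}. The $M_1$ duration formula yields $T_1 \leq M_0/(1 - \rho_0)$. For the $M_4$-period, denoting its onset by $\tau_1$, the natural state condition gives $\min\{Q_4(\tau_1), Q_5(\tau_1)\} \leq \min\{Q_3(\tau_1), Q_6(\tau_1)\} \leq \Load(\tau_1)$, and since a direct rate check shows $\frac{\dd}{\dd t}\Load \leq -(1-\rho_0) + \rho_3 + \rho_6 = \rho - 1 \leq 0$ throughout $M_1$, we have $\Load(\tau_1) \leq L_0$ and hence $T_4 \leq L_0/(1 - \max\{\rho_4, \rho_5\})$. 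The decisive observation for the third sub-phase is that in every $M_2$- and $M_3$-period the quantity $Q_3 + Q_6$ decreases at the exact rate $1 - \rho_3 - \rho_6$: in $M_2$, $Q_3$ drains at $1 - \rho_3$ while $Q_6$ fills at $\rho_6$, and symmetrically in $M_3$. During $M_1$ and $M_4$, by contrast, $Q_3 + Q_6$ grows at the rate $\rho_3 + \rho_6$. Since $Q_3 + Q_6 \geq 0$ at the end of the cycle, combining these linear rates gives
\[
T_{\text{rest}} \leq \frac{Q_3(t_i) + Q_6(t_i) + (\rho_3 + \rho_6)(T_1 + T_4)}{1 - \rho_3 - \rho_6}.
\]

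Adding the three contributions and substituting $Q_3(t_i) + Q_6(t_i) = L_0 - M_0$, the coefficients of $T_1 + T_4$ collapse to $1/(1 - \rho_3 - \rho_6)$, producing
\[
\Delta t_i \leq \frac{T_1 + T_4 + (L_0 - M_0)}{1 - \rho_3 - \rho_6}.
\]
Inserting the bounds on $T_1, T_4$ and using $M_0/(1-\rho_0) + (L_0 - M_0) \leq L_0/(1-\rho_0)$ (which holds since $M_0 \leq L_0$ and $\rho_0 \geq 0$) yields exactly $\Delta t_i \leq C_T L_0$. The load-increment bound is then immediate: the same rate check shows $\Load$ has derivative at most $\rho - 1 \leq 0$ during $M_1, M_2, M_3$ and exactly $\rho$ during $M_4$, so $\Load(t_{i+1}) - \Load(t_i) \leq \rho T_4 \leq \rho L_0/(1 - \max\{\rho_4, \rho_5\}) = C_{\Load} L_0$.

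The only real obstacle is the cycle decomposition itself, which rests entirely on the qualitative description in Subsection~\ref{desc} — in particular the sawtooth trajectories, the preservation of the natural state (so that $\Load = \max\{Q_1,Q_2\} + Q_3 + Q_6$ throughout), and the at-most-one-$M_4$ property. Once those are granted, the estimate is a routine telescoping of linear bounds, and the fact that $C_T$ and $C_{\Load}$ remain finite uniformly in $\rho \leq 1$ hinges on $1 - \rho_3 - \rho_6 \geq \rho_0 > 0$ rather than on the margin $1 - \rho$.
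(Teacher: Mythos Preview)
Your proof is correct and follows essentially the same approach as the paper's: both bound the $M_1$ duration by $L_0/(1-\rho_0)$, the $M_4$ duration by $L_0/(1-\max\{\rho_4,\rho_5\})$, and then use that $Q_3+Q_6$ (the paper writes this as $K(t)=\max\{Q_3,Q_4\}+\max\{Q_5,Q_6\}$, identical in a natural state) drops at rate $1-\rho_3-\rho_6$ during $M_2/M_3$ to control the remaining time, with the load bound coming from $\dot L \le \rho$ during $M_4$ only. Your three-phase decomposition is slightly cleaner than the paper's case split on whether an $M_4$-period occurs, but the substance is the same.
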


The proof of the above lemma is presented in Appendix~E.

In order to establish that the durations of $M_4$-periods are
non-negligible, it will be useful to introduce the notion of
`weakly-balanced' queues, ensuring that the queues of nodes~4 and~5
are not too small compared to the queues of nodes~3 and~6.

\begin{definition}

Let $\beta^{\min}$ and $\beta^{\max}$ be fixed positive constants.
The queues are said to be weakly-balanced in a given cycle (with
respect to $\beta^{\min}$ and $\beta^{\max}$) if $\beta^{\min} \leq
\frac{Q_3(t)}{Q_5(t)}, \frac{Q_6(t)}{Q_4(t)} \leq \beta^{\max}$,
with $t$ denoting the time when the $M_1$-period ends that
initiated the cycle.

\end{definition}

The next lemma shows that over two consecutive cycles, the queues
will be weakly-balanced with probability at least~1/3.

\begin{lemma}

\label{auxi2}

Let
\[
\epsilon = \frac{\rho_2}{2 \left(\rho_2 + (\rho_3 + \rho_6)
\frac{1 - \min\{\rho_4, \rho_5\}}{1 - \max\{\rho_4, \rho_5\}}\right)} \geq
\frac{\rho_2}{\rho}
\frac{1 - \max\{\rho_4, \rho_5\}}{1 - \min\{\rho_4, \rho_5\}}.
\]
Then over two consecutive cycles, with probability at least~1/3,
the queues will be weakly-balanced in at least one of these cycles with
\[
\beta^{\max} =
\frac{\max\{\rho_3, \rho_6\} + (1 - \rho_2) (1 - \epsilon) / \epsilon}
{\min\{\rho_4, \rho_5\}},
\]
and $\beta^{\min} = \frac{1}{\beta^{\max}}$.

\end{lemma}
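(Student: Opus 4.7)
The plan is to establish the ratio bounds at the end of the $M_1$-period of a cycle via a deterministic estimate, and then combine this with a probabilistic dichotomy to get the $1/3$ lower bound over two consecutive cycles. The starting identity is that $Q_2 = 0$ at the end of each $M_1$-period and grows at rate $\rho_2$ during all subsequent non-$M_1$ portions of the cycle (since node~2 is served only in $M_1$-periods), which yields
\[
Q_2(t_{i+1}) = \rho_2\bigl(\Delta t_i - T_1(t_i)\bigr),
\]
and hence the crucial lower bound $T_1(t_{i+1}) \geq \rho_2\bigl(\Delta t_i - T_1(t_i)\bigr)/(1-\rho_2)$. This identity is what ties the $M_1$-fraction of one cycle to the next, and ultimately what gives rise to the factor $(1-\rho_2)$ in $\beta^{\max}$.

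For the deterministic estimate, I would consider a cycle $j$ on which $T_1(t_j) \geq \epsilon \Delta t_j$ (an equivalent formulation enters in the two-cycle dichotomy below). Writing $\tau_j := t_j + T_1(t_j)$, one has $Q_k(\tau_j) = Q_k(t_j) + \rho_k T_1(t_j)$ for $k \in \{3,4,5,6\}$. Upper-bounding $Q_k(t_j) \leq \rho_k\bigl(\Delta t_{j-1} - T_1(t_{j-1})\bigr)$ (modulo residual contributions from earlier cycles, which are absorbed via Lemma~\ref{auxi1}), and substituting for $T_1(t_j)$ using the identity above, a direct algebraic manipulation gives
\[
\frac{Q_3(\tau_j)}{Q_5(\tau_j)} \leq \frac{\rho_3 + Q_3(t_j)/T_1(t_j)}{\rho_5} \leq \frac{\max\{\rho_3,\rho_6\} + (1-\rho_2)(1-\epsilon)/\epsilon}{\min\{\rho_4,\rho_5\}} = \beta^{\max},
\]
and an analogous bound for $Q_6(\tau_j)/Q_4(\tau_j)$. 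The reverse inequalities $\beta^{\min} \leq Q_3/Q_5, Q_6/Q_4$ follow from a symmetric derivation (bounding $Q_5, Q_4$ from above and $Q_3, Q_6$ from below), combined with the natural-state inequalities $Q_3 \geq Q_4$ and $Q_6 \geq Q_5$.

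The probabilistic step argues that over two consecutive cycles $i, i+1$, at least one of the events $\{T_1(t_j)/\Delta t_j \geq \epsilon\}$ for $j \in \{i, i+1\}$ holds with probability at least $1/3$. The choice of $\epsilon$, in particular the factor $2$ in its denominator, is designed so that the simultaneous failure of both events is inconsistent with Lemma~\ref{auxi1}'s linear bound on $\Delta t_j$; the $1/3$ is then a conservative constant obtained by combining this deterministic forcing with the $M_1$-exit probabilities $p_{12}=p_{13}=3/8$, $p_{14}=1/4$ (and the fact that the only switching patterns undermining the argument are the ones in which both cycles are dominated by their $M_1$-periods, which under the identity above is a contradictory event up to a set of probability at most $2/3$). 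The main obstacle is the deterministic upper bound on $Q_3(t_j)$ and $Q_6(t_j)$: if the schedule $M_2$ (respectively $M_3$) is not invoked in cycle $j-1$, these queues can persist from earlier cycles without being reset. This is handled by a bootstrap using the bound $\Load(t_{j+1}) \leq (1 + C_{\Load})\Load(t_j)$ from Lemma~\ref{auxi1}, together with the fluid-limit structure of Subsection~\ref{desc} ensuring that $M_2$- and $M_3$-periods must occur with positive frequency once the process is in a natural state, so that the accumulation of $Q_3, Q_6$ over cycles is controlled and the extra constant can be absorbed into $\beta^{\max}$.
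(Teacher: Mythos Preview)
Your overall strategy --- a deterministic ratio estimate valid whenever the $M_1$-portion of a cycle is ``large enough,'' combined with a two-cycle dichotomy showing this happens with probability at least $1/3$ --- is the right shape, but both halves have gaps that the paper's proof sidesteps by a different choice of reference quantity.

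The key difference is that the paper conditions on $\max\{Q_1(t_k),Q_2(t_k)\} \geq \epsilon\, \Load(t_k)$ at the \emph{start} of a cycle, whereas you condition on $T_1(t_j)/\Delta t_j \geq \epsilon$. The paper's choice pays off immediately in the deterministic step: since $\Load(t_k) = \max\{Q_1,Q_2\} + \max\{Q_3,Q_4\} + \max\{Q_5,Q_6\}$, one gets $Q_i(t_k) \leq (1-\epsilon)\Load(t_k)$ for $i=3,4,5,6$ directly from the definition of $\Load$, with no reference to earlier cycles. Plugging this and $T_1(t_k) \geq \epsilon \Load(t_k)/(1-\rho_2)$ into $Q_3(u_k)/Q_5(u_k) \leq [Q_3(t_k) + \rho_3 T_1]/(\rho_5 T_1)$ gives exactly $\beta^{\max}$. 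Your route instead needs the bound $Q_3(t_j) \leq \rho_3(\Delta t_{j-1} - T_1(t_{j-1}))$, which, as you note yourself, fails whenever node~3 was not drained in cycle $j-1$. The ``bootstrap via Lemma~\ref{auxi1} plus positive frequency of $M_2/M_3$ visits'' is not enough to repair this into the stated $\beta^{\max}$ with the stated $\epsilon$; it would at best yield a looser constant after additional work.

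For the probabilistic step, your claim that ``simultaneous failure of $T_1(t_j)/\Delta t_j \geq \epsilon$ for $j=i,i+1$ is inconsistent with Lemma~\ref{auxi1}'' is not established: Lemma~\ref{auxi1} bounds $\Delta t_j$ against $\Load(t_j)$, not against $\Delta t_{j-1}$, so it does not directly control the ratio $\Delta t_{i+1}/\Delta t_i$ that your identity $T_1(t_{i+1}) \geq \rho_2(\Delta t_i - T_1(t_i))/(1-\rho_2)$ would require. The paper's dichotomy is cleaner and explains the $1/3$: if $\max\{Q_1,Q_2\}(t_1) < \epsilon \Load(t_1)$, then one of $\max\{Q_3,Q_4\}(t_1)$ or $\max\{Q_5,Q_6\}(t_1)$ exceeds $(1-\epsilon)\Load(t_1)/2$; with probability $3/8$ the $M_1$-period is followed by the corresponding $M_2$- or $M_3$-period, which then lasts at least $(1-\epsilon)\Load(t_1)/[2(1-\min\{\rho_4,\rho_5\})]$, forcing $Q_2(t_2) \geq \epsilon \Load(t_2)$ by the specific choice of $\epsilon$ together with $\Load(t_2) \leq (1+C_\Load)\Load(t_1)$. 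So either the first cycle already satisfies the condition, or the second does with probability at least $3/8 > 1/3$.
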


The proof of the above lemma is presented in Appendix~E.

As suggested by the above lemma, it will be convenient to consider
pairs of two consecutive cycles in order to prove instability of
the fluid limit process.

Let $D_k$ be the pair of cycles consisting of cycles $2 k - 1$ and $2 k$
as in Figure~\ref{cycle}, $k = 1, 2, \dots$.
With minor abuse of notation, denote by $T_k = t_{2 k - 1}$ the start
time of $D_k$ and $\Load_k = \Load(T_k)$.
Denote by $\Delta T_k = T_{k + 1} - T_k$ the duration of~$D_k$
and by $\Delta \Load_k = \Load_{k + 1} - \Load_k$ the increase
in $\Load(t)$ over the course of~$D_k$.

The next proposition shows that for $\rho$ sufficiently close to~1
the load cannot significantly decrease over a pair of cycles
and will increase by a substantial amount with non-zero-probability.
We henceforth assume
$(\rho_1, \rho_2, \rho_3, \rho_4, \rho_5, \rho_6) = \rho
(\kappa_1, \kappa_2, \kappa_3, \kappa_3 - \alpha, \kappa_6 - \alpha,
\kappa_6)$, with $\max\{\kappa_1, \kappa_2\} + \kappa_3 + \kappa_6 = 1$
and $0 < \alpha < \min\{\kappa_3, \kappa_6\}$, so that
$\rho = \rho_0 + \rho_3 + \rho_6$.

\begin{figure}
\centering
\includegraphics[width=4in]{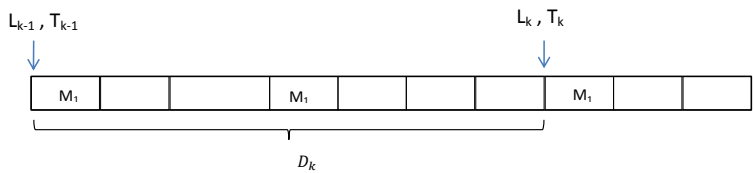}
\caption{A cycle $D_k$ consisting of a pair of consecutive cycles.}
\label{cycle}
\end{figure}

\begin{proposition}

\label{theorem:rho<1}

Let $C_{LT} = C_T (2 + C_\Load)$, with $C_T$ and $C_\Load$ as specified
in Lemma~\ref{auxi1}, $\theta = 1 - (1 - \rho) C_{LT}$, $p = 1/12$.
Over cycle pairs $D_k$, $k = 1, 2, \dots$,
\begin{itemize}
\item[(i)] $\Delta T_k \leq C_{LT} \Load_k$;
\item[(ii)] $\Load(t) \geq \theta \Load_k$ for all $t \in [T_k, T_{k+1}]$;
\item[(iii)] $\PP\bigl(\Load_{k + 1} - \theta \Load_k \geq
\delta(\rho) \theta \Load_k | \Load_k \bigr) \geq p$,
\end{itemize}
with $\delta(\rho)$ a constant, depending on~$\rho$,
and $\delta(\rho) \uparrow \delta =
\frac{1}{\beta^{\max}(1+\beta^{\max}) (1 + \alpha - \min\{\kappa_3, \kappa_6\})}$,
as $\rho \uparrow 1$.

\end{proposition}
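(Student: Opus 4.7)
The plan is to build up the three claims in order, leveraging Lemmas~\ref{auxi1} and~\ref{auxi2} together with the dynamics described in Subsection~\ref{desc}.

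For part~(i), I would apply Lemma~\ref{auxi1} twice. For cycle $2k-1$ it yields $\Delta t_{2k-1}\le C_T\Load_k$ and $\Load(t_{2k})\le (1+C_\Load)\Load_k$; for cycle $2k$ it then yields $\Delta t_{2k}\le C_T(1+C_\Load)\Load_k$, and summing gives $\Delta T_k\le C_T(2+C_\Load)\Load_k=C_{LT}\Load_k$. For part~(ii), I would first observe that in a natural state $\Load(t)$ evolves piecewise linearly with slope exactly $\rho$ during any $M_4$-subperiod and slope at least $-(1-\rho)$ during any $M_1$-, $M_2$- or $M_3$-subperiod. Writing $\tau_4(s,t)$ for the total length of $M_4$-subperiods in $[s,t]$, one obtains
\[
\Load(t) \geq \Load_k + \tau_4(T_k,t) - (1-\rho)(t-T_k),
\]
and dropping the nonnegative $\tau_4$-term combined with part~(i) gives $\Load(t)\ge\theta\Load_k$ uniformly on $[T_k,T_{k+1}]$.

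The same identity evaluated at $t=T_{k+1}$ yields the key reformulation
\[
\Load_{k+1}-\theta\Load_k \geq \tau_4(T_k,T_{k+1})+(1-\rho)(C_{LT}\Load_k-\Delta T_k) \geq \tau_4(T_k,T_{k+1}),
\]
so for part~(iii) it suffices to produce, with probability at least $p=1/12$, an $M_4$-subperiod of $D_k$ of duration at least $\delta(\rho)\theta\Load_k$. Lemma~\ref{auxi2} supplies an event $W$ of probability at least $1/3$ on which some cycle $I\in\{2k-1,2k\}$ has weakly-balanced queues at the end of its opening $M_1$-period; on $W$, at that instant $Q_1=Q_2=0$, and the weakly-balanced inequalities combined with the natural-state inequalities $Q_3\ge Q_4$, $Q_6\ge Q_5$ give $\max\{Q_3,Q_6\}\le\beta^{\max}\min\{Q_3,Q_6\}$ and $\min\{Q_4,Q_5\}\ge\min\{Q_3,Q_6\}/\beta^{\max}$. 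Writing $L'=Q_3+Q_6$ for the value of $\Load$ at that instant,
\[
T_4 = \min\!\Bigl\{\tfrac{Q_4}{1-\rho_4},\tfrac{Q_5}{1-\rho_5}\Bigr\} \geq \frac{\min\{Q_3,Q_6\}}{\beta^{\max}(1-\min\{\rho_4,\rho_5\})} \geq \frac{L'}{\beta^{\max}(1+\beta^{\max})(1-\min\{\rho_4,\rho_5\})},
\]
and using $L'\ge\theta\Load_k$ from part~(ii) yields $T_4\ge\delta(\rho)\theta\Load_k$ with $\delta(\rho):=[\beta^{\max}(1+\beta^{\max})(1-\min\{\rho_4,\rho_5\})]^{-1}$; substituting $\rho_4=\rho(\kappa_3-\alpha)$, $\rho_5=\rho(\kappa_6-\alpha)$ then makes $\delta(\rho)\uparrow\delta$ as $\rho\uparrow 1$. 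To convert this deterministic length bound into an actual $M_4$-subperiod, I would invoke switching rule~1(c): conditional on $W$ and on the identity of $I$, the transition out of the $M_1$-period of cycle $I$ goes into $M_4$ with probability $p_{14}=1/4$, so the joint probability is at least $\tfrac{1}{3}\cdot\tfrac{1}{4}=p$.

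The main obstacle is justifying the clean factorisation of the switching probability against $W$. The reason it factorises is that under $f_i\equiv 1$, the outgoing transition from an $M_1$-period is driven solely by which of the nodes $3,4,5,6$ first wins a unit-rate race once both of nodes $1$ and $2$ have deactivated; the winning probabilities are the same $1/4$ irrespective of queue values, whereas $W$ is measurable with respect to those queue values. A secondary technical point is the slope verification in part~(ii): during an $M_1$-period the draining rate of $\max\{Q_1,Q_2\}$ equals $1-\rho_i$ for whichever node currently carries the max, which is at least $1-\rho_0$, so the slope bound $-(1-\rho)$ holds with possible strict inequality, which is sufficient for the lower bound to carry through.
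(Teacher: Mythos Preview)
Your argument tracks the paper's proof closely. Part~(i) is identical; part~(ii) is the same observation that $\Load$ cannot drop faster than rate $1-\rho$, your $\tau_4$-decomposition being a mild repackaging; and for part~(iii) both you and the paper combine the weakly-balanced event of Lemma~\ref{auxi2} (probability $\geq 1/3$) with the switch $p_{14}=1/4$ into $M_4$, and then bound the $M_4$ duration via $\min\{Q_4,Q_5\}\geq \Load(\tau)/(\beta^{\max}(1+\beta^{\max}))$. The paper computes the \emph{increase} $\rho T_4$ of $\Load$ across the $M_4$-period and subtracts $(1-\rho)\Delta T_k$ at the end, whereas you use the equivalent identity $\Load_{k+1}-\theta\Load_k\geq\tau_4$; the resulting $\delta(\rho)$'s differ by a harmless factor of~$\rho$ and share the same limit~$\delta$.

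There is one genuine slip in your closing paragraph. For part~(ii) (and for your $\tau_4$-identity) you need the \emph{lower} bound $\Load'(t)\geq -(1-\rho)$ during $M_1$-, $M_2$-, $M_3$-periods, but the argument you give points the other way: ``draining rate of $\max\{Q_1,Q_2\}$ is $1-\rho_i\geq 1-\rho_0$'' yields $\Load'(t)\leq -(1-\rho)$, so if $\rho_1<\rho_2$ and node~1 is the current argmax, $\Load$ decreases \emph{faster} than $1-\rho$, which breaks the lower bound rather than strengthening it. The correct justification is that once an $M_1$-period has ended with $Q_1=Q_2=0$, the two queues refill in ratio $\rho_1{:}\rho_2$ and thereafter $Q_2\geq Q_1$ persists through all periods, so $\max\{Q_1,Q_2\}=Q_2$ always and the slope is exactly $-(1-\rho)$ (respectively~$\rho$ during $M_4$). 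Alternatively, take $\rho_1=\rho_2$ as the appendix does. The paper asserts the same slope bound in its proof of part~(ii) without elaboration, so this is a shared subtlety rather than a defect peculiar to your write-up.
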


\begin{proof}

We first show part~(i).
Using Lemma~\ref{auxi1}, we find
\[
\Delta T_k = \Delta t_{2 k - 1} + \Delta t_{2 k} \leq
C_T (\Load(t_{2 k - 1}) + \Load(t_{2 k})) \leq C_T (2 + C_{\Load}) \Load_k.
\]

In order to prove part~(ii), note that $\Load(t)$ cannot decrease at
a larger rate than $1 - \rho$, so that in view of part~(i),
\[
\Load(t) \geq \Load_k - (1 - \rho) (t - T_k) \geq
\Load_k - (1 - \rho) \Delta T_k \geq (1 - (1 - \rho) C_{LT}) \Load_k =
\theta \Load_k,
\]
for all $t \in [T_k, T_{k + 1}]$.

We now turn to part~(iii).
Suppose that the following event occurs:
the queues are weakly-balanced at the end of an $M_1$-period, say
time~$\tau$, during~$D_k$
(which according to Lemma~\ref{auxi2}) happens with at least
probability~1/3) and the system then enters an $M_4$-period
(which happens with probability~1/4).
Recalling that $\rho_3 > \rho_4$, $Q_3(t) \geq Q_4(t)$,
$\rho_5 < \rho_6$ and $Q_5(t) \leq Q_6(t)$, we find that during the
$M_4$-period $\Load(t)$ increases by
\[
\rho \min\left\{\frac{Q_4(\tau)}{1 - \rho_4}, \frac{Q_5(\tau)}{1 - \rho_5}\right\} \geq
\rho \frac{\min\{Q_4(\tau), Q_5(\tau)\}}{1 - \rho \min\{\kappa_3, \kappa_6\} + \rho \alpha}.
\]
Since the queues are weakly-balanced, we deduce
$Q_3(\tau) \leq \beta^{\max} Q_5(\tau) \leq \beta^{\max} Q_6(\tau)
\leq (\beta^{\max})^2 Q_4(\tau)$
and $Q_6(\tau) \leq \beta^{\max} Q_4(\tau) \leq \beta^{\max} Q_3(\tau)
\leq (\beta^{\max})^2 Q_5(\tau)$.
Noting that $Q_1(\tau) = Q_2(\tau) = 0$, we obtain
$$
\Load(\tau) = Q_3(\tau) + Q_6(\tau) \leq (1+\beta^{\max})Q_6(\tau) \leq
\beta^{\max}(1+\beta^{\max})Q_4(\tau),
$$
and also
$$
\Load(\tau) = Q_3(\tau) + Q_6(\tau) \leq (1+\beta^{\max})Q_3(\tau) \leq
\beta^{\max}(1+\beta^{\max})Q_5(\tau).
$$
So
$$
\Load(\tau) \leq \beta^{\max}(1+\beta^{\max}) \min \{Q_4(\tau), Q_5(\tau)\},
$$
and thus the increase in $\Load(t)$ during the $M_4$-period is no less
than $\delta(\rho) \Load(\tau)$, with
\[
\delta(\rho) = \frac{\rho}{\beta^{\max}(1+\beta^{\max}) (1 - \rho \min\{\kappa_3, \kappa_6\} + \rho \alpha)}.
\]
Using part~(i) once again, we conclude that with at least probability 1/12,
\ben
\Load_{k + 1} &\geq &\Load_k + \delta(\rho) \Load(\tau) - (1 - \rho) \Delta T_k \geq
\Load_k + \delta(\rho) (\Load_k - (1 - \rho) \Delta T_k) - (1 - \rho) \Delta T_k \\
&=& (1 + \delta(\rho)) (\Load_k - (1 - \rho) \Delta T_k) \geq
(1 + \delta(\rho)) (\Load_k - (1 - \rho) C_{LT} \Load_k) \\
&= & (1 + \delta(\rho)) \theta \Load_k.
\een
\end{proof}

Armed with the above proposition, we now proceed to prove that the
fluid limit process is unstable, in the sense that $\Load(T) \to \infty$
as $T \to \infty$.
In fact, $\Load(T)$ grows faster than any sub-linear function
$T^{\frac{1}{m}}$, $m>1$, as stated in the next theorem.

\begin{theorem}

\label{expectedrate}

For any $m > 1$, there exists a constant
$\rho^* = \rho^*(\kappa, m) < 1$, such that for all
$\rho \in (\rho^*, 1]$,
\[
\limsup_{T \to \infty} \expect{\frac{T}{\Load^m(T)}} = 0,
\]
for any initial state ${\bf Q}(0)$ with $\Lonenorm{{\bf Q}(0)} = 1$,
and $\Lonenorm{\cdot}$ denoting the $L_1$-norm.

\end{theorem}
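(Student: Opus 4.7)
The plan is to iterate Proposition~\ref{theorem:rho<1} in order to control the negative moments of the embedded chain $L_k := L(T_k)$ sampled at the cycle-pair starts, and then transfer the resulting estimate to an arbitrary deterministic time~$T$. Since the first cycle pair begins at the almost-surely finite time~$T_1$, the initial phase $[0,T_1]$ contributes only a bounded term and can be absorbed into constants after conditioning on~$L_1$.

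First, I would convert parts~(ii) and~(iii) of the Proposition into a contractive one-step bound. Writing $\eta := (1+\delta(\rho))\theta$, the almost-sure inequality $L_{k+1} \geq \theta L_k$ together with the conditional event $L_{k+1} \geq \eta L_k$ of probability at least $p = 1/12$ given $\mathcal{F}_k$ yields, for every $m' > 0$,
\begin{equation*}
\EE\bigl[L_{k+1}^{-m'}\bigm|\mathcal{F}_k\bigr]\ \leq\ \mu_{m'}\,L_k^{-m'},\qquad
\mu_{m'} := p\,\eta^{-m'} + (1-p)\,\theta^{-m'}.
\end{equation*}
As $\rho \uparrow 1$ one has $\theta \uparrow 1$ and $\eta \uparrow 1+\delta$, so $\mu_{m'} \to 1 - p\bigl(1 - (1+\delta)^{-m'}\bigr) < 1$. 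For $\rho$ sufficiently close to~$1$, with the threshold depending on~$m$, both $\mu_m < 1$ and $\mu_{m-1} < 1$ hold simultaneously, and iterating gives $\EE\bigl[L_k^{-m}\bigm|L_1\bigr] \leq \mu_m^{k-1} L_1^{-m}$ and $\EE\bigl[L_k^{1-m}\bigm|L_1\bigr] \leq \mu_{m-1}^{k-1} L_1^{-(m-1)}$.

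Next, I would control $T_{k+1}/L_k^m$ in expectation. Combining part~(i) with the deterministic growth bound $L_{k+1} \leq (1+\rho C_{LT}) L_k$ (itself a consequence of $\Delta T_k \leq C_{LT} L_k$ and the fact that $L$ grows at rate at most~$\rho$) gives $T_{k+1} \leq T_1 + C_{LT}\sum_{j=1}^{k} L_j$. Tower conditioning then yields $\EE\bigl[L_j L_k^{-m}\bigm|L_1\bigr] \leq \mu_m^{k-j}\mu_{m-1}^{j-1}L_1^{1-m}$, so that
\begin{equation*}
\EE\bigl[T_{k+1}/L_k^m\bigm|L_1\bigr]\ \leq\ C\,(k+1)\,\nu^{k}\,L_1^{1-m},\qquad \nu := \max(\mu_m,\mu_{m-1}) < 1.
\end{equation*}
For a deterministic $T$, set $K(T) := \max\{k : T_k \leq T\}$, so that $T \in [T_{K(T)}, T_{K(T)+1})$; the deterministic upper bound $T_k \leq CL_1(1+\rho C_{LT})^k$ forces $K(T) \geq K_0(T) := \log(T/(CL_1))/\log(1+\rho C_{LT})$, which diverges with~$T$. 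Combining with $L(T) \geq \theta L_{K(T)}$ from part~(ii),
\begin{equation*}
\EE\bigl[T/L^m(T) \bigm| L_1\bigr]\ \leq\ \theta^{-m}\sum_{k \geq K_0(T)} \EE\bigl[T_{k+1}/L_k^m\bigm|L_1\bigr]\ \leq\ C'L_1^{1-m}\sum_{k \geq K_0(T)}(k+1)\nu^k,
\end{equation*}
which decays polynomially in~$T$ at rate $T^{-\alpha}\log T$ for some $\alpha>0$; taking the outer expectation over $L_1$ and invoking dominated convergence completes the proof.

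The main obstacle is the regime $m \downarrow 1$: then $\mu_{m-1}-1$ is only of order $(m-1)$, so ensuring $\mu_{m-1}<1$ requires the quantitative comparison $p\bigl(1-(1+\delta)^{-(m-1)}\bigr) > (1-p)\bigl(\theta^{-(m-1)}-1\bigr)$, where the right-hand side carries the extra factor $\log(1/\theta)\sim(1-\rho)C_{LT}$. Choosing $\rho^*(m)$ close enough to~$1$ secures the inequality, which is precisely why $\rho^*$ must depend on~$m$. A secondary issue is the non-Markovian character of the fluid limit, resolved by observing that Lemma~\ref{auxi2} and the $1/4$ switching probability into an $M_4$-period both hold conditionally on the full history $\mathcal{F}_k$, so the one-step contractive estimate extends from $\sigma(L_k)$ to $\mathcal{F}_k$.
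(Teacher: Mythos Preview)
Your proposal is correct and follows the same core strategy as the paper: derive the one-step supermartingale bound $\EE[L_{k+1}^{-m'}\mid\mathcal{F}_k]\leq\mu_{m'}L_k^{-m'}$ from Proposition~\ref{theorem:rho<1}, require both $\mu_m<1$ and $\mu_{m-1}<1$ (which is exactly where the dependence of $\rho^*$ on $m$ enters), and then pass from the embedded chain to a deterministic time via the index $K(T)$ of the cycle pair containing~$T$.

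The one genuine difference is in how you handle the random index $K(T)$. The paper bounds $\EE[T_kL_k^{-m}]$ recursively, shows the sequence $T_kL_k^{-m}$ is uniformly integrable, and then invokes an optional-sampling-type argument to pass to $\EE[T_{N_t}L_{N_t}^{-m}]$. You instead use the deterministic growth bound $L_{k+1}\leq(1+\rho C_{LT})L_k$ to get a deterministic upper bound $T_k\leq CL_1(1+\rho C_{LT})^k$, hence a deterministic lower bound $K(T)\geq K_0(T)\sim c\log T$, and then simply sum $\EE[T_{k+1}/L_k^m\mid L_1]$ over $k\geq K_0(T)$. This is a legitimate and arguably cleaner route: it bypasses the uniform-integrability machinery entirely, at the cost of using the a.s.\ boundedness of $L_1$ away from $0$ and of $T_1$ from above (which the paper also invokes). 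Both approaches rely on the same quantitative input, namely $\EE[T_{k+1}L_k^{-m}]\leq C(k+1)\nu^{k}$ with $\nu=\max(\mu_m,\mu_{m-1})<1$, obtained via the tower property exactly as you describe.
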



\begin{proof}

Consider the cycle pairs $D_k$, $k = 1, 2, \dots$, as defined right
before Proposition~\ref{theorem:rho<1}.
Assume $\rho \in (1 - \frac{1}{C_{LT}}, 1]$, so that $\theta \in (0, 1]$
in Proposition~\ref{theorem:rho<1}.
For any time $t > T_1$, we can define a stopping time $N_t$ such that
$T_{N_t} < t \leq T_{N_t+1}$, i.e., $t$ is within the $N_t$-th cycle pair.
(This is possible almost surely, since $T_k \to \infty$ as $k \to
\infty$ almost surely, as will be proven below.)
Recall that $T_{N_t+1} \leq T_{N_t} + C_{LT} \Load_{N_t}$
and $\Load(t) \geq \theta \Load_{N_t}$ by parts~(i) and~(ii) of
Proposition~\ref{theorem:rho<1}, respectively, and trivially
$\Load_{N_t} \leq \Load(0) + \rho T_{N_t} \leq 2 T_{N_t}$ for $t$
sufficiently large.
Thus,
\be
 \limsup_{t \to \infty} \expect{t \Load^{- m}(t)} &\leq &
\limsup_{t \to \infty} \expect{T_{N_t+1} \theta^{- m} \Load^{- m}_{N_t}} \nonumber \\
&\leq & \theta^{- m} \limsup_{t \to \infty}
\expect{T_{N_t} \Load^{- m}_{N_t}} +
\theta^{- m} C_{LT} \limsup_{t \to \infty} \expect{\Load^{- m + 1}_{N_t}} \nonumber \\
&\leq& \theta (1 + 2 C_{LT}) \limsup_{t \to \infty}
\expect{T_{N_t} \Load^{- m}_{N_t}}.
\label{limsup}
\ee
So it suffices to prove that there exists
$\rho^* = \rho^*(\kappa, m) < 1$ such that (\ref{limsup}) is zero
for $\rho > \rho^*$, which we now proceed to show.

First of all, by Proposition~\ref{theorem:rho<1}, for any $m>0$,
\be
\expect{\Load_{k+1}^{- m}|\FF_k}
&\leq&
(1 - p) (\theta \Load_k)^{- m} + p ((\theta+\delta) \Load_k)^{- m} \nonumber \\
&=& \alpha_m \Load_k^{- m},
\label{martingale}
\ee
where $\FF_k$ is a suitable filtration
and $\alpha_m := (1 - p) \theta^{- m} + p (\theta+\delta)^{- m}$.

Since $\theta(\rho) \to \theta(1) = 1$
and $\delta(\rho) \to \delta(1) = \delta > 0$ as $\rho \uparrow 1$,
$\alpha_m(\rho)$ is a continuous function of $\rho$ in the vicinity
of~1.
Because $\alpha_m(1) < 1$, there must exist
a $\rho_m^* = \rho^*(\kappa, m) < 1$ such that $\alpha_m < 1$ for
all $\rho > \rho^*$.
This shows that, for $\rho > \rho_m^*$, $\Load^{- m}_k$ is a positive
(geometric) supermartingale with parameter $\alpha_m < 1$.
Taking expectations on both sides of~(\ref{martingale}) yields
\be
\expect{\Load^{- m}_k} \leq \alpha_m^k \Load^{- m}_0.
\label{eq:L_k martingale}
\ee
with $\Load_0 = \Load(t_{i_0}) > 0$ as noted earlier.
In particular, $ \lim_{k \to \infty} \expect{\Load^{- m}_k} = 0$,
and $1 / \Load_k \to 0$ almost surely as $k \to \infty$ by the Doob's
supermartingale-convergence Theorem (page 147 of~\cite{RW89B}).
This implies that $T_k \to \infty$ almost surely because
$\Load_k \leq \rho T_k + 1 \leq T_k + 1$.
Therefore, the stopping time $T_{N_t}$ is well-defined.

Next, consider the sequence of random variables $T_k \Load_k^{- m}$.
Using Proposition~\ref{theorem:rho<1},
\be
\expect{T_k \Load_k^{- m}|\FF_{k-1}}
&\leq&
(T_{k-1} + C_{LT} \Load_{k-1}) \expect{\Load_k^{-m}|\FF_{k-1}} \nonumber \\
&\leq& (T_{k-1} + C_{LT} \Load_{k-1}) \alpha_m \Load_{k-1}^{- m} \nonumber \\
&=&
\alpha_m T_{k-1} \Load_{k-1}^{- m} + \alpha_m C_{LT} \Load_{k-1}^{- m + 1}.
\label{almost-martingale}
\ee
Define $\epsilon_k := C_{LT} \alpha_m \Load_k^{- m + 1}$, then,
by~(\ref{martingale}) and~(\ref{eq:L_k martingale}), $\epsilon_k$
is a positive (geometric) super-martingale with parameter
$\alpha_{m - 1} < 1$ for $\rho > \rho_{m - 1}^* = \rho^*(\kappa, m - 1)$.
Then, $\sum_{k=1}^\infty \expect{\epsilon_k} \leq
C_{LT} \alpha_m \sum_{k=1}^\infty \alpha_{m - 1}^k < \infty$, which shows
that $\lim_{k \to \infty} T_k \Load_k^{- m} = 0$ almost surely.
In particular, define $\alpha := \max(\alpha_m, \alpha_{m - 1})$
and $\rho^* = \max\{\rho_m^*, \rho_{m - 1}^*\}$, then taking
expectations on both sides of~(\ref{almost-martingale}) yields
\be
\expect{T_k \Load_k^{-m}} \leq
\alpha \expect{T_{k-1} L_{k-1}^{-m}} + \alpha C_{LT} \alpha^{k-1},
\ee
which, by induction, shows that
\be
\expect{T_k \Load_k^{- m}} \leq \alpha^{k-1} (\expect{T_1 \Load_1^{- m}} + C_{LT} (k-1) \alpha),
\ee
for $\rho \in (\rho^*, 1]$.
Now observe that $T_1$ is strictly bounded and $\Load_1$ is
bounded away from zero, since a natural state is reached in finite time, before the system can empty, almost surely.
It then follows that $\lim_{k \to \infty} \expect{T_k \Load^{-m}_k} = 0$,

The fact that $T_k \Load_k^{- m}$ converges in $\mathcal{L}_1$ implies
that the sequence of random variables $T_k \Load_k^{- m}$ is
\textit{Uniformly Integrable} (UI) (page 147, Theorem 50.1 of~\cite{RW89B}).
It therefore follows, by adapting the arguments of Doob's optional
sampling theorem (page 159 of~\cite{RW89B}), that the family of random
variables $\{T_{N_t} \Load^{- m}_{T_{N_t}}\}$ is also UI.
Thus by definition, given $\varepsilon >0$, there exists
$K_\varepsilon$ such that
\[
\expect{T_{N_t} \Load^{- m}_{N_t}
\indi{T_{N_t} \Load^{-m}_{N_t} \geq K_\varepsilon}} \leq
\varepsilon, ~~ \forall t > 0
\]
We deduce
\begin{align*}
\expect{T_{N_t} \Load^{- m}_{N_t}}
&\leq \sum_{k=1}^\infty \expect{T_k \Load^{- m}_k \indi{N_t = k}
\indi{T_{N_t} \Load^{- m}_{N_t} \leq K_\varepsilon}} + \varepsilon \nonumber \\
&\leq K_\varepsilon \pr{N_t \leq D} +
\sum_{k=D+1}^\infty C_{LT} k \alpha^{k-1} + \varepsilon.
\end{align*}
Fixing~$\varepsilon$ and~$D$, we find that
\[
\limsup_{t \to \infty} \expect{T_{N_t} \Load^{- m}_{N_t}} \leq
(D+1) \frac{\alpha^{D}}{1-\alpha} + \varepsilon
\]
by the Monotone Convergence Theorem~\cite{Billingsley68}, and thus,
letting $D \to \infty$ and $\varepsilon \to 0$, we have
$\limsup_{t \to \infty} \expect{T_{N_t} \Load^{-m}_{N_t}} = 0$ for
$\rho > \rho^*$.
\end{proof}
\begin{corollary}

\label{coro}

For any $m>1$, there exists a constant
$\rho^* = \rho^*(\kappa, m) < 1$, such that for all
$\rho \in (\rho^*, 1]$,
\[
\liminf_{T \to \infty} \frac{\Load(T)}{T^{1/m}} = \infty,
\]
almost surely for any initial state ${\bf Q}(0)$ with $\Lonenorm{{\bf Q}(0)} = 1$.

\end{corollary}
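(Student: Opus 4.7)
The plan is to upgrade the convergence of expectations established in Theorem~\ref{expectedrate} to a pathwise almost-sure statement, by extracting the a.s.\ ingredients that are already implicit in its proof and then interpolating between cycle boundaries via Proposition~\ref{theorem:rho<1}. First I would observe that the proof of Theorem~\ref{expectedrate} actually delivers two almost-sure facts, not just convergence in $\mathcal{L}_1$: the relation \eqref{almost-martingale} together with $\sum_k \EE[\epsilon_k] < \infty$ is a Robbins--Siegmund-type almost-supermartingale bound, so $T_k \Load_k^{-m} \to 0$ almost surely as $k \to \infty$; and $\Load_k^{-m}$ is a positive supermartingale with parameter $\alpha_m < 1$, so by Doob's convergence theorem $1/\Load_k \to 0$, i.e.\ $\Load_k \to \infty$ almost surely. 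In particular, the cycle start times $T_k$ tend to infinity almost surely (since $\Load_k \leq \Load(0) + \rho T_k$), so for every sufficiently large $T$ there is a well-defined random index $k=k(T)$ with $T \in [T_k, T_{k+1})$.

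Given such $T$ and $k$, parts (i) and (ii) of Proposition~\ref{theorem:rho<1} yield
\[
\Load(T) \geq \theta \Load_k
\qquad\text{and}\qquad
T \leq T_{k+1} \leq T_k + C_{LT}\Load_k,
\]
whence
\[
\frac{T}{\Load^m(T)} \leq \theta^{-m}\frac{T_k}{\Load_k^{m}} + C_{LT}\,\theta^{-m}\,\Load_k^{-(m-1)}.
\]
Since $m>1$, both terms on the right-hand side converge to~$0$ almost surely as $k\to\infty$: the first by the Robbins--Siegmund argument recalled above, the second because $\Load_k\to\infty$ a.s. Therefore $T/\Load^m(T) \to 0$ almost surely as $T\to\infty$, equivalently $\Load(T)/T^{1/m}\to\infty$ almost surely, which is strictly stronger than the stated liminf assertion and holds for the same $\rho^*(\kappa,m)$ as in Theorem~\ref{expectedrate}.

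There is no essential new obstacle here; the only care required is the bookkeeping that links the pathwise lower bound $\Load(T) \geq \theta \Load_k$ between cycles (where the load could drop at rate at most $1-\rho$) with the cycle-level a.s.\ convergence already contained in the proof of the theorem. The factor $\theta^{-m}$ is harmless because $\theta = 1-(1-\rho)C_{LT}$ is bounded away from $0$ for $\rho$ sufficiently close to $1$, which is exactly the regime in which $\rho^*(\kappa,m)$ was chosen.
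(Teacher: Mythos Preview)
Your proof is correct and follows essentially the same approach as the paper: both arguments reduce to the almost-sure convergence $T_k \Load_k^{-m} \to 0$ and $\Load_k^{-(m-1)} \to 0$ extracted from the proof of Theorem~\ref{expectedrate}, then interpolate between cycle boundaries using parts~(i) and~(ii) of Proposition~\ref{theorem:rho<1} together with the bound $T_{k+1} \leq T_k + C_{LT}\Load_k$. The paper's presentation is marginally terser (it starts from $\liminf_T \Load(T)/T^{1/m} \geq \liminf_k \theta \Load_k / T_{k+1}^{1/m}$ and then bounds $\limsup_k T_{k+1}\Load_k^{-m}$), but the decomposition and the ingredients are identical to yours.
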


\begin{proof}

Note that for any initial state ${\bf Q}(0)$ with $\Lonenorm{{\bf Q}(0)} = 1$,
\[
\liminf_{T \to \infty} \frac{\Load(T)}{T^{1/m}} \geq
\liminf_{k \to \infty} \frac{\theta \Load_k}{T_{k+1}^{1/m}},
\]
as can be seen from Proposition \ref{theorem:rho<1},
and so it suffices to show that $\limsup_k T_{k+1} \Load_k^{-m} = 0$.
But $T_{k+1} \leq T_k + C_{LT} \Load_k$, thus,
\be
\limsup_{k \to \infty} T_{k+1} \Load_k^{-m} \leq
\limsup_{k \to \infty} T_k \Load_k^{-m} + C_{LT} \limsup_{k \to \infty}
\Load_k^{-m+1}.
\ee
The right-hand side is zero because, as we saw in the proof of
Theorem~\ref{expectedrate}, both $T_k \Load_k^{-m}$ and $\Load_k^{-m+1}$
converge to zero almost surely for $\rho \in (\rho^*(\kappa, m), 1]$.

\end{proof}

\subsection{Instability of the original stochastic process}

In Theorem~\ref{expectedrate} we established that the fluid limit
process in unstable, in the sense that $\Load(T) \to \infty$ as
$T \to \infty$.
We now proceed to show how the instability of the original stochastic
process can be deduced from the instability of the fluid limit process.
The original stochastic process is said to be unstable when
$\{(U(t), \Qorig(t))\}_{t \geq 0}$ is transient,
and $\|\Qorig(t)\| \to \infty$ almost surely for any initial
state~$\Qorig(0)$.

We will exploit similar arguments as developed in Meyn~\cite{Meyn95}.
A notable distinction is that the result in~\cite{Meyn95} requires
that a suitable Lyapunov function exhibits strict growth over time.
In our setting the fluid limit is random, and the growth behavior
as stated in Theorem~\ref{expectedrate} is not strict, but only in
expectation and in an asymptotic sense, which necessitates a somewhat
delicate extension of the arguments in~\cite{Meyn95}.

The next theorem states the main result of the present paper,
indicating that aggressive deactivation functions cause the network
of Figure~\ref{network2} to be unstable for load values~$\rho$
sufficiently close to~1.

\begin{theorem}

\label{main}

Consider the network of Figure~\ref{network2}, and suppose that
$f_i(x) \equiv 1$, $x \geq 1$, and $g_i(x) = \oo(x^{- \gamma})$,
with $\gamma > 1$.
Let $(\rho_1, \rho_2, \rho_3, \rho_4, \rho_5, \rho_6) = \rho
(\kappa_1, \kappa_2, \kappa_3, \kappa_3 - \alpha, \kappa_6 - \alpha,
\kappa_6)$, with $\max\{\kappa_1, \kappa_2\} + \kappa_3 + \kappa_6 = 1$, and $0 < \alpha < \min\{\kappa_3, \kappa_6\}$.
Then there exists a constant
$\rho^\star(\kappa, \alpha) < 1$, such that for all
$\rho \in (\rho^\star(\kappa, \alpha), 1]$:
\[
\lim_{\|\Qorig(0)\| \to \infty}
\mathbb{P}_{\Qorig(0)}\{\liminf_{t \to \infty} \|\Qorig(t))\| = \infty\} = 1.
\]

\end{theorem}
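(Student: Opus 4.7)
The plan is to extend the fluid-model transience criterion of Meyn~\cite{Meyn95} to the random-fluid-limit setting produced in this paper. Working throughout with the uniformized jump chain $\{\Phi_n\}_{n \geq 0}$ associated with $\{(U(t), \Qorig(t))\}_{t \geq 0}$, it suffices, by the skip-free structure of the original process, to establish that $\Phi$ is transient in the sense that $\PP_x(\lim_{n \to \infty} \|\Phi_n\| = \infty) \to 1$ as $\|x\| \to \infty$. I would aim to establish a pre-limit positive-drift condition of the form
\[
\EE_x\bigl[\|\Phi_{\lceil T \|x\| \rceil}\|\bigr] \;\geq\; (1+\delta)\|x\|
\qquad \text{for all sufficiently large } \|x\|,
\]
for some fixed $T > 0$ and $\delta > 0$, and then invoke a Foster-type transience criterion with test function $V(x) = \|x\|$ to conclude.

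The main work is to derive this drift condition from Theorem~\ref{expectedrate}. I would first upgrade the fluid-limit convergence established in Appendices~A--D to convergence in $\mathcal{L}^m$ of the fluid-scaled load starting from $\|\Qorig(0)\| = R$. Uniform integrability of the positive $m$-th moment of $R^{-1}\|\Qorig(RT)\|$ follows from the stochastic upper bound $\|\Qorig(Rt)\| \leq R + \mathrm{Poi}\bigl(Rt\sumi \lambda_i\bigr)$. Uniform integrability of the \emph{reciprocal} $m$-th moment is the subtler ingredient, which I would obtain by decomposing on the event that the fluid-limit process has reached a natural state by pre-limit time $Rt^\ast < RT$: on the good event, Proposition~\ref{theorem:rho<1}(ii) supplies the deterministic lower bound $R^{-1}\|\Qorig(Rt)\| \geq \theta > 0$ on $[0,T]$, while the complementary event has vanishing probability as $R \to \infty$ by Part~B of the fluid-limit argument. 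Combining both and applying Jensen's inequality to the convex map $u \mapsto u^{-m}$ then converts Theorem~\ref{expectedrate} into the claimed drift inequality, provided $T$ is chosen large and $\rho > \rho^\ast(\kappa, m)$.

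The principal obstacle is precisely this uniform integrability of the reciprocal moments $R^m / \|\Qorig(RT)\|^m$. The underlying issue is that the fluid limit of the paper is random and the convergence established in the appendices is only in distribution (and almost sure along suitable subsequences); in Meyn's original deterministic setting this step is automatic and no such delicacy arises. A natural fallback, should the direct $\mathcal{L}^m$ route prove cumbersome, is to use instead the almost-sure growth $\liminf_{T \to \infty} \Load(T) / T^{1/m} = \infty$ of Corollary~\ref{coro}, combined with truncation and Markov's inequality, to extract only an \emph{in-probability} version of the drift condition. This weaker drift already suffices to drive a Meyn-style induction on the stopping-time sequence $\tau_{k+1} = \tau_k + T \|\Phi_{\tau_k}\|$, for which $\|\Phi_{\tau_k}\|$ dominates a submartingale growing geometrically in $k$ and hence escapes to infinity almost surely, yielding the desired instability of the original stochastic process.
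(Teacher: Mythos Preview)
Your proposal has a genuine gap at the step ``invoke a Foster-type transience criterion with test function $V(x)=\|x\|$.'' An expected-drift condition of the form $\EE_x[\|\Phi_{\lceil T\|x\|\rceil}\|]\geq(1+\delta)\|x\|$ is \emph{not} a valid transience criterion: a nonnegative process with multiplicative expected growth can still hit arbitrarily small values with positive probability (think of a supercritical branching process, which has positive extinction probability despite geometric mean growth). The standard route to transience via Meyn's Theorem~3.1 (reproduced here as Theorem~\ref{meyn}) needs a \emph{bounded} function $W$ with $W(x)\to 0$ as $\|x\|\to\infty$ and a one-step \emph{super}martingale inequality; the natural candidate $W(x)=\|x\|^{-\alpha}$ immediately runs into exactly the reciprocal-moment uniform-integrability issue you flag. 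Your proposed fix---importing the deterministic lower bound $L(t)\geq\theta L_k$ of Proposition~\ref{theorem:rho<1}(ii) into the pre-limit---does not work, because that bound is a property of the \emph{fluid limit} sample paths, not of the pre-limit chain at finite~$R$; at finite~$R$ the scaled queue can fall well below $\theta$ with positive probability, and that is precisely what blocks UI of $R^m\|\Qorig(RT)\|^{-m}$. Your fallback has the same defect: the claim that ``$\|\Phi_{\tau_k}\|$ dominates a submartingale growing geometrically and hence escapes to infinity almost surely'' is false without an almost-sure lower bound on the one-step ratio, and again Proposition~\ref{theorem:rho<1}(ii) does not supply one at the pre-limit level.

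The paper resolves this by constructing a Lyapunov function of an entirely different shape,
\[
W(x)=\EE_x\Biggl[\sum_{n=0}^{\|x\|T}\bigl[1+\|x\|+a\|\Qorig(n)\|\bigr]^{-m}\Biggr],
\]
for which $W(x)\to 0$ as $\|x\|\to\infty$ and the one-step drift decomposes as $A+B+C$ with $A$ a deterministic negative term. The key point---and the reason the UI problem disappears---is that after multiplying by $\|x\|^m$ the random variables $\|x\|^m B$ and $\|x\|^m C$ are \emph{uniformly bounded} (by $mT$ and $T$ respectively), so weak convergence of the fluid-scaled process alone suffices to pass to the limit, and Theorem~\ref{expectedrate} then controls the limiting integrals. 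This is the ``delicate extension'' the paper alludes to: the specific summed-reciprocal form of $W$ is engineered so that boundedness substitutes for the uniform integrability you would otherwise need but cannot easily get.
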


Since our Markov Chain is irreducible, Theorem immediately implies that it is transient. The proof of Theorem~\ref{main} relies on similar arguments as
developed in the proof of Theorem~3.2 in~\cite{Meyn95}.
A crucial role is played by Theorem~3.1 of~\cite{Meyn95}, which is
reproduced below for completeness.

\begin{theorem}

\label{meyn}

Suppose that for a Markov chain $\{X(n); n = 0, 1, 2, \dots\}$ with
discrete state space~$S$, there exist positive functions $W(\cdot)$
and $\Delta(\cdot)$ on~$S$, and a positive constant $c_0$, such that
\be
\expect{W(X(n+1))|\FF_n} \leq W(X(n)) - \Delta(X(n)),
\label{sup-martingale}
\ee
whenever $X(n) \in S_{c_0} = \{x \in S: W(x) \leq c_0\}$,
with $\FF_n := \sigma(X(0), X(1), \dots, X(n))$.
Then for all $x \in S$,
\[
\PP_x\left\{\sum_{n=0}^{\infty} \Delta(X(n)) < \infty \right\} \geq
1 - W(x) / c_0.
\]

\end{theorem}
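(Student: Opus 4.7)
The plan is to reduce this to an application of the optional stopping theorem together with Doob's supermartingale convergence theorem. The key auxiliary object is the augmented process
\[
M_n := W(X(n)) + \sum_{k=0}^{n-1} \Delta(X(k)),
\]
together with the first-exit time $\tau := \inf\{n \geq 0 : W(X(n)) > c_0\}$ from $S_{c_0}$, with the convention $\inf \emptyset = \infty$. Since $W$ and $\Delta$ are positive, $M_n \geq 0$. By hypothesis (\ref{sup-martingale}), on the event $\{n < \tau\}$ we have $X(n) \in S_{c_0}$, and the drift inequality can be rewritten as $\EE[M_{n+1} \mid \FF_n] \leq M_n$. Hence $\{M_{n \wedge \tau}\}_{n \geq 0}$ is a non-negative supermartingale with respect to $\{\FF_n\}$.

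Next I would apply the optional stopping theorem at the bounded stopping time $n \wedge \tau$ to obtain $\EE_x[M_{n \wedge \tau}] \leq M_0 = W(x)$. On the event $\{\tau \leq n\}$ the definition of $\tau$ forces $W(X(\tau)) > c_0$, so $M_{n \wedge \tau} \geq W(X(\tau)) > c_0$ there, while $M_{n \wedge \tau} \geq 0$ elsewhere. Splitting the expectation across these two events yields $c_0\, \PP_x\{\tau \leq n\} \leq W(x)$, and continuity of probability in $n$ gives
\[
\PP_x\{\tau < \infty\} \leq W(x)/c_0.
\]

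Then I would handle the complementary event $\{\tau = \infty\}$ separately. There the chain stays in $S_{c_0}$ for all time, so $\{M_n\}$ is itself a non-negative supermartingale, which by Doob's supermartingale convergence theorem converges almost surely to a finite limit. Since each $\Delta(X(k)) \geq 0$ and $W \geq 0$, the partial sums satisfy $\sum_{k=0}^{n-1} \Delta(X(k)) \leq M_n$ and are therefore bounded almost surely, forcing $\sum_{k=0}^{\infty} \Delta(X(k)) < \infty$ on $\{\tau = \infty\}$. Combining the two regimes produces
\[
\PP_x\Bigl\{\sum_{n=0}^\infty \Delta(X(n)) < \infty\Bigr\} \geq \PP_x\{\tau = \infty\} \geq 1 - W(x)/c_0,
\]
which is the desired bound.

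The argument is essentially a routine Lyapunov/supermartingale calculation once $M_n$ and $\tau$ are defined correctly, so I do not anticipate any genuine obstacle. The only subtle point worth flagging is that the drift condition is assumed only on $S_{c_0}$, which is precisely why one cannot work with the unstopped $M_n$ and must introduce the exit time $\tau$; positivity of $W$ and $\Delta$ then removes any integrability issues in the optional stopping step and ensures the sum in the conclusion is well defined.
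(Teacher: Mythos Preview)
The paper does not actually prove this theorem; it is quoted verbatim as Theorem~3.1 of Meyn~\cite{Meyn95} ``for completeness'' and then applied in the proof of Theorem~\ref{main}. Your argument is correct and is essentially the standard proof from Meyn's original paper: form the compensated process $M_n = W(X(n)) + \sum_{k<n}\Delta(X(k))$, stop it at the first exit $\tau$ from $S_{c_0}$ so that the drift hypothesis makes $M_{n\wedge\tau}$ a non-negative supermartingale, bound $\PP_x\{\tau<\infty\}$ by $W(x)/c_0$ via optional stopping, and on $\{\tau=\infty\}$ use Doob's convergence theorem to conclude that the increasing sum $\sum_k \Delta(X(k))\le M_n$ stays finite.
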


In order to apply the above theorem, we need to construct suitable
functions $W(\cdot)$ and $\Delta(\cdot)$.
The proof details are presented in Appendix~E.

\begin{remark}

Recall that the class of deactivation functions
$g_i(x) = \oo(x^{- \gamma})$ includes the random-capture scheme
with $g(x) \equiv 0$, $x \geq 1$, as considered in~\cite{FPR10}.
The result in Theorem~\ref{main} thus disproves the conjecture that the
random-capture scheme is throughput-optimal in arbitrary topologies.

\end{remark}

\section{Simulation experiments}
\label{simu}

We now discuss the simulation experiments that we have conducted to
support and illustrate the analytical results.
Consider the broken-diamond network as depicted Figure~\ref{network2}
and considered in the previous sections.
In the simulation experiments, the relative traffic intensities are
assumed to be $\kappa_1 = \kappa_2 = 0.4$, $\kappa_3 = 0.4$,
and $\kappa_6 = 0.2$ with $\alpha = 0$, for the components $M_1$, $M_2$,
and $M_3$, respectively, with a normalized load of $\rho = 0.97$.
At each node $i$, the initial queue size is $\Qorig_i(0) = 500$, the
activation function is $f_i(x) \equiv 1$, $x \geq 1$, and the de-activation
function is $g_i(x) = (1+x)^{-\gamma}$, where we set $\gamma = 2$.

Figure~\ref{sim1} plots the evolution of the queue sizes at the
various nodes over time, and shows that once a node starts
transmitting, it will continue to do so until the queue lengths of
all nodes in its component have largely been cleared.
This characteristic, and the associated oscillations in the queues,
strongly mirror the qualitative behavior displayed by the fluid limit.

\begin{figure}
\centering
\includegraphics[width=4in]{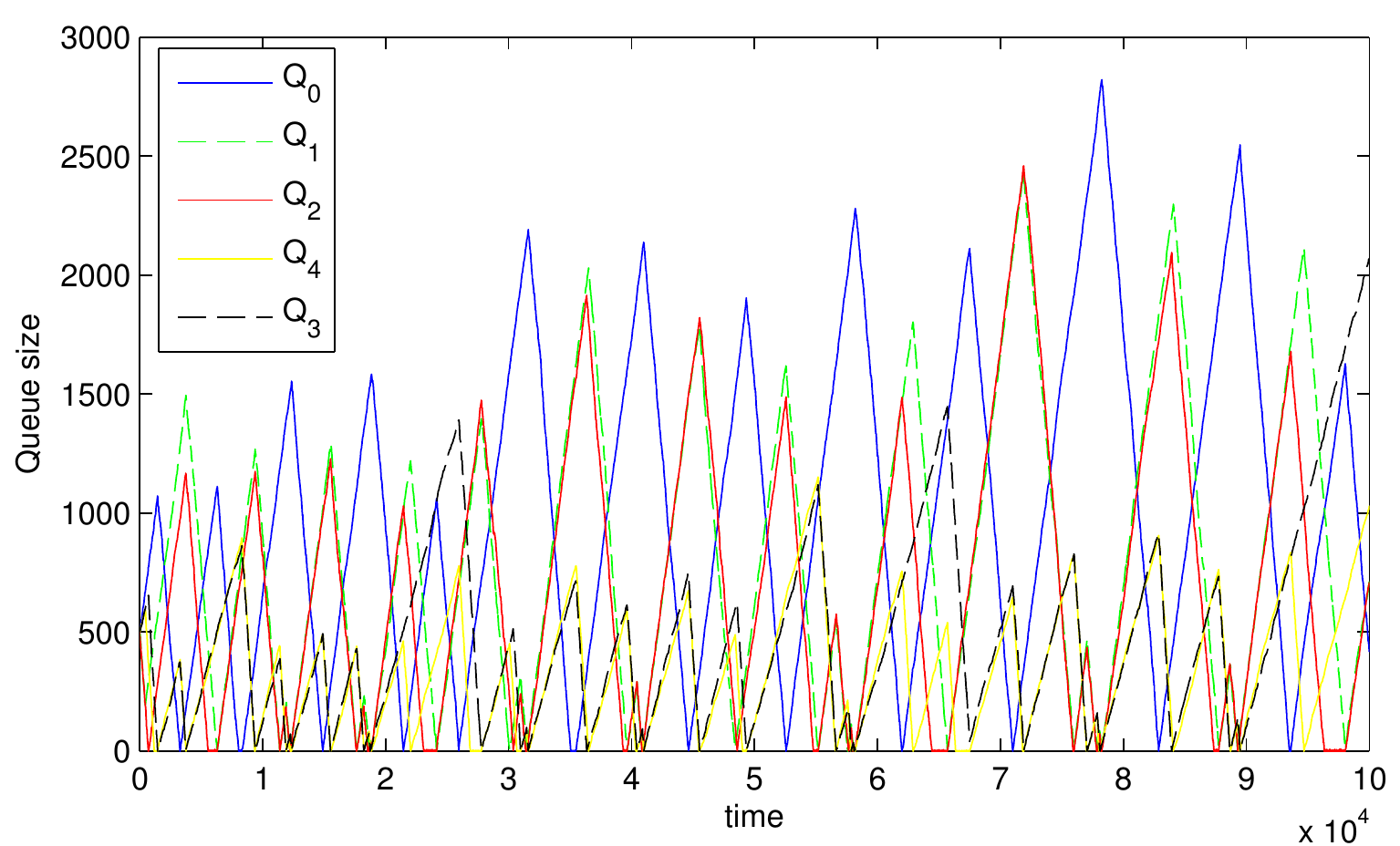}
\caption{Queue sizes at the various nodes as function of time for
the network of Figure~\ref{network2}.}
\label{sim1}
\end{figure}

\begin{figure}
\centering
\includegraphics[width=4in]{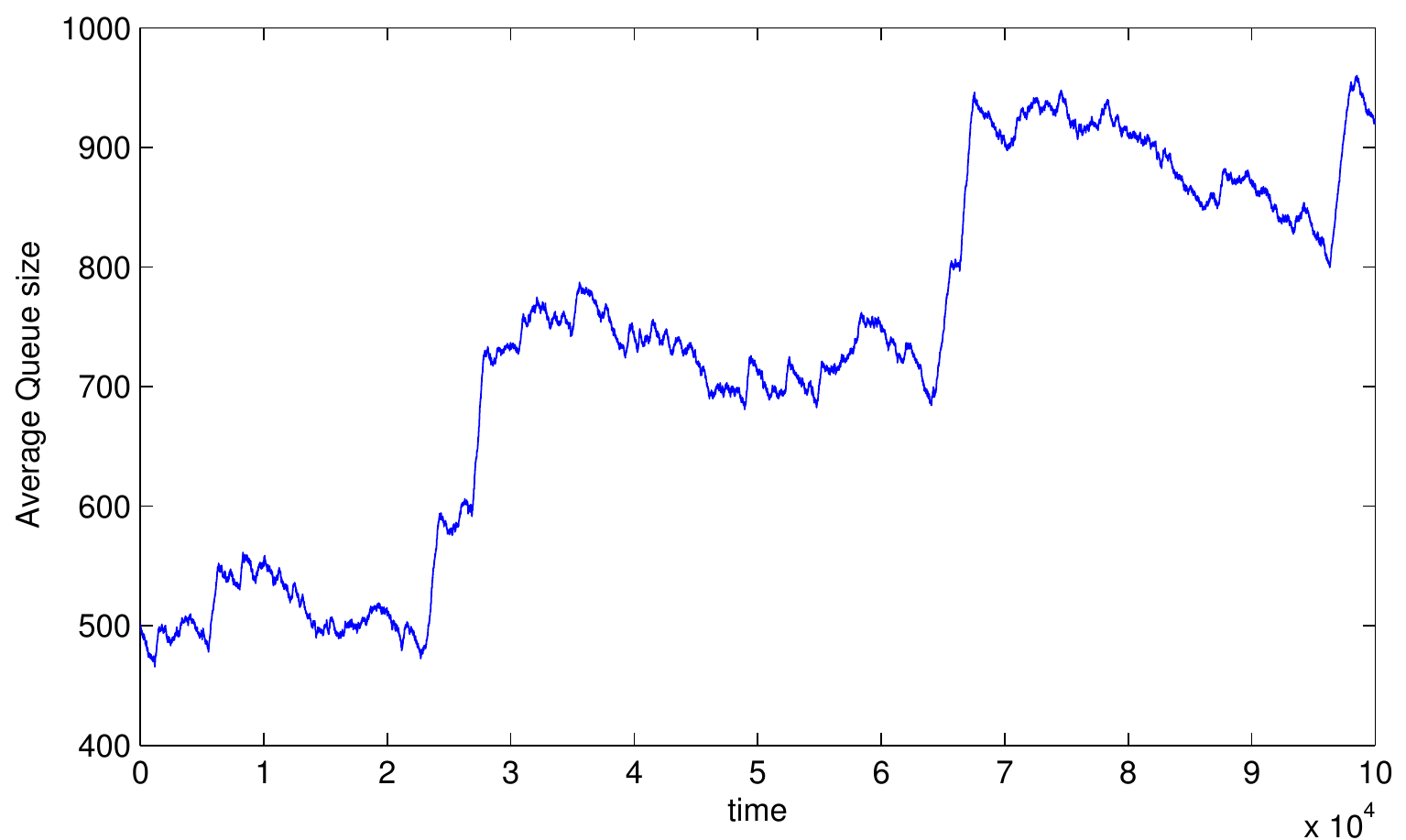}
\caption{Node-average queue size as function of time for the
network of Figure~\ref{network2}.}
\label{sim2}
\end{figure}

Although Figure~\ref{sim1} suggests an upward trend in the overall
queue lengths, the fluctuations make it hard to discern a clear picture.
Figure~\ref{sim2} therefore plots the evolution of the node-average
queue size over time, and reveals a distinct growth pattern.
Evidently, it is difficult to make any conclusive statements
concerning stability/instability based on simulation results alone.
However, the saw-tooth type growth pattern in Figure~\ref{sim2}
demonstrates strong signs of instability, and corroborates the
qualitative growth behavior exhibited by the fluid limit.
Indeed, careful inspection of the two figures confirms that the large
increments in the node-average queue size occur immediately after
$M_4$-periods, exactly as predicted by the fluid limit.
We further observe that in between these periods, the node-average
queue size tends to follow a slightly downward trend, consistent
with the negative drift of rate $(\rho-1)/3$ in the fluid limit.

\section{Concluding remarks and extensions}
\label{conc}

We have used fluid limits to demonstrate the potential instability
of queue-based random-access algorithms.
For the sake of transparency, we focused on a specific six-node
network and super-linear activity functions.
Similar instability issues can however arise in a far broader class of
interference graphs, as we will discuss in Subsection~\ref{generalgraph1} below.
The proof arguments further suggest that instability can in fact
occur for any activity factor that grows as a positive power
$1 / K$ of the queue length for network sizes of order~$K$,
as will be described in Subsection~\ref{polyfunction1}.

\subsection{Instability in general interference graphs}
\label{generalgraph1}

The instability of random access, with aggressive de-activation functions,
is not restricted to the broken-diamond network, and can arise in many
other interference graphs.
Consider a general interference graph $G = (V, E)$.
Without loss of generality, we can assume $G$ is connected,
because otherwise we can consider each connected subgraph separately.
For $\gamma >1$, the fluid limit sample paths still exhibit the
\textit{sawtooth behavior}, i.e., when a node starts transmitting,
it does not release the channel until its entire queue is cleared
(on the fluid scale).
Let $\calM = \{M_1, \dots, M_K\}$ denote the set of maximal
independent sets (maximal schedules) of~$G$.
We say the network operates in $M_i$ if a subset $W \subseteq M_i$ of
nodes are served at full rate (on the fluid scale), and $W$ does not
belong to any other maximal schedules $M_j$, $j \neq i$.
Under the random-access algorithm, at any point in time the network
operates in one of the maximal schedules and switches to another
maximal schedule when one or several of the queues in the current
maximal schedule drain (on the fluid scale).
More specifically, assume the network operates in a maximal schedule $M_i$.
If $M_i$ interferes with all other maximal schedules, i.e.,
$M_i \cap M_j = \emptyset$ for all $1 \leq j \leq K$, $j \neq i$,
then a transition from $M_i$ to any maximal schedule $M_j$, $j \neq i$,
is possible when all the queues in $M_i$ hit zero (on the fluid scale).
On the other hand, if $M_i$ overlaps with a subset of maximal schedules
$\calM_i^\prime:= \{M_j \in \calM: M_i \cap M_j \neq \emptyset\}$,
then the activity process can make a transition to $M_j \in \calM_i^\prime$
when all the queues in $M_i \backslash M_j$ drain (on the fluid scale).

The capacity region of the network is the convex set
$\calC = \mbox{conv}(S)$, which is full-dimensional because all
the basis vectors of $\PRR^N$ belong to that set.
The incidence vectors of the sets $\calM$ correspond to the extreme
points of $\calC$ as they can not be expressed as convex combinations
of other points.
Consider a covering of $V = \{1,2,\dots,N\}$ using the maximal schedules.
Formally, a set cover $C$ of $V$ is a collection of maximal schedules
such that $V \subseteq \cup_{M_i \in C} M_i$.
A set cover $C$ is minimal if removal of any of the elements
$M_i \in C$ leaves some nodes of $V$ uncovered.
Consider the class of graphs in which $|C| \leq K-1$ for some minimal
set cover $C$, i.e., we do not need all $M_i$'s for covering $V$.
Without loss of generality,
let $\calM^*=\{M_1, M_2, \dots, M_{K^*}\}$ denote such a minimal cover
with $K^* \leq K-1$.
Consider a (strictly positive) vector of arrival rates
$\lambda = \rho \sum_{i=1}^{K^*} \sigma_i 1_{M_i}$ where $\sigma_i > 0$,
$1 \leq i \leq K^*$, such that $\sum_{i=1}^{K^*} \sigma_i = 1$,
and $0 < \rho < 1$ is the load factor.
Hence, a centralized algorithm can stabilize the network by scheduling
each $M_i \in \calM^*$ for at least a fraction $\rho \sigma_i$ of the time.
However, under the random-access algorithm, the network might spend
a non-vanishing fraction of time in the schedules
$\calM \backslash \calM^*$, which can cause instability as $\rho$
approaches~1.
This phenomenon is easier to observe in graphs with a \textit{unique}
minimal set cover $\calM^*$ and with a maximal schedule $M_1$ interfering
with all the other maximal schedules, hence $M_1 \in \calM^*$.

\begin{figure}[!t]
\label{examples}
\centering
\subfigure[$\{1\}, \{2,3\}, \{4,5\}$]{\label{unstable1}
\includegraphics [width = 1.5 in]{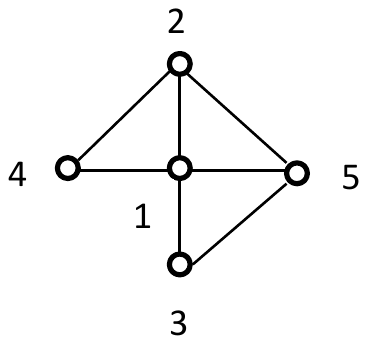}
}
\subfigure[$\{1\}, \{2,3,4\}, \{5,6\}$]{\label{unstable2}
\includegraphics [width = 1.5 in]{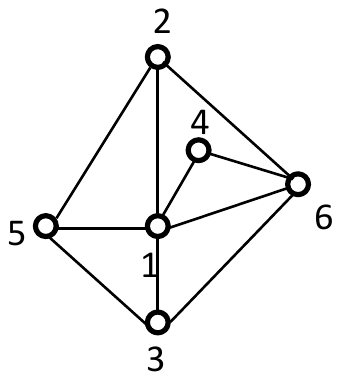}
}
\subfigure[$\{1\}, \{2,3,4\}, \{5,6\}$]{\label{unstable3}
\includegraphics [width = 1.5 in]{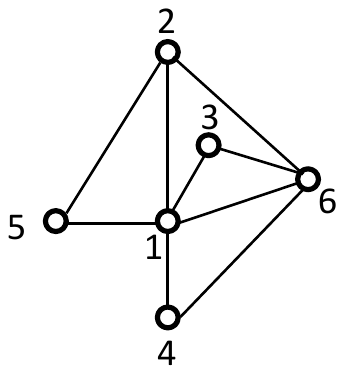}
}
\subfigure[$\{1\}, \{2,3,4\}, \{5,6,7\}$]{\label{unstable4}
\includegraphics [width = 2 in]{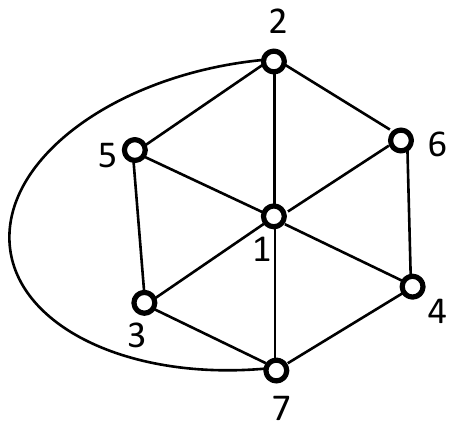}
}
\caption{A few unstable networks with their unique minimal cover
$\calM^*$ using the maximal schedules.}
\end{figure}

This means any valid covering of $V$ must contain $\calM^*$.
Therefore, considering arrival rate vectors of the form
$\lambda = \rho \sum_{i=1}^{K^*} \sigma_i 1_{M_i}$, $\sigma_i > 0$,
$\sum_{i=1}^{K^*} \sigma_i = 1$, \textit{the only way} to stabilize the
network is to use $M_i$ for a time fraction greater than $\rho \sigma_i$.
Visits to $M_1$ have to occur infinitely often, otherwise the network
is trivially unstable, and at the end of such visits, a transition
to any other maximal schedule is possible, including the schedules
in $\calM \backslash \calM^*$ with positive probability.
Then, upon entrance to schedules in $\calM \backslash \calM^*$,
the network spends a positive time in such schedules because the queues
in $\calM \backslash \{M_1\}$ build up during visits to $M_1$.
Hence, the arguments in the instability proof of the broken-diamond
network can be extended to such networks, although a rigorous proof
of the fluid limits in such general cases remains a formidable task.
Figure~\ref{examples} shows a few examples of such unstable networks
with unique minimal set covers.

\subsection{Instability for de-activation functions with polynomial decay}
\label{polyfunction1}

Consider any unstable network $G = (V,E)$, for example the broken-diamond
network or a graph as described in the previous subsection.
Let $\mathcal{I}(i)$ denote the set of neighbors of node~$i$ in~$G$.
We construct a \textit{$k$-duplicate graph} $G^{(k)}$, $k \in \PNN$,
of~$G$ as follows.
For each node $i \in V$, add $k$~duplicate nodes
$d^{(i)}_1, \dots, d^{(i)}_k$ to the graph, with the same arrival rate
$\lambda_i$ and the same initial queue length $\Qorig_i(0)$, such that
each node $d^{(i)}_j$ is connected to all the neighbors of node~$i$
and their duplicates, i.e.,
$\mathcal{I}(d^{(i)}_j) = \mathcal{I}(i) \cup_{l \in \mathcal{I}(i)} \{d^{(l)}_1, \cdots,d^{(l)}_k\}$,
for all $1 \leq j \leq k$.
For notational convenience, we define
$D^{(k)}_i := \{i, d^{(i)}_1, \dots,d^{(i)}_k\}$ and call it the
duplicate collection of node~$i$.
Note that the duplicate graph has the same number of maximal schedules
as the original graph.
In fact, each maximal schedule $M_i^{(k)}$ of $G^{(k)}$ consists of
nodes in the maximal schedule $M_i$ of~$G$ and their duplicates, i.e.,
$M_i^{(k)} = \cup_{l \in M_i} D^{(k)}_l$.
Next, we show that the duplicate graph is unstable for de-activation
functions that decay as $\oo(x^{-\gamma})$, for $\gamma > 1/(k+1)$.
Essentially, for such a range of~$\gamma$, each duplicate collection
acts as a super node with $\gamma > 1$, i.e., (i) if one of the queues
in a duplicate collection $D^{(k)}_i$ starts growing, all the queues
in $D^{(k)}_i$ grow linearly at the same rate~$\lambda_i$ (on the
fluid scale), (ii) if a nonempty queue in $D^{(k)}_i$ starts draining,
then all the queues in $D^{(k)}_i$ drain at full rate until they all
hit zero (on the fluid scale).
Then the instability follows from that of the original network,
as we can simply regard the duplicate collections as super nodes.
An informal proof of claims~(i) and~(ii) is presented below.

Claim~(i) is easy to prove as all the queues in a duplicate collection
share the same set of conflicting neighbors and the fact that one of
the queues grows, over a small time interval, implies that some
conflicting neighbors are transmitting over such interval.
To show~(ii), note that if one of the queues in the duplicate
collection drains over a non-zero time interval, no matter how small
the interval is, all the conflicting neighbors must be in backoff for
$O(R)$ units of time in the pre-limit process.
This guarantees that all the queues in the duplicate collection will
start a packet transmission during such interval almost surely.
As long as the duplicate collection does not lose the channel,
each queue of the collection follows the fluid limit trajectory of
an M/M/1 queue.
Suppose all the queues of the duplicate collection are above
a level~$\epsilon$ on the fluid scale for some fixed small $\epsilon > 0$.
Thus, in the pre-limit process, the amount of time required for the
queues to fall below a threshold $\epsilon R$ is $O(R)$ with high
probability as $R \to \infty$.
The duplicate collection loses the channel if and only if all $k+1$
nodes in the collection are in backoff and a conflicting node acquires
the channel by winning the competition between the backoff timers.
The probability that a node goes into backoff at the end of a packet
transmission is $O((\epsilon R)^{-\gamma})$, or approximately the fraction
of time that a node spends in backoff is $O((\epsilon R)^{-\gamma})$.
Therefore, the fraction of time that all $k+1$ nodes of the duplicate
collection are simultaneously in backoff is $O((\epsilon R)^{-k\gamma})$
because the nodes in the duplicate collection act independently from
each other.
Therefore, over an interval of duration $O(R)$, the amount of time
that all $k+1$ nodes are in backoff is $O( R^{1-(k+1)\gamma})$,
which goes to zero as $R \to \infty$ if $\gamma > 1/(k+1)$.
Thus, the nodes in the duplicate collection follow the fluid limits of
an M/M/1 queue until their backlog is below $\epsilon$ on the fluid scale.
Since $\epsilon$ could be made arbitrarily small, we can view the
duplicate collection as a super node that does not release the channel
until its backlog hits zero.
This demonstrates the instability of fluid limits for the initial
queue lengths described above for the duplicate network.

To rigorously prove instability of the original process using the
framework of Meyn~\cite{Meyn95}, we need to show instability of the
fluid limit for any initial state.
Handling arbitrary initial states for general activity functions
and interference graphs is more involved than in the specific
broken-diamond network considered here.
An alternative option would be to extend the methodology and develop
a proof apparatus where it suffices to show instability of the fluid
limit for one particular initial state.
The framework of Dai~\cite{Dai96} offers the advantage that instability
of the fluid limit only needs to be shown for an all-empty initial state.
However the characterization of the fluid limit for an all-empty
initial state appears to involve additional complications.

The above proof arguments suggest that instability can in fact occur
for any $\gamma >0$ as $k$ can be chosen arbitrarily large.
This indicates that the growth conditions in Ghaderi \& Srikant~\cite{GS10}
are sharp in the sense that backoff probabilities of the form
$\frac{1}{O(\log(x))}$ are essentially the most aggressive
de-activation functions that guarantee maximum stability
of queue-based random access in arbitrary graphs.
In terms of backoff probabilities $\frac{1}{1+{\rm e}^{w(\Qorig)}}$ used
in~\cite{GS10}, this means the weight functions $w(x) = \log(1+x)/h(x)$,
where $h(x)$ is an arbitrarily slowly increasing function, are essentially
the most aggressive weight functions that the random-access algorithm
can use while preserving maximum stability in general topologies.


\appendices

\newlength{\figwidth}
\newlength{\figheight}
\newlength{\philwidth}
\newlength{\philheight}

\setlength{\figwidth}{0.96\columnwidth}
\setlength{\figheight}{2.0 in}
\setlength{\philwidth}{2.0 in}
\setlength{\figheight}{2.0 in}

\def\mb#1{\mbox{\boldmath $#1$}}
\def\mbb#1{\mathbb{#1}}
\def\mc#1{{\mathcal{#1}}}
\def\mr#1{{\mathrm{#1}}}

\def\mm#1{\boldsymbol{\mr{#1}}}
\def\mmh#1{\mm{\hat{#1}}}
\def\mmt#1{\mm{\tilde{#1}}}

\def\mv#1{\boldsymbol{{#1}}}
\def\mvh#1{\mv{\hat{#1}}}
\def\mvt#1{\mv{\tilde{#1}}}

\def\lrb#1{\ensuremath{\left({#1}\right)}}%
\def\lrc#1{\ensuremath{\left\{{#1}\right\}}}%
\def\lrs#1{\ensuremath{\left[{#1}\right]}}%
\def\lrv#1{\ensuremath{\lvert{#1}\rvert}}%
\def\blrv#1{\ensuremath{\Bigg\lvert{#1}\Bigg\rvert}}%
\def\lrvv#1{\ensuremath{\lVert{#1}\rVert}}%

\newcommand{\1}{{\rm 1\hspace*{-0.4ex}%
\rule{0.1ex}{1.52ex}\hspace*{0.2ex}}}
\newcommand{\one}[2]{  {\rm 1\hspace*{-0.4ex} \rule{0.1ex}{1.52ex}^{ #1}_{  #2}}}
\newcommand{\philpr}[2]{\ensuremath{\langle {#1},{#2}\rangle}}
\newcommand{\norm}[1]{\ensuremath{\|#1\|}}

\newcommand{\tr}{\mathrm{Tr}}
\newcommand{\VN}{{\bf V}_N}
\newcommand{\ZI}{{\mathbb Z}}
\newcommand{\ZIO}{{\mathbb Z}_0}
\newcommand{\ZZ}{{\mathbb N}_0} 
\newcommand{\ZBZ}{{\mathbb Z}^B_0}
\newcommand{\babs}[1]{\blrv{#1}}
\newcommand{\abs}[1]{\lrv{#1}}%
\newcommand{\half}{\frac{1}{2}}

\def\Exp#1{\mbb{E}[{#1}]}%
\def\Var#1{\mbb{V}[{#1}]}%
\def\Prb#1{\mbb{P}[{#1}]}%
\def\Explr#1{\mbb{E}\lrs{{#1}}}%
\def\Varlr#1{\mbb{V}\lrs{{#1}}}%
\def\Prblr#1{\mbb{P}\kern-2pt\lrs{{#1}}}%
\def\dPrb{d\mbb{P}}%
\def\lrvert#1{\ensuremath{\lvert{#1}\rvert}}%

\newcommand{\RR}{\mathbb R}
\newcommand{\RRp}{{\mathbb R}_+}
\newcommand{\Qrat}{{\mathbb Q}}
\newcommand{\sgn}{\text{sgn}}
\newcommand{\perm}[2]{\left(\begin{array}{c}
                             #1 \\ #2
                             \end{array} \right)}
\newcommand{\MINF}[1]{I \left[ #1 \right]}
\newcommand{\HNTRPY}[1]{H \left[ #1 \right]}
\newcommand{\prob}[1]{{\mathbb P}\{#1\}}
\newcommand{\Px}[2]{{\mathbb P}_{#1}\{#2\}}
\newcommand{\Ex}[2]{{\mathbb E}_{#1}\left[ #2\right]}
\newcommand{\qed}{$\diamond$~\\}
\newcommand{\ceil}[1]{\lceil #1 \rceil}
\newcommand{\floor}[1]{\lfloor #1 \rfloor}
\newcommand{\euclidnorm}[1]{\lvert \lvert #1 \rvert \rvert}
\newcommand{\spnorm}[1]{\lvert \lvert #1 \rvert \rvert_1}
\newcommand{\maxnorm}[1]{\lvert \lvert #1 \rvert \rvert_\infty}
\newcommand{\opnorm}[1]{\lvert \lvert #1 \rvert \rvert_2}
\newcommand{\lc}{\left\{}
\newcommand{\rc}{\right\}}
\newcommand{\lb}{\left(}
\newcommand{\rb}{\right)}
\newcommand{\ls}{\left[}
\newcommand{\rs}{\right]}
\newcommand{\cmplx}{\mathbb C}
\newcommand{\RRe}[1]{\Re\left( #1 \right)}
\newcommand{\IIm}[1]{\Im\left( #1 \right)}
\newcommand{\cF}{{\cal F}}
\newcommand{\cH}{{\cal H}}
\newcommand{\Filt}{{\cal F}_t}
\newcommand{\Borel}[1]{{\cal B}_{#1 }}
\newcommand{\CtopM}{{\cal C}}
\newcommand{\rC}{\rho_C}  
\newcommand{\cC}{{\cal C}}
\newcommand{\bldsym}[1]{\mbox{\boldmath $ #1 $}}
\newcommand{\olQR}{\overline{Q}^{(R)}} 


\newcommand{\olQBR}{\overline{{\bf Q}}^{(R)}}
\newcommand{\olIBR}{\overline{{\bf I}}^{(R)}}



\newcommand{\Icell}{{\cal I}_\ell}



\newcommand{\tK}{\tilde{K}}
\newcommand{\tO}{\tilde{O}}
\newcommand{\Btelep}{B^\ell_{t,\epsilon}}
\newcommand{\BtQelep}{B^\ell_{t,Q,\epsilon}}
\newcommand{\vQT}{\varepsilon_{Q_T}}
\newcommand{\kRx}{k_{R,\xi}}

\newcommand{\QBR}{{\bf Q}^R}
\newcommand{\IBR}{{\bf I}^R}
\newcommand{\IBRB}{{\bf I}^{(R)}}
\newcommand{\QRB}{Q^{(R)}}
\newcommand{\QR}{Q^R}
\newcommand{\UBR}{{\bf U}^{(R)}}
\newcommand{\IbarR}{\overline{I}^{(R)}}
\newcommand{\olQ}{\overline{Q}}
\newcommand{\bR}[1]{b^{(R)}_{ #1 } }
\newcommand{\bRR}[2]{b^{( #1 R)}_{ #2}}
\newcommand{\tR}{\tau^{(R)}}
\newcommand{\sT}[1]{T_{ #1}}
\newcommand{\olp}{\overline{p}}
\newcommand{\GR}{g^{(R)}} 
\newcommand{\UQ}{Q^{(R)}} 
\newcommand{\NN}{{\mathbb N}}
\newcommand{\dE}[2]{D^{ #1}_n( #2 )}


\newcommand{\Zstop}[1]{Z^{ #1}_{m,n}}
\newcommand{\Astop}[1]{A^{ #1}_{m,n}}
\newcommand{\Vstop}[1]{V^{ #1}_{m,n}}
\newcommand{\Bstop}[1]{B^{ #1}_{m,n}}
\newcommand{\Pstop}[1]{P^{ #1}_{m,n}}
\newcommand{\Zsigma}[1]{{\cal Z}^{ #1 }_{m,n}}
\newcommand{\Zsgma}[1]{{\cal Z}^{ #1 }_{m,n,[0,\infty)}}
\newcommand{\Kell}[1]{K^{(\ell)}_{#1}}
\newcommand{\Uell}{A^{(\ell)}}



\newcommand{\tE}{\tilde{E}}



\newcommand{\JS}[2]{J^{(#1)}_{#2}}
\newcommand{\FZ}[2]{{\cal F}_{ #1}^{ #2}}
\newcommand{\FZL}{{\cal F}_{Z_{n,0}^{(L)}}}
\newcommand{\Gctm}{G_{c,t}^{(m)}}
\newcommand{\Gct}{G_{c,t}}
\newcommand{\Cct}[1]{C^{( #1 )}_{c,t,R}}
\newcommand{\Ict}[1]{I^{#1}_{c,t}}

\newcommand{\Wct}[1]{W^{( #1 )}_{c,t}}
\newcommand{\Hlt}[2]{ H^{  #1 }_{ #2}}

\newcommand{\muR}[1]{\mu_R \left\{ #1 \right\}}
\newcommand{\muu}[1]{\mu \left\{ #1 \right\}}
\newcommand{\muuu}[2]{\mu_{ #1}\Big\{ #2 \Big\}}
\newcommand{\FoK}{{\cal F}^{(\Omega_k)}_{Z^0_{m,n}}}
\newcommand{\FwK}{{\cal F}^{(\Omega_k)}_{w_k}}

\newcommand{\Bkoff}[1]{B^{(\ell)}\left(  #1 \right)}
\newcommand{\DQell}[2]{D^{(\ell)}_{ #1}\left( #2 \right)}


\newcommand{\Dvarsgma}[1]{D^{(\ell)}_{ #1 }}
\newcommand{\Glsh}{G^{(\ell)}_{s,h}}
\newcommand{\Gllsh}[1]{G^{(\ell)}_{#1}}
\newcommand{\FltQM}{F^{(\ell)}_{t,Q,M}}



\newcommand{\WcR}{W^R}
\newcommand{\hC}{\hat{C}}



\newcommand{\PhiLin}[2]{\Phi^{ #1 }_{ #2 }}


\newcommand{\QellsQ}{Q^{(\ell)}_{s,Q}}


\newcommand{\Cn}{C^{(n)}}
\newcommand{\Wck}[1]{W^{#1}}
\newcommand{\BBR}[1]{B_R^{ #1 }}
\newcommand{\CctR}[1]{C_{c,t,R}^{ #1 }}
\newcommand{\QQR}[1]{Q_R^{ #1}}
\newcommand{\NNR}[1]{N_R^{#1}}
\newcommand{\muG}[1]{\mu_G \left\{ #1 \right\}}
\newcommand{\muO}[1]{\mu_{\Omega_k} \left\{ #1 \right\}}
\newcommand{\muGR}[1]{\mu_G^{(R)} \left\{ #1 \right\}}


\newcommand{\bM}[1]{b^{(M)}_{#1}}
\newcommand{\bI}[1]{b^{(#1)}}
\newcommand{\bth}{b^{(3)}}
\newcommand{\piinf}[1]{\pi^\infty_{ #1}}
\newcommand{\olQf}{\overline{Q}_f}
\newcommand{\ovpR}{\overline{p}^R}
\newcommand{\ovp}{\overline{p}}
\newcommand{\ovqR}{\overline{q}^R}
\newcommand{\ovq}{\overline{q}}

\newcommand{\odR}{\overline{d}^R}
\newcommand{\veQT}{\varepsilon_{Q_T}}
\newcommand{\sqR}{\sqrt{R}}
\newcommand{\dfrfv}{\delta_{4,5}}
\newcommand{\dRfrfv}{\delta^R_{4,5}}
\newcommand{\Tpot}{\tau^{(1,2)}_p}
\newcommand{\EZplus}[1]{E^Q_{#1}}
\newcommand{\EZStop}[2]{\upsilon^{(#1)}_{#2}} 
\newcommand{\TauStop}[1]{\tau_S^{#1}}
\newcommand{\TauStops}[2]{\tau_{ #2}^{#1}}
\newcommand{\FinEv}[1]{F^{ #1}_E}
\newcommand{\FVV}[1]{F^V_{ #1}}
\newcommand{\YY}[1]{Y^{ #1}_M}
\newcommand{\pae}[1]{p_{#1}(a,h,e,m,c,q,r)}
\newcommand{\pgb}{p_{g,b}}
\newcommand{\ovt}{\overline{t}}
\newcommand{\udt}{\underline{t}}
\newcommand{\unf}{\underline{f}}
\newcommand{\pRe}{p^R_\epsilon}
\newcommand{\etafrac}{5}

\newcommand{\sW}{\tau^{(S)}}


\appendices
\section{Fluid limit proofs: Part A}

\subsection*{A.I. Prelimit model}
\label{sec_prelim}

We start with the time-homogeneous Markov process $({\bf U}(t),{\bf \Qorig}(t)), t \geq 0$
with state space ${\cal S} = S \times \ZZ^N$ where $N=6$ and $S
\subseteq \lc 0,1 \rc^N$ is the set of feasible activity states,
which has already been fully described earlier in Section~\ref{mode}. We recap to state that service times are unit exponential as are backoff periods. In addition the Poisson arrival
processes are determined by the vector of arrival rates $\bldsym{\lambda}$ and the probability of backoff is determined as a function of queue length $1/(1+Q)^\gamma$ with $\gamma \in (1, \infty)$. As
mentioned earlier, the case $\gamma = \infty$ corresponds to the random capture algorithm,
considered in \cite{FPR10}.


The fluid limit will not be obtained directly from the above process but rather via
the {\em jump chain} of a {\em uniformized version} with ``clock ticks''  from a Poisson clock with constant rate,
\be \label{beta-definition}
\beta \doteq \sum_{\ell=1}^N \lambda_\ell + N,
\ee
independent of state, with null (dummy) events introduced as needed.

With minor abuse of notation, denote by $\lb{\bf U}(n), {\bf\Qorig}(n)\rb \in {\cal S}$ to be the state
of the jump chain at $n$th clock tick.
For our subsequent construction, it will be convenient to replace ${\bf U}(n)$ with the cumulative state ${\bf I}(n) = \sum_{k=0}^n {\bf U}(n) \in \ZZ^N$, which is by definition
increasing. It determines and is determined by the sequence ${\bf U}(n)$ and the associated jump chain
is Markov if the state is altered to be $({\bf I}(n), {\bf I}(n-1), {\bf\Qorig}(n))$ with ${\bf I}(-1) = 0$. {\em Note that the process ${\bf I}(n)$ counts the number of steps where the queue process is active. It is not a count of
  the number of service completions by step $n$.} 

From the jump chain, we obtain a continuous stochastic process in $C[0,\infty)$ by linear interpolation and by accelerating time by a factor $\beta$. To be specific, at an arbitrary intermediate
time $t > 0$ between two clock ticks $t_l = (k-1)/\beta \leq t \leq t_u = k/\beta,~ k\in \NN$, the interpolated process takes the values
\begin{eqnarray*}
\overline{{\bf Q}}(t) & \doteq & \beta (t_u - t) {\bf\Qorig}(k-1) + \beta (t - t_l) {\bf\Qorig}(k), \\
\overline{{\bf I}}(t) & \doteq & \beta (t_u - t) {\bf I}(k-1) + \beta (t - t_l) {\bf I}(k).
\end{eqnarray*}

From this construction we can obtain a sequence of such processes, indexed by $R \in \NN$,
with the usual fluid limit scaling
\begin{equation}
\lb \QBR(t), \IBR(t) \rb \doteq \lb \frac{1}{R} \olQBR(Rt), \frac{1}{R}\olIBR(Rt) \rb.
\label{eqn_scaleprocess}
\end{equation}
This is obtained together with a corresponding sequence of initial queue lengths
\begin{equation}
{\bf Q}^R(0) = \frac{1}{R} \olQBR(0) \rightarrow {\bf q}.
\label{eqn_initload}
\end{equation}
Recall  that the underlying jump chain $\lb {\bf U}(n), {\bf\Qorig}(n) \rb_{n \geq 0}$ is
affected only through the initial state. Its transition probabilities are unaffected. The convergence in (\ref{eqn_initload}) is with respect to the Euclidean norm and without loss of generality we may take  $\euclidnorm{{\bf q}} = 1$.

%
%

For every $R$ and time  $t \geq 0$, $\lb {\bf Q}^R(t), {\bf I}^R(t) \rb$ take values in
$E \doteq \RR_+^N \times \RR_+^N$, which is therefore the {\em state space} of the process. $E$ has the usual Euclidean metric and associated topology and we will denote the Borel sets by $\Borel{E}$. Furthermore the underlying jump chain $\lb {\bf U}(n), {\bf \Qorig}(n) \rb_{n \geq 0}$ of the uniformized Markov process satisfies the ``skip-free property''
\cite{Meyn95} which ensures that the jumps between states are bounded in ${\cal L}_1$.  It
follows that the interpolated paths are Lipschitz continuous with Lipschitz constant $3\beta < \infty$.  This property is conferred on the sample paths $\omega$ themselves as stated below
\begin{equation}
\euclidnorm{\QBR(t,\omega) - \QBR(s,\omega) } + \euclidnorm{\IBR(t,\omega) - \IBR(s,\omega)}
\leq 3\beta \lb t - s \rb
\label{eqn_Lipschitz}
\end{equation}
which holds $\forall \omega, 0 \leq s < t, R \in \NN$. The factor 3 appears since
at most two queues can be active at the same time and at each clock tick at most one queue
can be in(de)cremented.

To summarize, the scaled sequence of processes as defined in (\ref{eqn_scaleprocess}) 
take values in the space $C[0,\infty)$ of continuous paths taking values in
$E$, endowed with the supnorm topology, and $\sigma$-algebra $\CtopM$ generated by the open sets.
This is  obtained through the usual metric $\rC$ as defined
in \cite{Whitt69}, page 6.  This space is both separable and complete, see
\cite{Whitt69} Theorem 2.1. 
The probability measure induced on $\CtopM$ by the $R$th  interpolated process
(\ref{eqn_scaleprocess}) is
denoted $\mu_R$ so that $\mu_R(A)$ is the probability of an event $A \in \CtopM$. Finally, it is of
course the case that the jump chain sequence determines and is determined by the corresponding interpolated path. Hence  $\mu_R$ and the jump chain probabilities are equivalent, given the initial
conditions.


\subsection*{A.II. Fluid limit}

\label{sec_fluid}

%
%
If there is an infinite subsequence, $R_{k_1}, R_{k_2}, \ldots$ such that $\mu_{R_{k_n}} \Rightarrow \mu$
where $\Rightarrow$ denotes weak convergence, then $\mu$ is said to be a fluid limit measure.
If such a fluid limit exists then the corresponding process can be defined as follows. Its {\em state
  space} is again $E$ 
with underlying sample space $C[0,\infty)$ and corresponding $\sigma$-algebra $\cC$ 
generated by the open sets under the metric, $\rC$, as mentioned earlier.
This is the same space as for the sequence of prelimit processes. With the fluid limit
measure $\mu$ (including the deterministic initial conditions) we have an underlying  probability space  $\lb C[0,\infty), \cC, \mu \rb$. 
The stochastic process, $\lb {\bf Q}, {\bf I} \rb$  is the mapping $[0,\infty) \times C[0,\infty) \rightarrow E$ with values $\lb {\bf Q}(t,\omega),
{\bf I}(t,\omega) \rb \in E$. The curves $\lb {\bf Q}(.,\omega), {\bf
  I}(.,\omega) \rb$ and $\omega$ itself are the same. 
Whilst these definitions are somewhat redundant, nevertheless in  what follows, it will be convenient to think of a sample path as either a point $\omega$ or as a random function. Finally, on some occasions, we will  use the notation $X \in m\cC$ to indicate that $X: C[0,\infty) \rightarrow \RR$ is
measurable. 

%
%
The proof of the next Theorem is standard and follows from Lipschitz continuity, Theorem 8.3 of [1], and Lemma 3.1 of [40]. The details are omitted for brevity.
\begin{theorem}
\label{thm_tightmuR}
The sequence of measures $\mu_R$ defined on $\lb C[0,\infty),\CtopM \rb$ is tight.
\end{theorem}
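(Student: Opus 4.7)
The plan is to invoke the standard characterization of tightness for probability measures on $C([0,\infty), E)$ equipped with the metric $\rho_C$ (due to Whitt~\cite{Whitt69,Whitt70}). This characterization localizes the problem to each compact interval $[0,T]$: the sequence $\{\mu_R\}$ is tight on $C([0,\infty),E)$ iff for every $T>0$ the restricted sequence is tight on $C([0,T],E)$. By Prohorov's theorem combined with the Arzelà--Ascoli theorem, tightness on $C([0,T],E)$ reduces to verifying two ingredients: (i) tightness of the initial distributions $\{\mu_R \circ \pi_0^{-1}\}$ on $E$, where $\pi_0$ is evaluation at $t=0$; and (ii) a modulus-of-continuity condition, namely that for every $\eta>0$,
\[
\lim_{\delta \downarrow 0} \limsup_{R \to \infty} \mu_R\bigl\{\omega : w_T(\omega,\delta) \geq \eta \bigr\} = 0,
\]
where $w_T(\omega,\delta) = \sup\{\|\omega(t)-\omega(s)\| : s,t \in [0,T], |t-s| \leq \delta\}$ is the usual modulus of continuity.

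For ingredient (i), recall that by construction $\IBR(0)=0$ and $\QBR(0) = \tfrac{1}{R}\olQBR(0) \to {\bf q}$, so the initial distributions are point masses converging to the deterministic point $({\bf q},0) \in E$. Such a sequence of Dirac measures is trivially tight: for any $\epsilon>0$, the single compact set consisting of the closed ball of radius $2\|{\bf q}\|$ around the origin contains the mass of every $\mu_R \circ \pi_0^{-1}$ for $R$ large enough, and a finite number of point masses may be absorbed into a larger compact set.

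Ingredient (ii) is where the skip-free structure of the uniformized jump chain pays off. The pathwise Lipschitz estimate~(\ref{eqn_Lipschitz}) asserts that for every sample point $\omega$, every $R$, and every $s,t \in [0,T]$,
\[
\|\QBR(t,\omega)-\QBR(s,\omega)\| + \|\IBR(t,\omega)-\IBR(s,\omega)\| \leq 3\beta \,|t-s|.
\]
Hence $w_T(\omega,\delta) \leq 3\beta\,\delta$ uniformly in $\omega$ and $R$. Consequently, as soon as $\delta < \eta/(3\beta)$ the event $\{w_T(\cdot,\delta)\geq \eta\}$ is empty, so $\mu_R\{w_T(\cdot,\delta) \geq \eta\} = 0$ for every $R$. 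This trivially yields the required modulus condition, and in fact gives something stronger: the paths of $(\QBR,\IBR)$ lie in a common equicontinuous family with a deterministic modulus.

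Combining (i) and (ii) via Theorem~8.2 of~\cite{Billingsley68} (or Theorem~3 of~\cite{Whitt70}) gives tightness of $\{\mu_R\}$ restricted to $C([0,T],E)$ for each $T$, and then tightness on the full path space $C([0,\infty),E)$ follows by the diagonal argument of~\cite{Whitt69}. There is no genuine obstacle here; the entire content of the proof is the Lipschitz bound~(\ref{eqn_Lipschitz}), which was already established from the skip-free nature of the jump chain and the uniform clock rate $\beta$. The only step requiring any real care is the bookkeeping for passing from tightness on each $C([0,T],E)$ to tightness on $C([0,\infty),E)$, but this is entirely standard given the topology used.
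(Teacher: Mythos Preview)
Your proof is correct and follows essentially the same route as the paper, which simply states that tightness ``follows from Lipschitz continuity, Theorem~8.3 of~\cite{Billingsley68}, and Lemma~3.1 of~\cite{Whitt69}'' and omits the details. You have supplied exactly those details: the deterministic Lipschitz bound~(\ref{eqn_Lipschitz}) handles the modulus-of-continuity condition, convergence of the initial states gives tightness at $t=0$, and Whitt's result passes from each $C([0,T],E)$ to $C([0,\infty),E)$.
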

~~~\\
Thus, it follows from Prohorov's Theorem (Theorem 6.1 of \cite{Billingsley68}) that the sequence $\mu_R$ is relatively compact and fluid limit measure $\mu$ must exist. We suppose without loss of generality that
$\mu_R \Rightarrow \mu$. The sample paths under $\mu$ have the same Lipschitz constant $3\beta$.
It follows that the sample
paths of $\mu$ are differentiable a.e., almost surely \cite{Royden87}. 

Lipschitz continuity also implies that there are only a countable number of closed intervals $[a,b]$, $0 \leq a < b$, such that
$Q_\ell(a, \omega) = Q_\ell(b,\omega) = 0$, $\ell = 1,\cdots,N$, and
$Q_\ell(x,\omega) > 0, \forall x \in (a,b)$, $\ell=1,\cdots,N$, holding
almost surely. 

We denote by $\lc \cF_t \rc_{t \in [0,\infty)},~{\cal F}_t \subset \CtopM$, the filtration of
sub $\sigma$-algebras generated by the open sets restricted to the interval $[0,t]$. 
The process $\lb {\bf Q}, {\bf I} \rb$ is adapted to $\lc \Filt \rc_{t \in [0,\infty)}$
(In fact it is $\Filt$-progressive as the process is continuous, see 
\cite{EK85}). 

By consideration of the weak law of large numbers and the existence of the fluid limit measure
$\mu$, it holds that
\begin{equation}
{\bf Q}(t) = {\bf Q}(0) + \bldsym{\lambda} t - \frac{1}{\beta} {\bf I}(t),~t \geq 0.
\label{eqn_accountq}
\end{equation}
This equation can be thought of as an accounting identity. If queue $\ell$ is active for
a unit interval then $I_\ell$ increases by $\beta$, which corresponds (almost surely) to departures at unit rate. During the same period the arrival rate is $\lambda_\ell$ of course.

Since $I_\ell(t+h) - I_\ell(t) \leq \beta h$ for any node $\ell$,
and any times $t\geq 0$ and $h >0$, it follows from (\ref{eqn_accountq}) that
\begin{equation}
Q_\ell(t+h) \geq  Q_\ell(t) + \lambda_\ell h - h,~\mu~a.s.
\label{eqn_accountcon}
\end{equation}

We now derive an elementary property of the fluid limit process. Given $t \geq 0, h > 0$, define
\be \label{eq Ynotationhelp}
Y^\ell_{t,h} \doteq \lc \omega: I_\ell(t+h, \omega) - I_\ell(t,\omega) = \beta h  \rc
\ee
to be the event that queue $\ell$ is being served (at maximum rate) during the interval
$[t,t+h]$, i.e., the node is fully active during the given interval. Since many of the events that
we consider later are in terms of activity, we adapt the following notation throughout the paper. In the case of (\ref{eq Ynotationhelp}),
\begin{equation}
Y^\ell_{t,h} = \JS{\ell}{=}(t,h,\beta h)
\label{eqn_Jdef}
\end{equation}
where the superscript ``$\ell$'' denotes the node, ``$t$'' time and ``$h$'' duration. ``$\beta h$'' is the
amount of activity which must be met with equality here, as indicated by the subscript ``=''.
The subscript ``='' may be replaced by $>$, $\geq$, $<$, or $\leq$, depending on the event.


\begin{lemma}[No Conflict Lemma]
\label{lemma_conflict}
Let $\ell_1 \neq \ell_2 \in \lc 1,\cdots,N \rc$ be two neighbors in the interference
graph $G$, and
 $h > 0, t\geq 0$,
then
\begin{equation}
\muu{Y^{\ell_1}_{t,h} \cap  Y^{\ell_2}_{t,h} } = 0.
\label{eqn_AcapA}
\end{equation}\
\end{lemma}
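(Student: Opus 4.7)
The approach is to derive the non-coincidence directly from the structural constraint that two interfering nodes cannot be simultaneously active in the pre-limit, and then transfer that constraint to the fluid scale via the weak convergence $\mu_R \Rightarrow \mu$. First I would observe that because $\ell_1$ and $\ell_2$ are adjacent in $G$, every feasible activity state $\mathbf{U}(n) \in S$ satisfies $U_{\ell_1}(n) U_{\ell_2}(n) = 0$, so that the cumulative counter $I_{\ell_1}(n) + I_{\ell_2}(n)$ increases by at most one per clock tick of the uniformized jump chain. Since on every inter-tick interval the linear interpolation of this sum therefore has slope in $\{0,\beta\}$, the deterministic pathwise bound
\[
I^R_{\ell_1}(t+h) + I^R_{\ell_2}(t+h) - I^R_{\ell_1}(t) - I^R_{\ell_2}(t) \leq \beta h
\]
holds for every $R$, every $t \geq 0$, and every $h>0$ on every sample path; in particular the event specified by this inequality has $\mu_R$-measure one.

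Next I would fix $t \geq 0$ and $h>0$ and consider
\[
F_{t,h} = \bigl\{\omega \in C[0,\infty) : I_{\ell_1}(t+h,\omega) + I_{\ell_2}(t+h,\omega) - I_{\ell_1}(t,\omega) - I_{\ell_2}(t,\omega) \leq \beta h \bigr\}.
\]
Because evaluation of a continuous path at two fixed time points is continuous in the sup-norm metric, $F_{t,h}$ is a closed subset of $C[0,\infty)$. Since $\mu_R(F_{t,h}) = 1$ for every $R$ by the previous step, the Portmanteau theorem applied to $\mu_R \Rightarrow \mu$ yields $\muu{F_{t,h}} \geq \limsup_{R} \mu_R(F_{t,h}) = 1$. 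Finally, on $Y^{\ell_1}_{t,h} \cap Y^{\ell_2}_{t,h}$ each of the two increments equals exactly $\beta h$, so their sum equals $2\beta h$, strictly larger than $\beta h$ since $h>0$. Thus $Y^{\ell_1}_{t,h} \cap Y^{\ell_2}_{t,h} \subseteq C[0,\infty) \setminus F_{t,h}$, which gives $\muu{Y^{\ell_1}_{t,h} \cap Y^{\ell_2}_{t,h}} \leq 1 - \muu{F_{t,h}} = 0$.

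The only place that demands real care is the pathwise bound at the first step: although the unscaled counter $\mathbf{I}$ changes only at clock ticks, the linearly interpolated version still has slope at most $\beta$ on each inter-tick interval because at most one of $U_{\ell_1}, U_{\ell_2}$ can be active during that interval, and this bound must be preserved under both the time-acceleration by $\beta$ and the space-scaling by $1/R$. No serious obstacle is anticipated beyond this bookkeeping, since the whole argument ultimately reduces to the elementary inequality $2\beta h > \beta h$ together with a single Portmanteau invocation.
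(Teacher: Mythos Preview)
Your proposal is correct and follows essentially the same approach as the paper: both rely on the $\mu$-almost-sure inequality $[I_{\ell_1}(t+h)-I_{\ell_1}(t)]+[I_{\ell_2}(t+h)-I_{\ell_2}(t)]\leq\beta h$, which is incompatible with the event $Y^{\ell_1}_{t,h}\cap Y^{\ell_2}_{t,h}$. The paper simply asserts that this inequality holds ``by definition, and the existence of the fluid limit,'' whereas you explicitly supply the Portmanteau argument (closedness of $F_{t,h}$ plus $\mu_R(F_{t,h})=1$) that justifies the passage from pre-limit to limit; this is a welcome elaboration rather than a different route.
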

\begin{proof}
This follows by definition, and the existence of the fluid limit. The event
$Y^{\ell_1}_{t,h} \cap  Y^{\ell_2}_{t,h}$ contradicts the inequality that
for all $t \geq 0, h > 0$,
$$
\ls I_{\ell_1}(t+h,\omega) - I_{\ell_1}(t,\omega) \rs + \ls I_{\ell_2}(t+h,\omega) - I_{\ell_2}(t,\omega) \rs \leq  \beta h,
$$
which holds $\mu$ almost surely.
\end{proof}

To obtain more detailed information with respect to the sample paths of $\mu$, we proceed to
the construction of sequences of stopping times.

\subsection*{A.III. Sequences of stopping times}
\label{sec_stopseq}

The following definition is in connection with the amount of time a sample path for
$Q_\ell$ is positive, immediately prior to a time $z > 0$.
\begin{definition}
Given a time $z > 0$, and $v,0 < v \leq z$, and an $\ell =1,\cdots,N$, define
$$
\Kell{z,v} \doteq \lc \omega : Q_\ell(z-s,\omega) > 0,~\forall s \in (0,v) \rc.
$$
In words, $\Kell{z,v}$ is the set of sample paths for 
$Q_\ell$ which
are strictly positive in the interval $(z-v,z)$; if $z=0$, $\Kell{z,v}$ is taken to
be $\emptyset$.
\end{definition}
Observe that it could be the case that either $Q_\ell(z,\omega) = 0$ or $Q_\ell(z-v,\omega)=0$
(or both) and still $\omega \in \Kell{z,v}$. Finally note that it is possible for a given $\omega$
that no such $v$ can be found, which requires that $Q_\ell(z,\omega) = 0$ on account of continuity.
It can be shown that
$$
\Kell{z,v} = \cap_{n : 2/n < v} \ls \cup_{m=1}^\infty \lc \omega: Q_\ell(z-q,\omega) \geq 1/m,~
q \in [1/n,v-1/n] \cap \Qrat \rc \rs \in \cF_z,
$$
for $z > 0$, where $\Qrat$ is the set of rational numbers.


Given a time $z \geq 0$ and a path $\omega$, we define the mapping $\Uell(z,\omega): C[0,\infty) \rightarrow [0,z]$
to be $\Uell(z,\omega) \doteq \sup \ls \lc v : \omega \in \Kell{z,v} \rc \cup \lc 0 \rc \rs$, which is the time for which $Q_\ell$ was positive immediately prior to $z$. By definition, if $z \geq u > 0$ then
$$
\lc \omega : \Uell(z,\omega) \geq u \rc = \Kell{z,u},
$$
from which it follows that $\Uell(z,\omega) \in m\cF_z$. So far $z$
has been fixed.  However $\Uell : [0,\infty) \times C[0,\infty)
\rightarrow \RR_+$ is a stochastic process carried by the underlying
probability space $\lb C[0,\infty), \CtopM, \mu \rb$ and
$\cF_t$-adapted as we have just seen. This process is piecewise linear
and left-continuous (It falls to 0 immediately after $Q_\ell$ returns
to 0 from being positive). It follows that $\Uell$ is
$\cF_t$-progressive, see for example \cite{EK85}. 
We are now in a position to make the following definition.
\begin{definition}\label{defn_Tmap}
Given an  ${\cal F}_t$ stopping time $\sigma$, a queue $\ell \in 1,\cdots,N$ and
$m \in \ZIO \doteq \ZI - \lc 0 \rc$, define $\sT{\ell,m}(\omega,\sigma) : C[0,\infty) \times [0,\infty] \rightarrow [0,\infty]$ as follows
$$
\sT{\ell,m}(\omega,\sigma) \doteq \inf \lc z \geq \sigma(\omega) : Q_\ell(z) = 0,
\Uell(z,\omega) \in (e_m,f_m] \rc \leq \infty,
$$
where
\begin{eqnarray*}
f_m & = & \frac{1}{m},~e_m= \frac{1}{m+1}; \mbox{ for }m \in \ZIO,~m > 0, \\
f_m & = & \abs{m-1},~e_m = \abs{m};\mbox{ for } m \in \ZIO,~m < 0,
\end{eqnarray*}
where again empty sets have an infinite infimum.
\end{definition}
In words, $\sT{\ell,m}$ is the earliest right-hand end of an open interval, with value $z$, such that $Q_\ell$ is positive for a period $\Uell(z,\omega) \in (e_m , f_m]$, immediately prior to $\sT{\ell,m}$. If $z-f_m$ is
the first time prior to $z$ that $Q_\ell=0$, then $z$ is in the set on the RHS. However, if this occurs at
$z - e_m$, this is not the case. 

It is plausible that $\sT{\ell,m}$ is also an ${\cal F}_t$ stopping time, and we will subsequently prove
this with particular choices for $\sigma$.  We now state a construction lemma using a sequence of stopping times. These are returns to 0 following a fixed positive interval, in which we wait for a particular event $A_k$ to occur.
\begin{lemma}[Stop and Look Back]
\label{lem_stoplookback}
Let $\sigma \geq 0$ be an ${\cal F}_t$ stopping time and $a > 0$ a constant. Proposition
1.5 in Ethier \& Kurtz \cite{EK85} ensures that the following
inductively defined sequence is a sequence of $\cF_t$ stopping times:
$s_0,s_1,s_2,\ldots$,
\begin{eqnarray}
s_0 & \doteq  & \sigma \\
\label{eqn_zerosstop}
s_k & \doteq & \tau_c(\lc 0 \rc, s_{k-1} + a),~k=1,2,\cdots \nonumber
\end{eqnarray}
Here, given an $\cF_t$ stopping time $\sigma_1 > 0$, $\tau_c(\lc 0 \rc,\sigma_1) = \inf \lc t \geq \sigma_1, Q(t,\omega)= 0 \rc$.
Now let $A_k \in {\cal F}_{s_k}, k=1,2,\ldots$ be a sequence of events in the pre-$T$ $\sigma$-algebras of the above stopping time sequence. Finally, define $\tau \doteq s_k$ if $A_k$ occurs for the first
time at step $k$ and $\tau = \infty$ otherwise. Then $\tau$ is an $\cF_t$ stopping time.
\end{lemma}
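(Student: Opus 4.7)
The goal is to verify that $\{\tau \leq t\} \in \cF_t$ for every $t \geq 0$. My plan is to decompose this event according to which step $k$ realises $\tau$, then use the standard fact that $\cF_{s_j} \subseteq \cF_{s_k}$ whenever $s_j \leq s_k$ are stopping times, and finally invoke the definition of the pre-$s_k$ $\sigma$-algebra to slot each piece into $\cF_t$.

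Concretely, I would first write
\[
\{\tau \leq t\} = \bigcup_{k = 1}^{\infty} \{\tau = s_k\} \cap \{s_k \leq t\},
\]
and observe that by the definition of $\tau$,
\[
\{\tau = s_k\} = A_k \cap \bigcap_{j = 1}^{k - 1} A_j^c.
\]
Since $s_1 \leq s_2 \leq \cdots$ by construction, a standard property of filtrations at stopping times (Proposition~1.4 in~\cite{EK85}) gives $\cF_{s_j} \subseteq \cF_{s_k}$ whenever $j \leq k$. Combined with the hypothesis $A_j \in \cF_{s_j}$ this yields $A_j \in \cF_{s_k}$ and hence $A_j^c \in \cF_{s_k}$ for each $j \leq k$. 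Therefore $\{\tau = s_k\} \in \cF_{s_k}$.

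Next I would apply the defining property of the pre-$s_k$ $\sigma$-algebra: an event $B$ lies in $\cF_{s_k}$ iff $B \cap \{s_k \leq t\} \in \cF_t$ for every $t \geq 0$. Applying this with $B = \{\tau = s_k\}$ gives $\{\tau = s_k\} \cap \{s_k \leq t\} \in \cF_t$. Taking the countable union over $k$ then yields $\{\tau \leq t\} \in \cF_t$, which is exactly what is needed to conclude that $\tau$ is an $\cF_t$ stopping time.

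The argument is essentially bookkeeping around the filtration, and there is no real analytic obstacle: the existence of the Lemma's stopping time sequence $(s_k)$ is granted by the cited Ethier--Kurtz result, and everything else follows from elementary properties of $\sigma$-algebras indexed by stopping times. The one place to be a little careful is confirming that ``first occurrence'' is genuinely captured by $A_k \cap \bigcap_{j < k} A_j^c$, which requires no more than the observation that the events $\{\tau = s_k\}$ are pairwise disjoint so that the union above is in fact a disjoint union and captures $\{\tau < \infty\} \cap \{\tau \leq t\}$ exactly; the case $\tau = \infty$ contributes nothing to $\{\tau \leq t\}$.
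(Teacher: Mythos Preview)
Your argument is correct and is exactly the standard verification one would write here. The paper does not actually supply a proof of this lemma: it states the construction, cites Ethier--Kurtz for the fact that the $s_k$ are stopping times, and then moves on, treating the ``first occurrence'' part as routine. Your decomposition $\{\tau\le t\}=\bigcup_k\bigl(A_k\cap\bigcap_{j<k}A_j^c\bigr)\cap\{s_k\le t\}$, together with the monotonicity $\cF_{s_j}\subseteq\cF_{s_k}$ and the defining property of the pre-$s_k$ $\sigma$-algebra, is precisely the omitted bookkeeping. The only thing one might add for completeness is the observation already in the paper's post-lemma remark: if $s_k=\infty$ then $\{s_k\le t\}=\emptyset$, so those terms are harmless and the convention about not checking $A_k$ in that case causes no trouble.
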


Note that we do not check to see if $A_k$ has occurred if $s_k = \infty$ at any stage, as $\tau$ is assigned this value regardless.

%

We now proceed to show the following.
\begin{lemma}
\label{lem_defstoptime}
Let $\sigma_0 \geq 0$ be an $\cF_t$ stopping time such that $Q_\ell(\sigma_0(\omega),\omega) = 0$
or $\sigma_0 = \infty$ and suppose $\ell,m$ are given. Let $a = e_m$ and $\sigma \doteq a + \sigma_0$ which is therefore an $\cF_t$ stopping time, and $\sT{\ell,m}$ be the mapping given in Definition \ref{defn_Tmap}. Then $\sT{\ell,m}(\omega,\sigma)$ is an $\cF_t$ stopping time.
\end{lemma}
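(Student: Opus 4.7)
The plan is to apply the Stop and Look Back lemma (Lemma~\ref{lem_stoplookback}) with $\sigma_0$ playing the role of the lemma's input stopping time and waiting period $a = e_m$, producing an $\cF_t$-stopping time sequence $s_0 = \sigma_0$, $s_k = \tau_c(\{0\}, s_{k-1} + a)$, $k \geq 1$, that enumerates zeros of $Q_\ell$ beyond $\sigma_0$ separated by at least $a$. For $k \geq 1$ I would define the events
\[
A_k \doteq \lc \omega : U^{(\ell)}(s_k,\omega) \in (e_m, f_m] \rc \cap \lc s_k \geq \sigma \rc,
\]
each of which lies in $\cF_{s_k}$: the first factor because $U^{(\ell)}$ is $\cF_t$-progressive, and the second because $\sigma = \sigma_0 + a$ is itself an $\cF_t$-stopping time. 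The Stop and Look Back lemma then yields that $\tau$, defined to be $s_k$ for the first $k$ with $A_k$ occurring (and $\tau = \infty$ otherwise), is an $\cF_t$-stopping time. It then suffices to verify $\tau = \sT{\ell,m}(\omega,\sigma)$.

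The inequality $\tau \geq \sT{\ell,m}$ is immediate, since whenever $A_k$ holds, $s_k$ is a valid candidate in the infimum defining $\sT{\ell,m}$. For the reverse inequality, suppose $z \geq \sigma$ satisfies $Q_\ell(z) = 0$ and $U^{(\ell)}(z) = U \in (e_m, f_m]$; then $Q_\ell > 0$ on $(z - U, z)$ and $Q_\ell(z - U) = 0$, so $Q_\ell(\sigma_0) = 0$ forces $z - U \geq \sigma_0$ (otherwise $\sigma_0 \in (z - U, z)$, contradicting $Q_\ell(\sigma_0) = 0$). Let $j$ be the largest index with $s_j \leq z - U$, which exists because $s_0 = \sigma_0$. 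If $s_j = z - U$, then $s_{j+1} = \tau_c(\{0\}, z - U + a) = z$, using that $Q_\ell > 0$ on $(z - U, z)$ and $z - U + a < z$ (from $U > e_m = a$). If $s_j < z - U$, then $z - U < s_j + a$ must hold (otherwise $s_{j+1} \leq z - U$, contradicting $s_{j+1} > z - U$), whence $s_{j+1}$ is the first zero of $Q_\ell$ in $[s_j + a, \infty)$; since $Q_\ell > 0$ on $(z - U, z)$ and $z > s_j + a$, again $s_{j+1} = z$. In either case $U^{(\ell)}(s_{j+1}) = U \in (e_m, f_m]$ and $s_{j+1} = z \geq \sigma$, so $A_{j+1}$ occurs and $\tau \leq s_{j+1} = z$.

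The main obstacle will be handling the edge case where the excursion ending at $z$ begins strictly before $\sigma$, i.e. $z - U < \sigma$ and $z \in [\sigma, \sigma + e_m)$. This is precisely why I initialize the recursion at $\sigma_0$ rather than at $\sigma$: excursions straddling $\sigma$ are then picked up through the zero $z - U \in [\sigma_0, \sigma)$ they inherit from before $\sigma$, while the extra condition $s_k \geq \sigma$ in $A_k$ discards any spurious candidates falling short of $\sigma$. A naive choice $s_0 = \sigma$ with the same waiting period would render every qualifying zero in $[\sigma, \sigma + e_m)$ inaccessible to the sequence and would fail to deliver the infimum; the hypothesis $Q_\ell(\sigma_0) = 0$ is exactly what allows the revised construction to start one step earlier.
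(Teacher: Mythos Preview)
Your argument is correct and follows the same route as the paper: apply the Stop and Look Back lemma to produce the sequence $(s_k)$, use progressivity of $\Uell$ to place $A_k$ in $\cF_{s_k}$, and then identify the resulting $\tau$ with $T_{\ell,m}(\cdot,\sigma)$. Your verification that every qualifying zero $z$ equals some $s_{j+1}$ is more detailed than the paper's one-line justification, and your decision to anchor the recursion at the zero $\sigma_0$ rather than at $\sigma$ (which need not be a zero of $Q_\ell$) is well taken; note, incidentally, that the extra clause $\{s_k \geq \sigma\}$ in your $A_k$ is redundant, since $s_k \geq s_1 = \tau_c(\{0\},\sigma_0 + e_m) \geq \sigma$ for all $k \geq 1$.
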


\begin{proof}
Given $\sigma$ we will obtain a sequence of stopping times as
in the first part of Lemma \ref{lem_stoplookback}. However as we have already discussed,
$\Uell$ is $\cF_t$-progressive, from which it follows by
Proposition 1.4 of \cite{EK85} that
$$
A_k := \Uell(s_k(\omega), \omega) \in m\FZ{s_k}{}, \forall~k = 1,2,3,\ldots
$$
so that $\Uell(s_k(\omega),\omega) \in (e_m,f_m] \in \FZ{s_k}{}$. Hence $\tau$ as defined in Lemma
\ref{lem_stoplookback} is an $\cF_t$ stopping time.

It remains to show that $\tau$ coincides with $T_{\ell,m}$ as
defined. First suppose $\tau < \infty$, and immediately, $\tau \geq
\sigma, Q_\ell(\tau,\omega) = 0, \Uell(\tau,\omega) \in (e_m,f_m]$ by
definition. The fact that there is no earlier time satisfying these
conditions follows since each $s_k$ is a zero of $Q_\ell$ and the
construction rules out that the event could have taken place any
earlier. The case $\tau = \infty$ coincides with there being no zero
satisfying the required conditions.
\end{proof}

We now make the following recursive definitions.
\begin{definition} \label{defn_Zseq}
Given $m \in \ZIO $ and queue $\ell \in \lc 1,\cdots,N \rc$, let $\tau_0$ be the first entry of $Q_\ell(t, \omega)$ into 0 ($\tau_0$ is an $\cF_t$ stopping time). Then $Z^\ell_{m,0}$ is defined as
\begin{eqnarray*}
Z^\ell_{m,0} & \doteq & \sT{\ell,m}(\omega,0); \mbox{ if $Q_\ell(0, \omega) = 0$}\\
Z^\ell_{m,0} & \doteq & \tau_0; \mbox{ if $Q_\ell(0,\omega) > 0$, $\tau_0 \in (e_m,f_m]$}\\
Z^\ell_{m,0} & \doteq & \sT{\ell,m}(\omega,\tau_0); \mbox{ if $Q_\ell(0,\omega) > 0$, $\tau_0 \notin (e_m,f_m]$},
\end{eqnarray*}
and subsequent stopping times are defined as
$$
Z^\ell_{m,n}  \doteq  \sT{\ell,m}(\omega,Z^\ell_{m,n-1}),~n=1,2,3,\ldots.
$$
The value of the stopping time is taken to be $\infty$ if the events
do not occur. With obvious notation, we also define
$$
A^\ell_{m,n}(\omega) \doteq \Uell(\Zstop{\ell},\omega)
$$
to be the actual amount of time that the queue $\ell$ is positive prior to $Z^\ell_{m,n}$, and $A^\ell_{m,n} = \infty$ in case $\Zstop{\ell} = \infty$. Finally define the time
at which 
$Q_\ell$ last enters $(0,\infty)$ prior to $\Zstop{\ell}$ to be
$$
V^\ell_{m,n} \doteq Z^\ell_{m,n} - A^\ell_{m,n},
$$
when $Z^\ell_{m,n} < \infty$ and $V^\ell_{m,n} = \infty$ otherwise. Note that
$V^\ell_{m,n} \in m\cF_{\Zstop{\ell}}$ and is thus a
non-negative random variable but {\em not} a stopping time.
\end{definition}

The following corollary follows immediately from Lemma \ref{lem_defstoptime} and Definition \ref{defn_Zseq}.
\begin{corollary}
\label{corr_Zok}
$Z^\ell_{m,n}, n \in \ZZ$ is a strictly increasing sequence of $\cF_t$ stopping times,
$\forall \ell \in N, m \in \ZIO$.
\end{corollary}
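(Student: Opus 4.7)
The plan is to prove both assertions of the corollary — that each $Z^\ell_{m,n}$ is an $\cF_t$ stopping time and that the sequence $(Z^\ell_{m,n})_{n \geq 0}$ is strictly increasing — by a single induction on $n$, carrying along the strengthened inductive hypothesis that $Q_\ell(Z^\ell_{m,n}(\omega),\omega) = 0$ on the event $\lc Z^\ell_{m,n} < \infty \rc$. This last property is exactly what is needed in order to iterate Lemma \ref{lem_defstoptime}.

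For the base case $n=0$, the three-branch definition must be handled case by case. On the event $\lc Q_\ell(0) = 0 \rc \in \cF_0$, I would invoke Lemma \ref{lem_defstoptime} with $\sigma_0 \equiv 0$. On $\lc Q_\ell(0) > 0 \rc \in \cF_0$, the time $\tau_0$ (first entry of $Q_\ell$ into $0$) is a standard first-hitting time of a closed set by a continuous adapted process, hence an $\cF_t$ stopping time; when $\tau_0 \in (e_m, f_m]$ one sets $Z^\ell_{m,0} = \tau_0$, otherwise one applies Lemma \ref{lem_defstoptime} with $\sigma_0 = \tau_0$. The three branches are disjoint and respectively $\cF_0$- or $\cF_{\tau_0}$-measurable, so gluing them gives a well-defined $\cF_t$ stopping time on which $Q_\ell$ vanishes (when finite), establishing both the base case and the invariant.

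For the inductive step, assume $Z^\ell_{m,n-1}$ is an $\cF_t$ stopping time with $Q_\ell(Z^\ell_{m,n-1}(\omega),\omega) = 0$ whenever finite. This is precisely the hypothesis of Lemma \ref{lem_defstoptime} with $\sigma_0 = Z^\ell_{m,n-1}$, whose conclusion yields that $Z^\ell_{m,n} = \sT{\ell,m}(\omega, Z^\ell_{m,n-1})$ is again an $\cF_t$ stopping time. The fact that $Q_\ell$ again vanishes at $Z^\ell_{m,n}$ on $\lc Z^\ell_{m,n} < \infty \rc$ is read off from the construction inside the proof of Lemma \ref{lem_defstoptime}: the auxiliary sequence $s_k$ there consists of successive returns of $Q_\ell$ to zero, and $Z^\ell_{m,n}$ is identified with one such $s_k$, so the invariant propagates.

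Strict monotonicity is then immediate from the construction: Lemma \ref{lem_defstoptime} begins its search at $\sigma = \sigma_0 + e_m$ with $e_m > 0$, so $Z^\ell_{m,n}(\omega) \geq Z^\ell_{m,n-1}(\omega) + e_m > Z^\ell_{m,n-1}(\omega)$ on $\lc Z^\ell_{m,n-1} < \infty \rc$, while on $\lc Z^\ell_{m,n-1} = \infty \rc$ one has $Z^\ell_{m,n} = \infty$ under the standing convention that empty infima are infinite. No substantial obstacle is anticipated; the only real care required is the measurability bookkeeping for the three-branch definition of $Z^\ell_{m,0}$ and the verification that the invariant ``finite values of $Z^\ell_{m,n}$ sit on zeros of $Q_\ell$'' is preserved at each step so that Lemma \ref{lem_defstoptime} can be reapplied.
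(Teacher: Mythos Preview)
Your proposal is correct and follows the same approach as the paper, which simply states that the corollary ``follows immediately from Lemma~\ref{lem_defstoptime} and Definition~\ref{defn_Zseq}.'' Your induction, the carried invariant that $Q_\ell$ vanishes at finite values of $Z^\ell_{m,n}$, and the observation that the search in Lemma~\ref{lem_defstoptime} starts at $\sigma_0 + e_m$ (yielding the strict gap $Z^\ell_{m,n} \geq Z^\ell_{m,n-1} + e_m$) are precisely the content the paper leaves implicit in that one-line justification.
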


This completes our goal of constructing sequences of stopping times for the queue processes. For any
queue $\ell$, by construction and by Lipschitz continuity, it follows
that the set of stopping times, $Z^\ell_{m,n}$ determine all intervals  where $Q_\ell$ is positive for any sample path almost surely.

For each $m \in \ZIO$, and $\ell \in \lc 1,\cdots,N \rc$, define
$$
B^\ell_m \doteq \sup_n \lc \Zstop{\ell} : \Zstop{\ell} < \infty \rc
$$
to be the supremum of the finite stopping times for positive intervals with duration in
$(e_m,f_m]$. If there is an $m$ such that $B^\ell_m = \infty$, then  $Q_\ell$ returns to 0 infinitely often.
Otherwise there is a $B > 0, B > B^\ell_m, \forall m\in \ZIO$. In this case, either 
$Q_\ell$ remains at 0 as $t \rightarrow \infty$, or queue $\ell$ never returns to 0.

\subsection*{A.IV. Piecewise linearity and no backoff until empty}
\label{sec_pcwselinnobackoff}

So far the backoff exponent $\gamma > 1$ has not been taken into consideration, but from now on it will be. The following lemma bounds the probability, for the jump chain, that node $\ell$ has a backoff before its queue gets ``small'' provided that it was active earlier.
Given some number $\Qorig_T \in\NN$ define,
$$
K_{\Qorig_T} \doteq \lc \exists n: \Qorig_\ell(k) \geq \Qorig_T, 0 \leq k \leq n, U_\ell(n) = 0, U_\ell(n-1) = 1 \rc
$$
to be the event that queue $\ell$ has remained above $Q_T$ and has had a backoff at step $n$. We then have the following lemma,
\begin{lemma}[No Early Backoff]
\label{lemma_noearlybackoff}
Given any $\Qorig_0 > \Qorig_T$,
\begin{equation}
\prob{ K_{\Qorig_T} | \Qorig_\ell(0) \geq \Qorig_0} \leq \frac{1}{1 - \lambda_\ell} \times \sum_{r=\Qorig_T}^\infty \frac{1}{(1+r)^\gamma} =\varepsilon_{\Qorig_T}.
\label{eqn_QTveps}
\end{equation}
\end{lemma}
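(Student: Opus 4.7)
The plan is to combine a union bound over all potential backoff steps with an upper bound on the expected number of service completions at each queue level before the first drop below $\Qorig_T$. At each step $n$ of the uniformized jump chain, a backoff of queue $\ell$ can only happen when $U_\ell(n-1)=1$ and a service-completion event is selected (probability $1/\beta$, independent of the past), conditional on which the backoff coin is flipped with bias $g(\Qorig_\ell(n-1))=1/(1+\Qorig_\ell(n-1))^\gamma$. Decomposing $K_{\Qorig_T}$ into the events ``first backoff at step $n$, no earlier drop below $\Qorig_T$,'' applying the union bound, and then grouping the contributing service completions by the pre-SC queue level, would yield
\[
\pr{K_{\Qorig_T} \mid \Qorig_\ell(0) \geq \Qorig_0} \leq \sum_{r \geq \Qorig_T} \frac{1}{(1+r)^\gamma}\,\expect{D_r},
\]
where $D_r$ counts the service completions at queue $\ell$ occurring while $\Qorig_\ell=r$ before the first step $\tau$ at which $\Qorig_\ell<\Qorig_T$.

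The core quantitative step is the uniform bound $\expect{D_r} \leq 1/(1-\lambda_\ell)$. I would start from the deterministic flux identity $D_r = U_{r-1} + \indi{\Qorig_T \leq r \leq \Qorig_0}$, where $U_{r-1}$ is the number of arrivals to queue $\ell$ while it sits at level $r-1$ before $\tau$. After each service completion at level $r$, the queue drops to $r-1$, and another service completion at level $r$ requires the queue to return to $r$ before first hitting $\Qorig_T-1$. A gambler's-ruin comparison with a standalone M/M/1 queue of arrival rate $\lambda_\ell$ and service rate $1$ bounds this return probability from $r-1$ by $\lambda_\ell$, so the count $D_r$ is stochastically dominated by a geometric random variable with success probability $1-\lambda_\ell$, giving $\expect{D_r} \leq 1/(1-\lambda_\ell)$ and hence the claimed inequality $\pr{K_{\Qorig_T} \mid \Qorig_\ell(0) \geq \Qorig_0} \leq \varepsilon_{\Qorig_T}$.

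The main obstacle is justifying the M/M/1 domination in the presence of interference from neighbors, since blocking could a priori prolong the sojourn of queue $\ell$ at high levels and inflate $\expect{D_r}$. The key point is that $D_r$ counts only actual service completions, which are rate-$1$ events gated by activity, while the queue's upward transitions come from the rate-$\lambda_\ell$ Poisson arrival stream regardless of activity; this asymmetry, combined with $\lambda_\ell<1$, is what allows the gambler's-ruin return probability to be bounded uniformly over all possible neighbor activity patterns, and makes the per-level expected-count bound robust to interference.
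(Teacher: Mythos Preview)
Your approach is correct and lands on the same bound as the paper, but the two proofs organise the counting differently. The paper uses a busy-period (``generations'') decomposition: it reorders service via preemptive priority so that, for each target packet at level $r$ (with $r=\Qorig_T,\dots,\Qorig_0$), its busy period contains on average $1/(1-\lambda_\ell)$ service completions, during all of which the queue is $\geq r$; a union bound over those completions gives a per-level contribution $\frac{1}{(1+r)^\gamma}\cdot\frac{1}{1-\lambda_\ell}$, and summing over $r$ yields $\varepsilon_{\Qorig_T}$. You instead group service completions by the \emph{exact} queue level at which they occur and bound $\expect{D_r}$ via a return-probability (gambler's-ruin) argument. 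Both routes extract the same factor $1/(1-\lambda_\ell)$; the paper's buys brevity (the M/M/1 busy-period mean is a one-line fact), while yours is more explicit about why each level contributes at most $1/(1-\lambda_\ell)$ completions.

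Your worry about interference, and the somewhat vague ``asymmetry'' justification you give for it, is unnecessary. The cleaner observation is that once node~$\ell$ is active with queue $\geq \Qorig_0$, it \emph{stays} active until its first backoff: de-activation happens only via backoff, and while active it blocks all neighbours. So on the event ``no backoff yet,'' the queue at node~$\ell$ is exactly an isolated M/M/1 with arrival rate $\lambda_\ell$ and service rate~$1$. For the union bound you may therefore compute $\expect{D_r}$ in a coupled process where the backoff coins are flipped but ignored, and that process is precisely the M/M/1; your gambler's-ruin bound $\expect{D_r}\leq 1/(1-\lambda_\ell)$ then needs no robustness-to-interference caveat.
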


Since the sum is convergent, $\varepsilon_{\Qorig_T} \downarrow 0$ as $\Qorig_T \uparrow  \infty$.

\begin{proof}
It is convenient to consider the packets being served in generations. That is given a target
packet, suppose that we serve the packets which arrive during its service with preemptive priority up to and including the target
packet. This makes no difference to queue behaviour as the service
times are exponential and we are only interested in the {\em first
  occasion} when node $\ell$ goes into backoff.

Suppose there are $\Qorig_T > 0$ packets in queue at the time the service of a given target packet
starts. Consider the busy period of this packet, i.e. the time to serve the target packet and the subsequent high-priority packets (without backoff). It is easy to see that the mean number of packet arrivals during this busy period is $\frac{1}{ 1 - \lambda_\ell }$, including the target packet itself.

The service of each packet ends with a random decision to backoff with probability less than $\frac{1}{\lb 1 + \Qorig_T \rb^\gamma}$, since the queue length is never shorter than $\Qorig_T$ until the target packet has departed. Thus, by the
union bound, the probability of a backoff occurring before or immediately after the target packet departs, is less than $ \frac{1}{\lb 1 + \Qorig_T \rb^\gamma} \times \frac{1}{1 - \lambda_\ell }$. The probability of a backoff, starting with $\Qorig_0>\Qorig_T$ packets, and before the queue drops below the level $\Qorig_T$, is therefore smaller than $\frac{1}{ 1 - \lambda_\ell } \sum_{r=\Qorig_T}^{r=\Qorig_0} \frac{1}{(1+r)^\gamma}$ which implies the statement of Lemma.

\end{proof}

The following lemma will also be useful. First given times $t_2 > t_1 \geq 0$ on the fluid
scale, let $B_\ell([t_1,t_2])$
be the event that node $\ell$ {\em starts a backoff}
in the interval $[t_1,t_2]$. This event occurs in the prelimit process $\lb Q^R_\ell, I^R_\ell \rb$
if for some jump chain index
$n, U_\ell(n) = 1, U_\ell(n+1) = 0$ with $\floor{R t_1 \beta} \leq n \leq \ceil{Rt_2 \beta}$.
Let $\DQ{\varsigma}{[t_1,t_2]}$
be the event that $Q^R_\ell(u) > \varsigma$ (or equivalently $\bar{Q}_\ell(Ru) \geq R\varsigma$) for all $u$ in the interval $[t_1,t_2]$.

\begin{lemma}
\label{lemma_NoBackOffQLong}
Given the above definitions,
\begin{equation}
\lim_{R \rightarrow \infty} \muR{B_\ell{[t_1,t_2]} \cap \DQ{\varsigma}{[t_1,t_2]}} = 0
\end{equation}
\end{lemma}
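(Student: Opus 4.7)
\smallskip
\noindent\emph{Proof plan.} The plan is to reduce the claim directly to the No Early Backoff estimate of Lemma~\ref{lemma_noearlybackoff}, applied to the uniformised jump chain restarted at the deterministic jump-chain step $n_0 = \lfloor R t_1 \beta \rfloor$. On the intersection $B_\ell([t_1,t_2]) \cap \DQ{\varsigma}{[t_1,t_2]}$, there must exist some jump-chain index $n$ with $n_0 \leq n \leq \lceil R t_2 \beta \rceil$ realising a backoff transition $U_\ell(n) = 1 \to U_\ell(n+1) = 0$, while the unscaled queue satisfies $\Qorig_\ell(k) \geq R\varsigma$ for every integer $k$ in this window; in particular $\Qorig_\ell(n_0) \geq R\varsigma$ and the queue stays above the level $R\varsigma$ from step $n_0$ up to (and including) the backoff step.

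Next, I would invoke the Markov property of the jump chain at the deterministic step $n_0$ and condition on the state $(U(n_0), \Qorig(n_0))$ restricted to $\Qorig_\ell(n_0) \geq R\varsigma$. The post-$n_0$ trajectory is then distributed like the chain restarted from a state whose $\ell$-th coordinate is at least $R\varsigma$, and the target event is contained in the shifted version of the event $K_{R\varsigma}$ appearing in Lemma~\ref{lemma_noearlybackoff} with threshold $\Qorig_T = R\varsigma$: the node experiences a backoff before its queue falls below $R\varsigma$. Lemma~\ref{lemma_noearlybackoff} furnishes the conditional bound $\varepsilon_{R\varsigma}$, and integrating against the restricted distribution of $(U(n_0), \Qorig(n_0))$ preserves it, giving
$$\muR{B_\ell([t_1,t_2]) \cap \DQ{\varsigma}{[t_1,t_2]}} \leq \varepsilon_{R\varsigma}.$$

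To finish, I would simply observe that since $\gamma > 1$ the tail sum defining $\varepsilon_{\Qorig_T}$ in~(\ref{eqn_QTveps}) is a convergent $p$-series tail, and a standard integral comparison yields $\varepsilon_{R\varsigma} \leq \frac{(R\varsigma)^{1-\gamma}}{(1-\lambda_\ell)(\gamma-1)} \to 0$ as $R \to \infty$, which is the desired vanishing. I do not anticipate any serious obstacle: the only slightly fiddly point is the endpoint bookkeeping in translating the fluid-time window $[t_1,t_2]$ into the jump-chain window $[\lfloor R t_1 \beta \rfloor, \lceil R t_2 \beta \rceil]$ (so that the backoff transition and the queue-length lower bound are both captured by the shifted chain starting at $n_0$), but this step is purely deterministic and leaves the probabilistic bound unchanged.
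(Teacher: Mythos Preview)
Your argument is correct but takes a genuinely different route from the paper. The paper dispatches the lemma with a single union bound: the jump-chain window $[\lfloor Rt_1\beta\rfloor, \lceil Rt_2\beta\rceil]$ contains at most $R_{t_1,t_2} \doteq \lceil R(t_2-t_1)\beta\rceil + 2$ service completions, and on the event $\DQ{\varsigma}{[t_1,t_2]}$ each one triggers a backoff with probability at most $(1+R\varsigma)^{-\gamma}$, so
\[
\muR{B_\ell([t_1,t_2]) \cap \DQ{\varsigma}{[t_1,t_2]}} \;\leq\; \frac{R_{t_1,t_2}}{(1+R\varsigma)^\gamma} \;\longrightarrow\; 0.
\]
Your reduction to Lemma~\ref{lemma_noearlybackoff} via the Markov property at the deterministic step $n_0$ is a legitimate alternative and yields the same $O(R^{1-\gamma})$ decay; it even gives a bound $\varepsilon_{R\varsigma}$ that is independent of the window length $t_2-t_1$, whereas the union-bound constant scales linearly with it. On the other hand, the paper's approach is entirely self-contained---it does not rely on Lemma~\ref{lemma_noearlybackoff} at all---and sidesteps both the conditioning step and the minor boundary bookkeeping (the strict inequality $Q_0 > Q_T$ required by Lemma~\ref{lemma_noearlybackoff} versus $\Qorig_\ell(n_0) \geq R\varsigma$ delivered by $\DQ{\varsigma}{[t_1,t_2]}$) that you rightly flag as fiddly.
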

\begin{proof}
First we may suppose node $\ell$ becomes active at some stage or there is nothing to prove.
The lemma then follows from the union bound. Since there are at most
$R_{t_1,t_2} \doteq \ceil{R(t_2-t_1)\beta}+2$ departures in the entire interval, the union
bound implies that the probability of a backoff is smaller than,
$$
\muR{B_\ell{[t_1,t_2]} \cap \DQ{\varsigma}{[t_1,t_2]}}  \leq \frac{R_{t_1,t_2}}{\lb 1 + R \varsigma \rb^\gamma} \rightarrow 0.
$$
This completes the proof.
\end{proof}

\begin{definition}
\label{defn_poi}
Given a queue $\ell$, a time $t \in [0,\infty)$ on the fluid scale, and a queue
length $Q_\ell(t) = Q > 0$, we say that $t$ is a point of increase for the activity
process of queue $\ell$ if the event
$$
P_{t,Q}^{(\ell)} \doteq \cap_{M=1}^\infty \left\{\JS{\ell}{>}(t,\frac{1}{M},0)\right\} \cap Q^{(\ell)}_{t,Q}
$$
occurs, with $Q^{(\ell)}_{t,Q} \doteq \lc \omega: Q_\ell(t,\omega) > Q \rc$. In words, queue $\ell$ is active in any arbitrarily small interval $(t,t+1/M)$ and
$Q_\ell$ is greater than $Q$ at time $t$. 


Furthermore, given a time $s \in [0,\infty)$ and $h > 0$, we say that queue $\ell$ is under active, with duration $h > 0$ if the following event occurs
\be \label{eq Glsh}
\Glsh \doteq \JS{\ell}{<}(s,h,\beta h).
\ee
\end{definition}
Points of increase rule out that there is a sequence $t_n \downarrow
t$ such that $I_\ell(t_n,\omega) = I_\ell(t,\omega)$, as there is
activity no matter how small the interval. Under activity means
that there was some idling during the interval. Given our choice of
$\gamma$, it will be shown that a point of increase cannot be followed by a period of under activity
until queue $Q_\ell$ has drained. This is because the probability of even a single
backoff once service has begun, is effectively 0 until the queue has
drained on the fluid scale.
\begin{lemma}
\label{lmma_NoBackOff}
Suppose $s \in \lb t,t+Q/(1-\lambda_\ell) \rb$. Then $\forall h, 0 < h < t+Q/(1-\lambda_\ell) - s$,
and for all sufficiently large  $M$,
\begin{equation}
\muu{\JS{\ell}{>}(t,1/M,0)  \cap Q^{(\ell)}_{t,Q} \cap \Glsh} = 0.
\end{equation}
\end{lemma}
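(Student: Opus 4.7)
The plan is to push the statement down to the prelimit jump chain and apply Lemma~\ref{lemma_noearlybackoff} (No Early Backoff), which is where the condition $\gamma > 1$ bites. The intuition: if node~$\ell$ activates at time~$t$ with $Q_\ell(t) > Q$, then on the prelimit its queue is of order~$R$, so the probability of any backoff before the queue drops to a small positive level is vanishing; node~$\ell$ therefore serves at full rate throughout the draining window $[t, t+Q_\ell(t)/(1-\lambda_\ell)]$, which contains $[s, s+h]$ by hypothesis, precluding the under-activity event $\Glsh$.

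First I would reduce to an open prelimit event. Since $\Glsh = \bigcup_{k \geq 1}\{I_\ell(s+h)-I_\ell(s) \leq \beta h - 1/k\}$, if $\muu{A} > 0$ for the event $A$ of the lemma then $\muu{A_k} > 0$ for some $k$, where
\[
A_k := \JS{\ell}{>}(t,1/M,0) \cap Q^{(\ell)}_{t,Q} \cap \{\omega: I_\ell(s+h,\omega)-I_\ell(s,\omega) \leq \beta h - 1/k\}.
\]
Fix the margin $\eta := t + Q/(1-\lambda_\ell) - (s+h) > 0$ together with constants $\varsigma \in (0, \min\{Q/4,(1-\lambda_\ell)\eta/3\})$ and $\delta \in (0, \varsigma)$, and pick $M$ large enough that $3\beta/M < \delta$. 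Thicken $A_k$ to the open event
\[
A_k^\circ := \bigl\{\omega: I_\ell(t+1/M,\omega)-I_\ell(t,\omega) > 0,\ Q_\ell(t,\omega) > Q-\delta,\ I_\ell(s+h,\omega)-I_\ell(s,\omega) < \beta h - 1/(2k)\bigr\}.
\]
Then $A_k \subseteq A_k^\circ$, and $A_k^\circ$ is open in $C[0,\infty)$; the Portmanteau theorem yields $\muu{A_k} \leq \liminf_R \muR{A_k^\circ}$, so it suffices to prove $\muR{A_k^\circ} \to 0$.

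Second, on the prelimit event $A_k^\circ$ the first condition guarantees a jump-chain step $n_R \in [\lfloor Rt\beta \rfloor, \lceil R(t+1/M)\beta \rceil]$ at which node~$\ell$ activates, and the second condition together with the Lipschitz bound~(\ref{eqn_Lipschitz}) (with constant $3\beta$) ensures that the queue length at step $n_R$ is at least $R(Q-2\delta)$ once $R$ is sufficiently large. Applying Lemma~\ref{lemma_noearlybackoff} starting from step $n_R$ with $\Qorig_T = \lceil R\varsigma \rceil$ bounds the probability that node~$\ell$ enters any backoff before its queue falls below level $R\varsigma$ by $\varepsilon_{\lceil R\varsigma \rceil} \to 0$, since $\gamma > 1$ makes the tail $\sum_{r \geq R\varsigma}(1+r)^{-\gamma}$ vanish as $R \to \infty$.

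Third, on the complementary event (no backoff during the drain) node~$\ell$ holds the channel from step $n_R$ onwards and serves at full rate, and by the choice of $\varsigma$ its queue stays above $R\varsigma$ throughout $[Rt, R(s+h)]$ on the prelimit scale, since the net drain rate is $(1-\lambda_\ell)R$ per unit fluid time and $(1-\lambda_\ell)(s+h-t) < Q - 3\varsigma$. Consequently $\olIBR_\ell(R(s+h)) - \olIBR_\ell(Rs) = R\beta h - O(1)$, contradicting the strict under-activity requirement $< R(\beta h - 1/(2k))$ for all $R$ large enough. Combining the two estimates yields $\muR{A_k^\circ} \to 0$, hence $\muu{A_k}=0$ for every $k$, and finally $\muu{A}=0$. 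The main obstacle is the passage from the degenerate fluid-level condition $\JS{\ell}{>}(t,1/M,0)$ to an open prelimit event on which one can pinpoint a concrete activation step $n_R$ where Lemma~\ref{lemma_noearlybackoff} applies; once that bridge is built, the argument is essentially $\gamma > 1$ together with Portmanteau.
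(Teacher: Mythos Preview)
Your proof is correct and follows essentially the same route as the paper: pass to the prelimit via Portmanteau and show that once node~$\ell$ is active with a queue of order~$R$, the probability of any backoff before the queue drops to a small positive level vanishes because $\gamma>1$. The paper streamlines this by noting that $\JS{\ell}{>}(t,1/M,0)$, $Q^{(\ell)}_{t,Q}$, and $\Glsh$ are already open (so your countable-union and thickening steps are unnecessary) and by invoking Lemma~\ref{lemma_NoBackOffQLong} in place of Lemma~\ref{lemma_noearlybackoff}, but the substance is the same.
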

\begin{proof}
Consider the sequence of prelimit processes. We will choose $M$ large enough and $\varsigma$ small enough so that $[s,s+h] \subset (t+1/M,t+(Q-\varsigma)/(1-\lambda_\ell)]$
(for some small constant $\varsigma > 0$). Then for $R,M$ sufficiently large, and then by definition,
occurrence of $\Glsh$, for the $R$-th prelimit process, implies occurrence of  $B_\ell([t+1/M,s+h])$.
Hence we obtain,
\begin{eqnarray}
\muR{\JS{\ell}{>}(t,1/M,0)  \cap Q^{(\ell)}_{t,Q} \cap \Glsh} & \leq & \muR{ \JS{\ell}{>}(t,1/M,0)  \cap Q^{(\ell)}_{t,Q} \cap B_\ell([t+1/M,s+h]) }
\label{eqn_BDinc} \\
       & \leq & \mu_{B,R}+\mu_{F,D,R}, \nonumber
\end{eqnarray}
where
\begin{equation}
\mu_{B,R} \doteq \muR{\JS{\ell}{>}(t,1/M,0) \cap \Bkoff{t+1/M,s+h} \cap \DQ{\varsigma}{[t,s+h]}},
\label{eqn_muBR}
\end{equation}
and
\begin{equation}
 \mu_{F,D,R} \doteq \muR{ Q^{(\ell)}_{t,Q} \cap \lb  \DQ{\varsigma}{[t,s+h]} \rb^c}.
\label{eqn_FcapD}
\end{equation}
Thus, in order to prove the lemma, it is sufficient to show that both $ \mu_{B,R}\rightarrow 0$ and $\mu_{F,D,R} \rightarrow 0,$
as $R \rightarrow \infty$, because then we may conclude that
$$
\muu{\JS{\ell}{>}(t,1/M,0)  \cap Q^{(\ell)}_{t,Q}  \cap G^{(\ell)}_{s,h} }   \leq \liminf
\muR{ \JS{\ell}{>}(t,1/M,0)  \cap Q^{(\ell)}_{t,Q}  \cap G^{(\ell)}_{s,h} } = 0
$$
on applying Theorem 2.1 in \cite{Billingsley68} and since the sets
$\JS{\ell}{>}(t,1/M,0),Q^{(\ell)}_{t,Q},G^{(\ell)}_{s,h}$ are all open.

The fact that $\mu_{F,D,R} \rightarrow 0$ follows from
(\ref{eqn_accountq}) and then by definition of $Q^{(\ell)}_{t,Q}$ and
additionally by the choice of $s,h,\varsigma$. As far as $\mu_{B,R}$ 
is concerned, the event $\JS{\ell}{>}(t,1/M,0)$ implies that service has started during the interval $[t,t+1/M]$. On the other hand, the event $B_\ell([t+1/M,s+h])$ 
implies that at some time in $[t,s+h]$ node $\ell$ starts to backoff. Setting $t_1 = t$ and $t_2 = s + h$,
we may invoke Lemma \ref{lemma_NoBackOffQLong} as by definition the event
$D_{\ell,\varsigma}([t,s+h])$ 
implies $Q_\ell$ did not go below $\varsigma$ in the interval $[t_1,t_2]$. It follows that
$\mu_{B,R} \rightarrow 0$ as required.  
\end{proof}

The implication of Lemma \ref{lmma_NoBackOff} is that any positive period of transmission, no matter how
short,  must be followed by full activity until the queue has drained on the fluid scale. This
implies that there is no period of under activity, until the queue has drained, with probability 1.

\subsection*{A.V. Piecewise linear paths with probability 1}
\label{sec_linpathprob1}

The aim of this section is to show that the queue sample paths follow a certain bilinear path during the interval prior to the queue becoming zero again. The bilinear path depends on the duration of the interval and on the arrival rate for the given queue.

To make the above statements precise, given $\ell \in \lc 1,\cdots, N \rc$ define
 the bilinear path $\PhiLin{\ell}{t_0,t_1}$ for the interval $[t_0,t_1]$ to be,
\begin{eqnarray}\label{eqn_bilinear1}
 \PhiLin{\ell}{t_0,t_1}(s)  = \left\{\begin{array}{ll}
 \lambda_\ell \lb s - t_0 \rb;&  t_0 \leq s \leq s_0, \\
\lambda_\ell \lb s - t_0 \rb -(1-\lambda_\ell)(s-s_0);&  s_0 \leq s \leq t_1,
\end{array}
\right.
\end{eqnarray}
where $s_0 \doteq  t_1-\lambda_\ell(t_1-t_0)$.
In words, $Q_\ell$ builds up linearly in the interval $[t_0,s_0]$ at
rate $\lambda_\ell$ and
drains at rate $1-\lambda_\ell$ in the interval $[s_0,t_1]$. 

Given $\eta > 0$, and $\ell \in \lc 1,\cdots,N  \rc$, define
$\1_{t_0,t_1}^{(\eta,\ell)}$ to be the indicator for the event
$$
\lc \omega: \sup_{s\in [t_0,t_1]}  \abs{ Q_\ell(s,\omega) - \PhiLin{\ell}{t_0,t_1}(s) } < \eta \rc \in {\cal F}_{t_1}
$$
In words, $\1_{t_0,t_1}^{(\eta,\ell)}(\omega) = 1$ iff the absolute difference between $\PhiLin{\ell}{t_0,t_1}$
and the sample path for $Q_\ell$ is smaller than $\eta$ in supnorm over the interval $[t_0, t_1]$.

We now examine the conditional probability that $\1_{\Vstop{\ell},\Zstop{\ell}}^{(\eta,\ell)}(\omega) = 1$, given $\Zstop{\ell} < \infty$ and $\Astop{\ell}$ (the case $\Zstop{\ell} =\infty$ is irrelevant). Define, $\Zsigma{\ell} \doteq \sigma\lb \Zstop{\ell}, \Astop{\ell}  \rb \subset \CtopM$
~and also $\Zsigma{\ell,\infty}=  \Zsigma{\ell} \cap \lc \omega: \Zstop{\ell}(\omega) < \infty \rc$. 

It will be enough to show that the sample paths lie in an arbitrarily small tube around
$\PhiLin{\ell}{\Vstop{\ell},\Zstop{\ell}}$ conditional on  $\Astop{\ell},\Zstop{\ell}$ lying in
some small rectangle $Z_{(s,t)}^{(a,b)} \doteq \lc \omega:\Astop{\ell}(\omega) \in (a,b], \Zstop{\ell}(\omega) \in (s,t] \rc \in \Zsigma{\ell,\infty}$.
\begin{theorem}
\label{thm_linearpaths}
Given $n \geq 1, m \in \ZIO$, then $\forall \eta > 0$,
\begin{equation}
\muu{ \1_{\Vstop{\ell},\Zstop{\ell}}^{(\eta,\ell)}  =1 | \Zsigma{\ell, \infty}} =1~a.s.
\label{eqn_linearsamp}
\end{equation}
In words, given the stopping time $\Zstop{\ell}$ and the time prior to this that $Q_\ell$
was positive, $\Astop{\ell}$, the probability 
that  $\PhiLin{\ell}{\Vstop{\ell},\Zstop{\ell}}$ is followed, starting at
$\Vstop{\ell}$ and ending at $\Zstop{\ell}$, is 1 under the fluid limit
measure $\mu$.
\end{theorem}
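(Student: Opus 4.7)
The plan is to decompose the positivity interval $[\Vstop{\ell},\Zstop{\ell}]$ into a build-up phase $[\Vstop{\ell},s_0)$ during which $I_\ell$ is constant (no service) and a drain phase $[s_0,\Zstop{\ell}]$ during which $I_\ell$ grows at the maximum rate $\beta$ (full activity). Via the accounting identity (\ref{eqn_accountq}), this decomposition pins the sample path of $Q_\ell$ to $\PhiLin{\ell}{\Vstop{\ell},\Zstop{\ell}}$, and the boundary conditions $Q_\ell(\Vstop{\ell}) = Q_\ell(\Zstop{\ell}) = 0$ force the kink at $s_0 = \Zstop{\ell} - \lambda_\ell(\Zstop{\ell}-\Vstop{\ell})$, exactly the switching time appearing in (\ref{eqn_bilinear1}).

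For the build-up, define $s_0$ pathwise as the infimum of the points of increase of $I_\ell$ in $(\Vstop{\ell},\Zstop{\ell})$. Some such point must exist on $\{\Zstop{\ell}<\infty\}$, because otherwise $I_\ell$ would be constant across the whole interval and (\ref{eqn_accountq}) would yield $Q_\ell(\Zstop{\ell}) = \lambda_\ell \Astop{\ell} > 0$, contradicting the definition of $\Zstop{\ell}$ as a zero of $Q_\ell$. On $[\Vstop{\ell}, s_0)$ the identity then gives $Q_\ell(s) = \lambda_\ell(s-\Vstop{\ell})$ exactly, matching the ascending leg of $\PhiLin{\ell}{\Vstop{\ell},\Zstop{\ell}}$.

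For the drain phase the main tool is Lemma \ref{lmma_NoBackOff}: at $s_0$ the queue is strictly positive and $s_0$ is a point of increase, so the lemma forbids any under-active sub-interval before the queue drains on the fluid scale. Because the lemma is formulated at a deterministic $(t,Q)$, I would apply it to a countable family of rational triples $(t,Q,h) \in \Qrat^3$ and take a union of the resulting $\mu$-null sets; Lipschitz continuity of $Q_\ell$ and $I_\ell$ then promotes the conclusion to every point of increase occurring at a positive queue level. Off this exceptional null set, $I_\ell(s)-I_\ell(s_0)=\beta(s-s_0)$ for all $s\in[s_0,\Zstop{\ell}]$, and (\ref{eqn_accountq}) delivers the drain at the required rate $1-\lambda_\ell$.

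The main technical obstacle I anticipate is packaging all of the above as a conditional statement with respect to $\Zsigma{\ell,\infty}$. Since the latter is generated by the rectangles $Z^{(a,b)}_{(s,t)} = \{\Astop{\ell}\in(a,b],\,\Zstop{\ell}\in(s,t]\}$, it is enough to verify, for every such rectangle and every $\eta>0$,
\[
\muu{\{\1_{\Vstop{\ell},\Zstop{\ell}}^{(\eta,\ell)}=0\}\cap Z^{(a,b)}_{(s,t)}}=0.
\]
On this rectangle, both endpoints of the positivity interval lie in a controllable range, and the pathwise arguments above fix the bilinear shape. The remaining task is to control the location of the random kink $s_0$ to within $\eta/\lambda_\ell$ of the nominal value $\Zstop{\ell}-\lambda_\ell\Astop{\ell}$, which follows from the two slope identifications together with the conservation equation $Q_\ell(\Zstop{\ell})=0$: any larger deviation would contradict conservation by more than $\eta$. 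The delicate bookkeeping lies in choosing the rational covering used for Lemma \ref{lmma_NoBackOff} uniformly enough across the rectangle so that the null set does not grow with $\eta$, and in ruling out pathological behavior near the endpoints where $Q_\ell$ touches zero.
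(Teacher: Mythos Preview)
Your proposal is correct and follows essentially the same route as the paper: define the first point of increase of $I_\ell$ in the positivity interval (the paper's $\Bstop{\ell}$ plays the role of your $s_0-\Vstop{\ell}$, and is likewise defined via an infimum over rationals), invoke Lemma~\ref{lmma_NoBackOff} to force full activity on the drain phase, read off the two linear pieces from the accounting identity~(\ref{eqn_accountq}), and reduce the conditional statement to the generating rectangles $Z^{(a,b)}_{(s,t)}$. The only packaging difference is that the paper shrinks the rectangles to width $\varepsilon_\eta$ and then uses Lipschitz continuity to compare the actual bilinear path with $\PhiLin{\ell}{\Vstop{\ell},\Zstop{\ell}}$, whereas you pin the kink via the conservation constraint $Q_\ell(\Zstop{\ell})=0$; once both slopes are established exactly this constraint in fact forces $s_0$ \emph{exactly} at the nominal value, so your final ``$\eta/\lambda_\ell$'' control is not even needed.
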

\begin{proof}
For any given $\varepsilon > 0$, the sets $Z_{(s,t)}^{(a,b)}$ 
$0 <  s < t$, $0 < a < b$, $0 < t-s, b-a < \varepsilon$,
are a $\pi$-system \cite{Billingsley95} (i.e. closed under finite
intersections) and which generate $\Zsigma{\ell,\infty}$. Hence  we only need to show that
$$
\muu{\1_{\Vstop{\ell},\Zstop{\ell}}^{(\eta,\ell)}(\omega)=1; Z_{(s,t)}^{(a,b)}} = \muu{Z_{(s,t)}^{(a,b)}},
$$
for suitably chosen $\varepsilon$ given $\eta > 0$.

Let $\Bstop{\ell}(\omega) \leq \Astop{\ell}$ be the additional time, following strict
entry of $Q_\ell$ into $(0,\infty)$ at $\Vstop{\ell}$, until the first point of increase of $I_\ell$ is reached. 
$\Bstop{\ell} \in m\FZ{\Zstop{\ell}}{}$ as can be seen on consideration of its definition,
\be \label{eq Bstop}
\Bstop{\ell}(\omega) \doteq \inf \lc u \in (0,\Astop{\ell}(\omega)) \cap \Qrat~: I_\ell(\Vstop{\ell} + u,\omega) - I_\ell(\Vstop{\ell},\omega) > 0 \rc, 
\ee
when $\Zstop{\ell} < \infty$.

By definition of $\Bstop{\ell}$, Lemma \ref{lmma_NoBackOff} and then
(\ref{eqn_accountq}), we may deduce that for $\omega \in Z_{(s,t)}^{(a,b)}$ 
\begin{eqnarray}
Q_\ell(\Vstop{\ell}(\omega) + u,\omega) & = & \lambda_\ell u, ~u\in [0,\Bstop{\ell}(\omega)], \label{eqn_bilinear2}  \\
Q_\ell(\Vstop{\ell}(\omega) + u,\omega) & = & \lambda_\ell \Bstop{\ell}(\omega) -(1-\lambda_\ell)(u-\Bstop{\ell}) , ~
u\in [\Bstop{\ell}(\omega), \frac{\Bstop{\ell}(\omega)}{1 - \lambda_\ell }],
\nonumber
\end{eqnarray}
$\mu$ almost surely. 
Moreover $\Bstop{\ell}(\omega)$ must satisfy
$$
t - s + b  \geq \frac{\Bstop{\ell}(\omega)}{1 - \lambda_\ell } \geq  s-t+a,~\mu~a.s.
$$
in order to reach 0 in $[s,t]$.

Therefore, given any $\eta > 0$, we may choose $\varepsilon_\eta > 0$ such that for all
$v \in [s-b,t-a], z \in [s,t]$ with $b -a, t -s < \varepsilon_\eta$
$$
\sup_{u \in [v,z]} \abs{ Q_\ell(u,\omega) - \PhiLin{\ell}{v,z}(u)} < \eta,
$$
$\mu$ almost surely, using Lipschitz continuity. Since $\omega \in Z^{(a,b)}_{(s,t)}$ implies  $\Vstop{\ell} \in [s-b,t-a], \Zstop{\ell}\in [s,t]$,
we obtain that
$$
\muu{\1_{\Vstop{\ell},\Zstop{\ell}}^{(\eta,\ell)}(\omega)=1; Z_{(s,t)}^{(a,b)}} = \muu{Z_{(s,t)}^{(a,b)}},
$$
for all such $a,b,s,t$ as required.
\end{proof}
~~\\
A similar result can be obtained when $n=0$, where the possibility occurs that $Q_\ell(V^\ell_{m,0}) > 0$.

\subsection*{A.VI. Brief discussion of results}
%
%
Theorem \ref{thm_linearpaths} applies to general networks and relies
only on the assumption that $\gamma > 1$. 
The theorem implies that the
sample paths are more or less determined given the sequences of
stopping times $\Zstop{\ell}$. Only in the case where the (finite)
stopping times have a common upper bound is the process not
completely defined, as otherwise the queue returns to 0 infinitely
often, determining the path completely. If there is such a bound, either the queue remains at 0, or increases linearly, as there can be no subsequent point of increase of $I_\ell$.

Indeed, since there are only countably many stopping times, and since
for each finite $\Zstop{\ell} < \infty$ the queue sample paths follow
$\PhiLin{\ell}{.,.}$ for some finite interval with probability 1, we
may confine sample path realizations to countable successions of such
intervals. These either determine the entire sample path; or
the queue remains at 0 following the final return; or as
the final alternative, the queue remains zero for some interval and then
increases linearly at rate $\lambda_\ell$ thereafter.  We define the
set of such sample paths by $P \subset C[0,\infty)$. The probability of any event $F \in
{\cal C}$ can as well be taken as
$$
\muu{ F} = \muu{F \cap P},
$$
and, therefore, we suppose that the probability space is defined on ($P,
{\cal C}_P$) with topology relativized in the usual way to $P$ which
is a subset of ${\cal C}[0,\infty)$. This establishes that the queue-length trajectory
of each of the individual nodes exhibits \textit{sawtooth} behavior in the fluid limit. This
concludes Part~A.

In Part~B, we will
show that we can in fact confine ourselves to a smaller set of paths
which reflect the constraints resulting from the underlying
interference graph.
\section{Fluid limit proofs: Part B}

\subsection*{B.I. No idling property and zero delay capture}
\label{sec_idlecapture}

From Lemma \ref{lemma_conflict}, it follows that if queue 2 is draining, then queues 3, 4, 5, 6
are increasing linearly. However, we also expect that queue 1 is either draining or remaining at 0 until time $t$, and this is indeed the case as we now show.

More generally, given a node $\ell$, let ${\cal I}_\ell$ be the set of its interfering nodes,
i.e. the set of its neighbors in the interference graph $G$.
The following lemma shows that if $Q_\ell(s) > Q$, and all its interferers are idle in some interval $[s,t]$ then
node $\ell$ is fully active until its queue drains.
\begin{lemma}[No Idling Property]
\label{lemma_NoDelay}
Given a node $\ell$ with interference set ${\cal I}_\ell$
and an interval $[s,t]$, define
$$
D^{(\ell)}_{s,t} \doteq \cap_{j \in {\Icell}} \JS{j}{=}(s,t-s,0),
$$
that is, no activity for any node in $\Icell$ during $[s,t]$.
Further, given $Q >0$, define
$h^{(\ell)}_{s,t,Q} \doteq Q/(1-\lambda_\ell)  \wedge (t - s)$ so that the queue at most empties
over this period, and let
$$
S^{(\ell)}_{s,t} \doteq  \JS{\ell}{<}(s,h^{(\ell)}_{s,t,Q},\beta h^{(\ell)}_{s,t,Q}),
$$
which implies that node $\ell$ is under active.
Then
$$
\muu{D^{(\ell)}_{s,t} \cap S^{(\ell)}_{s,t} \cap Q^{(\ell)}_{s,Q} } = 0,
$$
where $\QellsQ$ is the event $\lc \omega: Q_\ell(s,\omega) > Q \rc$, as defined earlier.
\end{lemma}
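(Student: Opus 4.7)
The plan is to prove, in the pre-limit, that $\mu_R(D^{(\ell)}_{s,t} \cap S^{(\ell)}_{s,t} \cap Q^{(\ell)}_{s,Q}) \to 0$, and then transfer the conclusion to the fluid-limit measure $\mu$ via the weak convergence $\mu_R \Rightarrow \mu$ together with a standard open-thickening trick. The physical picture is transparent: if every interferer of node $\ell$ is silent throughout $[Rs, Rt]$ and $Q^R_\ell(Rs) > QR$, then $\ell$ is unblocked and its back-off clock runs continuously; either $\ell$ is already transmitting at $Rs$, or it activates within an exponential$(\nu_\ell)$ time which is $o(R)$ on the fluid scale with high probability. Once active with a queue of order $R$, Lemma \ref{lemma_NoBackOffQLong} prohibits any re-entry into back-off until the queue drops below $\varsigma R$, so on the fluid scale $\ell$ transmits at full rate throughout $[s, s+h]$ apart from an $O(\varsigma)$ sliver.

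To make this quantitative, fix small constants $\delta, \varsigma > 0$ and replace $D^{(\ell)}_{s,t}$ by the open super-event
\[
\tilde D^{(\ell)}_{s,t,\delta} \doteq \bigcap_{j \in \Icell} \bigl\{ I_j(t) - I_j(s) < \delta \bigr\}.
\]
On $\tilde D^{(\ell)}_{s,t,\delta} \cap Q^{(\ell)}_{s,Q}$ the total pre-limit time that $\ell$ is blocked in $[Rs,Rt]$ is at most $|\Icell| R \delta / \beta$, so the first activation time $\tau_1$ of $\ell$ after $Rs$ satisfies $\tau_1 - Rs \leq |\Icell| R\delta/\beta + \xi$ for an exponential$(\nu_\ell)$ random variable $\xi$; truncating $\xi$ at $\log R$ introduces only $o_R(1)$ error. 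After $\tau_1$, Lemma \ref{lemma_NoBackOffQLong} applied to $\ell$ at threshold $\varsigma$ shows that with probability $1 - O(R^{1-\gamma})$ node $\ell$ remains continuously active until its queue drops below $\varsigma R$, which cannot occur earlier than $\tau_1 + R(Q-\varsigma)/(1-\lambda_\ell)$, since $Q^R_\ell$ drains at rate at most $1 - \lambda_\ell$ while $\ell$ is active. Combining these bounds yields a pre-limit estimate of the form
\[
I^R_\ell(s+h) - I^R_\ell(s) \geq \beta h - C(\delta + \varsigma) - o_R(1)
\]
with probability tending to $1$, where $C$ depends only on $\Icell$, $\beta$, and $\lambda_\ell$.

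To conclude, decompose $S^{(\ell)}_{s,t} = \bigcup_{n \geq 1} S^{(\ell)}_{s,t,1/n}$ with $S^{(\ell)}_{s,t,\eta} \doteq \{ I_\ell(s+h) - I_\ell(s) < \beta h - \eta \}$ open, so by monotone convergence it suffices to show $\mu(D^{(\ell)}_{s,t} \cap S^{(\ell)}_{s,t,\eta} \cap Q^{(\ell)}_{s,Q}) = 0$ for every fixed $\eta > 0$. Given such $\eta$, pick $\delta, \varsigma$ with $C(\delta + \varsigma) < \eta/2$; the estimate above then forces $\mu_R(\tilde D^{(\ell)}_{s,t,\delta} \cap S^{(\ell)}_{s,t,\eta/2} \cap Q^{(\ell)}_{s,Q}) \to 0$, and since this is a finite intersection of open sets, Theorem~2.1 of \cite{Billingsley68} yields $\mu(\tilde D^{(\ell)}_{s,t,\delta} \cap S^{(\ell)}_{s,t,\eta/2} \cap Q^{(\ell)}_{s,Q}) = 0$, which dominates $\mu(D^{(\ell)}_{s,t} \cap S^{(\ell)}_{s,t,\eta} \cap Q^{(\ell)}_{s,Q})$. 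The main technical obstacle is the regime $h = Q/(1-\lambda_\ell)$, where the queue truly drains inside $[s, s+h]$: the threshold $\varsigma$ must be chosen small enough that the no-backoff window of Lemma \ref{lemma_NoBackOffQLong} covers all but an $O(\varsigma)$ sliver of $[Rs, R(s+h)]$, while the error probability $O(R^{1-\gamma})$ remains negligible, and the $O(R\delta)$ queue loss during $[Rs,\tau_1]$ must be absorbed into $C$ without spoiling the inequality.
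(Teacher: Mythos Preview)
Your argument is correct and rests on the same two pillars as the paper's proof: the no-backoff estimate (Lemma~\ref{lemma_NoBackOffQLong}) once $\ell$ is active with a macroscopic queue, and passage from $\mu_R$ to $\mu$ via the Portmanteau theorem on an open thickening of the events. The organization, however, differs in a way worth noting. The paper does not run the entire argument in the prelimit. Instead it introduces a dichotomy on a short initial window $[s,s+1/n]$: either $\ell$ already has a point of increase there, in which case the previously established Lemma~\ref{lmma_NoBackOff} (a purely fluid-level statement) immediately rules out any subsequent under-activity before the queue drains; or $\ell$ has essentially no activity in $[s,s+1/n]$ while the interferers are also near-idle, and only this residual case is handled by a prelimit computation showing $\ell$ must activate within $O(R/n)$ steps and then cannot back off. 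Your route bypasses Lemma~\ref{lmma_NoBackOff} entirely and instead bounds the first activation time $\tau_1$ directly over the whole interval, which is more self-contained but does slightly more work in the prelimit. One small point to tighten: Lemma~\ref{lemma_NoBackOffQLong} is stated for deterministic $t_1,t_2$, whereas you invoke it from the random time $\tau_1$; this is harmless because the underlying union bound $R(t_2-t_1)\beta/(1+R\varsigma)^\gamma$ depends only on the length of the interval and the queue threshold, so the strong Markov property (or equivalently Lemma~\ref{lemma_noearlybackoff}) carries it over.
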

\begin{proof}

Given $n \in \NN$ such that $n > 1/(t-s)$,
fix an arbitrary $\zeta,$ $0 < \zeta < \frac{1}{2N}$ (Recall that $N$ is the number of nodes in the network). Clearly,
$$
D^{(\ell)}_{s,t} \subset \tilde{D}^{(\ell)}_{\zeta,n} := \cap_{j \in {\cal I}_\ell} \JS{j}{<}(s,1/n,\zeta/n).
$$
Hence, for arbitrary $\epsilon_n > 0$ depending on $n$, to be fixed later,
$$
D^{(\ell)}_{s,t} \subseteq \JS{\ell}{>}(s,1/n,0) \cup \lb \JS{\ell}{<}(s,1/n,\epsilon_n) \cap \tilde{D}^{(\ell)}_{\zeta,n} \rb.
$$
Next, observe that for all $n_S \in \NN$ sufficiently large,
$$
S^{(\ell)}_{s,t} = \cup_{n > n_S} G_n,
$$
with $G_n \doteq  \Gllsh{s+2/n,h^{(\ell)}_{s,t,Q}-2/n}$ and $\Glsh$ as defined in (\ref{eq Glsh}).
The union bound thus implies that
\begin{eqnarray}
\muu{D^{(\ell)}_{s,t} \cap S^{(\ell)}_{s,t} \cap Q^{(\ell)}_{s,Q}}
& \leq & \sum_{n > n_S} \muu{ G_n \cap Q^{(\ell)}_{s,Q} \cap \JS{\ell}{>}(s,1/n,0)} \label{eqn_muDJbnd} \\
& + & \sum_{n > n_S} \muu{\tilde{D}^{(\ell)}_{\zeta,n} \cap \JS{\ell}{<}(s,1/n,\epsilon_n) \cap Q^{(\ell)}_{s,Q}}. \nonumber
\end{eqnarray}
Provided $n_S$ is sufficiently large, each term in the first sum must be 0, else Lemma \ref{lmma_NoBackOff} is contradicted. To complete the proof, it is
therefore sufficient to show that each of the terms in the second sum is 0 as well by suitable
choice of $\epsilon_n$.

Given $n$, it is sufficient to find $\epsilon_n > 0$ so that
$$
\lim_{R \rightarrow \infty} \muR{ \tilde{D}^{(\ell)}_{\zeta,n} \cap \JS{\ell}{<}(s,1/n,\epsilon_n) \cap Q^{(\ell)}_{s,Q}} = 0,
$$
because $\tilde{D}^{(\ell)}_{\zeta,n}$, $\JS{\ell}{<}(s,1/n,\epsilon)$, and $Q^{(\ell)}_{s,Q}$ are all open, so that Theorem 2.1 \cite{Billingsley68} implies that
$$
\muu{\tilde{D}^{(\ell)}_{\zeta,n} \cap \JS{\ell}{<}(s,1/n,\epsilon_n) \cap Q^{(\ell)}_{s,Q}} = 0.
$$
The event $\tilde{D}^{(\ell)}_{\zeta,n}$ implies that there must have been at least
\begin{equation}
\frac{R\beta}{n} \lb 1 - N\zeta  \rb > \frac{R\beta}{2n}
\label{eqn_lotsbackoff}
\end{equation}
steps in the jump chain (if we allow for no overlap between active periods and since
$\abs{{\cal I}_\ell} < N$)  at which all queues in ${\cal I}_\ell$ are in backoff for the interval
$[s,s+1/n]$. Also, 
\begin{equation}
Q^{R}_\ell > Q - \frac{\beta}{n} > \varsigma > 0,
\label{eqn_Qbeta}
\end{equation}
throughout $[s,s+1/n]$ since there can be at most $R\beta/n$ departures. 

But if (\ref{eqn_lotsbackoff}) occurs, we may suppose that node $\ell$ becomes active within
$R\beta/(4n)$ such steps, as the probability converges to 1 as $R \rightarrow \infty$ that it does so.
But if we take $0 < \epsilon_n < \beta/(4n)$ the implication is that there is a subsequent backoff.
Since (\ref{eqn_Qbeta}) also occurs, Lemma \ref{lemma_NoBackOffQLong} with $t_1 = s, t_2=s+1/n$
and $\varsigma$ above shows that the probability of a subsequent backoff goes to 0 which
establishes the result.
\end{proof}


Since $s,t,Q$ are arbitrary in Lemma \ref{lemma_NoDelay}, it follows from continuity
that node $\ell$ begins service the instant its interferers become idle, if it has a positive
queue-length.

%
%

Lemmas \ref{lemma_conflict} and \ref{lemma_NoDelay} carry an implication for the node pairs $(1,2),(3,4),(5,6)$ in our network. We say that node $\ell_1$ dominates node $\ell_2$, $\ell_1 \neq \ell_2$ if ${\cal I}_{\ell_2} \subseteq {\cal I}_{\ell_1}$.
Hence, if (say) queue 3 (the dominant queue) is draining, then no other queue than 4 may be active as a consequence of  Lemma \ref{lemma_conflict}. But this implies all interferers of queue 4 are inactive. Hence, if $Q_4 > 0$, it will therefore begin to drain immediately, i.e. if queue 3 is draining so is queue 4. 
Also  if $Q_4$ becomes 0 before $Q_3$, then it must remain at 0, until queue 3 drains.

This result is formally stated in the following corollary, the proof of which is omitted for brevity.

Given any node $k\in \lc 1,\cdots, N\rc$, $Q_k \geq 0$, and  time $t$ define,
\begin{equation}
\Psi^k_{t,Q_k}(u) \doteq \ls Q_k - (u-t)(1-\lambda_k) \rs_+,~u \geq t,
\end{equation}
and given $v > t$, let $F_{t,v,\eta}^k$ be the event that
$\abs{\Psi^k_{t,Q_k(t,\omega)}(u) - Q_k(u,\omega)} < \eta$ for
$u \in [t,v]$.

\begin{corollary}
\label{cor_dominate}
Given a queue $\ell$, let $k$ be any other queue with  ${\cal I}_k \subseteq {\cal I}_\ell$. $\forall t \geq 0, Q > 0, \eta > 0$, define $v = t + Q/(1-\lambda_\ell)$, 
then with $P^{(\ell)}_{t,Q}$ as in Definition \ref{defn_poi}, it holds that,
$$
\muu{ P^{(\ell)}_{t,Q} \cap \lb F^{(k)}_{t,v,\eta} \rb^c} = 0. 
$$
\end{corollary}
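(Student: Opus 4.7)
The plan is to unpack $P^{(\ell)}_{t,Q}$ to show that $\ell$ is fully served throughout $[t, v]$, then propagate this via the dominance ${\cal I}_k \subseteq {\cal I}_\ell$ to node $k$. On $P^{(\ell)}_{t,Q}$, node $\ell$ is at a point of increase at $t$ and $Q_\ell(t) > Q$, so Lemma~\ref{lmma_NoBackOff} (combined over a countable collection of rational sub-intervals of $(t,v)$) rules out any under-activity for $\ell$ until $Q_\ell$ drains. Since the fluid drain rate is at most $1 - \lambda_\ell$, the queue $Q_\ell$ stays strictly positive throughout $[t, v]$, and $\ell$ is therefore fully active on the entire interval $[t, v]$ almost surely under $\mu$.

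Next, by the No Conflict Lemma~\ref{lemma_conflict}, no interferer of $\ell$ can be active on any subinterval of $[t, v]$. The hypothesis ${\cal I}_k \subseteq {\cal I}_\ell$ forces $k$ and $\ell$ to be non-neighbors, since otherwise symmetry of the interference graph would give $\ell \in {\cal I}_k \subseteq {\cal I}_\ell$, contradicting the absence of self-loops. Hence every interferer of $k$ is also an interferer of $\ell$ and is idle on $[t, v]$, which is exactly the hypothesis needed to invoke the No Idling Lemma~\ref{lemma_NoDelay} on node $k$.

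I would then split on $Q_k(t)$. If $Q_k(t) > 0$, pick any $Q' \in (0, Q_k(t))$ and apply Lemma~\ref{lemma_NoDelay} to $k$ on $[t, v]$: node $k$ must be fully served until $Q_k$ drains, after which iterated applications of the same lemma on each subsequent positivity interval keep $Q_k$ pinned at zero, so the accounting identity~\eqref{eqn_accountq} yields $Q_k(u) = \Psi^k_{t, Q_k(t)}(u)$ exactly on $[t, v]$. If instead $Q_k(t) = 0$, then $\Psi^k_{t, 0} \equiv 0$ and I would argue by contradiction: assuming $\sup_{u \in [t, v]} Q_k(u) \geq \eta$, let $s_0$ be the first time $Q_k = \eta$ and $s_1 = \sup\{s \in [t, s_0] : Q_k(s) = 0\}$, so that $Q_k > 0$ on $(s_1, s_0]$. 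Applying Lemma~\ref{lemma_NoDelay} to $k$ at any start time $s \in (s_1, s_0)$ forces full drainage from $Q_k(s)$ at rate $1 - \lambda_k$, reaching zero within $Q_k(s)/(1 - \lambda_k)$ time units; driving $s \downarrow s_1$ and using continuity makes this drainage time vanish, producing a zero of $Q_k$ strictly before $s_0$ and contradicting positivity on $(s_1, s_0]$.

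The main obstacle will be this $Q_k(t) = 0$ boundary case: Lemma~\ref{lemma_NoDelay} requires a strictly positive lower bound $Q'$ on the queue at the start of the interval, so it cannot be invoked at $s = t$ nor at $s = s_1$ directly. The fix is the limiting argument above, which exploits Lipschitz continuity of the fluid sample paths together with the fact that the conclusion of Lemma~\ref{lemma_NoDelay} is uniform over admissible $(s, Q')$ pairs; this lets us slide the start-time infinitesimally past $s_1$ and shrink the drainage-time estimate to zero. No machinery beyond what was already assembled in Part~A is required.
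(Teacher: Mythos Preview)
Your approach is correct and matches the paper's: the paper omits the formal proof but the preceding discussion makes clear that the intended argument combines Lemma~\ref{lmma_NoBackOff} (full service of $\ell$ on $[t,v]$), the inequality in the proof of Lemma~\ref{lemma_conflict} (idleness of ${\cal I}_\ell$, hence of ${\cal I}_k$), and Lemma~\ref{lemma_NoDelay} applied to $k$, with the boundary case handled exactly as you describe. Your treatment of the $Q_k(t)=0$ case via a limiting start-time is more careful than anything the paper spells out, and the observation that ${\cal I}_k \subseteq {\cal I}_\ell$ forces $k \notin {\cal I}_\ell$ is a nice detail the paper leaves implicit.
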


Corollary \ref{cor_dominate} implies that $\mu$ almost surely the dominated node $k$
follows $\Psi^k$ the moment that dominating node $\ell$ becomes active. 

In case the arrival rates satisfy,
\begin{eqnarray}
\lambda_1 & = & \lambda_2 = \lambda > 0 \label{eqn_arrivalrates} \\
\lambda_4 & = & \lambda_5 \nonumber \\
\lambda_6 & > &  \lambda_5\nonumber \\
\lambda_3 & > & \lambda_4 \nonumber
\end{eqnarray}
Corollary \ref{cor_dominate} may be used to show
that the network enters a {\em natural state} (as defined in
Section~\ref{instab}) $\mu$ a.s. This result is
proved in the following theorem.
\begin{theorem}[Almost Sure Natural State]
\label{thm_natural}
Given the initial condition ${\bf Q}(0) = {\bf q}$ with $\euclidnorm{{\bf q}}=1$, there exits a $T_N > 0$ such that $\mu$ a.s. for all $t \geq T_N$,
\begin{eqnarray*}
Q_3(t) & \geq & Q_4(t), \\
Q_6(t) & \geq & Q_5(t).
\end{eqnarray*}
Moreover (recalling the definition of $\rho$ given in section \ref{brokendiamond})
$\exists \rho^* < 1$ such that for all $\rho \in [\rho^*,1)$, $\vee_\ell Q_\ell(T_N) > 0$
i.e. the network is non-empty at time $T_N$.
\end{theorem}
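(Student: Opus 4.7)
The plan has two parts, corresponding to the two assertions of the theorem. For the first (entry into a natural state), I would introduce $Z_1(t) := Q_4(t)-Q_3(t)$ and $Z_2(t) := Q_5(t)-Q_6(t)$ and show that both become and remain non-positive after a bounded time. The structural observation driving the argument is that in the broken-diamond network, the missing edge between $4$ and $5$ yields strict containments $\mathcal{I}_4 = \{1,2,6\} \subsetneq \{1,2,5,6\} = \mathcal{I}_3$ and $\mathcal{I}_5 = \{1,2,3\} \subsetneq \{1,2,3,4\} = \mathcal{I}_6$, so Corollary~\ref{cor_dominate} applies with $(\ell,k)=(3,4)$ and $(\ell,k)=(6,5)$: at each point of increase of $I_3$, $Q_4$ follows the linear drain $\Psi^4$ on the interval supplied by the corollary, and analogously $Q_5$ shadows $Q_6$.

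Armed with this dominance, I would show $dZ_1/dt \leq \lambda_4-\lambda_3 = -\rho\alpha$ almost everywhere on $\{t:Z_1(t)>0\}$ via a case split using Lemma~\ref{lemma_conflict}, Lemma~\ref{lemma_NoDelay}, and the piecewise-linear trajectories from Theorem~\ref{thm_linearpaths}. In the case $Q_3(t), Q_4(t)>0$, the three possible activity states (both idle; both active, forced by Corollary~\ref{cor_dominate} whenever $3$ is; or only $4$ active because $5$ blocks $3$) all deliver the bound. In the case $Q_3(t)=0<Q_4(t)$, requiring $Q_3\equiv 0$ on a sub-interval rules out sustained activity of any interferer of node $3$ (in particular of node~$5$), so node~$4$ is continuously unblocked and, by Lemma~\ref{lemma_NoDelay}, served at full rate, giving $dQ_4/dt=-(1-\lambda_4)\leq -\rho\alpha$. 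A symmetric argument handles $Z_2$. Integrating and using $Z_1(0)\leq Q_4(0)\leq \|\mathbf{q}\|=1$ yields $Z_1(t)\leq 0$ for $t\geq 1/(\rho\alpha)$; absorption of $\{Z_1\leq 0\}$ is immediate because at $Q_3=Q_4=0$ one has $dZ_1/dt=\lambda_4-\lambda_3<0$. Setting $T_N := 1/(\rho\alpha)$ then delivers the natural-state conclusion.

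For the second assertion I would use the Lyapunov function $\Load(t) = \sum_{k=1}^{3}\max_{i\in M_k}Q_i(t)$, noting $\Load(0)\geq \|\mathbf{q}\|_1/2 \geq 1/2$. A case analysis of the joint activity states shows $d\Load/dt \geq -c(\kappa,\alpha)$ for some finite constant independent of $\rho$, the most negative rate arising in pre-natural-state $M_4$-like episodes. The cumulative Lebesgue measure of such episodes on $[0,T_N]$ is at most $(Z_1(0)^+ + Z_2(0)^+)/(1+\rho\alpha) \leq 2/(1+\rho\alpha)$, since during any such episode either $Z_1$ or $Z_2$ decreases at rate at least $1+\rho\alpha$. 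Combining these controls yields $\Load(T_N)\geq 1/2 - o(1)$ as $\rho\uparrow 1$, so there exists $\rho^\star<1$ for which $\Load(T_N)>0$, and hence $\vee_\ell Q_\ell(T_N)>0$, whenever $\rho\in[\rho^\star,1)$.

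The main obstacle is the case $Q_3\equiv 0$, $Q_4>0$ in Part~1: the pointwise derivative argument breaks down at the boundary and must be replaced by aggregate accounting on sub-intervals, combining the identity~(\ref{eqn_accountq}) with Lemma~\ref{lemma_NoDelay} to lower-bound the total service received by node~$4$, and concluding via Lebesgue's differentiation theorem. A secondary issue is obtaining a sharp $\rho^\star$ in Part~2, but the qualitative bound is robust because $T_N$ stays bounded and $\Load(0)$ is bounded away from zero as $\rho\uparrow 1$.
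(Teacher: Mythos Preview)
Your Part~1 plan coincides with the paper's: both derive the drift inequality $\frac{d}{dt}[Q_4-Q_3]_+ \le \lambda_4-\lambda_3$ from piecewise linearity together with the dominance of Corollary~\ref{cor_dominate}, and integrate to get $T_N \le 1/(\rho\alpha)$. Your identification of the boundary case $Q_3\equiv 0<Q_4$ as the delicate point is apt (the paper's proof is terse there). One small slip: in that case the relevant blockers of node~$4$ are $\{1,2,6\}$; node~$5$ is not a neighbour of~$4$ in the broken diamond, so singling it out is beside the point. The argument you actually need is that if any of $1,2,6$ were fully active on a sub-interval then, being neighbours of node~$3$, they would force $Q_3$ to grow, contradicting $Q_3\equiv 0$; hence node~$4$'s interferers are idle and Lemma~\ref{lemma_NoDelay} applies.

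Your Part~2 plan, however, has a real gap. With $\Load(t)=\sum_{k=1}^3 \max_{i\in M_k}Q_i(t)$ it is \emph{not} only the $M_4$-like episodes that push the drift below $-(1-\rho)$: any pre-natural $M_2$-period with $Q_4>Q_3$ gives $d\Load/dt = \rho_0-(1-\rho_4)+\rho_6 = -(1-\rho)-\rho\alpha$, and symmetrically for $M_3$. More structurally, $\Load = L_P + Z_1^+ + Z_2^+$ with $L_P(t):=\max(Q_1,Q_2)+Q_3+Q_6$, so $\Load(T_N)=L_P(T_N)$ and your claimed bound $\Load(T_N)\ge \tfrac12 - o(1)$ would require control of $L_P(0)$, not $\Load(0)$; for initial states with $Q_1=Q_2=Q_3=Q_6=0$ one has $L_P(0)=0$ while $\Load(0)>0$, and your accounting via $M_4$-episode duration does not recover positivity. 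The paper avoids all of this by working directly with $L_P$: since nodes $\{1,2\}$, $3$, $6$ are mutually interfering, at most one term of $L_P$ can be served at a time, so $dL_P/dt\ge -(1-\rho)$ \emph{uniformly}, independent of whether the state is natural. Then $T_E$, the first hitting time of $L_P=0$, satisfies $T_E\to\infty$ as $\rho\uparrow 1$ (with a separate one-line argument for the degenerate case $L_P(0)=0$), and since $T_N$ is bounded this yields $\vee_\ell Q_\ell(T_N)>0$ for $\rho$ close enough to~$1$. Swapping $\Load$ for $L_P$ in your Part~2 would fix the argument and make it considerably shorter.
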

\begin{proof}
This result follows from Lipschitz continuity and more particularly  from the fact that
the sample paths are piecewise linear. Hence, apart from a set of measure
0, the derivatives of all queue lengths exist.

Consider now nodes 3 and 4. Where the derivatives exist and $Q_4 > 0$,
it holds that
$$
\frac{d Q_3}{dt} > \frac{d Q_4}{dt},
$$
since $\lambda_3 > \lambda_4$ and since $Q_4$ is decreasing at
linear rate whenever $Q_4 > 0$ and $Q_3$ is decreasing at a linear rate,
as shown in Lemma \ref{lemma_NoDelay}. We may therefore deduce $\mu$ a.s. and where
differentiability holds that,
$$
\frac{d \ls Q_4(t) -Q_3(t) \rs_+}{dt} \leq \lambda_4 - \lambda_3 < 0 ,~~
$$
until some time $T_3$, such that $\ls Q_4(t) -Q_3(t) \rs_+ = 0,~t\geq T_3$.
The same holds for queues 5 and 6, with corresponding time $T_6$ and the following inequalities
are satisfied,
$$
T_3 \leq \frac{\ls Q_4(0) - Q_3(0) \rs_+}{\lambda_3 - \lambda_4},~T_3 \leq \frac{\ls Q_5(0) - Q_6(0) \rs_+}{\lambda_6 - \lambda_5}.
$$
We may therefore take
$$
T_N = T_3 \vee T_6,
$$
and by taking worst case values in the above inequalities,
we obtain a uniform bound on $T_N$. This concludes the first part of the lemma.

We now show that $T_E$, the time to empty, can be taken arbitrarily large. Define
$L_P(t) \doteq \max \lb Q_1(t), Q_2(t) \rb + Q_3(t) + Q_6(t)$. Then $L_P$ can be reduced at most at rate 1, since service of nodes (1, 2), 3 and 6 is mutually exclusive, and grows at rate $\rho = \rho_0 + \rho_3 + \rho_6$, which can be made arbitrarily close to 1. Hence
$T_E \rightarrow \infty$ as $\rho  \uparrow 1$ if $L_P(0) > 0$. It can be the case that
$L_P(0) = 0$ but then $Q_4(0)+Q_5(0) = 1$, so that $L_P(1/2) = \rho_0/2$ and $T_E \geq \frac{1}{2} \lb 1 + \frac{\rho_0}{1 - \rho} \rb$ and
again $T_E \rightarrow \infty$ as $\rho  \uparrow 1$.
\end{proof}

This shows that a non-empty natural state can be reached in finite time, because of the dominance property. Given Theorem \ref{thm_natural} we can and will suppose that the state is natural at time 0,
without loss of generality.



We define the set of paths which additionally satisfy the constraints of Lemmas \ref{lemma_conflict} and \ref{lemma_NoDelay} to be $P_L \subset P \subset C[0,\infty)$. As previously, we now restrict the set of sample paths to $P_L$, so that the probability of an event $F \in {\cal C}$ can be
determined as $\muu{F} = \muu{F \cap P_L}$. This concludes Part~B.

\subsection*{B.II. Discussion}

We now give a largely informal description of the paths in $P_L$. Section \ref{desc}
gives a detailed description of the periods $M_k, k=1, 2, 3, 4$.  The ends of periods $M_1,M_2,M_3$
are marked by the corresponding stopping times $\Zstop{(1,2)}, \Zstop{(3)},\Zstop{(6)}$. For $M_4$
periods, the following construction is needed. (It is needed because $\Zstop{(4)}$ stopping times may be part of an $M_2$ period and hence do not mark the end of a $M_4$ period.)

We first define $\Pstop{(\ell)} = \Vstop{(\ell)} + \Bstop{(\ell)} \in m\FZ{\Zstop{(\ell)}}{}$ to be
the time prior to $\Zstop{(\ell)}$ when service begins (recall the definition of $\Bstop{(\ell)}$ in (\ref{eq Bstop})).
\begin{definition}
\label{defn_M4period}
A stopping time $\Zstop{(4)}$ is a $M_4^{(5)}$ stopping time, denoted $\Zstop{4,M_4^{(5)}}$
if the following holds,
\begin{eqnarray}
Q_5(\Zstop{4} - \Pstop{(4)}) & \geq  & Q_4(\Zstop{4} - \Pstop{(4)}) \label{eqn_M4condn},\\
I_\ell(\Zstop{4} - \Pstop{(4)}) & = & I_\ell(\Zstop{4}),~\ell=3,6 \label{eqn_M4II}.
\end{eqnarray}
That $\Zstop{4,M_4^{(5)}}$ is an $\cF_t$ stopping time follows as both the above events lie in
$\FZ{\Zstop{(4)}}{}$.
\end{definition}
This is consistent with an $M_4$-period taking place in which queue 4 emptied first (or at
the same time as queue 5) by (\ref{eqn_M4condn}).
If this is a strict inequality then we say this is a strict $M_4^{(5)}$ stopping time.
(\ref{eqn_M4II}) ensures that queue 5 is being served throughout $[\Pstop{(4)},\Zstop{(4)}]$ as a
consequence of Lemma \ref{lemma_NoDelay}. Similary we may define $\Zstop{5,M_4^{(4)}}$.

It is also possible that some subset of queues are all empty
with the remaining queues growing linearly. For example, at the end
of an $M_1$-period, it could be the case that both queues 1 and 2 remain at 0,
whilst the other queues continue to grow linearly. Similarly,
it could be the case that all of queues 3,4,5,6 become and remain 0
whilst queues 1 and 2 grow at rate $\lambda_1$.
In Part C, we will derive the probabilities according to which one
period is followed by another with no delay (on the fluid scale)
in switching from one period to the next. We concentrate on the case of switching
out of $M_1$ where the probability of the
next period depends only on the residual backoff times. 
%

%

%
%
%

\section{Fluid limit proofs: Part C}
\label{sec_M1switch}

We begin with some preliminary results. The first is for measures
constructed from closed continuity sets. Given a set of sample paths
$G$, define the improper probability measures,
$$
\muG{F} \doteq \muu{F \cap G},~~\mu^{(R)}_G(F) = \mu_R \lc F \cap G \rc.
$$
The following lemma shows that weak convergence is conferred on $\mu^{(R)}_G$ provided $G$ is closed
and a $\mu$-continuity set.
\begin{lemma}
\label{muG_lemma}
Suppose $\mu^{(R)}$ is a sequence of probability measures on a metric space,
$(\Omega, {\cal F})$,  such that
$$
\mu^{(R)} \Rightarrow \mu
$$
where $\mu$ is also a probability measure on the same space. Let $G \in {\cal F}$ be closed
and a $\mu$-continuity set. Then it holds that
$$
\mu_G^{(R)} \Rightarrow \mu_G
$$
\end{lemma}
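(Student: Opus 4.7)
The plan is to reduce the statement to the standard characterization of weak convergence via bounded continuous test functions and then invoke the portmanteau theorem. Since $\mu_G$ and $\mu_G^{(R)}$ are finite (sub-probability, not probability) measures, the meaning of $\mu_G^{(R)} \Rightarrow \mu_G$ that I would adopt is that $\int f\, d\mu_G^{(R)} \to \int f\, d\mu_G$ for every bounded continuous $f:\Omega\to\mathbb{R}$. Writing out the definition of $\mu_G$ and $\mu_G^{(R)}$, this is equivalent to
\[
\int f\,\mathbf{1}_G\, d\mu^{(R)} \;\longrightarrow\; \int f\,\mathbf{1}_G\, d\mu,
\]
so the whole task reduces to establishing this scalar convergence for an arbitrary bounded continuous $f$.

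The next step is to pin down the discontinuity set of the integrand $h := f\mathbf{1}_G$. At any $x$ in the interior of $G$, a small neighborhood lies in $G$ and $h$ coincides there with $f$, which is continuous; at any $x$ in the interior of $G^c$, a small neighborhood lies in $G^c$ and $h\equiv 0$ there; hence discontinuities of $h$ can only occur on the topological boundary $\partial G$. By the hypothesis that $G$ is a $\mu$-continuity set we have $\mu(\partial G)=0$, so $h$ is bounded and continuous $\mu$-a.e.

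The final step is the portmanteau theorem: from $\mu^{(R)} \Rightarrow \mu$ one concludes $\int h\, d\mu^{(R)} \to \int h\, d\mu$ for every bounded function $h$ whose discontinuity set is $\mu$-null (e.g.\ Theorem~5.2 of \cite{Billingsley68}, or Theorem~25.7 of the companion \cite{Billingsley95}). Applying this to our $h = f\mathbf{1}_G$ yields the desired convergence, and since $f$ was arbitrary, we obtain $\mu_G^{(R)} \Rightarrow \mu_G$.

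The argument is essentially a one-line reduction to portmanteau, so there is no serious obstacle; the only thing to be careful about is the convention for weak convergence of finite (as opposed to probability) measures, which should be stated explicitly at the start of the proof. It is worth remarking, moreover, that the closedness assumption on $G$ is not in fact used in the above argument---only the $\mu$-continuity of $G$ plays a role---so either the hypothesis can be weakened or the closedness is being relied upon elsewhere at the point where the lemma is invoked.
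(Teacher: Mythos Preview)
Your argument is correct. The paper does not in fact supply a proof of this lemma: it states the result and then only remarks that the closed-set, open-set, and continuity-set characterizations (iii)--(v) of Theorem~2.1 in \cite{Billingsley68} carry over to the restricted measures $\mu_G^{(R)}$ and $\mu_G$. Your route---observing that $f\mathbf{1}_G$ is bounded with discontinuity set contained in $\partial G$, hence $\mu$-null, and then invoking the extended portmanteau theorem for bounded $\mu$-a.e.\ continuous integrands---is the standard way to fill this gap and is entirely consistent with the spirit of the paper's remark. Your observation that closedness of $G$ plays no role in the argument is also correct; in the paper the lemma is only ever applied with $G = G_{c,t}$ as in (\ref{eqn_Gctdefn}), which is closed by construction, so the hypothesis is harmless but not actually needed for the proof itself.
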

In particular, the weak convergence definitions (iii), (iv), and (v), in Theorem 2.1 of \cite{Billingsley68}, all
equivalently hold.

The next lemma is concerned with the following. Suppose a pair of non-interfering queues in the
network are operating in isolation e.g. queues (1,2). Then each queue will be empty
and in fact will then subsequently be empty infinitely often, almost surely. Given that the
evolutions of the two queues are independent, it is plausible that the total number of steps in
the jump chain for which both queues are backed off together increases to infinity in a period which is negligible on the fluid scale.

Given a start time taken to be 0, define $W^R(u)$ to be the total number of steps that queues 1 and 2 are both
in backoff, starting at time 0 and ending at time $u > 0$ on the fluid scale, in
$\lb {\bf Q}^R(t), {\bf I}^R(t) \rb$. 
Partial periods between one clock tick and the next, at the start and
at the end are neglected. The following lemma supposes queues 1 and 2 are in isolation so
that no other nodes may gain the medium.
%
%
\begin{lemma}[Total Backoff]
\label{lemma_TotalBackOff}
Given $Q > 0$, define $t \doteq Q/(1-\lambda)$, and suppose that $
Q^R_\ell(0) \leq Q,~\ell=1,2,
$
and both queues are active at time 0. Then for any $Q, \xi > 0$,
$$
\lim_{R \rightarrow \infty} \muR{\WcR(t+2\xi) \geq 2 \sqrt{R} } = 1.
$$
\end{lemma}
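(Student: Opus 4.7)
The plan is to exploit that nodes 1 and 2 are not neighbors in the interference graph, so their activity--queue processes evolve as two \emph{independent} Markovian systems, each an M/M/1--like queue with arrival rate $\lambda$, unit service rate, and a backoff probability $g(x) = \oo(x^{-\gamma})$ triggered at the end of each transmission. Each system is positive recurrent (we are implicitly in the stable regime $\lambda<1$), and the stationary distribution of each chain places a probability $\pi_0>0$ on the empty state.

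First I would show that with probability tending to 1 as $R\to\infty$, both queues are drained by some fluid time $\tau^R \leq t$. While $Q^R_\ell$ remains above any fixed small level $\varsigma>0$, Lemma \ref{lemma_noearlybackoff} implies that backoff events within a busy period are vanishingly rare on the fluid scale, so the queue evolves essentially as a plain M/M/1 with negative drift $1-\lambda$. A standard random-walk/LLN argument then gives drainage by the fluid time $Q/(1-\lambda)=t$ with probability $\to 1$; independence of the two queues preserves the joint event, so both queues drain by $\tau^R$ with high probability.

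Next I would argue that on the remaining interval $[\tau^R, t+2\xi]$, which contains $\Theta(R)$ jump-chain steps, each isolated queue spends at least a $\pi_0/2$ fraction of its steps in the empty state. This uses the ergodic theorem together with a hitting-time/mixing-time estimate showing that the M/M/1-with-backoff chain reaches stationarity on a time scale $\oo(R)$. Crucially, whenever $Q_\ell=0$ the node must be in backoff: $\psi_\ell(1)=1$ forces entry into backoff on drainage, and a node can only leave backoff via a backoff-timer expiration that subsequently starts a new transmission, which requires a preceding arrival. Hence every jump-chain step in which both queues have length 0 contributes to $W^R$.

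Finally, because the two queues are independent, the number of simultaneously empty steps is close to the product of the individual fractions, and a Chebyshev/second-moment bound gives
\[
W^R(t+2\xi) \;\geq\; \tfrac{1}{4}\pi_0^2\,\beta\,\xi\, R
\]
with probability tending to 1, which vastly exceeds $2\sqrt{R}$ as $R\to\infty$. The main obstacle is the quantitative mixing step: establishing that from the near-empty state reached at $\tau^R$ each ergodic M/M/1-with-backoff chain spends a nearly stationary fraction of the next $\Theta(R)$ steps in state 0, and that this empirical fraction concentrates, can be handled by a coupling argument or a Lyapunov/Foster-type drift bound, but it is the most delicate ingredient of the proof.
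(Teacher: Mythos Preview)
Your outline is correct and shares the same skeleton as the paper's proof: (i) drain both queues to a bounded region by roughly fluid time $t$; (ii) use positive recurrence on the remaining $\Theta(R)$ jump-chain steps to show that the state $(Q_1,Q_2)=(0,0)$ (hence both nodes in backoff) is visited $\Theta(R)\gg\sqrt{R}$ times.

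The paper's execution is cleaner in two respects, and both address exactly the spots you flag as delicate. First, rather than arguing marginally and then combining via independence and a Chebyshev bound (which would require you to control covariances of the indicators, i.e.\ some decorrelation estimate), the paper works directly with the \emph{joint} chain for $(Q_1,U_1,Q_2,U_2)$: it is positive recurrent, so once both queues are below a fixed constant the hitting time of $(0,0)$ has finite mean (dispose of it by Markov's inequality), and thereafter the elementary renewal theorem / weak law gives $\Theta(R)$ returns to $(0,0)$ in the next $\lfloor\beta\xi R\rfloor$ steps. No mixing-time estimate is needed anywhere, so your stated ``main obstacle'' evaporates. Second, the paper's drainage step is more careful than ``both queues drain by $\tau^R\le t$'': once a queue is near-empty, backoffs and re-activations complicate the trajectory, so you cannot quite assert arrival at $0$ by fluid time $t$. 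The paper instead uses Lemma~\ref{lemma_noearlybackoff} to bring each queue down to a fixed threshold $X_T$ with no intervening backoff, then invokes tightness of the single-queue stationary law to bound both queues by a larger constant $X_L$ at a common deterministic time $\approx\beta R(t+\eta)$, and only then uses the finite-mean hitting time of $(0,0)$. Your observation that $Q_\ell=0$ forces node~$\ell$ into backoff is correct and is exactly what makes visits to $(0,0)$ the right object to count.
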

\begin{proof}
Let $\tau_{0,0}$ be the stopping index in the jump chain for the first occurrence of
\begin{equation}\label{eqn_kempty}
X_1(\tau_{0,0}) = X_2(\tau_{0,0}) = 0.
 \end{equation}
Given any $\xi >0$, define $p^R_{\xi,Q} \doteq \prob{ \tau_{0,0} \leq \floor{\beta(t+\xi)R} | X_\ell(0) \leq RQ }.$
It will be enough to show that $p^R_{\xi,Q} \rightarrow 1$ as $R \rightarrow \infty$. To see this, note that any queue in isolation is positive recurrent, as a consequence of Lemma \ref{lemma_noearlybackoff}. Thus, the jump chain restricted to nodes 1 and 2 in isolation (i.e., with remaining queues barred from gaining the medium) is also positive recurrent. Let $m_0$ be the mean number of steps between indices $k$ such that
(\ref{eqn_kempty}) is again satisfied. Also let $K^R_\xi$ be the random number of such steps in
the next interval of $\floor{\beta \xi R}$ steps. It is easily seen from the weak law of large numbers that
$$
\lim_{R \to \infty} \muR{  K^R_\xi > \frac{\floor{\beta \xi R}}{2 m_0}}= 1,
$$
which implies the statement of Lemma.

Thus, to complete the proof, we just need to show that $p^R_{\xi,Q} \rightarrow 1$. Fix $\varepsilon_{X_T} > 0$ and choose $X_T :=X_T(\lambda, \gamma)< \infty$ as in (\ref{eqn_QTveps}) so that the probability of even a single backoff before either queue reaches $X_T$ is no more than $\varepsilon_{X_T}$. Moreover let $\tau_{T,\ell}, \ell=1,2$ be the stopping indices for
$X_\ell(\tau_{T,\ell}) = X_T$. Then, given any $\eta>0,$ and $\varepsilon_{R,\eta}>0$, it can be seen that
$\tau_{T,1}  \vee \tau_{T,2} \leq \beta R(t+\eta)$ occurs with probability larger than $1 - 2\varepsilon_{R,\eta} -2\vQT$, with $\varepsilon_{R,\eta} \to 0$ as $R\to \infty$ by the weak law of large numbers.

Next, given any $\varepsilon_L > 0$, there exists a $X_L$ large enough such that $\prob{ X_\ell(\tau_{T,\ell} + k ) \leq X_L} > 1 - \varepsilon_L$ for all $k \in \PNN$.
This follows from the fact that the jump chain in isolation is positive recurrent, and thus the corresponding sequence of infinite probability vectors is tight as they are converging to the steady-state distribution. Hence, with probability larger than $1 - 2\varepsilon_{R,\eta} -2\varepsilon_{X_T} - 2\varepsilon_L$, $X_\ell( \floor{\beta R(t+\eta)} \leq X_L, \ell=1,2$.

Moreover, again by the positive recurrence of the isolated jump chain, the mean number of steps for queues 1 and 2 both to become 0, starting from any state with $X_\ell\leq X_L,$ $\ell=1, 2$, is bounded by some constant $m_L := m_L(X_L) < \infty$.
Thus, by Markov's inequality, with a probability less than than  $m_L/(\eta R)$, in a further $\eta R$ steps both queues will become 0 (and thus inactive).

Finally, given any $\epsilon>0$, choose $X_T$ and $X_L$ large enough so that $\varepsilon_{X_T}< \epsilon /8$ and $\varepsilon_L < \epsilon/8$ and then $R$ sufficiently large
so that $\varepsilon_{R,\eta}<\epsilon/8$ and $m_L/(\eta R) <  \epsilon/8$. Hence, with probability larger than $1-\epsilon$, $\tau_{0,0} < (t+2\eta)R$ for all $R$ sufficiently large. Since $\epsilon$ and $\eta$ are arbitrary, the proof is complete.
\end{proof}

\subsection*{C.I. Transition from an $M_1$-period}
\label{sec_M1trans}


%
%
In what follows 
we will further suppose that the lengths of queues 1 and 2
and their activity are both equal, as the following arguments are readily modified where this is not the case. We therefore denote their common queue length as $Q(u) = Q_1(u) = Q_2(u)$ in what follows
and similarly for the activity $I(u) = I_1(u) = I_2(u)$.
Finally, in the following $t,c$ and hence $s$ are fixed,
\begin{eqnarray*}
s & \doteq &  t - \frac{c}{1 - \lambda }, \\
\delta_k & \doteq & \alpha_k c,~0 < \alpha_k < 1,~k=0,1,\\
h & \doteq & \nu c, \\
\zeta & \doteq & \chi c,~\nu > \chi > 0,
\end{eqnarray*}
for some small positive constants $\alpha_k$, $\nu$, and $\chi$ to be determined later.
We are now ready to define the following closed set of paths,
\begin{equation}
\Gct \doteq \lc \omega : 0 < c - \delta_0 \leq Q(s,\omega) \leq c + \delta_1 \rc \cap
\lc \omega : I(s+h, \omega) - I(s,\omega) \geq \beta(h-\zeta) \rc.
\label{eqn_Gctdefn}
\end{equation}
$G_{c,t}$ is constructed to correspond to an $M_1$-period.

Now given $0 < s_1 < s_2$, and $\ovt$ (which will be specified later), define
\begin{equation}
I^{(3,4)}_{c,t} \doteq \JS{3}{=}(\ovt + s_1,s_2-s_1,\beta(s_2-s_1)) \cap \JS{4}{=}(\ovt + s_1,s_2-s_1,\beta(s_2-s_1)).
\label{eqn_Ictdefn}
\end{equation}
$\Ict{(3,4)}$ is a (closed) set of paths for which queue 3 (and also queue 4) are fully active
during the interval $[\ovt+s_1,\ovt+s_2]$. Similar definitions, using the same $s_1$, $s_2$, and $\ovt$,
can be made for $\Ict{(4,5)},\Ict{(5,6)}$.

The first set of paths, $\Gct$, is illustrated in the dashed lines in Figure \ref{fig_GI}.
\begin{figure}[!Ht]
  \begin{center}
    \includegraphics[width=8 cm]{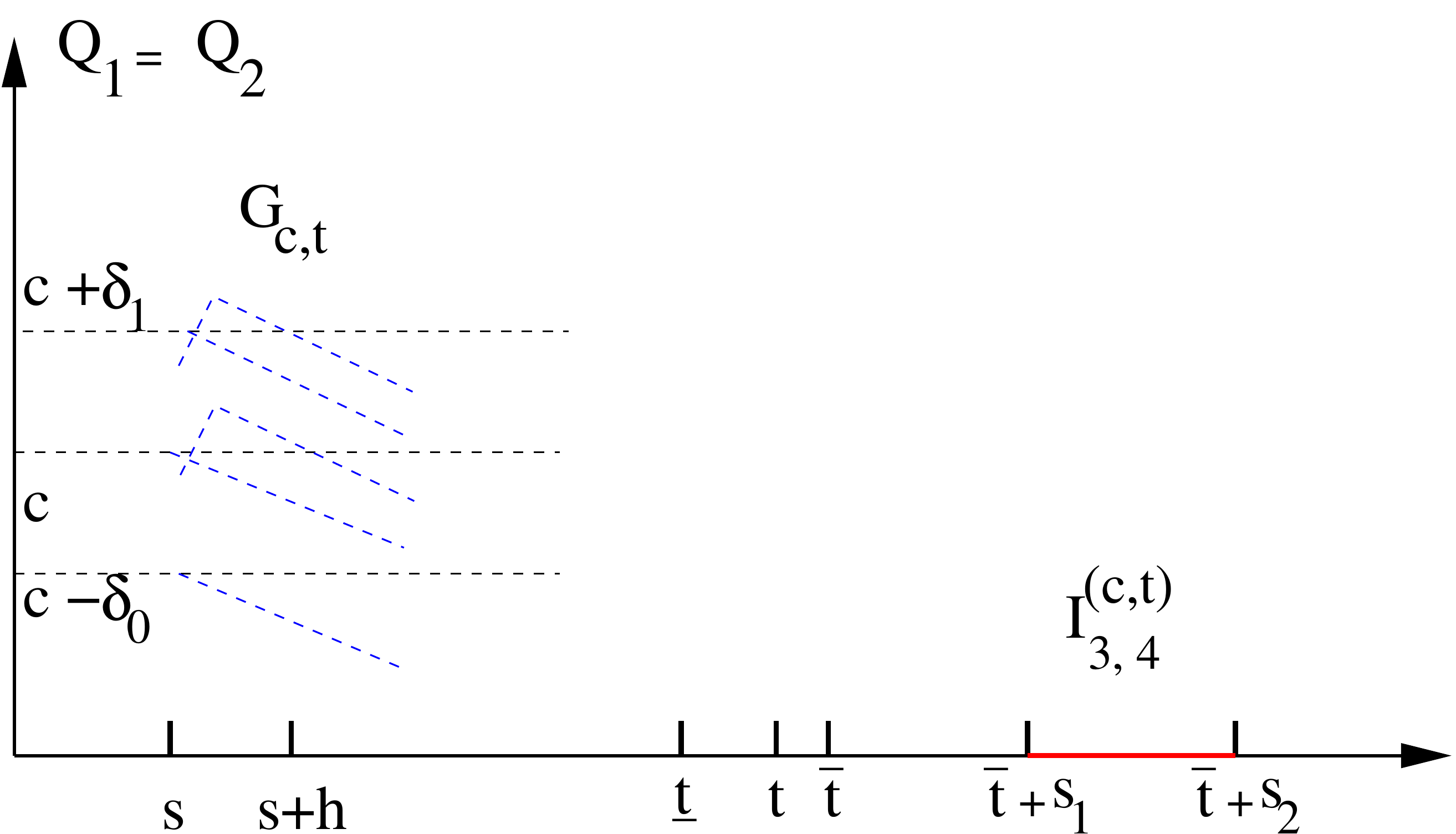}
    \caption{Sample Paths for the sets $\Gct$ and $\Ict{(3,4)}$.}
    \label{fig_GI}
  \end{center}
\end{figure}
Note all sample paths must pass through the interval $[c-\delta_0,c+\delta_1]$ at time $s$, but may continue to increase for a brief period at the beginning. After $s+h$ the two queues must be draining
at rate $1 - \lambda$ almost surely  as shown in Lemma \ref{lmma_NoBackOff}. The
red interval to the right indicates periods where one of the other three queue pairs
are expected to have the medium during the interval $[\ovt+s_1,\ovt+s_2]$. Only one such pair will be active during this period as a result of the forthcoming construction.

The following is the earliest time that queues 1 and 2 can drain if the sample paths are constrained
to lie in $\Gct$,
\begin{equation}
\udt \doteq t - \frac{\alpha_0}{1 - \lambda} c.
\label{eqn_udtdefn}
\end{equation}
As far as additional queue build up is concerned, under the fluid limit,
$$
  Q(s+h, \omega)  \leq  Q(s,\omega) + \lambda h = Q(s,\omega) + \lambda \nu c
$$
holds for sample paths in $G_{c,t}$ (see (\ref{eqn_accountq})). It then follows that queues
1 and 2 will reach 0 under the fluid limit no later than
\begin{equation}
\ovt \doteq t + \frac{\alpha_1 + \lambda \nu}{1 - \lambda} c,
\label{eqn_ovtdefn}
\end{equation}
which is the definition for $\ovt$. We thus conclude that, under the fluid limit, queues 1 and 2 will
reach 0 in the interval $(\udt,\ovt)$ (for the first time after $s+h$ on occurrence of the
event $\Gct$). We formalize the above in the following lemma,
\begin{lemma}[Queue Bounds]
\label{lemma_qbnds}
Let $\tau^0_{c,s} \doteq \tau_c(s,\lc 0 \rc ) = \inf \lc t \geq s: Q(t) = 0 \rc$ be the first contact time with 0 for $Q = Q_1 = Q_2$.
Then,
$$
\muu{ \Gct \cap \lc \omega : \tau^0_{c,s}(\omega) \not\in [\udt,\ovt] \rc } = 0.
$$
Additionally, $\forall \ell = 3, 4, 5, 6,$
\begin{equation}
\muu{\Gct \cap \lc \omega: Q_\ell(\udt,\omega) < \Delta t \lambda_\ell \rc  } = 0,
\label{eqn_Qtbnd}
\end{equation}
where,
$$
\Delta t \doteq \udt - (s+h) = c(1-\alpha_0-\nu(1-\lambda))/(1-\lambda).
$$
\end{lemma}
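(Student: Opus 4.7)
My plan is to split the argument into three pieces: a lower bound on $\tau^0_{c,s}$, an upper bound on $\tau^0_{c,s}$, and the lower bound on $Q_\ell(\underline{t})$ for $\ell \in \{3,4,5,6\}$.

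For the lower bound $\tau^0_{c,s} \geq \underline{t}$, I would apply the accounting identity (\ref{eqn_accountcon}) directly. Since $I_\ell(u) - I_\ell(s) \leq \beta(u-s)$, for $\ell = 1,2$ we have $Q(u) \geq Q(s) - (1-\lambda)(u-s)$ $\mu$-almost surely. On $G_{c,t}$, $Q(s) \geq c - \delta_0$, so $Q(u) > 0$ for all $u < s + (c-\delta_0)/(1-\lambda) = \underline{t}$, which is exactly the claim. This piece is essentially a deterministic consequence of the Lipschitz constraint and requires no pathwise lemmas beyond (\ref{eqn_accountq}).

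For the upper bound $\tau^0_{c,s} \leq \overline{t}$, the plan is to combine the activity condition in $G_{c,t}$ with the \emph{no backoff until empty} result (Lemma~\ref{lmma_NoBackOff}). The condition $I(s+h) - I(s) \geq \beta(h-\zeta)$ guarantees that queue~1 (and~2) has a point of increase inside $[s,s+h]$ $\mu$-a.s., since otherwise the activity increment over the interval would be zero. Applying Lemma~\ref{lmma_NoBackOff} to this point of increase, the queue cannot underactivate again until it drains, so from time $s+h$ onward the queue decreases at rate exactly $1-\lambda$ until it hits zero. Bounding $Q(s+h)$ crudely by $Q(s) + \lambda h \leq c(1+\alpha_1) + \lambda\nu c$ via the accounting identity and then dividing by $1-\lambda$ gives a drain time that matches $\overline{t}-s$ (possibly with a slightly sharper refinement using the $\zeta$ slack, giving the condition $\chi \leq \lambda\nu$ that the constants are assumed to satisfy).

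For the second assertion, the key observation is that on the event $\{\tau^0_{c,s} \geq \underline{t}\}$ (which holds $\mu$-a.s.\ on $G_{c,t}$ by the first part), queues~1 and~2 are strictly positive throughout $[s+h,\underline{t}]$ and, by Lemma~\ref{lmma_NoBackOff}, are served at full rate on that whole interval. Since node~1 (and node~2) interferes with every node in $\{3,4,5,6\}$ in the broken-diamond graph, Lemma~\ref{lemma_conflict} (No Conflict) forces $I_\ell(u) - I_\ell(s+h) = 0$ for $\ell = 3,4,5,6$ and all $u \in [s+h,\underline{t}]$ $\mu$-a.s. The accounting identity (\ref{eqn_accountq}) then gives $Q_\ell(\underline{t}) = Q_\ell(s+h) + \lambda_\ell \Delta t \geq \lambda_\ell \Delta t$, which is the claimed bound.

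The main obstacle is the careful handling of null sets when chaining the pathwise results: Lemma~\ref{lmma_NoBackOff} is stated for a fixed deterministic point of increase at $(t,Q)$, whereas here the point of increase occurs at a random time inside $[s,s+h]$. I would handle this by a countable covering argument over rational times in $[s,s+h]$ and rational lower bounds on $Q$, exploiting Lipschitz continuity to upgrade to a uniform statement, exactly as in the construction of the $Z^\ell_{m,n}$ stopping times in Part~A. Once this measurability bookkeeping is done, the three bounds assemble into the stated equalities of $\mu$-measure.
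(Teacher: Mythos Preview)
Your decomposition and the key ingredients are exactly those of the paper: the accounting identity for $\tau^0_{c,s}\ge\underline t$, the no-backoff-until-empty property for $\tau^0_{c,s}\le\overline t$, and Lemma~\ref{lemma_conflict} for the bound on $Q_\ell(\underline t)$. The execution differs in one respect. The paper does not re-invoke Lemma~\ref{lmma_NoBackOff} here at all; instead it simply appeals to the fact that the sample space has already been restricted to $P_L$ (immediately after Theorem~\ref{thm_linearpaths} and again after Corollary~\ref{cor_dominate}). On $P_L$ every queue path is, by construction, piecewise linear with slopes $\lambda_\ell$ and $-(1-\lambda_\ell)$, so the three bounds become purely deterministic consequences of the definition of $G_{c,t}$ and the interference structure. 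In particular, the ``main obstacle'' you flag---upgrading Lemma~\ref{lmma_NoBackOff} from a fixed deterministic $(t,Q)$ to a random point of increase via a countable covering---simply does not arise: that upgrade is exactly the content of Theorem~\ref{thm_linearpaths} and the subsequent restriction to $P_L$, which the paper has already carried out. Your route is correct, but it redoes work that the paper packaged once and for all. One minor arithmetic point: your crude bound $Q(s+h)\le c(1+\alpha_1)+\lambda\nu c$ gives a hitting time of $\overline t + h$, not $\overline t$; you need the sharper bound using the activity slack $\zeta$ (as your parenthetical correctly anticipates) to land on $\overline t$ itself.
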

\begin{proof}
By definition of $G_{c,t}$, $Q(s,\omega) \geq c - \delta_0, \forall \omega \in G_{c,t}$. It follows
that 
$Q$ cannot reach 0 before $\udt$, as sample paths by definition lie in $P_L$ (see Part A.VI, following Theorem \ref{thm_linearpaths}). A similar argument applies to $\ovt$.

For the last part,  Lemma \ref{lemma_conflict}, shows that nodes $3,4,5,6$ must be idle in the period
$[s+h,\udt]$. Since the sample paths are restricted to  lie in $P_L$, it follows that their queues must satisfy the stated inequality at time $\udt$. The proof is complete.
\end{proof}

The time for queue $\ell$ to reach 0 following $\udt$ 
is therefore at least,
$$
f_\ell \doteq \Delta t \times \frac{\lambda_\ell}{1 - \lambda_\ell},~\ell = 3, 4, 5, 6.
$$
Clearly $ f_\ell \rightarrow (c \times \lambda_\ell)/\lb (1- \lambda) \times (1 - \lambda_\ell) \rb$ as $\alpha_0,\nu \downarrow 0$,
and so this expression is bounded from below as $\alpha_0,\alpha_1,\nu > \chi$ are made arbitrarily small. For future use, we define
$$
\unf \doteq \Delta t \wedge_{\ell=3}^6 \lambda_\ell/(1-\lambda_\ell),
$$
as a lower bound on the time needed to drain any queue $\ell = 3,4,5,6$.

Our results thus far do not rule out the possibility that there is an idle period during which
queues 3, 4, 5, 6 fail to obtain the medium. In order to make allowance for this,
we introduce a period $\xi c$, $ \xi > 0$, which comes following queues 1 and 2 draining, and to be definite, we set $\xi c = \unf/8$. Hence, if it is the case that
\begin{equation}
\ovt - \udt < \unf/4
\label{eqn_tdiff}
\end{equation}
and that service of queue $\ell$ cannot start before $\udt - \xi c$ and must have started no later
than $\ovt + \xi c$, then it follows that
service will continue throughout the interval $[\ovt + \xi c, \ovt + \xi c +\unf/2]$. In this
case, we may take $s_1 = \unf/4, s_2 = \unf/2$ again to be definite. Further, set $s_3 = \xi c + \unf/2$.
To summarize, if (\ref{eqn_tdiff})
holds, on occurrence of $\Gct$ and that service of queues 3 and 4 commences in the interval
$[\udt - \xi c,\ovt + \xi c]$, then the event $\Ict{(3,4)}$ must take place.
The same is true in case service commences for either queue pair $(4,5)$ or $(5,6)$ in
$[\udt - \xi c,\ovt + \xi c]$.

Let $\hC_k,k=3,4,5,6$, be the residual time to backoff for queues 3, 4, 5, 6, at time $s+h$, with
$\hC_1=\hC_2=0$ as these queues will be almost surely active. Define $S_M$ to be the number
of steps in the jump chain before one of these nodes gains the medium and also define
\begin{eqnarray}
W^{(3,4)} & \doteq &  \lc \hC_3 <  \wedge_{k=4}^6 \hC_k \rc  \cup \lb \lc \hC_4 <  \hC_3 \wedge \hC_5 \wedge \hC_6 \rc
\cap \lc \hC_3 < \hC_5 \rc \rb,\\
\label{def_Wct}
\Cct{3,4} & \doteq & W^{(3,4)} \cap \lc S_M 
\leq \sqrt{R} \rc. \nonumber
\end{eqnarray}
$W^{(3,4)}$ is the event that queues 3 and 4 win the backoff competition to take the medium first from
queues 1 and 2. Similar definitions can be made for queues $(4,5)$ and for queues $(5,6)$ in addition. The probabilities of these events are
\begin{equation}
\prob{W^{(3,4)}} = \frac{3}{8} = \prob{W^{(5,6)}},~\prob{W^{(4,5)}} = \frac{1}{4},
\label{eqn_clocks}
\end{equation}
as the backoff periods are unit mean i.i.d. exponential random variables. $\Cct{3,4}$ is the event that queues $(3,4)$ win the backoff competition and  that it does so
in no more  than $\sqrt{R}$ of the jump chain steps when queues 1 and 2 are in backoff together.

Next let
$$
\BBR{(1,2)} \doteq \NNR{(1,2)}(s+h,\udt-\xi c) \cap
\{W^R(s+h,\ovt+\xi c) \geq 2 \sqrt{R}\}
$$
be the intersection of the event $\NNR{(1,2)}(s+h,\udt-\xi c)$
that neither queue~1 nor queue~2 starts to backoff during the time
interval $[s+h,\udt-\xi c]$ and the event
$\{W^R(s+h,\ovt+\xi c) \geq 2 \sqrt{R}\}$ that queues~1 and~2
operating in isolation would be simultaneously in backoff for
a cumulative period of time of at least $2 \sqrt{R}$ during the
interval $[s+h,\ovt+\xi c]$.
Informally speaking, the event $\BBR{(1,2)}$ ensures that there
is sufficient backoff by queues~1 and~2 and that they do not begin
to backoff whilst there are a significant number of queue 1 or queue 2
packets remaining.

Next define $c_Q$ to be, 
$$
c_Q \doteq \frac{ s_3 - s_2 }{2} \times \lb 1- \lambda_3 \rb > 0,
$$
which is at least half the content of queues 3 and 4 on the fluid scale at time $\ovt+s_2$, given our construction. Further, define the following event
\begin{equation}
\QQR{(3,4)}(\udt,\ovt+s_2) \doteq \lc \omega: \inf \lc Q^R_m(u,\omega),
u \in [\udt,\ovt+s_2] \rc > c_Q, m = 3, 4 \rc \in \cF_{\ovt+s_2},
\label{eqn_Q34defn}
\end{equation}
for which we obtain the following corollary.
\begin{corollary}
\label{cor_Q34cGct}
$$
\lim_{R \rightarrow \infty} \muR{\Gct \cap \lb \QQR{(3,4)} \rb^c } = 0
$$
\end{corollary}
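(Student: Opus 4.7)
The plan is to pull the claim back to an almost-sure statement under the fluid-limit measure $\mu$, via the Portmanteau theorem applied to the closed set $\Gct \cap \lb \QQR{(3,4)} \rb^c$. First I would observe that $\Gct$ is closed by construction, being defined via non-strict inequalities on $Q(s)$ and on the increment $I(s+h) - I(s)$, and that
$$\lb \QQR{(3,4)} \rb^c = \bigl\{\omega : \inf_{u \in [\udt, \ovt+s_2]} Q_m(u,\omega) \leq c_Q \text{ for some } m \in \{3,4\}\bigr\}$$
is closed because $\omega \mapsto \inf_u \omega_m(u)$ over a compact interval is continuous in the sup-norm topology. Since $\mu_R \Rightarrow \mu$, the Portmanteau theorem (Theorem 2.1 of \cite{Billingsley68}) yields
$$\limsup_{R \to \infty} \muR{\Gct \cap \lb \QQR{(3,4)} \rb^c} \leq \muu{\Gct \cap \lb \QQR{(3,4)} \rb^c},$$
so it suffices to show that the right-hand side vanishes.

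To establish $\muu{\Gct \cap \lb \QQR{(3,4)} \rb^c} = 0$, I would argue that on $\Gct$, $\mu$-almost surely, $Q_m(u) > c_Q$ uniformly in $u \in [\udt, \ovt + s_2]$ for both $m = 3, 4$. Lemma~\ref{lemma_qbnds} supplies the starting bound $Q_m(\udt) \geq \Delta t \, \lambda_m$ for $m = 3, 4$, while the universal accounting inequality (\ref{eqn_accountcon}) controls the subsequent drop:
$$Q_m(u) \geq Q_m(\udt) - (1 - \lambda_m)(u - \udt), \qquad u \in [\udt, \ovt+s_2].$$
By (\ref{eqn_tdiff}) combined with $s_2 = \unf/2$, the window length satisfies $u - \udt \leq 3\unf/4$. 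Since $\unf$ is a lower bound on $f_\ell = \Delta t \, \lambda_\ell/(1-\lambda_\ell)$ for each $\ell \in \{3,4,5,6\}$, we have $\Delta t \, \lambda_m \geq (1-\lambda_m) \, \unf$ for $m = 3,4$. Substituting gives $Q_m(u) \geq (1-\lambda_m)\unf/4 \geq (1-\lambda_3)\unf/4$, where the last inequality (relevant when $m=4$) uses $\lambda_4 \leq \lambda_3$. Recalling $\xi c = \unf/8$ and $c_Q = (s_3 - s_2)(1-\lambda_3)/2 = (1-\lambda_3)\unf/16$, we conclude $Q_m(u) \geq 4 c_Q > c_Q$, which is the desired strict bound.

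The main obstacle is simply the careful bookkeeping of constants --- in particular, verifying that (\ref{eqn_tdiff}) can be arranged by sufficiently small choices of $\alpha_0, \alpha_1, \nu, \chi$, and that the chain of inequalities $\unf \leq f_m$ and $\lambda_4 \leq \lambda_3$ propagates correctly into the gap $4 c_Q$. Once these are locked in, the rest is a direct application of Portmanteau combined with the sawtooth sample-path structure already established in Theorem~\ref{thm_linearpaths} and Lemma~\ref{lemma_qbnds}.
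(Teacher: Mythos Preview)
Your argument is correct, and it takes a cleaner route than the paper's own proof. You apply the Portmanteau theorem once to the full closed set $\Gct \cap \lb \QQR{(3,4)} \rb^c$ and then settle the $\mu$-measure-zero claim entirely at the fluid level, using Lemma~\ref{lemma_qbnds} for the starting bound $Q_m(\udt) \geq \Delta t\,\lambda_m$ and the accounting inequality~(\ref{eqn_accountcon}) for the subsequent drop. The paper instead splits the event in two: it first applies Portmanteau to $\Gct \cap \{Q^R_\ell(\udt) \leq \Delta t\,\lambda_\ell - 1/n\}$ (invoking Lemma~\ref{lemma_qbnds}), and then handles the remaining piece $\{Q^R_\ell(\udt) > \Delta t\,\lambda_\ell - 1/n\} \cap \lb \QQR{(3,4)} \rb^c$ at the prelimit level via the weak law of large numbers. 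In effect the paper redoes, at the $R$-th stage, the WLLN computation that is already encoded in the fluid accounting identity~(\ref{eqn_accountq})--(\ref{eqn_accountcon}); your approach simply leverages that identity directly. The explicit constant chase you carry out --- showing $(1-\lambda_m)\unf/4 \geq (1-\lambda_3)\unf/4 = 4c_Q$ using $s_3 - s_2 = \xi c = \unf/8$ and $(\ref{eqn_tdiff})$ --- is what the paper leaves implicit in the phrase ``follows from the weak law of large numbers, from the definition of $\Delta t$, $c_Q$''.
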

\begin{proof}
Lemma \ref{lemma_qbnds} implies that for all $n$ sufficiently large,
$$
\limsup_{R \rightarrow \infty} \muR{\Gct \cap \lc \omega: Q^R_\ell(\udt,\omega) \leq \Delta t \lambda_\ell -1/n \rc} = 0,~\ell = 3,4,
$$
on using Theorem 2.1 in \cite{Billingsley68} and that both the above sets are closed. Hence we need
only show that,
\begin{equation}
\lim_{R \rightarrow \infty} \muR{\lc \omega: Q^R_\ell(\udt,\omega) > \Delta t \lambda_\ell-1/n, \ell=3,4 \rc  \cap \lb \QQR{(3,4)} \rb^c } = 0,
\label{eqn_cor3}
\end{equation}
for sufficiently large $n$. However (\ref{eqn_cor3}) follows from the weak law of large
numbers, from the definition of $\Delta t$, $c_Q$, and the event $\QQR{(3,4)}$.
\end{proof}

Finally, define $\NNR{(3,4)}(\udt,\ovt+s_2)$ to be the event that
neither queue~3 nor queue~4 has a backoff during the time interval
$[\udt,\ovt+s_2]$ (on the fluid scale). Clearly, equivalent definitions for this and the above
and the above corollary can be made for queue pairs $(4,5),(5,6)$.

In what follows it will be convenient to write $G := G_{c,t}$.
%
Our aim now is to show that no matter what trajectory the fluid limit path followed earlier, if it
lies in $G$ so that queues 1 and 2 almost surely reach 0 in the interval
$[\udt,\ovt]$, marking the end of an $M_1$ period,  then the probability of the next period depends only on the residual backoff times, which is a Markov property.

\begin{lemma}
\label{lem_backclock}
Suppose that $G$ is a set of paths as defined in (\ref{eqn_Gctdefn}), with parameter values so
that (\ref{eqn_tdiff}) holds,
 and is also a $\mu$-continuity set. In addition, let $F \in {\cal F}_s$ be an arbitrary closed,
finite-dimensional set of paths defined by times $s$ and earlier. It then holds that
\begin{eqnarray*}
\muG{F \cap \Ict{(3,4)}} & \geq & \frac{3}{8} \muG{ F^o} \\
\muG{F \cap \Ict{(5,6)}} & \geq & \frac{3}{8} \muG{ F^o} \\
\muG{F \cap \Ict{(4,5)}} & \geq & \frac{1}{4} \muG{ F^o}
\end{eqnarray*}
In case $F$ is a $\mu$-continuity set, the interior can be dropped and $\geq$ replaced with
equality.
\end{lemma}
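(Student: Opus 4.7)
The plan is to reduce the claim to a prelimit estimate via the weak convergence $\mu_G^{(R)} \Rightarrow \mu_G$ granted by Lemma~\ref{muG_lemma}, then exploit the strong Markov property at a stopping time together with the memoryless property of the exponential backoff clocks, and finally pass to the fluid limit using Portmanteau's theorem. Concretely, I would introduce a stopping time $\tau^R$ in the $R$-th prelimit equal to the first jump-chain index after fluid time $s+h$ at which queues~$1$ and~$2$ are simultaneously in backoff. Under $G$, both queues are active at $s+h$ with lengths of order $R$, so Lemma~\ref{lemma_noearlybackoff} controls premature backoff and Lemma~\ref{lemma_TotalBackOff} (applied with $Q=c+\delta_1$) guarantees a cumulative joint-backoff budget of at least $2\sqrt{R}$ jump-chain steps during $[s+h,\ovt+\xi c]$; hence $\muR{G \cap (\BBR{(1,2)})^c} \to 0$, and on $\BBR{(1,2)}$ the stopping time $\tau^R$ is finite and dominated by $R\beta(\ovt+\xi c)$ with probability $1-o(1)$.

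At $\tau^R$, all neighbours of nodes~$3$--$6$ are idle, so by memorylessness the residual backoff times $\hat C_3,\hat C_4,\hat C_5,\hat C_6$ are i.i.d.~$\mathrm{Exp}(1)$ and independent of $\FF_{\tau^R}$. Because $F \in \FF_s \subseteq \FF_{\tau^R}$ and $G \in \FF_{s+h} \subseteq \FF_{\tau^R}$, the competition event $W^{(3,4)}$---determined purely by the ordering of these fresh clocks---is independent of $G \cap F$ and has probability $3/8$ as in~(\ref{eqn_clocks}). Over the available joint-backoff budget, one of nodes~$3$--$6$ captures the medium within $\sqrt{R}$ jump-chain steps with probability $1-o(1)$, so $\muR{G \cap F \cap \Cct{3,4}} \geq \tfrac{3}{8}\, \muR{G \cap F} - o(1)$. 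To upgrade $\Cct{3,4}$ to the activity event $\Ict{(3,4)}$, I would combine Corollary~\ref{cor_dominate} (once node~$3$ or node~$4$ gains the medium, the non-interfering partner is served at full rate as well), Corollary~\ref{cor_Q34cGct} (queues~$3$ and~$4$ remain of order $c_Q R$ throughout $[\udt,\ovt+s_2]$), and Lemma~\ref{lemma_NoBackOffQLong} (forbidding any backoff of a queue whose length is uniformly bounded below on the fluid scale) to conclude $\muR{G \cap F \cap \Ict{(3,4)}} \geq \tfrac{3}{8}\,\muR{G \cap F} - o(1)$.

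To pass to the fluid limit, observe that $F \cap \Ict{(3,4)}$ is closed (an intersection of a closed finite-dimensional set with the closed set defined in~(\ref{eqn_Ictdefn})), so Portmanteau combined with $\mu_G^{(R)} \Rightarrow \mu_G$ gives $\muG{F \cap \Ict{(3,4)}} \geq \limsup_R \muGR{F \cap \Ict{(3,4)}}$, while the open-set half of Portmanteau gives $\liminf_R \muGR{F^o} \geq \muG{F^o}$. Chaining these with the prelimit bound (using $F^o \subseteq F$) yields $\muG{F \cap \Ict{(3,4)}} \geq \tfrac{3}{8}\,\muG{F^o}$. The analogous inequalities for $\Ict{(5,6)}$ (factor $3/8$) and $\Ict{(4,5)}$ (factor $1/4$) follow by identical reasoning. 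If in addition $F$ is a $\mu_G$-continuity set so that $\muG{F^o}=\muG{F}$, then, since the three competition outcomes are mutually exclusive and exhaust all possibilities, the three lower bounds must sum to $\muG{F}$, forcing equality in each.

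The main obstacle is securing the prelimit identity with a genuine $o(1)$ remainder: one must verify that the freezing/resumption history of the backoff clocks of nodes~$3$--$6$ prior to $\tau^R$ leaves no residual information (which is exactly where the memorylessness of the exponential distributions does the work), that the clock competition terminates well within the available joint-backoff budget, and that the activity window produced by $\Cct{3,4}$ lies precisely inside $[\ovt+s_1,\ovt+s_2]$ rather than straying slightly outside. The small-constant slacks $\alpha_0,\alpha_1,\nu,\chi,\xi$ introduced just before~(\ref{eqn_tdiff}) are chosen precisely so that these three error terms all close.
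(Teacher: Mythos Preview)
Your proposal is correct and follows essentially the same route as the paper: establish the prelimit inequality $\muGR{F \cap \Ict{(3,4)}} \geq \tfrac{3}{8}\muGR{F} - o(1)$ via the independence of the exponential backoff clocks (your explicit stopping time $\tau^R$ makes precise what the paper invokes directly through~(\ref{eqn_clocks})), kill the error terms using Lemmas~\ref{lemma_TotalBackOff}, \ref{lemma_NoBackOffQLong} and Corollary~\ref{cor_Q34cGct}, and then pass to the limit via Portmanteau and Lemma~\ref{muG_lemma}. The only minor slip is that Corollary~\ref{cor_dominate} is a fluid-limit statement, whereas at that step you need the prelimit fact that both queues~3 and~4 stay active on $[\udt,\ovt+s_2]$; the paper handles this directly through the definition of $W^{(3,4)}$ together with the no-backoff event $\NNR{(3,4)}$, which is what your subsequent appeal to Lemma~\ref{lemma_NoBackOffQLong} and Corollary~\ref{cor_Q34cGct} actually delivers.
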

\begin{proof}
We first show the last part of the lemma, assuming the first part to be true. If $F$ is a $\mu$-continuity set, then by definition,
$0 = \muu{\partial F} \geq \muu{G \cap \partial F}$ and it follows that $F$ is a
$\mu_G$-continuity set as well. Since the factors
sum to 1 and the events on the left are almost surely exclusive as a consequence of
Lemma \ref{lemma_conflict}, we can now replace the inequality sign with equality.

We move to the first part of the lemma, which we will prove for queues 3 and 4. The proof for the other
queue pairs is similar.

First observe that
$$
\Cct{3,4} \cap \BBR{(1,2)} \cap \NNR{(3,4)}(\udt,\ovt+s_2) \subseteq \Ict{(3,4)},
$$
since $\Cct{3,4} \cap \BBR{(1,2)}$ implies that queues~3 and~4
activate before time $\ovt+s_1$, while $\NNR{(3,4)}(\udt,\ovt+s_2)$
ensures that neither queue~3 nor queue~4 has a backoff during the time
interval $[\udt,\ovt+s_2]$.
We thus obtain the following chain of inequalities
\begin{eqnarray}
\muGR{F \cap \Ict{(3,4)}}
&\geq&
\muGR{F \cap \Cct{3,4} \cap \BBR{(1,2)} \cap \NNR{(3,4)}} \label{eqn_Fineq} \\
&\geq&
\muGR{F \cap W^{(3,4)}} -
\muGR{\lb\lc S_M \leq \sqrt{R}\rc \cap \BBR{(1,2)} \cap \NNR{(3,4)}\rb^c} \nonumber \\
&\geq&
\frac{3}{8} \muGR{F} - \muGR{S_M > \sqrt{R}} -
\muGR{\lb \BBR{(1,2)}\rb^c} - \muGR{\lb \NNR{(3,4)}\rb^c}, \nonumber
\end{eqnarray}
with $\NNR{(3,4)} \equiv \NNR{(3,4)}(\udt,\ovt+s_2)$ for compactness.
The first line follows by inclusion, the second using
$\muG{A \cap B} \geq \muG{A} - \muG{B^c}$, and the third from
(\ref{eqn_clocks}) by independence of the back-off clocks
and by using the union bound in conjunction with de Morgan's laws.
We now proceed to show that
\begin{eqnarray*}
\mu^{(R)}_{G,1} &\doteq& \muR{ S_M
> \sqrt{R}} \rightarrow 0 \\
\mu^{(R)}_{G,2} &\doteq& \muR{\lb \BBR{(1,2)} \rb^c \cap \Gct} \rightarrow 0 \\
\mu^{(R)}_{G,3} &\doteq& \muR{\lb N_R^{(3,4)} \rb^c \cap \Gct} \rightarrow 0
\end{eqnarray*}
The first limit is immediate. 

In order to deal with the second limit, define the event
$$
\QQR{(1,2)}(s+h,\udt-\xi c) \doteq \lc \omega: \inf \lc Q^R_m(u,\omega),
u \in [s+h,\udt-\xi c] \rc > \varsigma, m = 1, 2 \rc \in \FZ{\udt - \xi c}{}
$$
for some small constant $\varsigma > 0$, and use the upper bound
$$
\mu^{(R)}_{G,2} \leq
\muR{\lb\BBR{(1,2)}\rb^c \cap \QQR{(1,2)}(s+h,\udt-\xi c) \cap \Gct} +
\muR{\lb\QQR{(1,2)}(s+h,\udt-\xi c)\rb^c \cap \Gct}
$$
The limit of the second term is 0 by definition of $\udt$ as
the earliest time that queues~1 and~2 can drain under the event $\Gct$ and on making a suitable
choice for $\varsigma$. It suffices then to show that the limit of the first term is 0.
In order to prove this, we invoke the definition of the event
$\BBR{(1,2)}$ to obtain that the first term is bounded from above by
$$
\muR{\lb\NNR{(1,2)}(s+h,\udt-\xi c)\rb^c \cap \QQR{(1,2)}(s+h,\udt-\xi c)} +
\muR{\{W^R(s+h,\ovt+\xi c) \leq 2 \sqrt{R}\} \cap \Gct}
$$
That the first term converges to 0 follows by definition of the events and Lemma \ref{lemma_NoBackOffQLong}. Lemma \ref{lemma_TotalBackOff} shows
that the limit of the second term (i.e. the event there is insufficient backoff by queues 1 and
2 on occurrence of $\Gct$) is 0.

In order to handle the third limit, we apply the upper bound
$$
\mu^{(R)}_{G,3} \leq \muR{\lb\NNR{(3,4)}(\udt,\ovt+s_2)\rb^c \cap \QQR{(3,4)}} +
\muR{\lb\QQR{(3,4)}\rb^c \cap \Gct}
$$
Lemma \ref{lemma_NoBackOffQLong} shows that the limit of the first term is 0, whilst
the statement of Corollary \ref{cor_Q34cGct} is that the limit of the second term is 0.

Taking limits in (\ref{eqn_Fineq}) with respect to $R$, and using Lemma \ref{muG_lemma}, it follows that,
\begin{eqnarray}
\muG{F \cap \Ict{(3,4)}} &\geq& \frac{3}{8} \limsup_R \muGR{F} \\
              &\geq& \frac{3}{8} \liminf \muGR{F^o} \nonumber \\
              &\geq& \frac{3}{8} \muG{F^o} \nonumber
\end{eqnarray}
where the first inequality follows from the fact that $F$
and $\Ict{(3,4)}$ are both closed and the third since $F^o$ is open
and again from Lemma \ref{muG_lemma}.
\end{proof}

Let $\mu$ be the fluid limit measure and  proceed to define for any given $t \geq 0$ the following
class of sets, the {\em finite-dimensional continuity rectangles} ${\cal K}_{\mu,t}$ which are a
subset of the {\em finite-dimensional sets}, ${\cal H}_t$.
\begin{definition}
\label{defn_rectangle}
Define the  class of finite closed rectangles
${\cal R}$ to be the sets,
$$
\lb \prod_{j=1}^{N} [q_{j,L},q_{j,H}]\rb \times \lb \prod_{j=1}^{N} [r_{j,L},r_{j,H}]\rb \subset
\RRp^N \times \RRp^N,
$$
where $q_{j,L} \leq q_{j,H}, r_{j,L} \leq r_{j,H}$, otherwise we obtain the empty set.

Given times
$0\leq t_1 < t_2 < \cdots < t_J \leq t$, define $\pi_{J,t}: C[0,\infty) \rightarrow E^J$ to be the (continuous) projection map taking the sample path to its position at times $t_1,\cdots,t_J$
$$
\pi_{K,t}(\omega) = \Big( \lb {\bf Q}(t_1,\omega),{\bf I}(t_1,\omega)\rb, \cdots,
\lb {\bf Q}(t_J,\omega), {\bf I}(t_J,\omega) \rb \Big).
$$
Finally, take ${\cal R}^J$ to be $J$-products of closed rectangles.
Define ${\cal K}_t$ to be sets of the form $\pi_{K,t}^{-1} R_J, R_J \in {\cal R}^J$ and
finally ${\cal K}_{\mu,t} \subset {\cal K}_t$ to be those $H \in {\cal K}_t$ such that
$\muu{\partial H} = 0$. Clearly ${\cal K}_{\mu,t} \subset {\cal K}_t \subset {\cal F}_t$.
\end{definition}
Returning to Lemma \ref{lem_backclock}, we see that it is satisfied by
all sets $F \in {\cal K}_{\mu,s}$ with equality since they are by
definition closed $\mu$-continuity sets.  Furthermore, since the terms on the
left and on the right are measures and since ${\cal K}_{\mu,s}$
generates ${\cal F}_s$, the following corollary holds.
\begin{corollary}
\label{cor_Fwk}
$\forall F \in {\cal F}_s$, Lemma \ref{lem_backclock} holds with equality, i.e.
\begin{eqnarray*}
\muG{F \cap \Ict{(3,4)}}  & = & \frac{3}{8} \muG{ F} \\
\muG{F \cap \Ict{(5,6)}}  & = & \frac{3}{8} \muG{ F} \\
\muG{F \cap \Ict{(4,5)}}  & = & \frac{1}{4} \muG{ F}
\end{eqnarray*}
\end{corollary}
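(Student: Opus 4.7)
The plan is to recognize that the asserted identities say precisely that two finite measures on $(C[0,\infty),\mathcal{F}_s)$ coincide on all of $\mathcal{F}_s$, and then to deduce this coincidence from their agreement on the $\pi$-system $\mathcal{K}_{\mu,s}$ via a standard uniqueness-of-measure argument.

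First I would set up the measure-theoretic framework. Fix one of the three identities, say the $I^{(3,4)}_{c,t}$ case. Define the set functions
\[
\nu_1(F) \doteq \mu_G\bigl(F \cap I^{(3,4)}_{c,t}\bigr), \qquad
\nu_2(F) \doteq \tfrac{3}{8}\,\mu_G(F), \qquad F \in \mathcal{F}_s.
\]
Both are finite, non-negative, countably additive set functions on $\mathcal{F}_s$, hence finite measures, with the same total mass $\tfrac{3}{8}\mu_G(C[0,\infty))$ (apply Lemma~\ref{lem_backclock} with $F = C[0,\infty) \in \mathcal{K}_{\mu,s}$). Lemma~\ref{lem_backclock}, combined with the observation in the paragraph preceding the corollary, gives $\nu_1(F) = \nu_2(F)$ for every $F \in \mathcal{K}_{\mu,s}$.

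Next I would verify the two properties of $\mathcal{K}_{\mu,s}$ needed for the uniqueness argument. (i) $\mathcal{K}_{\mu,s}$ is a $\pi$-system: the intersection of two cylinder sets based on time grids $\{t_j\}$ and $\{t_j'\}$ is again a cylinder set, based on the union of the grids, with each coordinate being the intersection of two closed boxes (itself a closed box, possibly empty); the resulting boundary is contained in the union of the two original boundaries, each of which has $\mu$-measure zero, so the intersection remains a $\mu$-continuity rectangle. (ii) $\mathcal{K}_{\mu,s}$ generates $\mathcal{F}_s$: by the Lipschitz continuity of sample paths under $\mu$ (Lipschitz constant $3\beta$), every open set in $C[0,s]$ can be expressed as a countable union of finite-dimensional cylinder sets based on a dense subset of $[0,s]$, e.g.\ the rationals. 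Moreover, for each fixed time $u$ and each coordinate $\ell$, the distribution of $(Q_\ell(u), I_\ell(u))$ under $\mu$ has at most countably many atoms, so rectangles with endpoints avoiding these atoms exhaust the collection of finite-dimensional cylinders up to $\mu$-null boundaries. This shows $\sigma(\mathcal{K}_{\mu,s}) = \mathcal{F}_s$.

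The conclusion then follows from the uniqueness part of the Carathéodory/Dynkin $\pi$--$\lambda$ theorem (see e.g.\ \cite{Billingsley95}): two finite measures that agree on a $\pi$-system agree on the $\sigma$-algebra it generates. The same argument applied verbatim, with the factor $\tfrac{3}{8}$ replaced by $\tfrac{3}{8}$ or $\tfrac{1}{4}$ and $I^{(3,4)}_{c,t}$ replaced by $I^{(5,6)}_{c,t}$ or $I^{(4,5)}_{c,t}$, handles the other two identities. The only mildly delicate point is verifying that $\mathcal{K}_{\mu,s}$ really does generate $\mathcal{F}_s$; once that is in hand, the rest is bookkeeping. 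I expect no other technical difficulty, since all the substantive probabilistic content — the $3/8,3/8,1/4$ split coming from the independence and symmetry of the residual back-off clocks, together with the ``no backoff'' and ``total backoff'' lemmas — is already embedded in Lemma~\ref{lem_backclock}.
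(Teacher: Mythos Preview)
Your proposal is correct and follows essentially the same route as the paper: both define the left- and right-hand sides as finite measures on $\mathcal{F}_s$, observe that they agree on the $\pi$-system $\mathcal{K}_{\mu,s}$ by Lemma~\ref{lem_backclock}, and then invoke the uniqueness-of-measure theorem (Theorem~10.3 in \cite{Billingsley95}) after noting $\sigma(\mathcal{K}_{\mu,s}) = \mathcal{F}_s$. Your write-up is more explicit than the paper's in verifying the $\pi$-system and generating properties of $\mathcal{K}_{\mu,s}$, which the paper simply asserts as ``readily shown,'' but the argument is otherwise identical.
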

\begin{proof}
First note that the measures on the LHS and RHS are both finite and are therefore both
$\sigma$-finite, with respect to the sets in ${\cal K}_{\mu,s}$. It is readily shown that
${\cal K}_{\mu,s}$ is a $\pi$-system and $\sigma({\cal K}_{\mu,s}) = \cF_s$. Theorem 10.3 in
\cite{Billingsley95} thus shows that LHS and RHS agree on $\cF_s$.
\end{proof}


To continue towards Theorem \ref{thm_switchM0}, we now define paths that one of which is followed immediately on completion of a (positive) $M_1$-period at time
$s$, $\mu$ a.s. First define
\begin{equation}
\vartheta_{s,t}^{({\bf Q},M_k)}(u),~u\in [s,t], k=2,3,4,
\end{equation}
to be the path which is at ${\bf Q}$ at time $s$ and then follows $M_k$ until time $t$,
e.g., if $k=1$, queues 1 and 2 are decreasing linearly at rate $(1-\lambda)$ and any other
queue $\ell=3,4,5,6$ is increasing at rate $\lambda_\ell$. Precise
definitions we omit as the form of the sample paths have already been discussed. The
next definition is for an indicator function that the above path is being followed in
an interval $[s,s+h], h > 0$.
\begin{equation}
\label{eqn_vartheta}
\one{(s,h,\eta)}{M_k, {\bf Q}} \doteq \1\lc \omega: \euclidnorm{{\bf Q}(v,\omega)
 - \vartheta_{s,s+h}^{({\bf Q}(s,\omega),M_k)}(v)} < \eta, v\in [s,s+h] \rc.
\end{equation}
In words  $M_k$ is "followed" for an interval of duration $h$ starting at $s$ to
a closeness $\eta$.

%
%

%
%

Note that the result of Corollary \ref{cor_Fwk} applies only to events in some $\sigma$-algebra $\cF_w$ where $w \geq 0$ is fixed. However, we require
that equivalent results be established for all events $F \in \FZ{\Zstop{(1,2)}}{}$. This issue can be approached as follows.

Given $s < t, a < b$, with $n \in \NN_0$ and recalling Definition \ref{defn_Zseq}, set
$$
C^{(n)} \doteq \lc \omega : (\Zstop{(1,2)}, \Astop{(1,2)}) \in  [s,t) \times [a,b) \rc,~~
$$
then it is readily seen that,
$$
F \cap \Cn \in \cF_t,~\forall F \in  \FZ{\Zstop{(1,2)}}.
$$

Suppose that $s < t$ in the definition of $\Cn$ satisfy
$
t-s < \lambda_1 e_m,
$
and $e_m <  a < b  \leq f_m$ ($e_m$ and $f_m$ are given in Definition \ref{defn_Zseq}). Then, it can be seen that for all paths $\omega \in F \cap \Cn$, for any $F \in \FZ{\Zstop{(1,2)}}{}$,
we can find a $w < s$ such that $I_\ell(s,\omega) - I_\ell(w,\omega) = \beta(s-w)$, $\ell=1,  2$, i.e., the queues and activity components constitute a set of parallel lines over the interval $[w,s]$.

The above intuitive argument can be formalized by establishing the existence of an equivalent $\sigma$-algebra. We say that the $\sigma$-algebra $\FZ{\Zstop{(1,2)}}{} \cap \Cn$ is {\em
equivalent} to a sub $\sigma$-algebra, ${\cal H}_w \subset \cF_w$,
if to each event $H \in \FZ{\Zstop{(1,2)}}{} \cap \Cn$ there is an event
$H_w \in \cH_w$ so that $H = H_w$.\\

\begin{lemma}[Equivalent $\sigma$-algebra]
\label{lemma_sigmaequiv}
Given $n \in \ZZ$, arbitrary $t > s \geq 0$ such that $t-s < \lambda_1 e_m$, and $e_m <  a < b  \leq f_m$ and arbitrary $w\in (t-\lambda_1 e_m , s)$, there is a $\sigma$-algebra, $\cH_w \subset \cF_w$ equivalent to $\FZ{\Zstop{(1,2)}}{} \cap \Cn$.\\
\end{lemma}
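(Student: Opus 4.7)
The plan is to exploit the fact that on $C^{(n)}$ the excursion of the common queue $Q = Q_1 = Q_2$ ending at $Z := Z^{(1,2)}_{m,n}$ has duration $A := A^{(1,2)}_{m,n} \in [a,b)$, which is long enough that the interval $[w, Z]$ lies entirely inside the draining segment of the bilinear trajectory guaranteed by Theorem \ref{thm_linearpaths}. Once that is established, the evolution of $({\bf Q}, {\bf I})$ on $[w, Z]$ becomes a deterministic function of the state at $w$, so that every event in $\cF_{Z}$ restricted to $C^{(n)}$ coincides with some event in $\cF_{w}$ restricted to $C^{(n)}$.

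First I would apply Theorem \ref{thm_linearpaths} to identify, $\mu$-a.s.\ on $C^{(n)}$, the sample path of $Q$ on the excursion interval $[V, Z]$ with $V := Z - A$, as the bilinear curve $\PhiLin{1}{V, Z}$. The bookkeeping identity $\lambda_1 B = (1 - \lambda_1)(A - B)$ gives $B^{(1,2)}_{m,n} = (1 - \lambda_1) A$, so the peak is reached at $V + B = Z - \lambda_1 A$ and both queues~1 and~2 are fully active on the drain segment $[Z - \lambda_1 A, Z]$. By Lemma \ref{lemma_conflict}, every interferer $\ell = 3,4,5,6$ is then inactive there. The placement of $w$ is handled by a short chain of inequalities: since $A \geq a > e_m$ and $Z \leq t$,
\[
Z - \lambda_1 A \;\leq\; t - \lambda_1 a \;<\; t - \lambda_1 e_m \;<\; w,
\]
and $w < s \leq Z$, so $[w, Z] \subset [Z - \lambda_1 A, Z]$. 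Consequently, on $C^{(n)}$, $\mu$-a.s., for all $u \in [w, Z]$,
\begin{align*}
Q_\ell(u) &= Q_\ell(w) - (1-\lambda_\ell)(u-w), \quad I_\ell(u) = I_\ell(w) + \beta(u-w), \quad \ell=1,2,\\
Q_\ell(u) &= Q_\ell(w) + \lambda_\ell(u-w), \quad I_\ell(u) = I_\ell(w), \quad \ell=3,4,5,6,
\end{align*}
and in particular $Z = w + Q_1(w)/(1-\lambda_1)$ is a Borel function of the state at time $w$.

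With the trajectory on $[w, Z]$ thus made deterministic on $C^{(n)}$, I would construct $\cH_w$ as follows. Let $\Psi : C[0,\infty) \to C[0,\infty)$ be the splicing map that, for each $\omega$, overwrites the segment on $[w,\, w + Q_1(w,\omega)/(1-\lambda_1)]$ by the deterministic extension dictated by the formulas above, leaving $\omega$ unchanged elsewhere. Then $\Psi(\omega) = \omega$ for every $\omega \in C^{(n)}$ outside a $\mu$-null set, while $\Psi^{-1}(H)$ depends only on the pre-$w$ data for any $H \in \cF_Z$; therefore $\Psi^{-1}(H) \in \cF_w$ and $H \cap C^{(n)} = \Psi^{-1}(H) \cap C^{(n)}$. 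Taking $\cH_w$ to be the sub-$\sigma$-algebra of $\cF_w$ generated by $\{\Psi^{-1}(H) : H \in \cF_Z\}$ delivers the claimed equivalence.

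The main obstacle is the measurability of $\Psi$: one has to check that $Z(\omega) = w + Q_1(w,\omega)/(1-\lambda_1)$ and the spliced extension are Borel-measurable functions of the path up to time~$w$, and that the splice is continuous in the supnorm topology so that pre-images of open sets stay in $\cF_w$. The strict inequalities $a > e_m$ and $t-s < \lambda_1 e_m$ are exactly what open the gap $t - \lambda_1 a < t - \lambda_1 e_m < w$ that forces $[w,Z]$ into the drain segment; weakening them would allow $w$ to fall at or before the peak time $V+B$, where the path is not yet linear, and the deterministic extension underlying $\Psi$ would no longer be well defined.
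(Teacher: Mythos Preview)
The paper omits the proof of this lemma entirely, supplying only the informal remark in the preceding paragraph that on $C^{(n)}$ the activity and queue components form ``parallel lines'' over $[w,s]$. Your argument is the natural way to make that remark rigorous, and the key computation---the chain of inequalities $Z-\lambda_1 A \le t-\lambda_1 a < t-\lambda_1 e_m < w < s \le Z$ placing $[w,Z]$ inside the drain segment---is correct and is precisely what the hypotheses $t-s<\lambda_1 e_m$, $a>e_m$, $w\in(t-\lambda_1 e_m,s)$ are engineered to deliver.

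Two points deserve tightening. First, the equivalence demanded is literal set equality $H=H_w$, not equality modulo null sets, so appealing to Theorem~\ref{thm_linearpaths} ``$\mu$-a.s.'' is not quite enough. The paper resolves this by restricting the path space to $P_L$ at the end of Part~B; on $P_L$ the bilinear structure holds for every $\omega$, and your identity $\Psi(\omega)=\omega$ on $C^{(n)}$ becomes exact. Second, your construction tacitly requires $C^{(n)}\in\cF_w$, which you do not verify. This is where the remaining work sits: on $P_L\cap C^{(n)}$ one has $V=Z-A<t-a<t-e_m<t-\lambda_1 e_m<w$, so the entrance time $V$, the excursion count $n$, and hence $A$ are all recoverable from $\omega|_{[0,w]}$, while $Z=w+Q_1(w)/(1-\lambda_1)$ is $\cF_w$-measurable as you note. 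Once $C^{(n)}\in\cF_w$ is in hand, the splicing map $\Psi$ is not really needed: for any $H'\in\cF_Z$, membership of $\omega$ in $H'\cap C^{(n)}$ depends only on $\omega|_{[0,Z(\omega)]}$, which on $P_L\cap C^{(n)}$ is a function of $\omega|_{[0,w]}$, so $H'\cap C^{(n)}\in\cF_w$ directly and one may take $\cH_w$ to be the trace $\{F\cap C^{(n)}:F\in\cF_w\}$.
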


We omit the proof. 
Set $t_Z = \Zstop{(1,2)}$, $h^{(1,2)} = \lambda_1 e_m \times \wedge_{\ell=3}^6 \lambda_\ell/(1-\lambda_\ell),$
and ${\bf Q}_Z = {\bf Q}(\Zstop{(1,2)})$. Next define $\one{(\eta,M_1)}{M_k,m,n}$ to be $\one{(t_Z,h^{(1,2)},\eta)}{M_k,{\bf Q}_Z}, k=2, 3, 4$ if $\Zstop{(1,2)} < \infty$. Define ${\cal F}^\infty_{\Zstop{(1,2)}} \doteq   {\cal F}_{\Zstop{(1,2)}}\cap\lc \Zstop{(1,2)} < \infty \rc$  as we are only interested in finite stopping times.\\
\begin{theorem}
\label{thm_switchM0}
$\forall n \in \ZZ, m \in \ZIO$, $\exists \eta_m$ such that $\forall \eta,~\eta_m > \eta > 0$,
\begin{eqnarray}
\muu{\one{(M_1,\eta)}{M_2,m,n} | {\cal F}^\infty_{\Zstop{(1,2)}}}  & =  & \frac{3}{8}, ~\mu~a.s.,
\label{eqn_condprobM1M2} \\
\muu{\one{(M_1,\eta)}{M_3,m,n} | {\cal F}^\infty_{\Zstop{(1,2)}}}  & =  & \frac{3}{8}, \nonumber \\
\muu{ \one{(M_1,\eta)}{M_4,m,n} | {\cal F}^\infty_{\Zstop{(1,2)}}}  & =  & \frac{1}{4}.\\ \nonumber
\end{eqnarray}
\end{theorem}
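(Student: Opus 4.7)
The plan is to reduce the claim about the stopping time $\Zstop{(1,2)}$ to the fixed-time statement already proved in Corollary~\ref{cor_Fwk}, by partitioning $\{\Zstop{(1,2)} < \infty\}$ into countably many pieces on which the stopping time is effectively frozen. Fix $m \in \ZIO$, $n \in \ZZ$, and choose a finite partition of the rectangle $(0, T] \times (e_m, f_m]$ (for each $T$) into sub-rectangles $(s_j, t_j] \times (a_j, b_j]$ with $t_j - s_j < \lambda_1 e_m$, and set $C_j \doteq \{ (\Zstop{(1,2)}, \Astop{(1,2)}) \in (s_j, t_j] \times (a_j, b_j]\}$. Then $\{\Zstop{(1,2)} < \infty\} = \bigcup_{T, j} C_j$ up to a $\mu$-null refinement, and it suffices to verify the identity of conditional probabilities on each $C_j$ separately.

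On a given piece $C_j$, Lemma~\ref{lemma_sigmaequiv} supplies an equivalent sub-$\sigma$-algebra $\mathcal{H}_{w_j} \subset \mathcal{F}_{w_j}$ with $w_j \in (t_j - \lambda_1 e_m, s_j)$; each $F \in \mathcal{F}_{\Zstop{(1,2)}} \cap C_j$ corresponds to some $F_{w_j} \in \mathcal{H}_{w_j}$. Moreover, by the sawtooth property (Theorem~\ref{thm_linearpaths}) and the no-idling property (Lemma~\ref{lemma_NoDelay}), on $C_j$ the fluid path over $[w_j, \Zstop{(1,2)}]$ consists of queues~1,2 draining linearly with common slope $-(1-\lambda_1)$ and queues 3--6 filling linearly. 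Choosing the partition fine enough, $C_j$ is therefore contained (up to arbitrarily small slack) in a closed continuity set $G_{c_j, t_j}$ of the form (\ref{eqn_Gctdefn}) for suitable $c_j, t_j$, and the forward parameters $\udt, \ovt, s_1, s_2$ can be chosen uniformly so that Corollary~\ref{cor_Fwk} applies with $F = F_{w_j} \cap G_{c_j, t_j}$.

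The next step is to link the fixed-time activity events $\Ict{(k,\ell)}$ of Section~C.I to the path-indicator $\one{(M_1,\eta)}{M_k,m,n}$. I claim that for $\eta < \eta_m$ sufficiently small (depending only on the linear slopes $\lambda_i$ and on $h^{(1,2)}$),
\[
C_j \cap \Ict{(k,\ell)} \subseteq \{\one{(M_1,\eta)}{M_k,m,n} = 1\} \subseteq C_j \cap \Ict{(k,\ell)} \cup N,
\]
where $N$ is $\mu$-null. The first inclusion follows because once queues $k, \ell$ are fully active on $[\ovt + s_1, \ovt + s_2]$ (at a positive fluid-scale queue length by Corollary~\ref{cor_Q34cGct}), the combination of Lemma~\ref{lmma_NoBackOff} (no backoff until the queue drains) and Corollary~\ref{cor_dominate} (dominance of the $M_k$ schedule) forces the trajectory over $[\Zstop{(1,2)}, \Zstop{(1,2)} + h^{(1,2)}]$ to coincide with $\vartheta_{\cdot,\cdot}^{(\cdot,M_k)}$ up to error $\eta$. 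The second inclusion uses Lemma~\ref{lemma_conflict} to rule out simultaneous activity in two different $M_k$-periods. Applying Corollary~\ref{cor_Fwk} to $F = F_{w_j}$ on $G_{c_j, t_j}$, one obtains $\mu(F \cap C_j \cap \one{(M_1,\eta)}{M_k,m,n}=1) = p_k \, \mu(F \cap C_j)$ with $p_2 = p_3 = 3/8$, $p_4 = 1/4$; summing over~$j$ (and over $T \uparrow \infty$ via monotone convergence) yields the stated conditional probabilities.

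The main obstacle will be verifying, with the required uniformity in~$j$, that $C_j$ really is (essentially) a continuity set of the form $G_{c_j, t_j}$ so that Lemma~\ref{lem_backclock} and Corollary~\ref{cor_Fwk} can be invoked. This requires absorbing two sources of slack: the width of the partition rectangle (controlling $\delta_0, \delta_1$ in (\ref{eqn_Gctdefn})), and the Lipschitz wobble of $Q_1, Q_2$ between $w_j$ and $s_j$ (controlling the $h$ parameter). Both can be made arbitrarily small by refining the partition before choosing $\eta$, and the $\mu$-continuity of the boundary rectangles can be enforced by shifting endpoints slightly if necessary, since only countably many values of each endpoint carry positive $\mu$-mass. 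Everything else in the argument is bookkeeping: the delicate no-backoff, sawtooth, and dominance estimates have already been assembled in Parts~A and~B.
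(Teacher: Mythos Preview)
Your proposal is correct and follows essentially the same route as the paper: partition $\{\Zstop{(1,2)}<\infty\}$ into small rectangles $C_j$ in $(\Zstop{(1,2)},\Astop{(1,2)})$, invoke Lemma~\ref{lemma_sigmaequiv} to pass to a fixed-time $\sigma$-algebra at some $w_j$, embed each $C_j$ in a closed $\mu$-continuity set $G_{c_j,t_j}$ of the form~(\ref{eqn_Gctdefn}), apply Corollary~\ref{cor_Fwk} there, link $\Ict{(k,\ell)}$ to the path indicator via the sawtooth and no-conflict properties, and sum. The paper carries out exactly this program, with the only additional content being an explicit parameter calculation (setting $t_k-s_k=A_1\eta$, $c_\eta=A_2\eta$ and tuning $A_1,A_2,\alpha_0,\alpha_1,\nu,\chi$) that simultaneously enforces $C_k\subset G$, the $\mu$-continuity of $G$, condition~(\ref{eqn_tdiff}), and the smallness of $\ovt+s_3-w_k$; your ``main obstacle'' paragraph correctly anticipates precisely these four constraints and their resolution.
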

Since $\eta > 0$ can be taken arbitrarily small, the conclusion is that one of the
periods $M_2,M_3,M_4$ start immediately at $\Zstop{(1,2)}$ on occurrence of
$\Zstop{(1,2)} < \infty$ and with probabilities determined solely by the residual backoff times.

\begin{proof}
Given $F \in \FZ{\Zstop{(1,2)}}{\infty}$, we may write $F = \cup_k F_k$ as a countable union of disjoint sets.
$F_k$ is obtained by intersection of $F$ with the disjoint sets,
$$
C_k \doteq \lc \omega : (\Zstop{(1,2)}, \Astop{(1,2)}) \in  (s_k,t_k] \times (a_k,b_k] \rc,
$$
where $e_m \leq a_k < b_k \leq f_m$ and $e_m \leq s_k < t_k$ are chosen according to $\eta$ in a way to be  described subsequently. $C_k \cap C_m = \emptyset,~m\neq k$ is constructed by first choosing the intervals for the stopping time $\Zstop{(1,2)}$ to be disjoint and then likewise the durations into disjoint semi-open intervals.
Thus, $
F_k  \doteq  F\cap C_k \in {\cal F}_{t_k}.
$

We turn to $F_k$ and will suppose that $t_k-s_k$ is sufficiently small so that we may find a time
$w_k \in (t_k - e_m \lambda_1,s_k)$
as in Lemma \ref{lemma_sigmaequiv}.
$w_k$ will be a constant determined   by $a_k,b_k,s_k,t_k$ and $\eta$ only. For the moment suppose that $w_k$ and $\eta$ are used to determine constants $c_\eta,t_{w_k}$ and then a set $G_{c_\eta,t_{w_k}}$, satisfying the conditions of Lemma \ref{lem_backclock}, such that in addition,
\begin{equation}
C_k \subset G_{c_\eta,t_{w_k}},
\label{eqn_CsubG}
\end{equation}
with the $s$ in the definition of $G_{c,t}$, see (\ref{eqn_Gctdefn}), taken to be $w_k$.
It can then be seen that the following chain of equalities hold,
\begin{eqnarray}
\muu{F_k \cap I^{(3,4)}_{c_\eta,t_{w_k}}} & = &  \muuu{G_{c_\eta,t_{w_k}}}{F_k \cap I^{(3,4)}_{c_\eta,t_{w_k}} } \label{eqn_threeeight} \\
                  & = & \muuu{G_{c_\eta,t_{w_k}}}{F^{(w_k)}_k \cap I^{(3,4)}_{c_\eta,t_{w_k}} } \nonumber \\
                 & = & \frac{3}{8}\muuu{G_{c_\eta,t_{w_k}}}{F^{(w_k)}_k} \nonumber \\
                 & = & \frac{3}{8} \muuu{G_{c_\eta,t_{w_k}}}{F_k } \nonumber \\
                 & = & \frac{3}{8} \muu{F_k } \nonumber
\end{eqnarray}
The first equality follows from (\ref{eqn_CsubG}), the second from Lemma \ref{lemma_sigmaequiv} as there exits a $F^{(w_k)}_k\in \cF_{w_k}$ such that $F^{(w_k)}_k=F_k $, the third from Corollary \ref{cor_Fwk} and by definition of $G_{c_\eta,t_{w_k}}$, the fourth  equality is again from Lemma  \ref{lemma_sigmaequiv}, and the final one follows again from
(\ref{eqn_CsubG}). 
Corresponding results follow for $I^{(4,5)}_{c_\eta,t_{w_k}}$ and $I^{(5,6)}_{c_\eta,t_{w_k}}$. Once one of these events has occurred, $\mu$ almost surely the queues corresponding to the active period proceed to empty because they lie in  $P_L$ and therefore $P$. Moreover, $\eta_m$ is determined depending on the duration (at least $\lambda_1 e_m$) of the $M_1$ period. $\eta_m$ is taken sufficiently small,
 so that if we take any $\eta, \eta_m > \eta > 0$, only the node pair (and corresponding $M_k$ period) can satisfy the constraints in (\ref{eqn_vartheta}), for the interval $[\Zstop{(1,2)}, \Zstop{(1,2)}+h^{(1,2)}]$.

The above steps may be taken provided that i) $G_{c_\eta,t_{w_k}}$ is a closed $\mu$-continuity set,
ii) $G_{c_\eta,t_{w_k}}$ contains $C_k$ and hence $F_k$ iii) $\overline{t}+ s_3 - w_k$ is sufficiently small, so that the paths $\vartheta_{.,.}^{({\bf Q},M_k)}$ satisfy the constraints as in
(\ref{eqn_condprobM1M2}), and iv) the condition (\ref{eqn_tdiff}) must hold so that the
conditions of Lemma \ref{lem_backclock} and also of Corollary \ref{cor_Fwk} are met.

To show that $c_\eta, t_{w_k}$ and a corresponding $G_{c_\eta,t_{w_k}}$ exist, given $\eta>0$, set $t_k - s_k = A_1 \eta$ and $c_\eta = A_2 \eta$, where $A_1$ and $A_2$ will be fixed later. Next fix the time $t_{w_k}=(s_k+t_k)/2$. $w_k$ is now determined using
$t_{w_k} - w_k = (1-\lambda) c_\eta$.
A brief calculation shows  that $\alpha_0,\alpha_1$ must be chosen so that
\begin{equation}
\alpha_0,\alpha_1 > \frac{A_1 (1 - \lambda)}{2 A_2},
\label{eqn_alphabnds}
\end{equation}
in order that condition ii) above is met.

As far as i) is concerned, $\Gct$ is an intersection of two sets, a queue constraint and an activity
constraint, so that it is enough to
obtain each as a $\mu$-continuity set. With respect to the queue constraint set, at time
$w_k$, there are uncountably many choices for $\alpha_0,\alpha_1$
which may be taken as close as we like to the constraint in (\ref{eqn_alphabnds}) given
$c_\eta$. For the activity set, we may choose $\nu$ arbitrarily small and
having fixed it, then there are uncountably many choices for $\chi > 0$ which we may also take arbitrarily small. Thus the activity set can also be chosen to be a $\mu$-continuity set. Putting
the above together, $G$ may be constructed as a $\mu$-continuity set for given $A_1,A_2, \eta$
and so that $C_k \subset G$.

We turn to condition iv), where it can be checked that it is satisfied provided that
\begin{equation}
\frac{ \alpha_0+\alpha_1+\lambda \nu}{1-\lambda} < \frac{1}{4}\lb  \frac{1- \alpha_0 - \nu(1-\lambda)}{1-\lambda} \wedge_{\ell=3}^6 \frac{\lambda_\ell}{1-\lambda_\ell} \rb,
\label{eqn_cond47}
\end{equation}
which obviously holds by making $\alpha_0$, $\alpha_1$, and $\nu $ sufficiently small, and then choosing $A_1/A_2$ sufficiently small according to (\ref{eqn_alphabnds}).

As far as iii) is concerned, an examination of the construction preceding Lemma \ref{lem_backclock}
shows that  $\ovt + s_3 - w_k \propto c_\eta = A_2\eta$. We may thus proceed by taking $A_2 > 0$
sufficiently small to ensure that iii) is met and then choose $A_1 > 0$ sufficiently small so as to
meet (\ref{eqn_alphabnds}), (\ref{eqn_cond47}). This fixes $c_\eta,\nu$. The rest follows on choice of
$\alpha_0,\alpha_1,\chi$.

Note that common choices may be made for $A_1,A_2,\nu$ for each $C_k$ and once these are fixed, common values may be chosen for $\alpha_0,\alpha_1,\chi$ as there are uncountably many possibilities and only a countable number of choices can have positive probability for any $C_k$. We have thus shown that
a suitable $\Gct$ can be found for each $k$, given $\eta > 0$.

The rest of the proof follows on summing (\ref{eqn_threeeight}) over $k$, to obtain
\begin{eqnarray}
\sum_k \muu{ \one{(t_Z,h_Z,\eta)}{M_2,{\bf Q}_Z}; F_k} & = &  \frac{3}{8}\sum_k  \muu{F_k} \label{eqn_fin38} \\
 \muu{ \one{(t_Z,h_Z,\eta)}{M_2,{\bf Q}_Z}; F} & = &  \frac{3}{8}\muu{F}, \nonumber
\end{eqnarray}
and similarly for $M_3$ and $M_4$. This is the required result as $F$ is arbitrary.
\end{proof}

%
%

%
%

\subsection*{C.II. Switchover from $M_2, M_3, M_4$}

Here we will only state our results, moreover $M_2$- and $M_3$-periods are analogous and so
we will only deal with the former. To state our theorem for switching out of a $M_2$-period, define $\one{(M_2,\eta)}{M_k,m,n}, k=1,3$  as was done for switching out of $M_1$. Also
define $\FZ{Z^3_{m,n}}{\infty} \doteq \FZ{Z^3_{m,n}}{} \cap \lc Z^3_{m,n} < \infty \rc$.

%
%
\begin{theorem}
\label{thm_switchM2}
$\forall n \in \NN_0, m \in \ZI - \lc 0 \rc$, $\exists \ovp,\ovq > 0, \ovp + \ovq = 1$ and $\exists \eta_m$ such that $\forall \eta,~\eta_m > \eta > 0$
\begin{eqnarray}
\muu{\one{(M_2,\eta)}{M_1,m,n} | \FZ{\Zstop{3}}{\infty}}  & =  & \ovp, ~\mu~a.s.,
\label{eqn_condprobM2M1} \\
\muu{\one{(M_2,\eta)}{M_3,m,n} | \FZ{\Zstop{3}}{\infty}} & =  & \ovq \nonumber.\\ \nonumber
\end{eqnarray}
\end{theorem}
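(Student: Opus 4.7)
My plan is to mirror the strategy of Theorem~\ref{thm_switchM0}, with the competition analysis adapted to the asymmetric access enjoyed by the various nodes at the end of an $M_2$-period. First I would characterize the fluid-limit boundary at $Z^3_{m,n}$: combining the sawtooth property (Theorem~\ref{thm_linearpaths}), the dominance of node~3 over node~4 (Corollary~\ref{cor_dominate}), and the natural-state property (Theorem~\ref{thm_natural}) shows that on the event $\{Z^3_{m,n}<\infty\}$ we have $Q_3(Z^3_{m,n}) = Q_4(Z^3_{m,n}) = 0$ on the fluid scale while $Q_1, Q_2, Q_5, Q_6$ are strictly positive and determined by the sawtooth dynamics over the $M_2$-period. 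I would then construct a closed $\mu$-continuity set $G^{M_2}$ analogous to $G_{c,t}$, encoding an $M_2$-period with prescribed entry queue levels, and invoke the equivalent-$\sigma$-algebra argument of Lemma~\ref{lemma_sigmaequiv} to reduce events in $\FZ{Z^3_{m,n}}{\infty}$ to events measurable at a deterministic time just before the $M_2$-period began.

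Next I would rule out an $M_4$-period as the immediate successor. Because $Q_4 = 0$ on the fluid scale at $Z^3_{m,n}$ and the prelimit queue of node~4 is only $O(1)$ at that moment, any subsequent activation of node~4 drains its queue in $O(1)$ prelimit steps and hence is invisible on the fluid scale; Lemma~\ref{lmma_NoBackOff} then precludes any sustained full-rate simultaneous activity of nodes~4 and~5 that could manifest as an $M_4$-period in the limit. With $M_4$ excluded, the next period is $M_1$ or $M_3$, and it suffices to show that these occur with limiting probabilities summing to~$1$. The winner of the prelimit competition is the first node in $\{1,2,5,6\}$ whose back-off clock expires while its relevant interferers are inactive: nodes~1, 2, and~6 require both nodes~3 and~4 to be inactive, whereas node~5 requires only node~3 to be inactive (since nodes~4 and~5 are non-interfering in the broken-diamond network). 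If node~1 or~2 wins, we enter $M_1$; if node~5 or~6 wins, we enter $M_3$, with the partner node activating immediately via Corollary~\ref{cor_dominate}.

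Defining $\bar p$ and $\bar q$ as the limiting probabilities of these two outcomes, I would verify their strict positivity and the identity $\bar p + \bar q = 1$ via an analogue of Lemma~\ref{lem_backclock}. A sufficient number of simultaneous prelimit back-off windows for nodes~1, 2, 5, 6 would be produced along the lines of Lemma~\ref{lemma_TotalBackOff}, while each of those four queues is $\Theta(R)$ on the fluid scale, so Lemma~\ref{lemma_NoBackOffQLong} prevents any of them from suffering spurious back-offs before the competition is decided. A countable partition of $\FZ{Z^3_{m,n}}{\infty}$ into rectangles, combined with the $\pi$-system argument of Corollary~\ref{cor_Fwk} and the countable-additivity step from the end of the proof of Theorem~\ref{thm_switchM0}, would then extend the conditional probabilities to all of $\FZ{Z^3_{m,n}}{\infty}$ and yield the theorem.

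The main obstacle will be the asymmetric role of node~4 in the post-$Z^3_{m,n}$ competition: unlike the fresh unit-rate back-off clocks that drove the $M_1$ exit, node~4 has been alternating rapidly between active and back-off states during the tail of the $M_2$-period, so its state at $Z^3_{m,n}$ is far from a fresh exponential. I would need to argue that the local Markov chain of node~4 relaxes on the $O(1)$ scale to a quasi-stationary distribution in which node~4 is active with probability~$\rho_4$, asymptotically independent of the fluid-scale history. Granting this, nodes~1, 2, and~6 gain access only during the $(1-\rho_4)$ fraction of time when node~4 is also idle, whereas node~5 competes throughout, yielding the explicit (if cumbersome) expressions for $\bar p$ and $\bar q$ as integrals against exponential back-off densities weighted by the node-4 quasi-stationary law. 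For the theorem only the existence, strict positivity, and summation to~1 of the limits matter; the quasi-stationary relaxation is precisely what makes closed-form expressions elusive, and establishing it rigorously is the step that will demand the most care.
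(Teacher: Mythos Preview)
The paper does not give a detailed proof of this theorem; it only states the result together with the decomposition~\eqref{eqn_Lebsesguelim} for $\ovp$ and $\ovq$, so the comparison is against that formula rather than a written argument. Your overall strategy---mirroring Theorem~\ref{thm_switchM0}, building a $G^{M_2}$ continuity set, invoking the equivalent-$\sigma$-algebra reduction of Lemma~\ref{lemma_sigmaequiv}, ruling out $M_4$ via the emptiness of $Q_4$ on the fluid scale, and flagging node~4's non-fresh state as the crux---is exactly the structure the paper's formula implies. In particular your quasi-stationary relaxation of node~4 matches the paper's use of $\piinf{4}(X_4,U_4)$, the equilibrium of node~4 operating in isolation (which it effectively is during the tail of the $M_2$-period, all its interferers being blocked by node~3).

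Two refinements visible in~\eqref{eqn_Lebsesguelim} are worth incorporating. First, the relevant prelimit epoch is not the fluid time $Z^3_{m,n}$ but the \emph{first backoff} of node~3, and the decomposition conditions on node~3's residual prelimit queue level~$X$ at that instant via $\bth(X)$; since $X$ is $O(1)$ and positive with positive probability, node~3 may re-activate, so the ensuing race is not merely among unit-rate clocks of $\{1,2,5,6\}$ but has node~3 as a live competitor as well. Second, the competition probabilities $c_1^X(X_4,U_4)$ and $c_5^X(X_4,U_4)$ depend on the \emph{full} state $(X_4,U_4)$ of node~4, not just on the activity marginal $\rho_4$ that your ``active with probability $\rho_4$'' heuristic invokes. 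Neither point invalidates your plan---the theorem only asserts existence and positivity of $\ovp,\ovq$ summing to~1---but sharpening the competition analysis along these two axes is what converts the sketch into the paper's stated formula.
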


The quantities $\ovp, \ovq$ are determined as follows,
\begin{eqnarray}
\ovp & = & \sum_{X=0}^\infty \sum_{X_4=0}^\infty \sum_{U_4=0,1}
\bth(X) \piinf{4}(X_4,U_4) c_1^X(X_4,U_4) \label{eqn_Lebsesguelim}  \\
\ovq & = & \sum_{X=0}^\infty \sum_{X_4=0}^\infty \sum_{U_4=0,1}
\bth(X) \piinf{4}(X_4,U_4) c_5^X(X_4,U_4) \nonumber
\end{eqnarray}
where $\piinf{4}(X_4,U_4)$ is the equilibrium jump chain probability that node 4 is in state
$(X_4,U_4)$ when operating
in isolation (i.e., when node 4 is the only node in the network). $\bth(X)$ is the limiting probability as $X^3_0 \uparrow \infty$ that a first backoff of node 3 occurs when $X_3=X$, service starting with $X^3_0$ packets. $c_1^X(X_4,U_4)$ is the probability
that node 1 or 2 first gain the medium when node 3 has a first backoff with $X_3=X$ packets
and the state of node 4 as given. The remaining definitions for $\ovq$ are similar. Thus in this
case there is no simple formula and $\ovp, \ovq$ depend on the backoff parameter $\gamma$ as well as the
arrival rates to nodes 3 and 4.

%

For the case of switching out of $M_4$ we have the following result, again making the corresponding
definitions as in Theorem \ref{thm_switchM0}.
\begin{theorem}
\label{thm_switchM4}
For any $\Zstop{(4,M_4^{(5)})}$ stopping time, there is a $\eta_m >0$ sufficiently small
so that, for all $\eta_m > \eta > 0$
\begin{equation}
\muu{\one{(M_2,\eta)}{M_4,m,n} \vee \one{(M_3,\eta}{M_4,m,n)}~~|~\FZ{\Zstop{(4,M_4^{(5)})}}{\infty}} = 1 \,~\mu~ a.s.,
\label{eqn_M4_M2orM3}
\end{equation}
and so that
\begin{eqnarray}
\muu{\one{(M_1,\eta)}{M_4,m,n} | \FZ{\Zstop{(4,M_4^{(5)})}}{\infty}}  & =  & 0, ~\mu~a.s..
\label{eqn_condprobM1M4} \\
\end{eqnarray}
\end{theorem}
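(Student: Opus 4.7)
The plan is to separate the analysis into two subcases according to whether $Q_5(\Zstop{(4,M_4^{(5)})})$ is strictly positive or equal to zero on the fluid scale, and to show in each subcase that the system is forced into either an $M_2$- or $M_3$-period, never an $M_1$-period. In the first subcase, where $Q_5(\Zstop{(4,M_4^{(5)})}) > 0$, node~5 must continue receiving full service by Lemma~\ref{lemma_NoDelay}: its interferers $\{1,2,3\}$ are all idle at $\Zstop{(4,M_4^{(5)})}$ (nodes~1 and~2 because they are blocked by node~5 itself during the tail of the $M_4$-period, and node~3 by condition~(\ref{eqn_M4II})), while node~4, having just reached zero, cannot resume on the fluid scale before node~5 drains. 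Because node~6 has accumulated positive fluid load throughout the $M_4$-period and does not interfere with node~5, the argument of Corollary~\ref{cor_dominate} applied to node~6 in the schedule $\{5,6\}$ yields an immediate transition into an $M_3$-period, while nodes~1 and~2 stay blocked by node~5 so that an $M_1$-transition is impossible.

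The second subcase, in which both $Q_4(\Zstop{(4,M_4^{(5)})}) = 0$ and $Q_5(\Zstop{(4,M_4^{(5)})}) = 0$ on the fluid scale, requires a prelimit analysis paralleling Lemma~\ref{lem_backclock} and Theorem~\ref{thm_switchM0}. I would construct a closed $\mu$-continuity set $G^{(4)}_{c,t}$ of paths on which both $Q_4$ and $Q_5$ lie near a common small positive level $c$ at a reference time $s < \Zstop{(4,M_4^{(5)})}$, with nodes~3 and~6 idle throughout an adjacent interval $[s,s+h]$, in direct analogy with~(\ref{eqn_Gctdefn}). Conditional on such a $G^{(4)}_{c,t}$, a sequence-of-stopping-times construction modeled on Lemma~\ref{lem_stoplookback} is applied to the first deactivations after $\Zstop{(4,M_4^{(5)})}$ of nodes~4 and~5. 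The topological observation driving the argument is that in the broken-diamond graph, node~3 is a neighbor of node~5 but \emph{not} of node~4, while node~6 is a neighbor of node~4 but not of node~5, so that the backoff timers of nodes~3 and~6 are both suspended throughout the $M_4$-period, yet node~3's timer resumes the instant node~5 first deactivates in the prelimit, and symmetrically for node~6.

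The main obstacle will be to prove rigorously that the prelimit window during which exactly one of $\{4,5\}$ remains active while the other has entered backoff with empty queue is long enough, with probability tending to one as $R \to \infty$, for the resumed backoff timer of node~3 (respectively~6) to almost surely fire and capture the medium, initiating an $M_2$- (respectively $M_3$-) period. A Lemma~\ref{lemma_TotalBackOff}-style estimate for the near-isolated pair $\{4,5\}$ operating with vanishingly small queues provides the required cumulative asymmetric-active time on the fluid scale. During this window nodes~1 and~2 remain blocked by whichever of $\{4,5\}$ is still active and so cannot begin their own activation race before node~3 or node~6 has already captured the medium; this precludes any transition to $M_1$ and yields identity~(\ref{eqn_condprobM1M4}), while~(\ref{eqn_M4_M2orM3}) follows since the $M_2$- and $M_3$-entry events then exhaust the probability. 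Finally, the lift from closed $\mu$-continuity conditioning sets to arbitrary events in $\FZ{\Zstop{(4,M_4^{(5)})}}{\infty}$ proceeds via the same $\pi$-system and measure-equivalence argument used in deducing Corollary~\ref{cor_Fwk} from Lemma~\ref{lem_backclock}.
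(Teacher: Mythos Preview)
The paper does not supply a proof of Theorem~\ref{thm_switchM4}; in Subsection~C.II it explicitly says ``Here we will only state our results,'' so there is no formal argument to compare against. Your overall decomposition into the strict case $Q_5(\Zstop{(4,M_4^{(5)})})>0$ and the tie case $Q_5(\Zstop{(4,M_4^{(5)})})=0$ matches the informal description given in Section~\ref{brokendiamond}, case~4(c), and your second-subcase prelimit sketch (asymmetric-active window, blocking of nodes~1 and~2, Lemma~\ref{lemma_TotalBackOff}-style estimate, $\pi$-system lift) is the right shape.

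There is, however, a genuine gap in your first subcase. You invoke Corollary~\ref{cor_dominate} to conclude that node~6 immediately activates once node~5 is serving at full rate, but the inclusion goes the wrong way: $\mathcal{I}_5=\{1,2,3\}\subsetneq\{1,2,3,4\}=\mathcal{I}_6$, so node~6 is the \emph{dominating} queue and node~5 the dominated one. The corollary lets you conclude that if node~6 is active then node~5 drains, not the reverse. The obstruction is precisely node~4: although $Q_4=0$ on the fluid scale, in the prelimit node~4 has no active interferers (its interferers $\{1,2,6\}$ are all idle, nodes~1 and~2 being blocked by node~5) and so it repeatedly reactivates with a small queue, intermittently blocking node~6. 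To get node~6 to capture you therefore need exactly the same kind of prelimit argument you already outline for the second subcase: node~4 operating near empty is positive recurrent and backs off often enough that node~6's timer accumulates order-$\sqrt{R}$ opportunities in negligible fluid time. Once node~6 captures with positive queue, Lemma~\ref{lmma_NoBackOff} locks it in, node~4 is blocked, and you are in $M_3$. So the first subcase is not a direct application of existing lemmas but requires its own (simpler) version of the Total-Backoff/capture argument; Corollary~\ref{cor_dominate} alone does not suffice. A minor related point: node~5's continued full service in this subcase follows from Lemma~\ref{lmma_NoBackOff} (it is already active with positive queue), not from Lemma~\ref{lemma_NoDelay}, whose hypothesis would be circular here.
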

A similar result holds for $\Zstop{(5,M_4^{(4)})}$ stopping times.

This concludes Part~C.

%
%
%
%
%
%

\section{Fluid limit proofs: Part D}

In parts A-C we have established a) saw tooth properties  and some constraints on
those sample paths, b) what will occur at the end of a given $M_k$-period, $k=1, 2, 3, 4$ and c) that a natural
state will be entered in finite time before the network can empty almost surely. What has not been shown, is whether any $M_1$-period would
ensue at all. The purpose of this section is to show that $M_1$-periods will
occur $\mu$ a.s. following a natural state, provided $\rho$ is sufficiently
close to 1.

In fact, establishing this result is not strictly needed to prove instability. If there is a last
visit to queues 1 and 2 (which might occur when they are both empty), then these two queues
must grow linearly and therefore the fluid system is unstable. Nevertheless, we will show that an infinite sequence of $M_1$-periods will
occur $\mu$ almost surely and in strictly bounded time, following $T_N$ to enter a non-empty natural state,
%

%
%
Denote by $\Tpot: C[0,\infty] \rightarrow [0,\infty]$ as the first point of increase of either
$I_1, I_2$, as in Definition \ref{defn_poi}, following $T_N$. It is easily shown that $\Tpot$ is a
$\cF_{t^+}$ stopping time, and corresponds to the start of a positive $M_1$-period. Our main result is:
\begin{theorem}
There exists a $0 < T_V <\infty$ such that $\Tpot < T_N+T_V,$ $\mu~a.s$.\\
\end{theorem}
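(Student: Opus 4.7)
The plan is to argue by contradiction: fix a $T_V$ (to be determined) and suppose $\Tpot \geq T_N + T_V$, so no point of increase of $I_1$ or $I_2$ occurs in $[T_N, T_N + T_V]$. By the sawtooth property of Theorem~\ref{thm_linearpaths}, each node is either fully active or idle almost everywhere, and hence nodes~1 and~2 are never fully active on this interval. Invoking the switching rules established in Part~C (specifically Theorem~\ref{thm_switchM4} together with the general fact, proved there, that once a natural state has been reached an $M_4$-period can only be entered from an $M_1$-period), no $M_4$-period is initiated either. Thus at every time in $[T_N, T_N + T_V]$ the active configuration must correspond to either an $M_2$-period, an $M_3$-period, or the pathological ``scenario~(iii)'' described in Subsection~\ref{over}, in which no node is served at full rate.

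I would then use the Lyapunov function $L^{*}(t) := Q_3(t) + Q_6(t)$, which in the natural state equals $\max\{Q_3(t), Q_4(t)\} + \max\{Q_5(t), Q_6(t)\}$. A direct computation shows that $L^{*}$ decreases at rate exactly $1 - \rho_3 - \rho_6 > 0$ on every $M_2$- or $M_3$-subinterval (node~3 or node~6 drains at rate $1-\rho_3$ or $1-\rho_6$ while the other grows at rate $\rho_3$ or $\rho_6$); positivity of this rate follows from $\rho_3 + \rho_6 < \rho < 1$. Conditional on ruling out scenario~(iii), $L^{*}(t) \leq L^{*}(T_N) - (1 - \rho_3 - \rho_6)(t - T_N)$ on the whole interval, so for $T_V := L^{*}(T_N)/(1 - \rho_3 - \rho_6) + \eta$ with a small $\eta > 0$, there is a first time $t^{*} \leq T_N + T_V$ at which $L^{*}(t^{*}) = 0$. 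The natural-state inequalities then force $Q_3(t^{*}) = Q_4(t^{*}) = Q_5(t^{*}) = Q_6(t^{*}) = 0$. On the other hand, $Q_1(t^{*}) + Q_2(t^{*}) > 0$, because $\lambda_1 + \lambda_2 > 0$ (as $\max\{\kappa_1, \kappa_2\} > 0$) while nodes~1 and~2 receive no service throughout the interval. The no-idling property of Lemma~\ref{lemma_NoDelay}, applied to whichever of nodes~1 or~2 has positive queue, now forces that node to be fully active immediately after $t^{*}$, so $\Tpot \leq t^{*} \leq T_N + T_V$, contradicting our assumption. Finiteness and uniformity of $T_V$ across sample paths follow from the deterministic bound $L^{*}(T_N) \leq L(T_N) \leq L(0) + \rho\, T_N \leq 1 + \rho\, C$, where $C$ is the sample-path-uniform bound on $T_N$ supplied by Theorem~\ref{thm_natural}.

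The main obstacle is ruling out scenario~(iii), i.e.~the possibility that on some interval of positive fluid-scale length every nonempty queue in $\{3,4,5,6\}$ has derivative $\lambda_i$ while the empty queues are ``self-served'' at rate exactly $\rho_i$ by some stationary mix of the maximal independent sets $\{3,4\}, \{4,5\}, \{5,6\}$ inside $\{3,4,5,6\}$. The approach here is the one outlined informally in Subsection~\ref{over}: interfering empty and nonempty queues cannot coexist. Quantitatively, any such activity mix consistent with $\rho_3 \geq \rho_4$, $\rho_6 \geq \rho_5$ requires total interferer activity at most $\rho_3 + \rho_6$, leaving an idle fraction of at least $1 - \rho_3 - \rho_6 > \rho_0 > 0$ during which all interferers of nodes~1 and~2 are simultaneously silent. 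In the uniformized pre-limit of Appendix~A.I, every empty interferer enters back-off with probability tending to~1 per transmission (since the back-off probability is $1/(1+Q)^\gamma \to 1$ as $Q \to 0$), so an adaptation of the back-off-clock competition in Lemma~\ref{lem_backclock}, combined with the no-idling property, shows that within any sub-interval on which scenario~(iii) would prevail, node~1 or node~2 wins the race for the medium with probability tending to~1; in the fluid limit this produces a point of increase of $I_1$ or $I_2$, again contradicting $\Tpot \geq T_N + T_V$. The technical delicacy, flagged in Subsection~\ref{over}, is that this back-off-race argument exploits the specific structure of the broken-diamond graph rather than any generic topological bound, and completing it rigorously is where the bulk of the work lies.
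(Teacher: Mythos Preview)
Your overall strategy matches the paper's: a work-conservation argument handles the time until all of $Q_3,Q_4,Q_5,Q_6$ vanish, and a prelimit race argument then forces node~$1$ or~$2$ to activate. The paper organizes the first part as a case analysis on which queues are zero at $T_N$; your Lyapunov function $L^*=Q_3+Q_6$ is a cleaner route to the same endpoint. One minor omission: the switching rules of Part~C do not preclude an $M_4$-period already in progress at $T_N$, during which $L^*$ \emph{increases} at rate $\rho_3+\rho_6$, so your drift bound fails until that initial $M_4$ ends (in bounded time, shifting $T_V$ by a constant). Note also that in a natural state scenario~(iii) in fact forces $Q_3=Q_4=Q_5=Q_6=0$: if, say, $Q_3>0$, its only interferers among $\{3,\dots,6\}$ are $5$ and $6$, and neither can block node~$3$ throughout a positive interval without being fully active itself, contradicting~(iii). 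So your scenario~(iii) coincides exactly with the case the paper isolates.

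The substantive error is in your scenario-(iii) sketch: the inequality is reversed. Since nodes $3$ and $6$ interfere, the fraction of time that either is active equals $\rho_3+\rho_6$; hence $1-\rho_3-\rho_6$ is the fraction when \emph{neither $3$ nor $6$} is active, but during such times $4$ or $5$ (or both, via the feasible state $\{4,5\}$) may still be active. The all-idle fraction for $\{3,4,5,6\}$ is therefore \emph{at most} $1-\rho_3-\rho_6$, not at least, and the crude lower bound $1-\sum_{i=3}^6\rho_i$ is negative for $\rho$ near~$1$. What you actually need is only strict positivity of the all-idle fraction, which does hold because the activity chain on the $O(1)$-queue subnetwork is irreducible; but turning this into a uniform-in-$R$ rate requires a genuine prelimit argument. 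The paper supplies one via a ``control swap'' construction for nodes $4,5$ (Lemmas~\ref{lemma_victory} and~\ref{lemma_Meta}): within a finite expected number of swaps a backoff occurs with $Q_4=0$ or $Q_5=0$, giving nodes $1,2$ a fixed positive chance to win the medium, and any fixed number $M_\eta$ of swaps completes in $o(R)$ steps when all four queues start below $\delta R$. Your back-off-race idea points the same way but would need a comparable construction to become rigorous.
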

We first show that the issue of occurrence of an $M_1$-period arises only when there is at least one
zero queue. To see this, consider the network at time $T_N$, and, without loss of generality, suppose $T_N > 0$. If $Q_\ell(T_N) = q_\ell > 0,~\ell=1,\cdots,6$ then continuity implies
that this actually holds for some small interval $[T_N-\xi,T_N]$, depending on $\omega$, with
$\xi >0$. During any such interval, one of the periods $M_1,\cdots,M_4$ has to be active, with probability 1, as a consequence of Lemma \ref{lemma_NoDelay}. Moreover if the active period is not $M_1$, then one will follow  in bounded time $\Tpot < T_N+ T_S$, as a consequence of Theorems
\ref{thm_switchM0}, \ref{thm_switchM2} and \ref{thm_switchM4} and because the subnetwork determined
by nodes 3, 4, 5 ,6 is work conserving once a natural state is entered.
Next, consider the cases where a subset of queues $\{q_1,q_2,q_3,q_6\}$ are 0. Rather than deal with all every such cases, we will
consider just one. The approach for the remaining cases will then become apparent. We therefore focus on the case $q_3 = q_4 = q_5 = q_6 = 0$ with
$q_1 \vee q_2 > 0$ and show that $\Tpot$ is effectively $T_N$ on the fluid scale. Other cases are simpler to address.

For the above case, it will be sufficient to prove the following lemma,
\begin{lemma}
\label{lemma_ThereisM1}
Given $\epsilon > 0, \eta > 0$, and $Q^{(\ell)}_L <  Q^{(\ell)}_H, \ell =1,2$ with $ Q^{(1)}_L \vee Q^{(2)}_L > 0 $ there exists $\delta > 0$, such that,
$$
\liminf_{R \rightarrow \infty} \muR{ \Tpot > T_N + \epsilon; Q_\ell^R(T_N) < \delta, \ell =3, 4, 5, 6,
Q^R_\ell(T_N) \in (Q^{(\ell)}_L,Q^{(\ell)}_H), \ell = 1,2 } < \eta.
$$
\end{lemma}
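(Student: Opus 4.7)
The plan is to argue by contradiction and to exploit the fact that once the queues of nodes $\ell = 3, 4, 5, 6$ are small (on the fluid scale), these nodes must enter back-off sufficiently often to let nodes~1 or~2 win the competition for the medium. I would work primarily on the pre-limit jump chain and take the weak limit at the end via Theorem~2.1 of~\cite{Billingsley68}. Fix $\delta > 0$ small. On the event that $Q_\ell^R(T_N) < \delta R$ for $\ell=3,4,5,6$, the weak law of large numbers for Poisson arrivals and the upper bound of 2 on the aggregate instantaneous service rate for $\{3,4,5,6\}$ together imply that, with probability tending to $1$ as $R\to\infty$, the four queues remain bounded above by some $C_{\delta,\epsilon} R$ throughout $[T_N R, (T_N+\epsilon) R]$, with $C_{\delta,\epsilon}\to 0$ as $\delta,\epsilon \to 0$.

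Since $\psi_\ell(1) = 1$, each time one of these queues reaches~$1$ during a transmission the corresponding node is forced into back-off. A simple excursion analysis in the pre-limit shows that whenever node $\ell \in \{3,4,5,6\}$ is continuously being served, its queue must drain and trigger a back-off within at most $C_{\delta,\epsilon} R/(1-\max_\ell \lambda_\ell)$ pre-limit steps. Combined with the observation that only the schedules $\{3,4\}$, $\{4,5\}$, $\{5,6\}$ are maximal within the subsystem induced by $\{3,4,5,6\}$ (so at most two of these nodes are active simultaneously), one can show that the pre-limit jump chain must enter the configuration ``all of nodes~3, 4, 5, 6 in back-off'' at least $K_R$ times during $[T_N R,(T_N+\epsilon) R]$, with $K_R \to \infty$ as $R\to\infty$.

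Finally, I would apply the same competition argument as in Lemma~\ref{lem_backclock}. In any epoch during which all four of nodes~3, 4, 5, 6 are in back-off with non-empty queues, the assumption $f_i \equiv 1$ together with the memoryless property of the exponential back-off timers guarantees that the conditional probability that the next node to activate is either node~1 or node~2 is at least $2/6 = 1/3$ (the denominator reflects the at most six competing timers for nodes with nonzero queues). Consequently, the probability that none of the $K_R$ all-idle epochs produces an activation of node~1 or~2 is bounded by $(2/3)^{K_R} \to 0$. Together with the bound on how far $Q^R_1, Q^R_2$ can drift (so that they remain in $(Q^{(\ell)}_L, Q^{(\ell)}_H)$ long enough for the above argument to apply), this contradicts $\Tpot > T_N + \epsilon$ and yields the lemma after choosing $\delta$ small and $R$ large.

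The main obstacle is the $K_R \to \infty$ claim: unlike Lemma~\ref{lemma_TotalBackOff}, where only two non-interfering nodes operated in isolation, here four interacting nodes must be shown to simultaneously enter back-off many times. The cleanest way forward is a regenerative argument: starting from any configuration with at least one of $Q^R_3,Q^R_4,Q^R_5,Q^R_6$ nonempty, the bounded-queue analysis above shows that within $O(R)$ pre-limit steps one of the currently serving nodes must enter back-off; iterating this argument, while carefully accounting for blocking between the four subsystem nodes and for the intermediate cycles in which individual nodes alternate between active and idle, gives the required lower bound on the number of all-idle epochs without invoking any delicate stability statement about the subsystem itself.
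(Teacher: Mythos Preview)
Your high-level intuition matches the paper's: if the queues of nodes $3,4,5,6$ are small on the fluid scale, they must back off often enough that nodes~$1$ or~$2$ eventually capture the medium. However, your concrete mechanism --- counting epochs at which \emph{all four} of nodes $3,4,5,6$ are simultaneously in back-off --- differs from the paper's, and this is precisely where your argument has a real gap.

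The step you flag as the main obstacle, namely that there are $K_R\to\infty$ such all-idle epochs in $[T_N R,(T_N+\epsilon)R]$, is not established by your regenerative sketch. Knowing that each individual node drains and backs off within $O(C_{\delta,\epsilon} R)$ steps does not imply that the four nodes are ever simultaneously idle: because nodes~$4$ and~$5$ do not interfere, the active set can cycle through $\{3,4\}\to\{4\}\to\{4,5\}\to\{5\}\to\{5,6\}\to\{6\}\to\{5,6\}\to\cdots$ without ever becoming empty, provided the queue of the ``outer'' node (here~$6$) has built up to order~$R$ so that it almost never backs off before the inner node re-activates. Since $\epsilon$ is fixed in the lemma and only $\delta$ is at your disposal, the bound $C_{\delta,\epsilon}$ cannot be made small (it is at least of order $\epsilon$), so such build-up is a genuine possibility that your iteration does not rule out. (Incidentally, you only need a fixed number $K_\eta$ of all-idle epochs with $(2/3)^{K_\eta}<\eta$, not $K_R\to\infty$; but even that weaker claim needs a structural argument.)

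The paper's sketch supplies exactly this missing structure through the notion of \emph{control swaps} between nodes~$4$ and~$5$. Rather than waiting for the four-node subsystem to go fully idle by chance, the paper tracks which of nodes~$4$,~$5$ currently ``controls'' the medium and shows two things: (i) within a uniformly bounded number of control swaps there is, with probability bounded below by some fixed $\epsilon>0$, an instant at which node~$3$ backs off with $Q_4=0$ (or symmetrically node~$6$ with $Q_5=0$), which forces a genuine all-idle moment and gives nodes~$1,2$ a shot at the medium (Lemma~\ref{lemma_victory}); and (ii) for any fixed target number $M_\eta$ of control swaps, one can choose $\delta$ small enough that those $M_\eta$ swaps complete within $\epsilon R$ jump-chain steps with probability at least $1-\eta$ (Lemma~\ref{lemma_Meta}). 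The second part works because only finitely many swaps are needed, so the geometric growth of the queue bounds across successive swaps can be absorbed by shrinking~$\delta$. Your argument, as written, lacks an analogue of~(i): you would need to exhibit a concrete event, recurring within a bounded number of structural transitions, at which the subsystem is provably empty --- and that is precisely what the control-swap device provides.
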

It then follows that $\muu{\Tpot > T_N + \epsilon;~Q_\ell(T_N) = 0, \ell=3, 4, 5, 6; Q_\ell \in
(Q^{(\ell)}_L,Q^{(\ell)}_H), \ell = 1,2 } = 0$ by Theorem 2.1 of \cite{Billingsley68}. This implies our
result since $\epsilon >0$ was arbitrary.

The proof of Lemma \ref{lemma_ThereisM1} is strongly specific to our choice of network and so we only provide a sketch of the proof, which relies on the following definition.
Consider the sequence of queue activations in the
network and in particular node 4 and 5 activations.
\begin{definition}
A control step, at index $k$ of the jump chain, is $C_{J,k}
\in \lc 4,5 \rc$. At index 0, it is 4, unless only node 5 is active, in
which case it is 5. At any subsequent index $k\geq 1$, the control step is determined as follows: (i) if both nodes 4 and 5 are active at index $k$, $C_{J,k}$ is the node that has become active first, (ii) If only one is active, $C_{J,k}$ is the active node, (iii) if neither, then $C_{J,k}$  is the last node that was active.

A control swap to control step 5 occurs at step $k$ if $C_{J,k-1}=4$ and $C_{J,k} = 5$, and vice versa for a control swap to control step 4. These events are denoted as
$4 \rightarrow 5$ and $5 \rightarrow 4$. The step at which the $r$-th control swap takes place is a (discrete) stopping time $\sW_r$.
\end{definition}

For the subnetwork of nodes 3, 4, 5, 6 in isolation, it is readily shown that the probability is
1 that an infinite number of control swaps occur, $\sW_r < \infty$,
$  \sW_1 < \sW_2 < \ldots$, with corresponding filtration $\lc {\cal G}_r \rc_{r \in \NN}$.

Next let $S$ be the stopping time until one of nodes 1 and 2 gain the medium (this event is blocked
since we are considering the subnetwork in isolation). Using the properties of our network it
can then be shown that,
\begin{lemma}
\label{lemma_victory}
$\exists \epsilon > 0$ and $N_S \in \NN$ such that $\forall r \in \NN$, $ \prob{ S < r + N_S~|~{\cal G}_r} > \epsilon$.\\
\end{lemma}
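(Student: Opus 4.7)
The plan is to apply the strong Markov property of the uniformized jump chain at the control-swap stopping time $\sW_r$: conditional on $\mathcal{G}_r$, the residual chain evolves from the state at $\sW_r$, so it suffices to establish a uniform lower bound on the probability of reaching the event $S$ within a bounded number of subsequent control swaps.

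First, I would classify the post-swap activity configurations. By the definition of the control step, a $4 \to 5$ swap forces the activity profile immediately after the swap to be $(u_1,\dots,u_6) = (0,0,0,0,1,0)$, since for node 5 to be active its interferers (nodes 1, 2, 3) must be idle, as must node 6 (which also interferes with node 5). A $5 \to 4$ swap yields the symmetric profile $(0,0,0,1,0,0)$. Hence the post-swap activity belongs to a small finite list, and only the queue lengths remain as free parameters of the post-swap state. A $4 \to 5$ swap can arise from either of two microscopic events: (i) both nodes 4 and 5 were active with 4 holding priority and node 4 then backs off at $\sW_r$, or (ii) neither of nodes 4, 5 was active at $\sW_r-1$ (in which case, by the interference relations of node 5, nodes 3, 4, 5, 6 were all simultaneously idle) and node 5 then activates. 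In case (ii), $\sW_r$ is essentially coincident with a would-be $S$-event, so $S$ is within one jump step with positive probability controlled by the activation rates of nodes 1 and 2 against those of 3, 4, 5, 6.

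Second, for case (i), I would exhibit a positive-probability event between $\sW_r$ and the next control swap during which all of nodes 3, 4, 5, 6 are simultaneously in backoff for at least one jump step, at which moment the superimposed clock for node 1 or 2 fires with further positive probability and the event $S$ occurs. Starting from $(0,0,0,0,1,0)$, the next $5 \to 4$ swap necessarily involves node 5 backing off at some intermediate step (either node 5 backs off first and then node 4 activates, or node 4 activates first and then node 5 backs off); arranging that node 4 happens to be in backoff at that same moment, and then invoking Lemma \ref{lemma_conflict} and Corollary \ref{cor_dominate} to exclude nodes 3 and 6 from being active, yields a brief window when the subnetwork is fully idle. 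Enumerating the few microscopic transition sequences of the next cycle gives a per-cycle lower bound $p > 0$ on the probability that $S$ occurs before the $(r{+}1)$-th swap, and iterating over $N_S$ cycles makes the probability of no $S$-occurrence at most $(1-p)^{N_S}$, so that $N_S \geq \log(1-\epsilon)/\log(1-p)$ suffices.

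The main obstacle is establishing the per-cycle bound $p$ uniformly in the queue lengths at $\sW_r$, since the backoff probability $g_\ell(Q_\ell) = \oo(Q_\ell^{-\gamma})$ vanishes as the queue grows, so individual step probabilities are not bounded away from zero. However, a control swap already conditions on a backoff or activation event having just occurred, and the key point is that the resulting Markov chain of post-swap states is irreducible on a small set in the sense of Markov chain theory: once the chain enters a post-swap configuration, it returns to a neighboring post-swap configuration within one cycle, and one can exhibit an accompanying explicit path using the available backoff rates combined with a draining phase of $O(Q_5)$ service completions before the target backoff, during which Lemma \ref{lemma_noearlybackoff} guarantees no premature backoff with probability bounded uniformly away from~$1$. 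Reconciling the unbounded queue with the bounded number of swaps $N_S$ is the delicate step, and is where the specific small cardinality and the interference structure of the broken-diamond subnetwork (in particular, that only nodes 4 and 5 are non-dominated within the subnetwork) are essential to avoid having to bound backoff probabilities for nodes 3 or 6.
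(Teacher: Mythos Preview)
Your proposal rests on a mistaken reading of the interference graph: you assert that node~6 interferes with node~5, but in the broken-diamond network nodes~5 and~6 lie in the same component $M_3$ and do \emph{not} interfere (in fact $\mathcal{I}_5 = \{1,2,3\}$ and $\mathcal{I}_6 = \{1,2,3,4\}$; your remark that ``only nodes 4 and 5 are non-dominated'' is likewise reversed). This undoes your classification of post-swap configurations. In your case~(ii) (neither 4 nor 5 active at $\sW_r-1$), nodes 3,\,4,\,5 are indeed idle just before the swap, but node~6 can be active with an arbitrarily large queue, so the claimed all-idle window need not exist. Even in case~(i), where node~6 is blocked by node~4 up to the swap, node~6 may activate immediately after the swap, and if $Q_6$ is large the probability that it backs off within one swap cycle is of order $Q_6^{-\gamma}$, destroying any uniform per-cycle bound. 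Your invocation of Lemma~\ref{lemma_noearlybackoff} controls only the backoffs that \emph{define} a swap (those of nodes~4 and~5); it says nothing about nodes~3 and~6, whose possibly long queues are exactly what obstructs a uniform bound, and Corollary~\ref{cor_dominate} is a fluid-limit statement that does not transfer to the jump chain.

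The paper's route differs in a structural way: instead of seeking an all-idle window directly, it targets the event ``node~3 has a backoff with $Q_4 = 0$'' (or the symmetric ``node~6 has a backoff with $Q_5 = 0$''). While node~3 is active it already blocks both~5 and~6; hence at node~3's backoff moment those two are automatically idle, and the additional requirement $Q_4 = 0$ (so node~4 is idle as well) produces the full window without ever needing to force a large-queue dominating node to back off on demand. The paper's sketch argues that this specific event occurs with probability at least $\epsilon$ within a bounded number of control swaps after any $4 \to 5$ swap, using that a swap delivers the subordinate node (4 or~5) with queue near zero, which sets up the $Q_4 = 0$ (or $Q_5 = 0$) condition for the subsequent backoff of the dominating partner. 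The paper supplies no further detail, so a complete argument is still work to be done, but the key idea missing from your plan is to track backoffs of the \emph{dominating} nodes~3 and~6 rather than of the swap-defining nodes~4 and~5.
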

The proof relies on showing that once a control swap has taken place say $4 \rightarrow 5$
then within a bounded number of additional control swaps either it will occur that node 3 has a backoff with $Q_4 = 0$ or node 6 has a backoff with $Q_5$ = 0 and with probability at least
$\epsilon > 0$.

The result of Lemma \ref{lemma_victory} implies that $m_S = \expect{S} < \infty$ and actually that
$\prob{ S > rN_S} < \lb 1 - \epsilon \rb^r$, see \cite{Williams91} for example. That is
nodes 1 and 2 will gain the medium within a number of control swaps which has finite expectation.
It then follows from Markov's inequality, that given any $\eta > 0$, there is a number of control swaps
$M_\eta$ such that $\prob{ S > M_\eta} < \eta$.

Given the above, in proving Lemma \ref{lemma_ThereisM1} it will be enough to show
the following lemma.
\begin{lemma}
\label{lemma_Meta}
Suppose that given any $\delta_1 > 0$, the sequence of initial conditions (starting from 0)
for the jump chain  of the subnetwork, determined by nodes $3, 4, 5, 6$, satisfies
$Q_\ell(0) < \delta_1 R$, for $R$ sufficiently large.

Furthermore, given $\epsilon > 0, \eta > 0$ and any fixed number of control swaps $M_\eta \in \NN$, let
$X_{M_\eta}$ be the total number of steps to complete $M_\eta$ control swaps. Then $\exists \delta > 0$
such that for all $R$ sufficiently large
$$
\prob{ X_{M_\eta} > \ceil{R \epsilon} |~Q_\ell(0) \leq \delta R, \ell=3,\cdots,6 } < \eta.\\
$$
\end{lemma}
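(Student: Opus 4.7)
The plan is to split the window of $\lceil R\epsilon \rceil$ jump-chain steps into a draining phase of length $\lceil R\epsilon/2\rceil$ and a small-queue phase of the same length, choosing $\delta > 0$ sufficiently small that both phases behave favorably with high probability. The draining phase is where the initial queue content $\leq 4\delta R$ is served off, while the small-queue phase is where control swaps accumulate.

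In the draining phase, I would show that with probability at least $1 - \eta/2$ every queue $Q_\ell$, $\ell \in \{3,4,5,6\}$, visits zero at least once. Starting from $Q_\ell(0) \leq \delta R$, the initial backoff competition completes in $O(1)$ jump-chain steps, after which one of the schedules $\{3,4\}$ or $\{5,6\}$ captures the medium. By Lemma \ref{lemma_noearlybackoff}, it serves its queues to completion without interruption (with probability at least $1 - \varepsilon_{\lfloor \delta R \rfloor}$, which tends to $1$ as $R \to \infty$ because $\sum_r (1+r)^{-\gamma}$ converges), taking $\Theta(\delta R)$ jump-chain steps; Lemma \ref{lemma_NoDelay} then guarantees the other schedule captures the medium within $O(1)$ additional steps and drains its queues similarly. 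Choosing $\delta$ proportional to $\epsilon$ with a small enough constant of proportionality ensures the total drain time fits inside $\lceil R\epsilon/2\rceil$.

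In the small-queue phase, the subnetwork's activity process becomes ergodic with $\Theta(1)$ rate of backoff and activation transitions per jump-chain step, since queue lengths are only $O(1)$ and hence backoff probabilities $1/(1+Q)^\gamma$ are bounded below by a positive constant. A coupling to a finite-state ergodic Markov chain on the activity configurations of $\{3,4,5,6\}$ shows that control swaps (changes of $C_J$) occur as a renewal process with $O(1)$ inter-arrival times, so that the number of control swaps in $\lceil R\epsilon/2\rceil$ steps is $\Theta(R)$ with probability at least $1 - \eta/2$ by a Chernoff bound. In particular, at least $M_\eta$ swaps occur, and a union bound over the two failure events yields the stated probability bound.

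The main obstacle will be ensuring that the queues remain $O(1)$ throughout the small-queue phase rather than growing back to $\Theta(R)$ under the Poisson arrivals. This requires a Foster--Lyapunov or coupling argument showing that the Lyapunov function $V = \sum_{\ell = 3}^{6} Q_\ell$ has uniformly negative expected drift when it exceeds a constant threshold, in the subnetwork restricted to the event $\{k < S\}$ so that nodes 1 and 2 cannot acquire the medium. Since the two schedules $\{3,4\}$ and $\{5,6\}$ together cover all four nodes and are each active for a positive fraction of the time in the small-queue regime, the argument should go through, but it requires a delicate bootstrap: one first assumes queue boundedness on a short sub-horizon, deduces a fast mixing rate for the activity chain, and then propagates boundedness forward over the full $\lceil R\epsilon/2\rceil$ steps, all while tracking the dependence of backoff probabilities on the fluctuating queue lengths.
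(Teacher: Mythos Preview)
Your two-phase decomposition differs from the paper's construction and introduces a difficulty the paper sidesteps entirely. The paper never drives the queues to an $O(1)$ regime. Instead it partitions the window into $M_\eta$ consecutive sub-intervals and shows that in each sub-interval a single control swap occurs with high probability, in a number of jump-chain steps proportional to the total queue content at the start of that sub-interval. Since between swaps the queues can only grow by arrivals, the content at the start of interval $j$ is at most $c^{\,j}\delta R$ for a constant $c$ depending on the $\lambda_\ell$'s, and the total time for all $M_\eta$ swaps is $O(c^{\,M_\eta}\delta R)$. One then picks $\delta$ small enough, as a function of $M_\eta$ and $\epsilon$, that this falls below $\lceil R\epsilon\rceil$. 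No positive-recurrence or boundedness argument is needed; queues are allowed to grow geometrically across the $M_\eta$ intervals.

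Your scheme has a real gap beyond the obstacle you already flag. After your draining phase the four queues are \emph{not} simultaneously $O(1)$: while $\{3,4\}$ is being served to zero, $\{5,6\}$ accumulates $\Theta(\delta R)$ packets, and conversely; so the entry state for your small-queue phase still carries $\Theta(\delta R)$ content and the claim that backoff probabilities are bounded below by a positive constant is false at that moment. Reaching a genuine $O(1)$ regime would require the geometric-contraction argument you hint at (content shrinks by roughly $\lambda_3\lambda_6/((1-\lambda_3)(1-\lambda_6))<1$ per round-trip between $\{3,4\}$ and $\{5,6\}$), together with control over visits to $\{4,5\}$. But all of this is unnecessary: you need only $M_\eta$ swaps, a fixed number independent of $R$, not $\Theta(R)$ swaps, so there is no reason to wait for the queues to become small in absolute terms. (A minor point: Lemma~\ref{lemma_NoDelay} is a fluid-limit statement and cannot be invoked directly for the pre-limit jump chain as you do in the draining phase.)
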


The result of the lemma clearly relies on the supposed initial conditions. It can
be demonstrated via a construction. The construction works by determining
$M_\eta$ intervals so that the probability of a control swap in each is close to 1
and so the total queue length at the start of each interval is small based on the arrivals
which might have taken place during previous intervals. The proof is then completed by showing that given any control step at the start of an interval, say 5, a swap $5 \rightarrow 4$ will occur with high probability in a number of steps in
proportion to the initial condition for the interval. Since only $M_\eta$
such intervals are required, a $\delta$ can be obtained accordingly.


Other cases are dealt with similarly but there is some positive but
bounded delay before an $M_1$-period occurs. For example if $q_1= q_2 = q_6 = q_5 = 0$ and
$q_3 > 0$, then one shows that a $M_2$-period occurs in negligible fluid time.
Thus in all the cases, we may can show that $M_1$-periods occur within bounded fluid time following a natural state. This concludes Part~D.

\section{Additional proofs}

\subsection*{E.I. Proof of Lemma~\ref{auxi1}}

The proof relies on basic sample path properties of the fluid limit
process $\{Q(t)\}$ as described in Subsection~\ref{desc}.
First of all, the $M_1$-period that initiates the $i$-th cycle ends
at time $t_i + T_{i1}$, with
\[
T_{i1} = \max\left\{\frac{Q_1(t_i)}{1 - \rho_1},
\frac{Q_2(t_i)}{1 - \rho_2}\right\} \leq
\frac{\max\{Q_1(t_i), Q_2(t_i)\}}{1 - \rho_0} \leq
\frac{\Load(t_i)}{1 - \rho_0}.
\]

Define $K(t) = \max\{Q_3(t), Q_4(t)\} + \max\{Q_5(t), Q_6(t)\}$
and recall that $\rho = \rho_0+\rho_3+\rho_6$.
Then
\ben
K(t_i + T_{i1}) & \leq & K(t_i) + (\rho_3 + \rho_6) T_{i1} \\
&\leq & \Load(t_i) - \max\{Q_1(t_i), Q_2(t_i)\} + (\rho_3 + \rho_6)
\frac{\max\{Q_1(t_i), Q_2(t_i)\}}{1 - \rho_0}\\
&=& \Load(t_i) - \frac{(1 - \rho) \max\{Q_1(t_i), Q_2(t_i)\}}{1 - \rho_0} \\
&=& \Load(t_i) - (1 - \rho) T_{i1},
\een
which may also be seen from the fact that $\Load(t)$ decreases at
a rate $1 - \rho$ or larger during the time interval
$[t_i, t_i + T_{i1}]$ and $K(t_i + T_{i1}) = \Load(t_i + T_{i1})$
since $Q_1(t_i + T_{i1}) = Q_2(t_i + T_{i1}) = 0$.

Define $T_0 = \frac{K(t_i + T_{i1})}{1 - \rho_3 - \rho_6}$.
We distinguish between two cases, depending on whether an
$M_4$-period starts before time $t_i + T_{i1} + T_0$ or not.

If no $M_4$-period occurs before time $t_i + T_{i1} + T_0$, then
$K(t)$ decreases at a rate $1 - \rho_3 - \rho_6$ or larger for all
$t \in [t_i + T_{i1}, t_i + T_{i1} + T_0]$ and reaches zero no later
than time $t_i + T_{i1} + T_0$, unless an $M_1$-period intervenes.
This implies that the next $M_1$-period must start no later than
time $t_i + T_{i1} + T_0$.

Using the above results, a simple calculation shows that
\[
\Delta t_i \leq T_{i1} + T_0  \leq
T_{i1} + \frac{\Load(t_i) - (1 - \rho) T_{i1}}{1 - \rho_3 - \rho_6}
 \leq
\frac{\Load(t_i)}{(1 - \rho_0) (1 - \rho_3 - \rho_6)} \leq C_T \Load(t_i).
\]
Also, $\Load(t)$ has continuously decreased during the cycle,
so $\Load(t_{i+1})-\Load(t_i) \leq 0$.

Now suppose that an $M_4$-period does start at some time
$t_0 \in [t_i + T_{i1}, t_i + T_{i1} + T_0]$, and ends at time~$u_0$.

Since $K(t)$ decreases at a rate $1 - \rho_3 - \rho_6$ or larger
during the time interval $[t_i + T_{i1}, t_0]$, it follows that
\[
K(t_0) \leq K(t_i + T_{i1}) - (1 - \rho_3 - \rho_6) (t_0 - t_i - T_{i1}).
\]

Noting that $Q_4(t_0), Q_5(t_0) \leq K(t_0)$, we conclude that the
duration of the $M_4$-period is no longer than
\[
u_0 - t_0 \leq
\min\left\{\frac{Q_4(t_0)}{1 - \rho_4}, \frac{Q_5(t_0)}{1 - \rho_5}\right\} \leq
\frac{K(t_0)}{1 - \max\{\rho_4, \rho_5\}}.
\]

Since $K(t)$ increases at a rate no larger than $\rho_3 + \rho_6$
during the time interval $[t_0, u_0]$, it follows that
\[
K(u_0) \leq K(t_0) + (\rho_3 + \rho_6) (u_0 - t_0).
\]

The $M_4$-period will cause the queue of node~4 to empty at some
point and become smaller than the queue of node~3, and likewise the
queue of node~5 must empty at some point and become smaller than
the queue of node~6.
Because $M_4$-periods can no longer be initiated from $M_2$ and $M_3$,
$K(t)$ decreases at a rate $1 - \rho_3 - \rho_6$ or larger from
time~$u_0$ onward, and reaches zero no later than time
$u_0 + \frac{K(u_0)}{1 - \rho_3 - \rho_6}$, unless an $M_1$-period
intervenes.
This implies that the next $M_1$-period must start no later than
time $u_0 + \frac{K(u_0)}{1 - \rho_3 - \rho_6}$.

Combining the above results, we obtain
\ben
\Delta t_i &\leq &u_0 + \frac{K(u_0)}{1 - \rho_3 - \rho_6} - t_i =
T_{i1} + (t_0 - t_i - T_{i1}) + (u_0 - t_0) +
\frac{K(u_0)}{1 - \rho_3 - \rho_6} \\
&\leq& T_{i1} + (t_0 - t_i - T_{i1}) + \frac{K(t_0) + (u_0 - t_0)}{1 - \rho_3 - \rho_6} \\
&\leq& T_{i1} + (t_0 - t_i - T_{i1}) +
\left(1 + \frac{1}{1 - \max\{\rho_4, \rho_5\}}\right)
\frac{K(t_0)}{1 - \rho_3 - \rho_6} \\
&\leq & T_{i1} - \frac{t_0 - t_i - T_{i1}}{1 - \max\{\rho_4, \rho_5\}} +
\frac{(2 - \max\{\rho_4, \rho_5\}) K(t_i + T_{i1})}
{(1 - \max\{\rho_4, \rho_5\}) (1 - \rho_3 - \rho_6)} \\
&\leq& T_{i1} + \frac{(2 - \max\{\rho_4, \rho_5\}) (\Load(t_i) - (1 - \rho) T_{i1})}
{(1 - \max\{\rho_4, \rho_5\}) (1 - \rho_3 - \rho_6)} \\
&\leq& \frac{(2 - \max\{\rho_4, \rho_5\}) \Load(t_i) +
\rho_0 (1 - \max\{\rho_4, \rho_5\}) T_{i1})}
{(1 - \max\{\rho_4, \rho_5\}) (1 - \rho_3 - \rho_6)} \\
&=& \frac{\Load(t_i)}{(1 - \max\{\rho_4, \rho_5\}) (1 - \rho_3 - \rho_6)} +
\frac{\Load(t_i) + \rho_0 T_{i1}}{1 - \rho_3 - \rho_6} \\
& \leq & \frac{\Load(t_i)}{(1 - \max\{\rho_4, \rho_5\}) (1 - \rho_3 - \rho_6)} +
\frac{\Load(t_i)}{(1 - \rho_0) (1 - \rho_3 - \rho_6)} \\
&=& \frac{\Load(t_i)}{1 - \rho_3 - \rho_6}
\left(\frac{1}{1 - \rho_0} + \frac{1}{1 - \max\{\rho_4, \rho_5\}}\right) =
C_T \Load(t_i).
\een
Also, $\Load(t)$ has only increased during the $M_4$-period at a rate
no larger than $\rho=\rho_0+\rho_3 + \rho_6$, so
\[
\Load(t_{i+1})-\Load(t_i) \leq \rho (u_0 - t_0) \leq
\frac{\rho K(t_0)}{1 - \max\{\rho_4, \rho_5\}} \leq
\frac{\rho \Load(t_i)}{1 - \max\{\rho_4, \rho_5\}} =
C_{\Load} \Load(t_i).
\]

\subsection*{E.II. Proof of Lemma~\ref{auxi2}}

Denote by $t_1$ and $t_2$ the times that the cycles start and by
$u_1$ and $u_2$ the times that the $M_1$-periods end.
First assume $\max\{Q_1(t_1), Q_2(t_1)\} \leq \epsilon \Load(t_1)$.
Then, $\max\{Q_3(t_1), Q_4(t_1)\} + \max\{Q_5(t_1), Q_6(t_1)\} \geq
(1 - 2 \epsilon) \Load(t_1)$, so we must have
$\max\{Q_3(t_1), Q_4(t_1)\} \geq (1 - 2 \epsilon) \Load(t_1) / 2$ or
$\max\{Q_5(t_1), Q_6(t_1)\} \geq (1 - 2 \epsilon) \Load(t_1) / 2$.
In the former scenario, with probability 3/8 the $M_1$-period is
followed by an $M_2$-period, which will last for an amount of time
no less than
$\max\{\frac{Q_3(t_1)}{1 - \rho_3}, \frac{Q_4(t_1)}{1 - \rho_4}\} \geq
\frac{\max\{Q_3(t_1), Q_4(t_1)\}}{1 - \rho_4} \geq
\frac{(1 - 2 \epsilon) \Load(t_1)}{2 (1 - \rho_4)}$.
Likewise, in the latter scenario, with probability 3/8 the
$M_1$-period is followed by an $M_3$-period, which will last for
an amount of time no less than
$\max\{\frac{Q_5(t_1)}{1 - \rho_5}, \frac{Q_6(t_1)}{1 - \rho_6}\} \geq
\frac{\max\{Q_5(t_1), Q_6(t_1)\}}{1 - \rho_5} \geq
\frac{(1 - 2 \epsilon) \Load(t_1)}{2 (1 - \rho_5)}$.
Thus, in either scenario, with probability at least 3/8, the time
until the start of the next cycle is at least
$\frac{(1 - 2 \epsilon) \Load(t_1)}{2 (1 - \min\{\rho_4, \rho_5\})}$,
so that
\[
\max\{Q_1(t_2), Q_2(t_2)\} \geq Q_2(t_2) \geq
\frac{\rho_2 (1 - 2 \epsilon) \Load(t_1)}{2 (1 - \min\{\rho_4, \rho_5\})}.
\]
Invoking the fact that $\Load(t_2) \leq C_{\Load} \Load(t_1)$,
with $C_{\Load}$ as defined in the previous lemma, we find that
\[
\max\{Q_1(t_2), Q_2(t_2)\} \geq \epsilon \Load(t_2),
\]
with $\epsilon$ as specified in the statement of the lemma.

Now consider a cycle with
$\max\{Q_1(t_k), Q_2(t_k)\} \geq \epsilon \Load(t_k)$, $k = 1, 2$.
Then
\[
Q_i(u_k) =
Q_i(t_k) + \rho_i \frac{\max\{Q_1(t_k), Q_2(t_k)\}}{1 - \rho_2},
\mbox{ for } i = 3, 4, 5, 6.
\]
Note that $0 \leq Q_i(t_k) \leq (1-\epsilon) \Load(t_k)$, $i = 3, 4, 5, 6$,
and $\epsilon \Load(t_k) \leq \max\{Q_1(t_k), Q_2(t_k)\} \leq \Load(t_k)$.
Then it is easily verified that the queues are weakly balanced at
time~$u_k$ with $\beta^{\min}$ and $\beta^{\max}$ as given in the
statement of lemma.

\subsection*{E.III. Proof of Theorem~\ref{main}}

Let $(U(n),\Qorig(n))$ denote the jump chain obtained from the
continuous-time Markov process by uniformization according to
a Poisson clock of rate~$\beta$ as described in Appendix~A.I.
In order to prove Theorem~\ref{main} for the original stochastic process,
it suffices to establish a similar result for the jump chain:
\be
\lim _{\|\Qorig(0)\| \to \infty}
\PP_{\Qorig(0)} \{\liminf_n \|\Qorig(n)\| = \infty\} = 1.
\label{jumpunstable1}
\ee

In order to apply Theorem~\ref{meyn}, consider the function
$W(x) = \expect{\WW|\Qorig(0)=x}$, where the random variable $\WW$ is
defined as
\[
\WW :=
\sum_{n=0}^{\|\Qorig(0)\| T} [1 + \|\Qorig(0)\| + a \|\Qorig(n)\|]^{- m}
\]
for some positive constants~$a$ and~$T$ to be determined later and $m >1$.
Note that, with minor abuse of notation, $W(\Qorig(0)=x, U(0)=u)=W(x)$,
i.e., $W$ only depends on the queue and not on the activity vector.
The function $W(x)$ may be interpreted as the following approximation to
a Lyapunov function for the fluid limit process
\be
\|x\|^{m-1}W(x) \approx
\expectx{\int_0^T (1+a \|Q_{\hat{x}}(t/\beta)\|)^{-m} \dd t} = V(Q_{\hat{x}}(t)),
\label{eq:approx}
\ee
with equality when $\|x\| \to \infty$, and $\hat{x}=\frac{x}{\|x\|}$
is the initial state of the fluid limit process.
Then it follows from the instability of the fluid limit process that
we can choose $a$ and $T$ large enough such that
$V(Q_{\hat{x}}(t+r))< V(Q_{\hat{x}}(t))$ for any $r>0$ and any
initial state $\hat{x}$.
This implies that
\ben
\|x\|^m \expect{W(\Qorig(n+1))-W(\Qorig(n))|\FF_n} \leq - \mbox{constant}
\een
when $x=\Qorig(n)$ and $\|x\|$ is sufficiently large.
Thus, we can apply Theorem~\ref{meyn}.

The detailed arguments may be described as follows.
First of all, note that
$$\expect{W(X(1)) - W(X(0))|\Qorig(0)=x, U(0)=u} =\expect{\theta^1 \WW - \WW|\Qorig(0)=x},$$
where $\theta^1$ is the usual
backward shift operator on the sample path space~\cite{Meyn95}.
We write $\theta^1 \WW - \WW = A+B+C$, where
\ben
A = -[1+\|\Qorig(0)\|+a\|\Qorig(0)\|]^{-m},
\een
\ben
B = \sum_{n=1}^{\|\Qorig(0)\| T}
\left\{[1 + \|\Qorig(1)\| + a \|\Qorig(n)\|]^{- m}\right. \\
- \left.[1 + \|\Qorig(0)\| + a \|\Qorig(n)\|]^{- m}\right\},
\een
and
\ben
C = \sum_{n=\|\Qorig(0)\|T+1}^{\|\Qorig(1)\|T}
[1 + \|\Qorig(1)\| + a\|\Qorig(n)\|]^{- m}.
\een
The term~`$A$' provides the negative drift and the other terms can
be bounded as follows.
Using the fact that $\|\Qorig(1)\| \geq \|\Qorig(0)\| - 1$,
and noting that $[\cdot]^{- m}$ is a convex decreasing function, we have
\be
B \leq
\sum_{n=1}^{\|\Qorig(0)\| T} m [\|\Qorig(0)\| + a \|\Qorig(n)\|]^{- m - 1}.
\ee
Multiplying both sides by $\|\Qorig(0)\|^m$, we see that
\be
\|\Qorig(0)\|^m B \leq \frac{m}{\|\Qorig(0)\|} \sum_{n=1}^{\|\Qorig(0)\| T}
\left(1 + a \frac{\|\Qorig(n)\|}{\|\Qorig(0)\|}\right)^{- m - 1}
\label{form1}
\ee
Let $\Qorig(0) = x$ and $\hat{x} := x/\|x\|$.
For any $x$, the random variable in the right-hand-side (RHS)
of~(\ref{form1}) is bounded by $m T$, and hence
\ben
\limsup_{\|x\| \to \infty} \expectx{\|x\|^m B} \leq
\expectx{m \int_0^T [1 + a \|Q(s/\beta)\|]^{- m - 1} \dd s}
\label{form2}
\een
because of the weak limit convergence of $\frac{1}{\|x\|}\Qorig^{(\|x\|)}(\|x\|t) \Rightarrow Q(t/\beta)$ over $[0, T]$ and uniform integrability of
the random variables of the form RHS of~(\ref{form1}).

Next, for `C', it is sufficient to consider the case that
$\|\Qorig(1)\| = \|\Qorig(0)\| + 1$, where
\[
C \leq T [1 + \|x\| + 1 + a (\|\Qorig(\|x\|T)\|-T)]^{- m}.
\]
Similarly to~`B', multiplying both sides with $\|x\|^m$ and taking
the limit gives
\be
\limsup_{\|x\| \to \infty} \expectx{\|x\|^m C} \leq
\expectx{T [1 + a \|Q(T/\beta)\|]^{- m}},
\label{form3}
\ee
again, because $\|x\|^m C < T$ (thus, uniform integrability holds)
and by the weak limit convergence. Putting the bounds together, we obtain
\ben
\limsup_{\|x\| \to \infty} \|x\|^m \expectx{\theta^1 \WW - \WW} &\leq& - (1 + a)^{- m} + m \expectx{\int_0^\infty (1 + a L(s/\beta))^{- m - 1} \dd s} \\
&& +\expectx{T (1 + a L(T/\beta))^{- m}},
\een
because $\|Q(s)\| \geq L(s)$ based on our notation with
some initial state $Q(0) = \hat{x}$ such that $\|\hat{x}\| = 1$.
Consider the cycle pairs $D_k$, $k = 1, 2, \dots$, as defined for
Theorem~\ref{theorem:rho<1}.
Then,
\ben
\expectx{\int_0^\infty (1 + a L(s/\beta))^{- m - 1} \dd s}& \leq & \beta \expectx{\sum_{k=0}^\infty \int_{T_k}^{T_{k+1}}(1 + a L(s))^{- m - 1} \dd s } \\
& \leq &\beta \expectx{\sum_{k=0}^\infty \int_{T_k}^{T_{k+1}}(1 + a \theta L_k)^{- m - 1} \dd s } \\
& \leq & \beta \expectx{\sum_{k=0}^\infty \Delta T_k (a \theta L_k)^{- m - 1}} \\
&  \leq & \beta C_{LT} (a \theta)^{- m - 1} \sum_{k=0}^\infty \expectx{L_k^{- m}}.
\een
Note that the times $T_k$ are random variables in general and we have
used the fact that $L_{k+1} \geq \theta L_k$ with $0 < \theta \leq 1$.
As we saw in the proof of Theorem~\ref{expectedrate},
for $\rho \in (\rho^*, 1]$, $\expect{L_k^{- m}} \leq \alpha^k$.
Therefore,
\be
m \expectx{\int_0^\infty (1 + a L(s/\beta))^{- m - 1} \dd s} \leq
m \beta C_{LT} (a \theta)^{- m - 1} \frac{1}{1 - \alpha}.
\label{form4}
\ee
So, we can choose $a$ large enough to ensure that the RHS
of~(\ref{form4}) is less than $\frac{1}{3} (1 + a)^{-m}$.
Next we show that we can choose $T$ large enough such that
\be
\expectx{T [1 + a L(T/\beta)]^{- m}} \leq \frac{1}{3} (1 + a)^{- m}.
\label{form7}
\ee
Note that
\be
\expectx{T [1 + a L(T/\beta)]^{- m}} \leq a^{- m} \expectx{T L^{-m}(T/\beta)},
\ee
and by~Theorem~\ref{expectedrate},
$\limsup_{T \to \infty} \expectx{T L^{- m}(T)} = 0$,
for $\rho \in (\rho^*, 1]$.
Hence, we can choose $T$ large enough such that (\ref{form7}) holds.

Therefore,
\[
\limsup_{\|x\| \to \infty} \|x\|^m \expect{W(X(1))-W(X(0))|(X(0),U(0))=(x, u)} \leq
- \frac{1}{3}(1 + a)^{- m}.
\]
This means that there exists a a positive constant $\|x_0\|$ such that,
\[
\expect{W(X(1))-W(X(0))|X(0)=(x, u)} \leq
-\frac{1}{6} (1 + a)^{- m} \|x\|^{- m},
\]
whenever $\|x\| > \|x_0\|$.
Let $c_0 = W(x_0) = W(\|x_0\|)$.
On the other hand, it follows from~(\ref{eq:approx}) that
$\limsup_{\|x\| \to \infty} W(x) = 0$, which means that $A_{c_0}$
is well-defined and also $c_0$ can be made arbitrary small by
letting $\|x_0\| \to \infty$.
Therefore, the conditions of Theorem~(\ref{meyn}) are satisfied
with $\Delta(x) = \frac{1}{6}(1 + a)^{-m} \|x\|^{- m}$.
This shows that
\be
\PP_{\Qorig(0)} \left\{\sum_{n=0}^\infty
\frac{\mbox{constant}}{\|\Qorig(n)\|^m} < \infty\right\} \to 1,
\ee
as $\|\Qorig(0)\| \to \infty$, which implies~(\ref{jumpunstable1}).

\end{document}